\algnewcommand{\Break}{\textbf{break}}
\crefname{equation}{eq.}{eqs.}                
\Crefname{equation}{Eq.}{Eqs.}
\crefname{step}{Step}{Steps}
\Crefname{step}{Step}{Steps}
\Crefname{lem}{Lemma}{Lemmas}
\tikzstyle{tikzfig}=[baseline=-0.25em,scale=0.5]
\tikzstyle{none}=[inner sep=0mm]
\tikzstyle{rectangle}=[fill=white, draw=black, shape=rectangle]
\tikzstyle{circle}=[fill=white, draw=black, shape=circle]
\tikzstyle{vertex}=[fill=black, draw=none, shape=circle]
\tikzstyle{textbox}=[fill=white, draw=none, shape=rectangle]
\tikzstyle{line}=[-, fill=none]
\tikzstyle{rightarrow}=[->]
\tikzstyle{leftarrow}=[fill=none, <-]
\newcommand{\Ccal}{\mathcal{C}}
\newcommand{\Dcal}{\mathcal{D}}
\newcommand{\Rcal}{\mathcal{R}}
\newcommand{\Scal}{\mathcal{S}}
\newcommand{\R}{\mathbb{R}}
\newcommand{\cb}{{\bm{c}}}
\newcommand{\gb}{{\bm{g}}}
\newcommand{\svec}{{\bm{s}}}
\newcommand{\ub}{{\bm{u}}}
\newcommand{\xb}{{\bm{x}}}
\newcommand{\yb}{{\bm{y}}}
\newcommand{\zb}{{\bm{z}}}
\newcommand{\alphab}{{\bm{\alpha}}}
\newcommand{\mub}{{\bm{\mu}}}
\newcommand{\thetab}{{\bm{\theta}}}
\newcommand{\ones}{\bm{1}}
\newcommand{\zeros}{\bm{0}}
\newtheorem{thm}{Theorem}
\newtheorem{lem}{Lemma}
\theoremstyle{definition}
\DeclareMathOperator*{\minimize}{minimize}
\DeclareMathOperator{\subto}{subject\ to}
\DeclareMathOperator*{\argmin}{argmin}
\DeclarePairedDelimiter\iprod{\langle}{\rangle}
\let\set\relax
\DeclarePairedDelimiter\set{\{}{\}}
\let\Set\relax
\DeclarePairedDelimiterX\Set[2]{\{}{\}}{\mspace{2mu}{#1}\;\delimsize|\;{#2}\mspace{2mu}}
\DeclarePairedDelimiterX\Brc[2]{[}{]}{\mspace{2mu}{#1}\;\delimsize|\;{#2}\mspace{2mu}}
\DeclarePairedDelimiter\prn{(}{)}
\DeclarePairedDelimiterX\Prn[2]{(}{)}{\mspace{2mu}{#1}\;\delimsize|\;{#2}\mspace{2mu}}
\newcommand{\conv}{\mathrm{conv}}
\newif\iffigure
\title{Differentiable Equilibrium Computation with \\Decision Diagrams for Stackelberg Models of \\Combinatorial Congestion Games}
\author{%
Shinsaku Sakaue\\
The University of Tokyo\\
Tokyo, Japan\\
\texttt{sakaue@mist.i.u-tokyo.ac.jp}
\And 
Kengo Nakamura\\
NTT Communication Science Laboratories\\
Kyoto, Japan\\
\texttt{kengo.nakamura.dx@hco.ntt.co.jp}
}
\begin{document}

\maketitle

\begin{abstract}
	We address Stackelberg models of combinatorial congestion games (CCGs); we aim to optimize the parameters of CCGs so that the selfish behavior of non-atomic players attains desirable equilibria. This model is essential for designing such social infrastructures as traffic and communication networks. Nevertheless, computational approaches to the model have not been thoroughly studied due to two difficulties: (I) bilevel-programming structures and (II) the combinatorial nature of CCGs. We tackle them by carefully combining (I) the idea of \textit{differentiable} optimization and (II) data structures called \textit{zero-suppressed binary decision diagrams} (ZDDs), which can compactly represent sets of combinatorial strategies. Our algorithm numerically approximates the equilibria of CCGs, which we can differentiate with respect to parameters of CCGs by automatic differentiation. With the resulting derivatives, we can apply gradient-based methods to Stackelberg models of CCGs. Our method is tailored to induce Nesterov's acceleration and can fully utilize the empirical compactness of ZDDs. These technical advantages enable us to deal with CCGs with a vast number of combinatorial strategies. Experiments on real-world network design instances demonstrate the practicality of our method. 
\end{abstract}

\renewcommand{\L}{\mathbf{\Lambda}}

\newcommand{\apb}{\text{\boldmath${\alpha^\prime}$}}
\newcommand{\ybp}{\yb^\prime}
\newcommand{\ybt}[1]{\yb_{#1}}

\newcommand{\drm}{\mathrm{d}}

\newcommand{\ap}{\alpha^\prime}
\newcommand{\areg}{\text{$\alphab$-{\sc Reg}}}

\newcommand{\F}{F}
\newcommand{\f}{f}
\newcommand{\lag}{\mathcal{L}}

\newcommand{\zdim}{d}
\newcommand{\smp}{L}

\newcommand{\softmin}{\mub}
\newcommand{\Softmin}{\text{softmin}}

\newcommand{\zddfont}[1]{{\mathsf{{#1}}}}
\newcommand{\zdd}{\zddfont{Z}}
\newcommand{\VZ}{\zddfont{V}}
\newcommand{\AZ}{\zddfont{A}}
\newcommand{\vz}{\zddfont{v}}
\newcommand{\rz}{\zddfont{r}}
\newcommand{\az}{\zddfont{a}}
\newcommand{\cz}{\zddfont{c}}
\newcommand{\RZ}{\zddfont{R}}
\newcommand{\lbl}[1]{l_{#1}}

\newcommand{\BW}{\mathrm{B}}
\newcommand{\PW}{\mathrm{P}}
\newcommand{\VZi}{\bar \VZ}
\newcommand{\pw}{\mathrm{p}}

\newcommand{\KL}[2]{D_\mathrm{KL}({#1}; {#2})} 
\newcommand{\BKL}{B} 

\newcommand{\model}{h}

\newcommand{\burma}{{\sc Burma}}
\newcommand{\ulysses}{{\sc Ulysses}}
\newcommand{\att}{{\sc Att}}
\newcommand{\uninett}{{\sc Uninett}}
\newcommand{\tw}{{\sc Tw}}
\newcommand{\dantzig}{{\sc Dantzig}}

\section{Introduction}

Congestion games (CGs) \citep{rosenthal1973class} form an important class of non-cooperative games and appear in various resource allocation scenarios. 
Combinatorial CGs (CCGs) can model more complex situations where each strategy is a combination of resources. 
A well-known example of a CCG is selfish routing \citep{roughgarden2005selfish}, where each player on a traffic network chooses an origin-destination path, which is a strategy given by a combination of some roads with limited width (resources). 
Computing the outcomes of players' selfish behaviors (or equilibria) is essential when designing social infrastructures such as traffic networks. 
Therefore, how to compute equilibria of CCGs has been widely studied \citep{bar-gera2002origin,correa2011wardrop,thai2017learning,nakamura2020practical}.

In this paper, we are interested in the perspective of the leader who designs non-atomic CCGs. 
For example, the leader aims to optimize some traffic-network parameters (e.g., road width values) so that players can spend less traveling time at equilibrium.
An equilibrium of non-atomic CCGs is characterized by an optimum of potential function minimization \citep{monderer1996potential,sandholm2001potential}. 
Thus, designing CCGs can be seen as a Stackelberg game; the leader optimizes the parameters of CCGs to minimize an objective function (typically, the social-cost function) while the follower, who represents the population of selfish non-atomic players, minimizes the potential function. 
This mathematical formulation is called the Stackelberg model in the context of traffic management \citep{patriksson2002mathematical}. 
Therefore, we call our model with general combinatorial strategies a Stackelberg model of CCGs. 

Stackelberg models of CCGs have been studied for cases where the potential function minimization has desirable properties. 
For example, \citet{patriksson2002mathematical} proposed a descent algorithm for traffic management using the fact that projections onto flow polyhedra can be done efficiently. 
However, many practical CCGs have more complicated structures.  
For example, in communication network design, each strategy is given by a Steiner tree (see \citep{imase1991dynamic,nakamura2020practical} and \Cref{sec:preliminaries}), 
and thus the projection (and even optimizing linear functions) is NP-hard. 
How to address such computationally challenging Stackelberg models of CCGs has not been well studied, despite its practical importance. 

Inspired by a recent equilibrium computation method \citep{nakamura2020practical}, we tackle the combinatorial nature of CCGs by representing their strategy sets with \textit{zero-suppressed binary decision diagrams} (ZDDs) \citep{minato1993zero,knuth2011art1}, which are well-established data structures that provide empirically compact representations of combinatorial objects (e.g., Steiner trees and Hamiltonian paths). 
Although the previous method \citep{nakamura2020practical} can efficiently approximate equilibria 
with a Frank--Wolfe-style algorithm \citep{frank1956algorithm}, 
its computation procedures break the differentiability of the outputs in the CCG parameters (the leader's variables), preventing us from obtaining gradient information required for optimizing leader's objective functions.

\textbf{Our contribution} 
is to develop a \textit{differentiable} pipeline from leader's variables to equilibria of CCGs, 
thus enabling application of gradient-based methods to the Stackelberg models of CCGs. 
We smooth the Frank--Wolfe iterations using softmin, 
thereby making computed equilibria differentiable with respect to the leader's variables by automatic differentiation (or backpropagation). 
Although the idea of smoothing with softmin is prevalent 
\citep{jang2017categorical,mensch2018differentiable}, 
our method has the following technical novelty:  
\begin{itemize}
	\item Our algorithm is tailored to induce Nesterov's acceleration, making both equilibrium computation and backpropagation more efficient. 
	To the best of our knowledge, the idea of simultaneously making iterative optimization methods both differentiable and faster is new.
  \item Our method consists of simple arithmetic operations performed with ZDDs as in \Cref{alg:softmin}.  This is essential for making our equilibrium computation accept automatic differentiation. The per-iteration complexity of our method is linear in the ZDD size. 
\end{itemize}
Armed with these advantages, our method can work with CCGs that have an enormous number of combinatorial strategies. 
We experimentally demonstrate its practical usefulness in real-world network design instances. 
Our method brings benefits by improving the designs of social infrastructures.

\paragraph{Notation.}
Let $[n]\coloneqq\{1,\dots,n\}$. 
For any $S\subseteq [n]$, $\ones_S\in\{0,1\}^n$ denotes a binary vector 
whose $i$-th entry is $1$ if and only if $i\in S$. 
Let $\|\cdot\|$ be the $\ell_2$-norm. 

\subsection{Problem setting}\label{sec:preliminaries}
We introduce the problem setting and some assumptions.  
For simplicity, we describe the {\it symmetric} setting, although our method can be extended to an \textit{asymmetric} setting, as explained in \Cref{asec:extension}. 

\paragraph{Combinatorial congestion games (CCGs).}
Suppose that there is an infinite amount of players with an infinitesimal mass (i.e., non-atomic). 
We assume the total mass is $1$ without loss of generality. 
Let $[n]$ be a set of resources and let $\Scal\subseteq 2^{[n]}$ be a set of all feasible strategies.  
We define $\zdim \coloneqq |\Scal|$, which is generally exponential in $n$. 
Each player selects strategy $S\in\Scal$. 
Let $\yb\in[0,1]^n$ be a vector whose $i$-th entry indicates the total mass of players using resource $i\in[n]$. 
In other words, if we let $\zb\in\triangle^\zdim$ be a vector whose entry $z_S$ ($S\in\Scal$) indicates the total mass of players choosing $S$, we have $\yb \coloneqq \sum_{S\in\Scal} z_S\ones_S \in\R^n$. 
Therefore, $\yb$ is included in convex hull $\Ccal \coloneqq \Set*{ \sum_{S\in\Scal} z_S \ones_S }{ \zb \in \triangle^\zdim }$, where $\triangle^\zdim\coloneqq \Set*{\zb\in\R^\zdim }{\zb\ge0, \sum_{S\in\Scal} z_S = 1}$ is the ($d-1$)-dimensional probability simplex.  
A player choosing $S$ incurs cost $c_S(\yb) \coloneqq \sum_{i\in S}c_i(y_i)$, where each $c_i:\R\to\R$ is assumed to be strictly increasing; this corresponds to a natural situation where cost $c_i$ increases as $i\in[n]$ becomes more congested.  
Each player selfishly selects a strategy to minimize his/her own cost. 

\paragraph{Equilibrium and potential functions.}
We say $\zb\in\triangle^\zdim$ attains a \textit{(Wardrop) equilibrium}
if for every $S\in\Scal$ such that $z_S>0$, it holds that $c_S(\yb) \le \min_{S^\prime\in \Scal} c_{S^\prime}(\yb)$, where $\yb \coloneqq \sum_{S\in\Scal} z_S\ones_S$. That is, no one has an incentive to deviate unilaterally. 
Let $\f:\R^n \to\R$ be a \textit{potential function} defined
as $\f(\yb) \coloneqq \sum_{i\in[n]} \int_{0}^{y_i} c_i(u) \drm u$. 
From the first-order optimality condition, it holds that $\zb$ attains an equilibrium iff $\yb = \sum_{S\in\Scal} z_S\ones_S$
satisfies $\yb = \argmin_{\ub\in \Ccal} \f(\ub)$. 
Note that minimizer $\yb$ is unique since $c_i$ is strictly increasing, which means $\f$ is strictly convex (not necessarily strongly convex). 

\paragraph{Stackelberg model of CCGs.} 
We turn to the problem of designing CCGs. 
For $i\in[n]$, let $c_i(y_i; \thetab)$ be a cost function with parameters $\thetab\in\Theta$.  
We assume $c_i(y_i; \thetab)$ to be strictly increasing in $y_i$ for any $\thetab\in\Theta$ and differentiable with respect to $\thetab$ for any $\yb\in\Ccal$. 
Let $\f(\yb; \thetab) = \sum_{i\in[n]} \int_{0}^{y_i} c_i(u; \thetab) \drm u$ be a parameterized potential function, which is strictly convex in $\yb$ for any $\thetab\in\Theta$.  
A leader who designs CCGs aims to optimize $\thetab$ values so that an objective function, $\F:\Theta\times\Ccal\to\R$, is minimized at an equilibrium of CCGs. 
Typically, $F$ is a social-cost function defined as $\F(\thetab, \yb) = \sum_{i\in[n]} c_i(y_i; \thetab) y_i$, which represents the total cost incurred by all players. 
Since an equilibrium is characterized as a minimizer of potential function $\f$, the leader's problem can be written as follows: 
\begin{align}\label{problem:stackelberg}
  \minimize_{\thetab\in\Theta} \quad \F(\thetab, \yb) \qquad \subto \quad \yb = \argmin_{\ub\in\Ccal} \f(\ub; \thetab).
\end{align}
Since minimizer $\yb(\thetab) \coloneqq \yb$ is unique, we can regard $\F(\thetab, \yb(\thetab))$ as a function of $\thetab$. 
We study how to approximate derivatives of $\F(\thetab, \yb(\thetab))$ with respect to $\thetab$ for applying gradient-based methods to \eqref{problem:stackelberg}.

\paragraph{Example 1: traffic management.}
We are given a network with an origin-destination (OD) pair. 
Let $[n]$ be the edge set and let $\Scal\subseteq 2^{[n]}$ be the set of all OD paths. 
Each edge in the network has cost function $c_i(y_i; \thetab)$, where $\thetab$ controls the width of the roads (edges). 
A natural example of the cost functions is $c(y_i; \thetab) = y_i/\theta_i$ for $\theta_i>0$ (see, e.g., \citep{patriksson2002mathematical}), which satisfies the above assumptions, i.e., strictly increasing in $y_i$ and differentiable in $\theta_i$. 
Once $\thetab$ is fixed, players selfishly choose OD paths and consequently reach an equilibrium. 
The leader wants to find $\thetab$ that minimizes social cost $\F$ at equilibrium, which can be formulated as a Stackelberg model of form \eqref{problem:stackelberg}.
Note that although the Stackelberg model of standard selfish routing is well studied \citep{patriksson2002mathematical}, there are various variants (e.g., routing with budget constraints \citep{jahn2005system,nakamura2020practical}) for which existing methods do not work efficiently.

\paragraph{Example 2: communication network design.}
We consider a situation where multi-site meetings are held on a communication network (see, e.g., \citep{imase1991dynamic,nakamura2020practical}). 
Given an undirected network with edge set $[n]$ and some vertices called terminals, groups of people at terminals hold multi-site meetings, including people at all the terminals. 
Since each group wants to minimize the communication delays caused by congestion, each selfishly chooses a way to connect all the terminals, namely, a Steiner tree covering all the terminals.  
If we let $c_i(y_i; \thetab)$ indicate the delay of the $i$-th edge, a group choosing Steiner tree $S\in\Scal$ incurs cost $c_S(\yb;\thetab)$. 
As with the above traffic-management example, the problem of optimizing $\thetab$ to minimize the total delay at equilibrium can be written as \eqref{problem:stackelberg}.

\subsection{Related work}\label{subsec:related}

Problems of form \eqref{problem:stackelberg} arise in many fields, e.g., Stackelberg games \citep{stackelberg1952theory}, mathematical programming with equilibrium constraints \citep{luo1996mathematical}, and bilevel programming \citep{bracken1973mathematical,dempe2015bilevel}, which have been gaining attention in machine learning \citep{franceschi2018bilevel,fiez2020implicit}. 
Optimization problems with bilevel structures are NP-hard in most cases \citep{hansen1992new} (tractable cases include, e.g., when follower's problems are unconstrained and strongly convex \citep{ghadimi2018approximation}, which does not hold in our case). 
Thus, how to apply gradient-based methods occupies central interest 
\citep{fiez2020implicit,grazzi2020iteration}. 
In our Stackelberg model of CCGs, in addition to the bilevel structure, 
the follower's problem is defined on combinatorial strategy sets $\Scal$, further complicating it. 
Therefore, unlike the above studies, we focus on how to address such difficult problems by leveraging computational tools, including ZDDs \citep{minato1993zero} and automatic differentiation \citep{lecun1989backpropagation,griewank2008evaluating}. 

Stackelberg models often arise in traffic management.  
Although many existing studies \citep{patriksson2002mathematical,li2012global,bargera2013computational,bhaskar2019achieving} analyze theoretical aspects utilizing instance-specific structures (e.g., compact representations of flow polyhedra), applications to other types of realistic CCGs remain unexplored. 
By contrast, as with the previous method \citep{nakamura2020practical}, our method is built on versatile ZDD representations of strategy sets, and the derivatives with respect to CCG parameters can be automatically computed with backpropagation. 
Thus, compared to the methods studied in traffic management, ours can be easily applied to and works efficiently with a broad class of realistic CCGs with complicated combinatorial strategies. 

Our method is inspired by the emerging line of work on differentiable optimization \citep{amos2017optnet,wilder2019melding,agrawal2019differentiable,sakaue2021differentiablegreedy}.  
For differentiating outputs with respect to the parameters of optimization problems, two major approaches have been studied \citep{grazzi2020iteration}: {\it implicit} and {\it iterative} differentiation. 
The first approach applies the implicit function theorem to equation systems derived from the Karush--Kuhn--Tucker (KKT) condition (akin to the single-level reformulation approach to bilevel programming). 
In our case, this approach is too expensive since the combinatorial nature of CCGs generally makes the KKT equation system exponentially large 
\citep{fiorini2015exponential}. 
Our method is categorized into the second approach, which computes a numerical approximation of an optimum with iterations of differentiable steps. 
This idea has yielded success in many fields
\citep{domke2012generic,maclaurin2015gradient,ochs2016techniques,belanger2017end,franceschi2018bilevel}. 
Concerning combinatorial optimization, although differentiable methods for linear objectives are well studied \citep{mensch2018differentiable,wilder2019melding,pogancic2020differentiation,berthet2020learning}, no differentiable method has been developed for convex minimization on polytopes of, e.g., Steiner trees or Hamiltonian paths; this is what we need for dealing with the potential function minimization of CCGs.  
To this end, we use a Frank--Wolfe-style algorithm and ZDD representations of combinatorial objects. 


\section{Differentiable iterative equilibrium computation}\label{sec:differentiable_FW} 

We consider applying gradient-based methods (e.g., projected gradient descent) to problem \eqref{problem:stackelberg}. 
To this end, we need to compute the following gradient with respect to $\thetab\in\Theta\subseteq \R^k$ in each iteration: 
\[
\nabla \F(\thetab, \yb(\thetab)) 
= 
\nabla_\thetab \F(\thetab, \yb(\thetab)) + \nabla \yb(\thetab)^\top \nabla_\yb \F(\thetab, \yb(\thetab)),
\]
where $\nabla_\thetab \F(\thetab, \yb(\thetab))$ and $\nabla_\yb \F(\thetab, \yb(\thetab))$ denote the gradients with respect to the first and second arguments, respectively, and $\nabla \yb(\thetab)$ is the $n\times k$ Jacobian matrix.\footnote{
	Although the derivatives of $\yb(\thetab)$ may not be unique, we abuse the notation and write $\nabla\yb(\thetab)$ for simplicity. 
	As we will see shortly, we numerically approximate $\yb(\thetab)$ with $\yb_T(\thetab)$ whose derivative, $\nabla \ybt{T}(\thetab)$, exists uniquely. 
	Therefore, when discussing our iterative differentiation method, we can ignore the abuse of notation. 
	}
The computation of $\nabla \yb(\thetab)$ is the most challenging part and requires differentiating $\yb(\thetab) = \argmin_{\ub\in\Ccal} \f(\ub; \thetab)$ with respect to $\thetab$. 
We employ the iterative differentiation approach for efficiently approximating $\nabla \yb(\thetab)$.

\subsection{Technical overview}\label{subsec:overview}
For computing equilibrium $\yb(\thetab)$, \citet{nakamura2020practical} solved potential function minimization with a variant of the Frank--Wolfe algorithm \citep{lacoste2015global}, whose iterations can be performed efficiently by using compact ZDD representations of combinatorial strategies. 
To the best of our knowledge, no other equilibrium computation methods can deal with various CCGs that have complicated combinatorial strategies, e.g., Steiner trees. 
Hence we build on \citep{nakamura2020practical} and
extend their method to Stackelberg models. 

First, we review the standard Frank--Wolfe algorithm. 
Starting from $\xb_0\in \Ccal$, it alternately computes $\svec_t = \argmin_{\svec \in \Ccal} \iprod{\nabla \f(\xb_t; \thetab), \svec}$  
and $\xb_{t+1} = (1- \gamma_t) \xb_t + \gamma_t \svec_t$, where $\gamma_t$ is conventionally set to $\frac{2}{t+2}$. 
As shown in \citep{frank1956algorithm,jaggi2013revisiting}, $\xb_T$ has an objective error of $\mathrm{O}(1/T)$. 
Thus, we can obtain numerical approximation $\ybt{T}(\thetab) = \xb_T$ of equilibrium $\yb(\thetab)$ such that $\f(\ybt{T}(\thetab); \thetab) - \f(\yb(\thetab);\thetab) \le \mathrm{O}(1/T)$. 
For obtaining gradient $\nabla \ybt{T}(\thetab)$, however, the above Frank--Wolfe algorithm does not work (neither does its faster variant used in \citep{nakamura2020practical}). 
This is because $\svec_t = \argmin_{\svec \in \Ccal} \iprod{\nabla \f(\xb_t; \thetab), \svec}$ is piecewise constant in $\thetab$, which makes $\nabla \ybt{T}(\thetab)$ zero almost everywhere and undefined at some $\thetab$. 

To resolve this issue, we develop a differentiable Frank--Wolfe algorithm by using softmin. We denote the softmin operation by $\softmin_\Scal(\cb)$ (detailed below).  
Since softmin can be seen as a differentiable proxy for $\argmin$, one may simply replace $\svec_t = \argmin_{\svec \in \Ccal} \iprod{\nabla \f(\xb_t; \thetab), \svec}$ with $\svec_t = \softmin_\Scal(\eta_t \nabla \f(\xb_t; \thetab))$, where $\eta_t>0$ is a scaling factor. 
Actually, the modified algorithm yields an $\mathrm{O}(1/T)$ convergence by setting $\eta_t = \Omega(t)$ (see \citep[Theorem 1]{jaggi2013revisiting}). 
This modification, however, often degrades the empirical convergence of the Frank--Wolfe algorithm, as demonstrated in \Cref{subsec:experiment_convergence}. 
We, therefore, consider leveraging softmin for acceleration while keeping the iterations differentiable. 
Based on an accelerated Frank--Wolfe algorithm \citep{wang2018acceleration}, we compute $\ybt{T}(\thetab)$ as in \Cref{alg:fwx}, and obtain $\nabla \ybt{T}(\thetab)$ by applying backpropagation. 
Furthermore, in \Cref{sec:dd}, we explain how to efficiently compute $\softmin_\Scal(\cb)$ by using a ZDD-based technique \citep{sakaue2018bandit}; importantly, its computation procedure also accepts the backpropagation. 

While our work is built on the existing methods \citep{wang2018acceleration,sakaue2018bandit,nakamura2020practical}, none of them are intended to develop differentiable methods. 
A conceptual novelty of our work is its careful combination of those methods for developing a differentiable and accelerated optimization method, with which we can compute $\nabla\ybt{T}(\thetab)$. 
This enables the application of gradient-based methods to the Stackelberg models of CCGs.

\subsection{Details of \texorpdfstring{\Cref{alg:fwx}}{Algorithm~\ref{alg:fwx}}}\label{subsec:softmin_and_faster} 
We compute $\ybt{T}(\thetab)$ with \Cref{alg:fwx}. 
Note that $\svec_t$, $\cb_t$, and $\xb_t$ depend on $\thetab$, which is not explicitly indicated for simplicity.  
The most crucial part is \Cref{step:softmin}, where we use $\Softmin$ rather than $\argmin$ to make the output differentiable in $\thetab$. 
Specifically, given any $\cb\in\R^n$, we compute $\softmin_{\Scal}(\cb)$ as follows: 
\begin{align}
  \softmin_{\Scal}(\cb) \coloneqq \sum_{S\in\Scal} \ones_S \frac{\exp \left(- \cb^\top \ones_S \right)}{\sum_{S^\prime\in\Scal} \exp \left(- \cb^\top \ones_{S^\prime} \right)}. 
  \label{eq:softmix}
\end{align}
Intuitively, each entry in $\cb$ represents the cost of each $i\in [n]$, and $\cb^\top\ones_S$ represents the cost of $S\in\Scal$.  
We consider a probability distribution over $\Scal$ defined by $\Softmin$ with respect to costs $\{\cb^\top\ones_S\}_{S\in\Scal}$, and then marginalize it. 
The resulting vector is a convex combination of $\set*{\ones_S}_{S\in\Scal}$ and thus always included in $\Ccal$. 
In the context of graphical modeling, this operation is called marginal inference \citep{wainwright2008graphical}. 
Here, $\softmin_{\Scal}$ is defined by a summation over $\Scal$, and explicitly computing it is prohibitively expensive. 
\Cref{sec:dd} details how to efficiently compute $\softmin_{\Scal}$ by leveraging the ZDD representations of $\Scal$. 

\begin{algorithm}[tb]
	\caption{Differentiable Frank--Wolfe-based equilibrium computation}
	\label{alg:fwx}
	\begin{algorithmic}[1]
		\State $\svec_0 = \cb_0 = \zeros$, $\xb_{-1} = \xb_0 = \softmin_{\Scal}(\cb_0)$, and $\alpha_t = t$ ($t = 0,\dots,T$)
		\For{$t=1,\dots,T$}
		\State $\svec_t = \svec_{t-1} - \alpha_{t-1} \xb_{t-2} + (\alpha_{t-1} + \alpha_t) \xb_{t-1}$
		\State $\cb_t = \cb_{t-1} + \eta \alpha_t \nabla \f\left( \frac{2}{t(t+1)} \svec_t; \thetab \right) $ \label{step:gradient}
		\Comment{$\nabla\f(\yb;\thetab)_i = c_i(y_i; \thetab)$ is differentiable in $\thetab$}
		\State Compute $\xb_t = \softmin_{\Scal}( \cb_t)$ with \Cref{alg:softmin} \label[step]{step:softmin} 
		\Comment{Differentiable $\Softmin$ computation}
		\EndFor
		\Return $\ybt{T}(\thetab) = \frac{2}{T(T+1)}\sum_{t=1}^{T} \alpha_t \xb_t$
	\end{algorithmic}
\end{algorithm}

\subsection{Convergence guarantee}\label{subsec:convergence_guarantee}
\Cref{alg:fwx} is designed to induce Nesterov's acceleration \citep{nesterov1983method,wang2018acceleration} and achieves an $\mathrm{O}(1/T^2)$ convergence, which is faster than the $\mathrm{O}(1/T)$ convergence of the original Frank--Wolfe algorithm. 
\begin{restatable}{thm}{convergence}\label{thm:convergence}
  Fix $\thetab\in\Theta$ and assume $\f(\cdot;\thetab)$ to be $\smp$-smooth on $\R^\zdim$, i.e., $\Phi(\zb) \coloneqq \f(\sum_{S\in\Scal}z_S\ones_S; \thetab)$ ($\forall\zb\in\R^\zdim$) satisfies $\Phi(\zb^\prime) \le \Phi(\zb) + \iprod{\nabla \Phi(\zb), \zb^\prime - \zb} + \frac{\smp}{2} \|\zb^\prime - \zb \|^2$ for all $\zb, \zb^\prime \in \R^\zdim$.\footnote{Smoothness parameter $\smp$ defined on $\R^\zdim$ can be, in general, exponentially large in $n$, albeit constant in $T$. How to alleviate the dependence on $\smp$ remains an open problem.} 
  If we let $\eta \in [\frac{1}{CL}, \frac{1}{4L}]$ for some $C>4$, \Cref{alg:fwx} returns $\ybt{T}(\thetab)$ such that 
	\[
		\f(\ybt{T}(\thetab); \thetab)
  -
  \f(\yb(\thetab); \thetab)
  \le  
  \mathrm{O}\prn*{\frac{C\smp \ln\zdim}{T^2}}.
	\]	
\end{restatable}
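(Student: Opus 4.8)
The plan is to recognize \Cref{alg:fwx} as an accelerated dual-averaging method (the entropically smoothed accelerated Frank--Wolfe of \citep{wang2018acceleration}) run on the \emph{lifted} problem $\min_{\zb\in\triangle^\zdim}\Phi(\zb)$, and then to carry out a Tseng-style estimate-sequence analysis there. Write $\Psi(\zb)\coloneqq\sum_{S\in\Scal}z_S\ones_S$, so $\Phi(\zb)=\f(\Psi(\zb);\thetab)$ and $(\nabla\Phi(\zb))_S=\iprod{\nabla\f(\Psi(\zb);\thetab),\ones_S}$. First I would show that the softmin step is exactly the entropic dual-averaging update: with $\omega(\zb)\coloneqq\sum_{S\in\Scal}z_S\log z_S$ and the unrolled accumulation $\cb_t=\eta\sum_{s=1}^t\alpha_s\nabla\f(\tfrac{2}{s(s+1)}\svec_s;\thetab)$, the per-strategy cost is $\cb_t^\top\ones_S=\eta\sum_{s=1}^t\alpha_s(\nabla\Phi(\vb_s))_S$, where $\vb_s\in\triangle^\zdim$ is the lift of $\tfrac{2}{s(s+1)}\svec_s$ (a genuine convex combination of past iterates, as one checks by unrolling $\svec_s$). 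Hence $\xb_t=\softmin_\Scal(\cb_t)=\Psi(\zb_t)$ with $\zb_t=\argmin_{\zb\in\triangle^\zdim}\{\eta\sum_{s=1}^t\alpha_s\iprod{\nabla\Phi(\vb_s),\zb}+\omega(\zb)\}$; all the $\R^n$ quantities are $\Psi$-images of $\R^\zdim$ iterates, and in particular the returned $\ybt{T}(\thetab)=\Psi(\zbar_T)$ with $\zbar_T\coloneqq\frac{1}{A_T}\sum_{t=1}^T\alpha_t\zb_t$ and $A_t\coloneqq\sum_{s=1}^t\alpha_s=\frac{t(t+1)}2$.

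Next I would introduce the estimate functions
\[
  \psi_t(\zb)\coloneqq\tfrac1\eta\KL{\zb}{\zb_0}+\sum_{s=1}^t\alpha_s\brc*{\Phi(\vb_s)+\iprod{\nabla\Phi(\vb_s),\zb-\vb_s}},
\]
where $\KL{\cdot}{\zb_0}$ is the Bregman divergence of $\omega$ (the KL divergence), and note that $\zb_t=\argmin_\zb\psi_t(\zb)$ coincides with the iterate above because $\KL{\zb}{\zb_0}=\omega(\zb)+\ln\zdim$ when $\zb_0$ is uniform. The heart of the proof is the induction $A_t\Phi(\zbar_t)\le\psi_t^\star\coloneqq\min_\zb\psi_t(\zb)$. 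For the inductive step I would combine three facts: the convexity of $\Phi$; the $\smp$-smoothness hypothesis in the form $\Phi(\zbar_t)\le\Phi(\vb_t)+\iprod{\nabla\Phi(\vb_t),\zbar_t-\vb_t}+\frac{\smp}2\norm{\zbar_t-\vb_t}^2$; and the $\frac1\eta$-strong convexity of $\psi_t$ (inherited from $\omega$), which gives $\psi_t(\zb)\ge\psi_t^\star+\frac1{2\eta}\norm{\zb-\zb_t}^2$. Here $\omega$ is $1$-strongly convex with respect to $\norm{\cdot}_1$ (Pinsker's inequality), hence also with respect to the $\ell_2$ norm $\norm{\cdot}$ since $\norm{\cdot}_1\ge\norm{\cdot}$, which matches the $\ell_2$-smoothness. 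Substituting the momentum identities $\vb_t=(1-\tfrac{\alpha_t}{A_t})\zbar_{t-1}+\tfrac{\alpha_t}{A_t}\zb_{t-1}$ and $\zbar_t-\vb_t=\tfrac{\alpha_t}{A_t}(\zb_t-\zb_{t-1})$ (which the $\svec_t$-recursion and the weights $\tfrac{2}{t(t+1)}$ are engineered to produce), the smoothness error $\frac{\smp}2\norm{\zbar_t-\vb_t}^2$ is dominated by the strong-convexity gain $\frac1{2\eta}\norm{\zb_t-\zb_{t-1}}^2$ exactly when $\eta\le\frac1{4\smp}$, closing the induction.

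Finally I would telescope. By convexity, $\psi_T^\star\le\psi_T(\zb^\star)\le A_T\Phi(\zb^\star)+\frac1\eta\KL{\zb^\star}{\zb_0}$ for any $\zb^\star$ with $\Psi(\zb^\star)=\yb(\thetab)$, so
\[
  \f(\ybt{T}(\thetab);\thetab)-\f(\yb(\thetab);\thetab)=\Phi(\zbar_T)-\Phi(\zb^\star)\le\frac{\KL{\zb^\star}{\zb_0}}{\eta A_T}.
\]
Because $\cb_0=\zeros$ makes $\zb_0$ uniform, $\KL{\zb^\star}{\zb_0}=\ln\zdim-H(\zb^\star)\le\ln\zdim$ (with $H$ the Shannon entropy), and with $A_T=\Theta(T^2)$ and $\eta\ge\frac1{C\smp}$ this is $\mathrm{O}(C\smp\ln\zdim/T^2)$, as claimed.

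The hard part will be the inductive step of the second paragraph: I must verify that \Cref{alg:fwx}'s $\svec_t$-recursion genuinely reproduces the two momentum identities (a short but error-prone induction on the unrolled $\svec_t$; I have checked the first few $t$ by hand), since only then does the smoothness error telescope against the strong-convexity term and yield the threshold $\eta\le\frac1{4\smp}$. Two secondary points need care: matching the $\ell_1$-based strong convexity of the entropy to the $\ell_2$-based smoothness assumption (handled by $\norm{\cdot}_1\ge\norm{\cdot}$), and the fact that the bias of the entropic smoothing is absorbed \emph{entirely} into the single term $\frac1\eta\KL{\zb^\star}{\zb_0}\le\frac{\ln\zdim}\eta$ rather than accumulated per iteration, which is what keeps the $\ln\zdim$ dependence (and not $T\ln\zdim$) in the final bound. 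Since the Gibbs iterates $\zb_t$ lie in the relative interior of $\triangle^\zdim$, all Bregman divergences are finite and $\omega$ is strongly convex along the trajectory, so no boundary issues arise.
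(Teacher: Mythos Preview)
Your plan is correct and will yield the stated bound. The key technical points you flag---that the $\svec_t$-recursion reproduces the two momentum identities $\vb_t=(1-\tfrac{\alpha_t}{A_t})\zbar_{t-1}+\tfrac{\alpha_t}{A_t}\zb_{t-1}$ and $\zbar_t-\vb_t=\tfrac{\alpha_t}{A_t}(\zb_t-\zb_{t-1})$, that Pinsker gives $1$-strong convexity of $\omega$ in $\ell_1$ and hence in $\ell_2$, and that $\KL{\zb^\star}{\zb_0}\le\ln\zdim$ appears only once in the telescoping---are all sound, and the step-size condition $\eta\le\tfrac{1}{4\smp}$ suffices since $\alpha_t^2/A_t=\tfrac{2t}{t+1}<2$.

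Your route differs from the paper's. The paper does not carry out the estimate-sequence induction itself; instead it introduces an auxiliary algorithm (the two-player game formulation of \citep{wang2018acceleration}, where a $\gb$-player plays an optimistic FTL step and a $\zb$-player plays a proximal mirror-descent step with KL divergence), proves by an induction very close to your ``lifting'' paragraph that \Cref{alg:fwx} produces exactly $\xb_t=\L\zb_t$ for the iterates $\zb_t$ of that auxiliary algorithm, and then invokes the convergence theorem of \citep{wang2018acceleration} as a black box to get $\Phi(\zbar_T)-\min\Phi\le\tfrac{2C\smp\KL{\zb^\star}{\zb_0}}{T(T+1)}$. The equivalence hinges on the fact that for the entropic regularizer the proximal update $\argmin_\zb\{\iprod{\alpha_t\gb_t,\zb}+\tfrac{1}{\eta}\KL{\zb}{\zb_{t-1}}\}$ and your dual-averaging update coincide (both give $\zb_t\propto\exp(-\eta\sum_{s\le t}\alpha_s\gb_s)$), which is exactly what the paper checks via the KKT conditions. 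So the paper's proof is shorter but relies on the external reference for the actual acceleration argument; your Tseng-style induction is self-contained and makes the role of the step-size threshold and the single $\ln\zdim$ term more transparent. Both approaches share the same first step (lifting and identifying $\softmin_\Scal$ with the entropic minimizer) and the same last step (bounding $\KL{\zb^\star}{\zb_0}$).
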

We present the proof in \Cref{sec:proof_convergence}. 
In essence, softmin can be seen as a dual mirror descent step with the Kullback--Leibler divergence, and combining it with a primal gradient descent step yields the acceleration \citep{allenzhu2017linear}. 
Although the acceleration technique itself is well studied, it has not been explored in the context of differentiable optimization. 
To the best of our knowledge, simultaneously making iterative optimization methods both differentiable and faster is a novel idea. 
This observation can be beneficial for developing other fast differentiable iterative algorithms. 
Experiments in \Cref{subsec:experiment_convergence} demonstrate that the acceleration indeed enhances the convergence speed in practice.

Note that the faster convergence enables us to more efficiently compute 
both $\ybt{T}(\thetab)$ and $\nabla \ybt{T}(\thetab)$. 
The latter is because \Cref{alg:fwx} with a smaller $T$ generates a smaller computation graph, which determines the computation complexity of the backpropagation for obtaining $\nabla \ybt{T}(\thetab)$. 
Therefore, \Cref{alg:fwx} is suitable as an efficient differentiable pipeline between $\thetab$ and $\ybt{T}(\thetab)$. 

\subsection{Implementation consideration: how to choose \texorpdfstring{$\eta$}{eta} and \texorpdfstring{$T$}{T}}\label{subsec:implementation_consideration}
While \Cref{thm:convergence} suggests setting $\eta$ to $\frac{1}{4L}$ or less, this choice is often too conservative in practice. 
Thus, we should search for $\eta$ values that bring high empirical performances. 
When using \Cref{alg:fwx} as a subroutine of gradient-based methods, 
it is repeatedly called to solve similar equilibrium computation instances. 
Therefore, a simple and effective way for locating good $\eta$ values is 
to apply \Cref{alg:fwx} with various $\eta$ values to example instances, as we will do in \Cref{subsec:experiment_convergence}. 
We expect the empirical performance to improve with a line search of $\eta$, which we leave for future work.

To check whether the Frank--Wolfe algorithm has converged or not, we usually use the Frank--Wolfe gap \citep{jaggi2013revisiting}, an upper-bound on an objective error. 
To obtain high-quality solutions, we terminate the algorithm when the gap becomes sufficiently small. 
In our case, however, our purpose is to obtain gradient information $\nabla \ybt{T}(\thetab)$, and $\ybt{T}(\thetab)$ with a small $T$ sometimes suffices to serve the purpose. 
By using a small $T$, we can reduce the computation complexity. 
Experiments in \Cref{subsec:experiment_stackelberg} show that the performance of a projected gradient method that uses $\nabla \ybt{T}(\thetab)$ is not so sensitive to $T$ values. 

\begin{figure*}[tb]
	\begin{tabular}{ccc}
		\begin{minipage}{.48\linewidth}
			\begin{figure}[H]
				\centering
			\includegraphics[width=1.0\textwidth]{./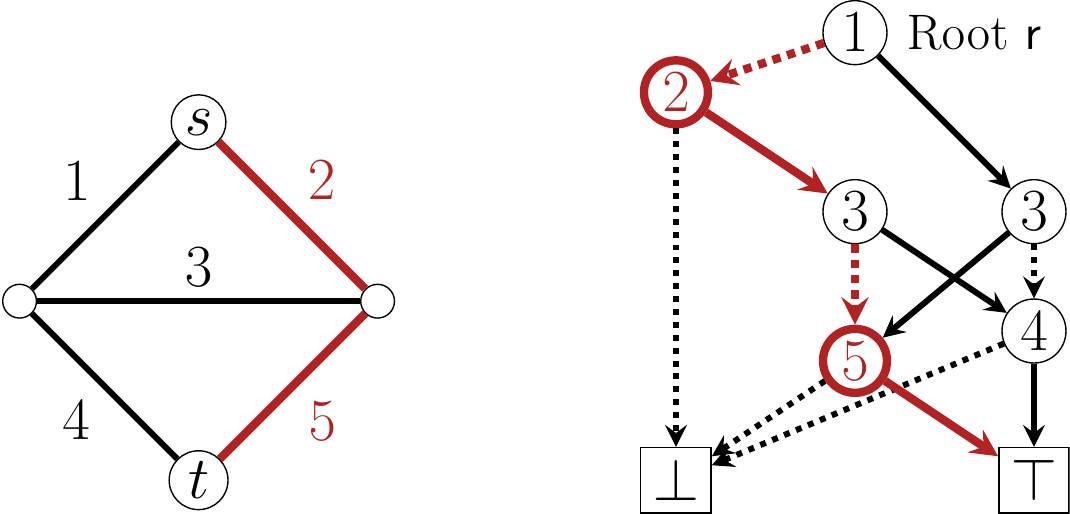}	
			\caption{Example of ZDD $\zdd_\Scal$ (right), where $n=5$ and $\Scal$ is a family of all simple $s$--$t$ paths (left). 
			ZDD has two terminal nodes ($\top$ and $\perp$) and non-terminal nodes labeled by $\lbl{\vz} \in [n]$. 
			Solid (dashed) arcs represent $1$-arcs ($0$-arcs).}
			\label{fig:zdd}
			\end{figure}
		\end{minipage}
		& & 
		\begin{minipage}{.42\linewidth}
			\begin{algorithm}[H]
				\caption{Computation of $\mub_\Scal(\cb)$ with ZDD $\zdd_\Scal = (\VZ, \AZ)$}
				\label{alg:softmin}
				\begin{algorithmic}[1]
					\State $\BW_\top = 1$ and $\BW_\perp = 0$
					\For{$\vz \in \VZ\setminus\{\top, \perp\}$ (bottom-up)}
					\State $\BW_\vz = \BW_{\cz^0_\vz} + \exp(-c_{\lbl{\vz}}) \times \BW_{\cz^1_\vz}$		
					\EndFor
					\State $\PW_\rz = 1$ and $\PW_\vz = 0$ ($\vz\in\VZ\setminus\{\rz\}$) 
					\State $\xb = (0,\dots,0)^\top$
					\For{$\vz\in\VZ\setminus\{\top, \perp\}$ (top-down)} 
					\State $\pw^0 = \BW_{\cz^0_\vz}/{\BW_\vz}$ and $\pw^1 = 1 - \pw^0$
					\State $\PW_{\cz^0_\vz} \mathrel{+}= \pw^0 \PW_\vz $ and $\PW_{\cz^1_\vz} \mathrel{+}= \pw^1 \PW_\vz $ \label{step:addp}
					\State $x_{\lbl{\vz}} \mathrel{+}= \pw^1 \PW_\vz$ \label{step:sumx}
					\EndFor
					\State \Return $\xb = (x_1,\dots,x_n)^\top$
				\end{algorithmic}
			\end{algorithm}
		\end{minipage}	
		\end{tabular}
\end{figure*}

\section{Efficient softmin computation with decision diagrams}\label{sec:dd}
This section describes how to efficiently compute $\mub_\Scal(\cb)$ by leveraging ZDD representations of $\Scal$. 
The main idea is to apply a technique called weight pushing \citep{mohri2009weighted} (or path kernel \citep{takimoto2003path}) to ZDDs. 
A similar idea was used for obtaining efficient combinatorial bandit algorithms \citep{sakaue2018bandit}, but this research does not use it to develop differentiable algorithms.  
Our use of weight pushing comes from another important observation: 
it consists of simple arithmetic operations that accept reverse-mode automatic differentiation with respect to $\cb$, as shown in \Cref{alg:softmin}. In other words, \Cref{alg:softmin} does not use, e.g., $|\cdot|$ or $\argmin$. 
Therefore, ZDD-based weight pushing can be incorporated into the pipeline from $\thetab$ to $\ybt{T}(\thetab)$ without breaking the differentiability.

\subsection{Zero-suppressed binary decision diagrams}\label{subsec:zdd}
Given set family $\Scal\subseteq 2^{[n]}$, we explain how to represent it with ZDD $\zdd_\Scal=(\VZ,\AZ)$, a DAG-shaped data structure (see, e.g., \Cref{fig:zdd}). 
The node set, $\VZ$, has two terminal nodes $\top$ and $\bot$ (they represent true and false, respectively) and non-terminal nodes.  
There is a single root, $\rz\in \VZ\setminus\set{\bot, \top}$. 
Each $\vz\in\VZ\setminus\{\bot,\top\}$ has label $\lbl{\vz}\in{[n]}$ and two outgoing arcs, $1$- and $0$-arcs, which indicate whether $\lbl{\vz}$ is chosen or not, respectively. 
Let $\cz^0_\vz, \cz^1_\vz \in\VZ$ denote two nodes pointed by $0$- and $1$-arcs, respectively, outgoing from $\vz$. 
For any $\vz\in\VZ\setminus\{\top, \perp\}$, let $\Rcal_\vz\subseteq 2^\AZ$ be the set of all directed paths from $\vz$ to $\top$. 
We define $\Rcal\coloneqq \bigcup_{\vz\in\VZ\setminus\{\top, \perp\}} \Rcal_\vz$.  
For any $\RZ\in\Rcal$, let $X(\RZ)\coloneqq\Set*{ \lbl{\vz}\in{[n]} }{ (\vz,\cz^1_\vz)\in\RZ }$, i.e., labels of tails of $1$-arcs belonging to $\RZ$. 
ZDD $\zdd_\Scal$ represents $\Scal$ as a set of $\rz$--$\top$ paths: 
$\Scal = \Set*{ X(\RZ) }{ \RZ\in\Rcal_\rz }$. 
There is a one-to-one correspondence between $S\in\Scal$ and $\RZ\in\Rcal_\rz$, i.e., $S = X(\RZ)$. 

\Cref{fig:zdd} presents an example of ZDD $\zdd_\Scal$, where $\Scal$ is the family of simple $s$--$t$ paths. 
For example, $S = \{2, 5\}\in\Scal$ is represented in $\zdd_\Scal$ by the red path, $\RZ\in\Rcal_\rz$, with labels $\{1,$ $2,$ $3,$ $5,$ $\top\}$. 
The labels of the tails of the $1$-arcs form $X(\RZ) = \{2, 5\}$, which equals $S$. 

We define the size of $\zdd_\Scal = (\VZ, \AZ)$ by $|\zdd_\Scal| \coloneqq |\VZ|$. 
Note that $|\AZ| \le 2\times |\zdd_\Scal|$ always holds. 
In general, the ZDD sizes and the complexity of constructing ZDDs can be exponential in $n$. 
Fortunately, many existing studies provide efficient methods 
for constructing compact ZDDs. 
One such method is the frontier-based search \citep{kawahara2014frontier}, which is based on Knuth's Simpath algorithm \cite{knuth2011art1}. 
Their method is particularly effective when $\Scal$ is a family of network substructures such as Hamiltonian paths, Steiner trees, matchings, and cliques. 
Furthermore, the family algebra \citep{minato1993zero,knuth2011art1} of ZDDs enables us to deal with various logical constraints. 
Using those methods, we can flexibly construct ZDDs for various complicated combinatorial structures, e.g., Steiner trees whose size is at most a certain value. 
Moreover, we can sometimes theoretically bound the ZDD sizes and the construction complexity. 
For example, if $\Scal$ consists of the aforementioned substructures on network $G = (V,E)$ with a constant pathwidth, the ZDD sizes and the construction complexity are polynomial in $|E|$ \citep{kawahara2014frontier,inoue2016acceleration}.  

\subsection{Details of \texorpdfstring{\Cref{alg:softmin}}{Algorithm~\ref{alg:softmin}} and computation complexity}\label{subsec:zdd_softmin}
\Cref{alg:softmin} computes $\xb = \softmin_\Scal(\cb)$ for any $\cb = (c_1,\dots, c_n)^\top\in\R^n$. 
First, it computes $\{\BW_\vz\}_{\vz\in\VZ}$ in a bottom-up topological order of $\zdd_\Scal$.  
Note that $\BW_\vz = \sum_{S\in\Set*{ X(\RZ) }{ \RZ\in\Rcal_{\vz}  }} \exp(-\sum_{i\in S}c_i)$ holds. 
Then it computes $\{\PW_\vz\}_{\vz\in\VZ}$. 
Each $\PW_\vz$ indicates the probability that a top-down random walk starting from root node $\rz$ reaches $\vz\in\VZ$, where we choose $0$-arc ($1$-arc) with probability $\pw^0$ ($\pw^1$). 
From $\BW_\perp = 0$ and the construction of $\{\BW_\vz\}_{\vz\in\VZ}$, the random walk never reaches $\perp$, and its trajectory $\RZ\in \Rcal_\rz$ recovers $X(\RZ) \in\Scal$ with a probability proportional to $\exp(-\sum_{i\in X(\RZ)}c_i) = \exp(-\cb^\top \ones_{X(\RZ)})$. 
Therefore, by summing the probabilities of reaching $\vz$ and choosing a $1$-arc outgoing from $\vz$ for each $i\in[n]$ as in Step \ref{step:sumx}, we obtain $\xb =  \softmin_\Scal(\cb)$. 
In practice, we recommend implementing \Cref{alg:softmin} with the log-sum-exp technique and double-precision computations for numerical stability.

\Cref{alg:softmin} runs in $\mathrm{O}(|\zdd_\Scal|)$ time, and thus \Cref{alg:fwx} takes $\mathrm{O}((n+ C_\nabla + |\zdd_\Scal|)T)$ time, where $C_\nabla$ is the cost of computing $\nabla f$. 
From the cheap gradient principle \citep{griewank2008evaluating}, the complexity of computing $\nabla \F(\thetab, \ybt{T}(\thetab))$ with backpropagation is almost the same as that of computing $\F(\thetab, \ybt{T}(\thetab))$. 
That is, smaller ZDDs make the computation of both $\ybt{T}(\thetab)$ and $\nabla \ybt{T}(\thetab)$ faster. 
Therefore, our method significantly benefits from the empirical compactness of ZDDs. 
Note that we can construct ZDD $\zdd_\Scal$ in a preprocessing step; once we obtain $\zdd_\Scal$, we can reuse it every time $\softmin_\Scal(\cdot)$ is called. 

\section{Experiments}\label{sec:experiment}

\Cref{subsec:experiment_convergence} confirms the benefit of acceleration to empirical convergence speed. 
\Cref{subsec:experiment_stackelberg} demonstrates the usefulness of our method via experiments on communication network design instances. 
\Cref{subsec:experiment_global} presents experiments with small instances to 
see whether our method can empirically find globally optimal $\thetab$. 
Due to space limitations, we present full experimental results in \Cref{a_sec:experiment}. 


All the experiments were performed using a single thread on a $64$-bit macOS machine with $2.5$~GHz Intel Core i$7$ CPUs and $16$~GB RAM. 
We used C++$11$ language, and the programs were compiled by Apple clang $12.0.0$ with \texttt{-O3 -DNDEBUG} option. 
We used Adept $2.0.5$ \citep{hogan2017adept} as an automatic differentiation package and Graphillion $1.4$ \citep{inoue2016graphillion} for constructing ZDDs, where we used a beam-search-based path-width optimization method \citep{inoue2016acceleration} to specify the traversal order of edges. 
The source code is available at \url{https://github.com/nttcslab/diff-eq-comput-zdd}. 

\paragraph{Problem setting.}
We address Stackelberg models for optimizing network parameters $\thetab\in\R^n$, where $[n]$ represents an edge set. 
We focus on two situations where combinatorial strategies $\Scal\subseteq 2^{[n]}$ are Hamiltonian cycles and Steiner trees. 
The former is a variant of the selfish-routing setting, and the latter arises when designing communication networks as in \Cref{sec:preliminaries}. 
Note that in both settings, common operations on $\Ccal$, e.g., projection and linear optimization, are NP-hard. 
We use two types of cost functions: fractional cost
$c_i(y_i; \thetab) = d_i (1 + C \times y_i / (\theta_i + 1))$ and 
exponential cost $c_i(y_i; \thetab) = d_i (1 + C \times y_i \exp(-\theta_i))$, where $d_i\in(0,1]$ is the length of the $i$-th edge (normalized so that $\max_{i\in[n]}d_i=1$ holds) and $C>0$ controls how heavily the growth in $y_i$ (congestion) affects cost $c_i$. We set $C = 10$. 
Note that edge $i$ with a larger $\theta_i$ is more tolerant to congestion. 
The leader aims to minimize social cost $\F(\thetab, \yb(\thetab)) = \sum_{i\in[n]} c_i(y_i(\thetab); \thetab) y_i(\thetab)$. 
In realistic situations, the leader cannot let all edges have sufficient capacity
due to budget constraints.  
To model this situation, we impose a constraint on $\thetab$ by defining
$\Theta = \Set*{\thetab \in \R^n_{\ge0}}{\thetab^\top \ones = n}$, where $\ones$ is the all-one vector.

\paragraph{Datasets.}

\Cref{table:graph_statistics} summarizes the information about datasets and ZDDs used in the experiments. 
For the Hamiltonian-cycle setting ({\sc Hamilton}), we used att$48$ (\att) and dantzig$42$ (\dantzig) datasets in TSPLIB \citep{tsplib}.   
Following \citep{cook2003tour,nishino2017compiling}, we obtained networks in \Cref{fig:stackelberg} using Delaunay triangulation \citep{delaunay}. 
For the Steiner-tree setting ({\sc Steiner}), we used Uninett 2011 (\uninett) and TW Telecom (\tw) networks of Internet Topology Zoo \citep{knight2011internet}. 
We selected terminal vertices as shown in \Cref{fig:stackelberg}. 
We can see in \Cref{table:graph_statistics} that ZDDs are much smaller than the strategy sets. 
As mentioned in \Cref{subsec:zdd_softmin}, we can construct ZDDs in a preprocessing step, 
and the construction times were so short as to be negligible compared with 
the times taken for minimizing the social cost (see \Cref{subsec:experiment_stackelberg}). 
Therefore, we do not take the construction times into account in what follows. 

\subsection{Empirical convergence of equilibrium computation}\label{subsec:experiment_convergence}

We studied the empirical convergence of \Cref{alg:fwx} with acceleration (w/ A), where we let $\eta = 0.05$, $0.1$, $0.2$, and $0.5$. 
We applied it to the minimization problems of form $\min_{\yb\in\Ccal} \f(\yb; \thetab)$, where $\f$ is a potential function defined by cost function $c_i(y_i; \thetab)$ (fractional or exponential). We let $\thetab=\ones$. 

For comparison, we used two kinds of baselines. 
One is a differentiable Frank--Wolfe algorithm without acceleration (w/o~A), which just replaces $\argmin$ with softmin as explained in \Cref{subsec:overview}. 
To guarantee the convergence of the modified algorithm, we let $\eta_t = \eta_0 \times t$ ($\eta_0 = 0.1$, $1.0$, and $10.0$). 
The other is the standard non-differentiable Frank--Wolfe algorithm (FW) implemented as in \citep{jaggi2013revisiting}. 

\begin{table}[tb]
	\caption{
		Sizes of networks $G=(V, E)$, strategy sets $\Scal$, and ZDDs $\zdd_\Scal$. 
		ZDDs for {\dantzig}, {\att}, {\uninett}, and {\tw} were constructed in $172$, $258$, $4$, and $6$ ms, respectively. 
	}
	\label{table:graph_statistics}
	\vskip 0.15in
	\begin{center}
	\begin{small}
	\begin{sc}
	\begin{tabular}{llccccc}
		\toprule
		Strategy & Dataset & $|V|$ & $|E|$ & $|\Scal|$ & $|\zdd_\Scal|$ \\
		\midrule
		\multirow{2}{*}{Hamilton} 
		& \dantzig & $42$ & $115$ & $15164782028 \ (\ge 1.5\times 10^{10})$ & $23479$ \\
		& \att    & $48$ & $130$ & $1041278451879 \ (\ge 1.0\times 10^{12})$ & $35388$\\
		\midrule
		\multirow{2}{*}{Steiner}
		& \uninett & $69$ & $96$ & $88920985482584429311488 \ (\ge 8.8\times 10^{22})$ & $3284$ \\
		& \tw    & $76$ & $115$ & $71363851011296173824385276416 \ (\ge 7.1\times 10^{28})$ & $5583$\\
		\bottomrule
	\end{tabular}
	\end{sc}
	\end{small}
	\end{center}
	\vskip -0.1in
\end{table}

\begin{figure*}[tb]
	\centering
	\begin{minipage}[t]{.24\textwidth}
		\includegraphics[width=1.0\textwidth]{./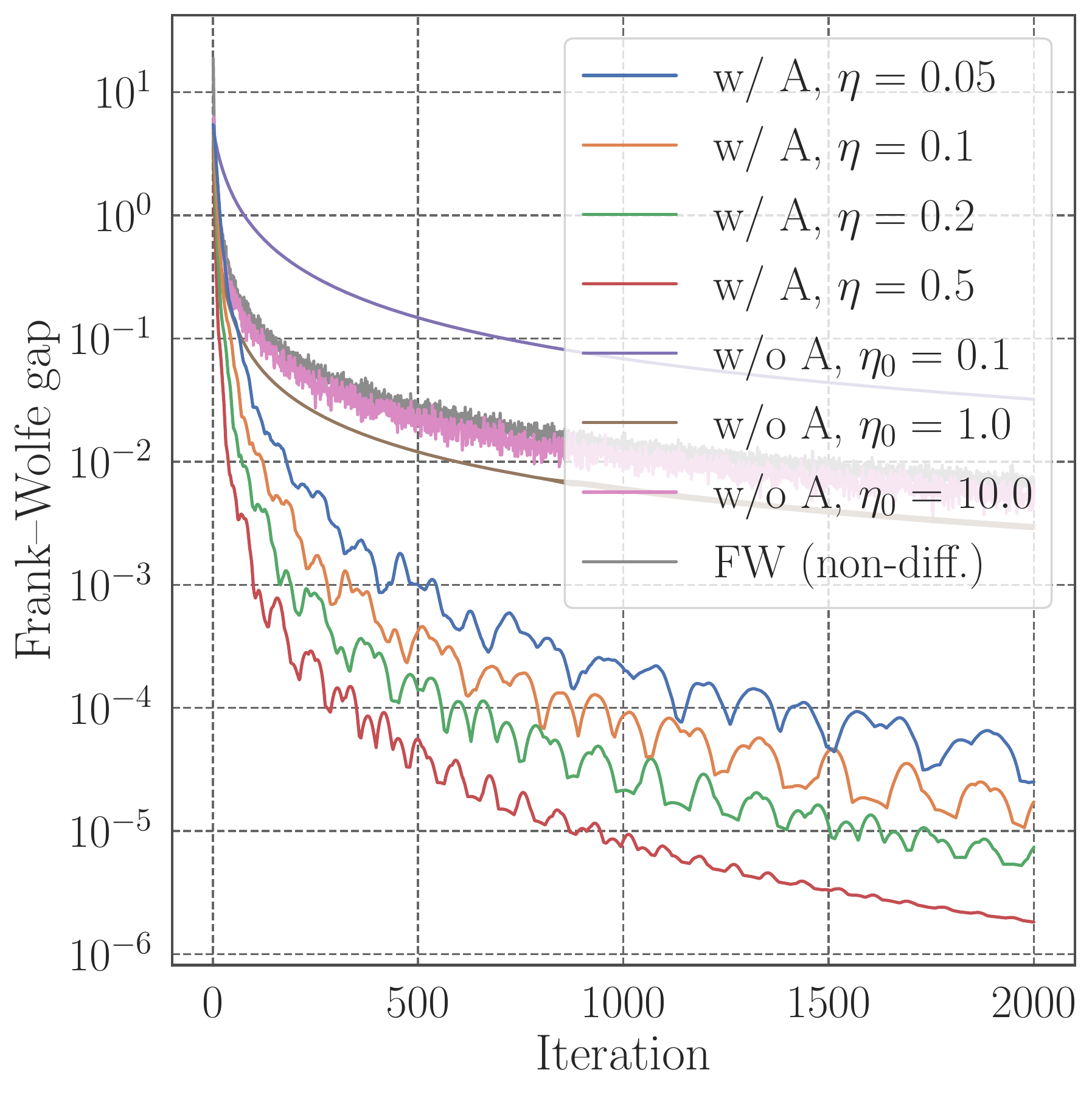}
		\subcaption{\att, fractional cost}
		\label{fig:att_inv_iter}
	\end{minipage}
	\centering
	\begin{minipage}[t]{.24\textwidth}
		\includegraphics[width=1.0\textwidth]{./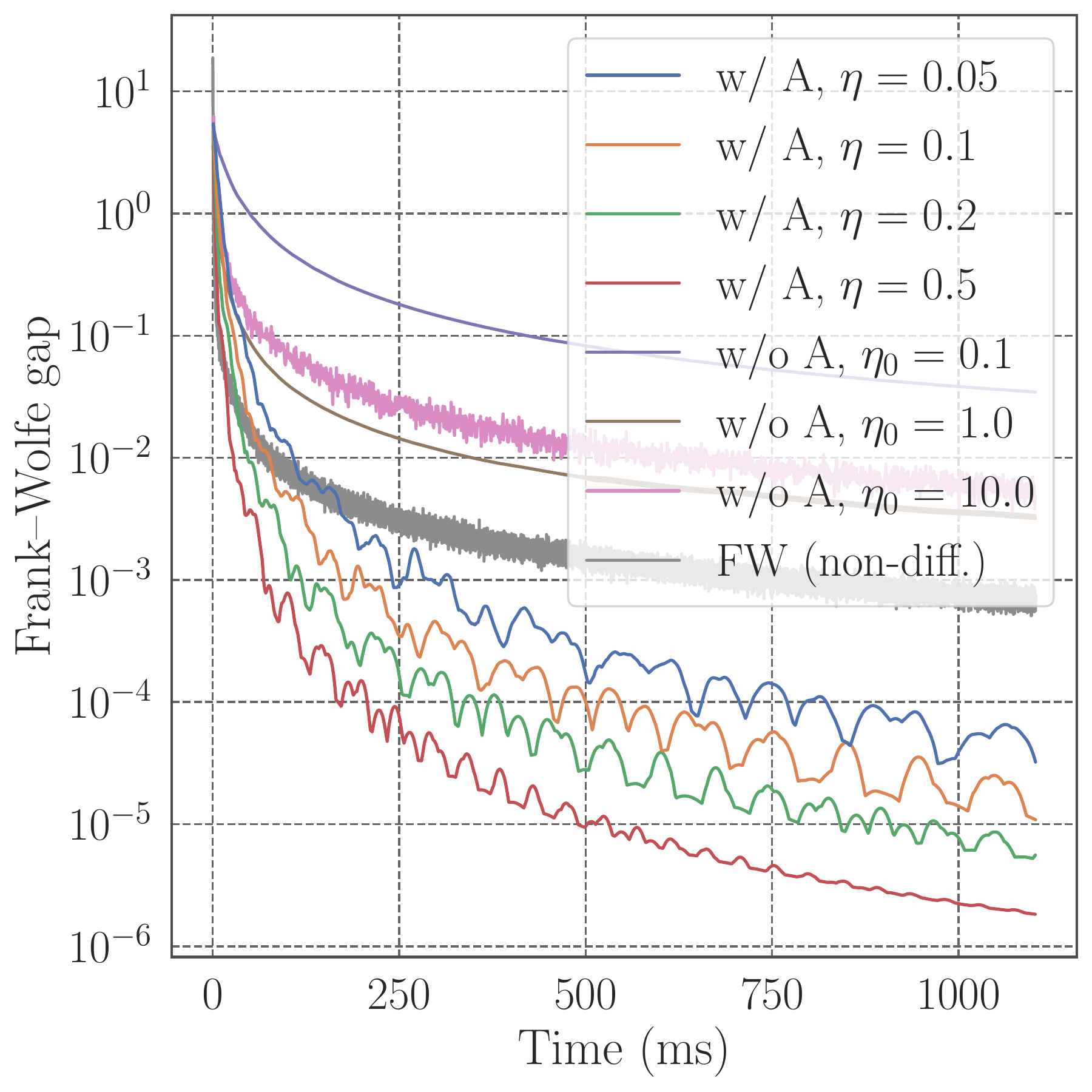}
		\subcaption{\att, fractional cost}
		\label{fig:att_inv_time}
	\end{minipage}
	\centering
	\begin{minipage}[t]{.24\textwidth}
		\includegraphics[width=1.0\textwidth]{./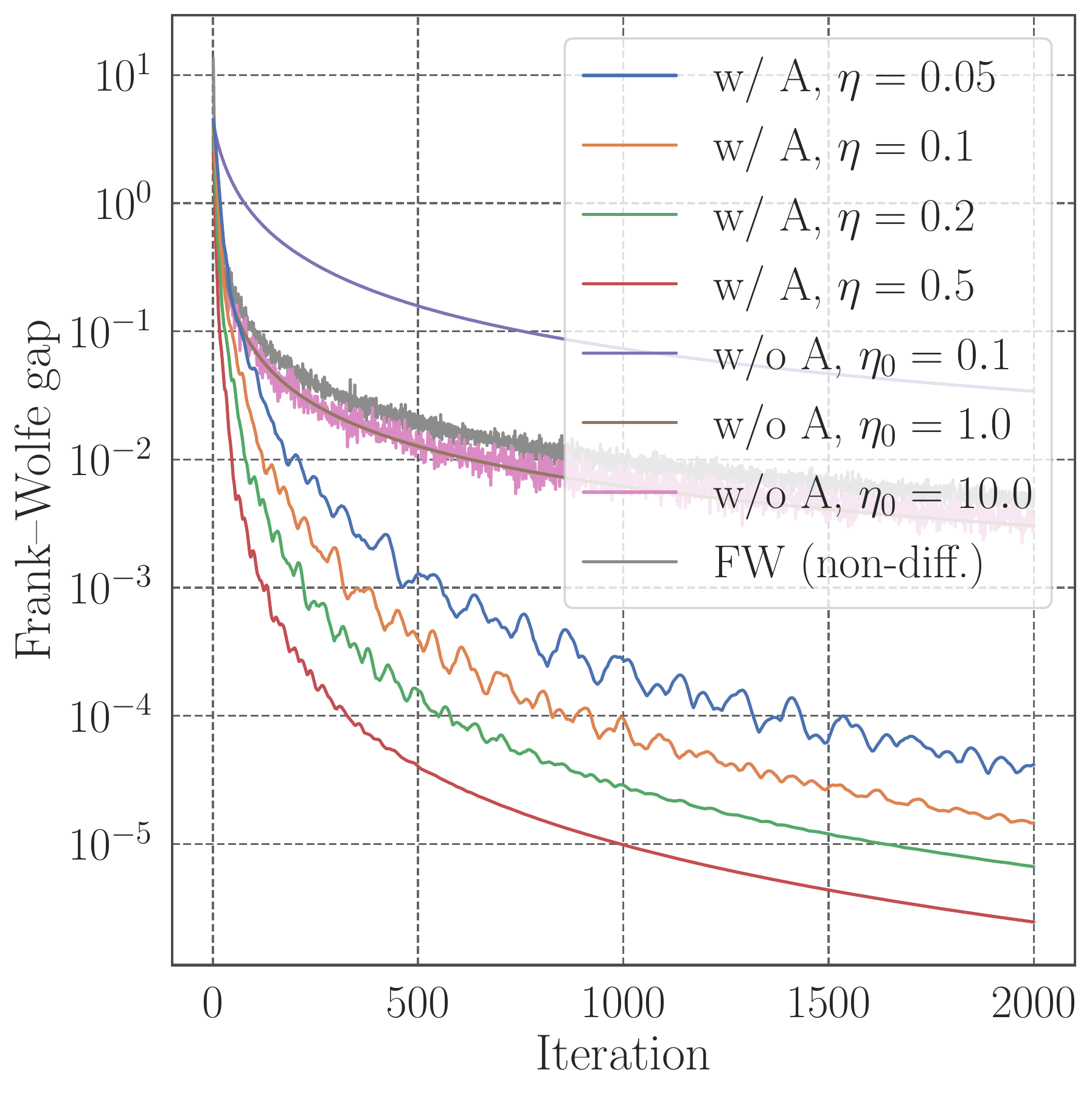}
		\subcaption{\att, exponential cost}
		\label{fig:att_exp_iter}
	\end{minipage}
	\centering
	\begin{minipage}[t]{.24\textwidth}
		\includegraphics[width=1.0\textwidth]{./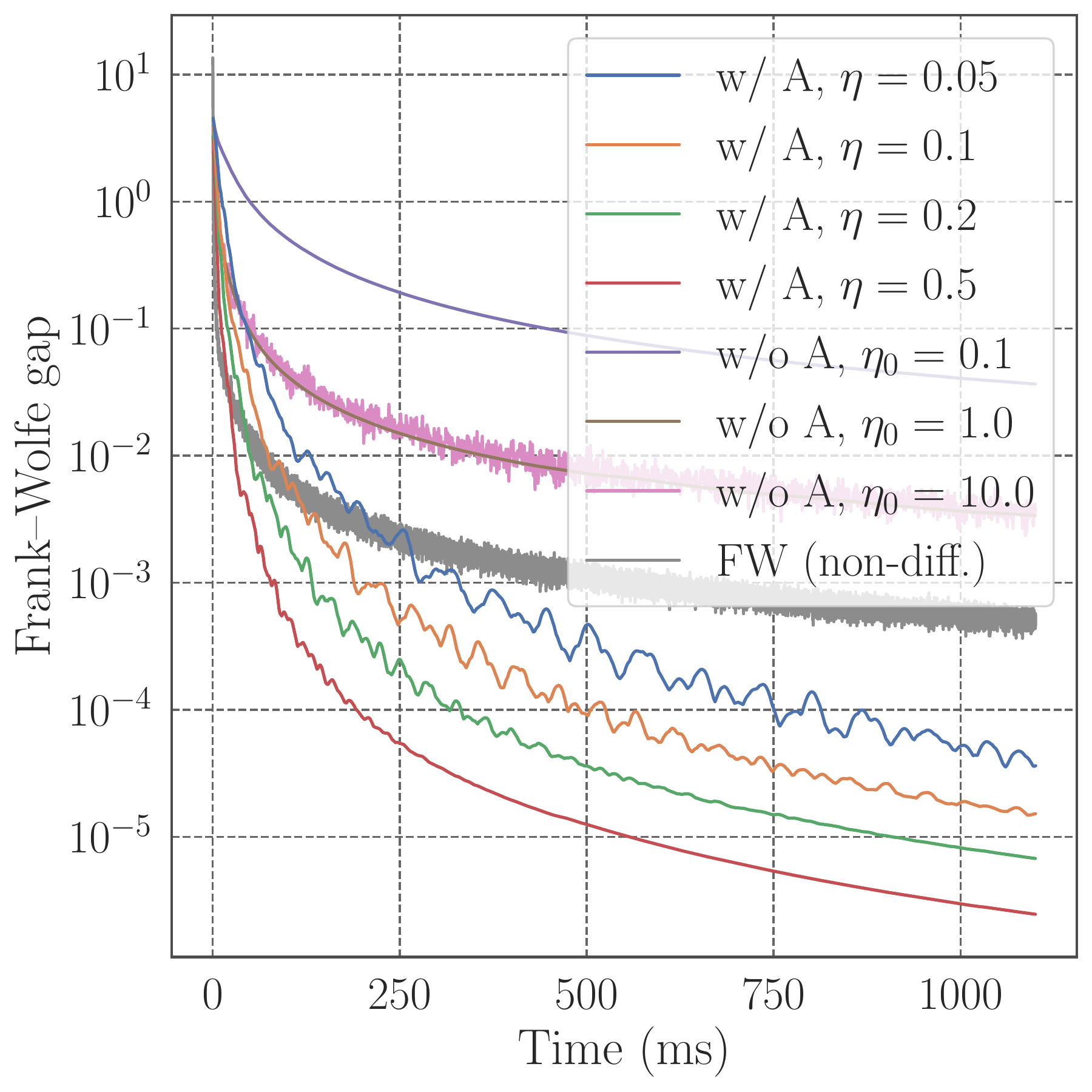}
		\subcaption{\att, exponential cost}
		\label{fig:att_exp_time}
	\end{minipage}
	\centering
	\begin{minipage}[t]{.24\textwidth}
		\includegraphics[width=1.0\textwidth]{./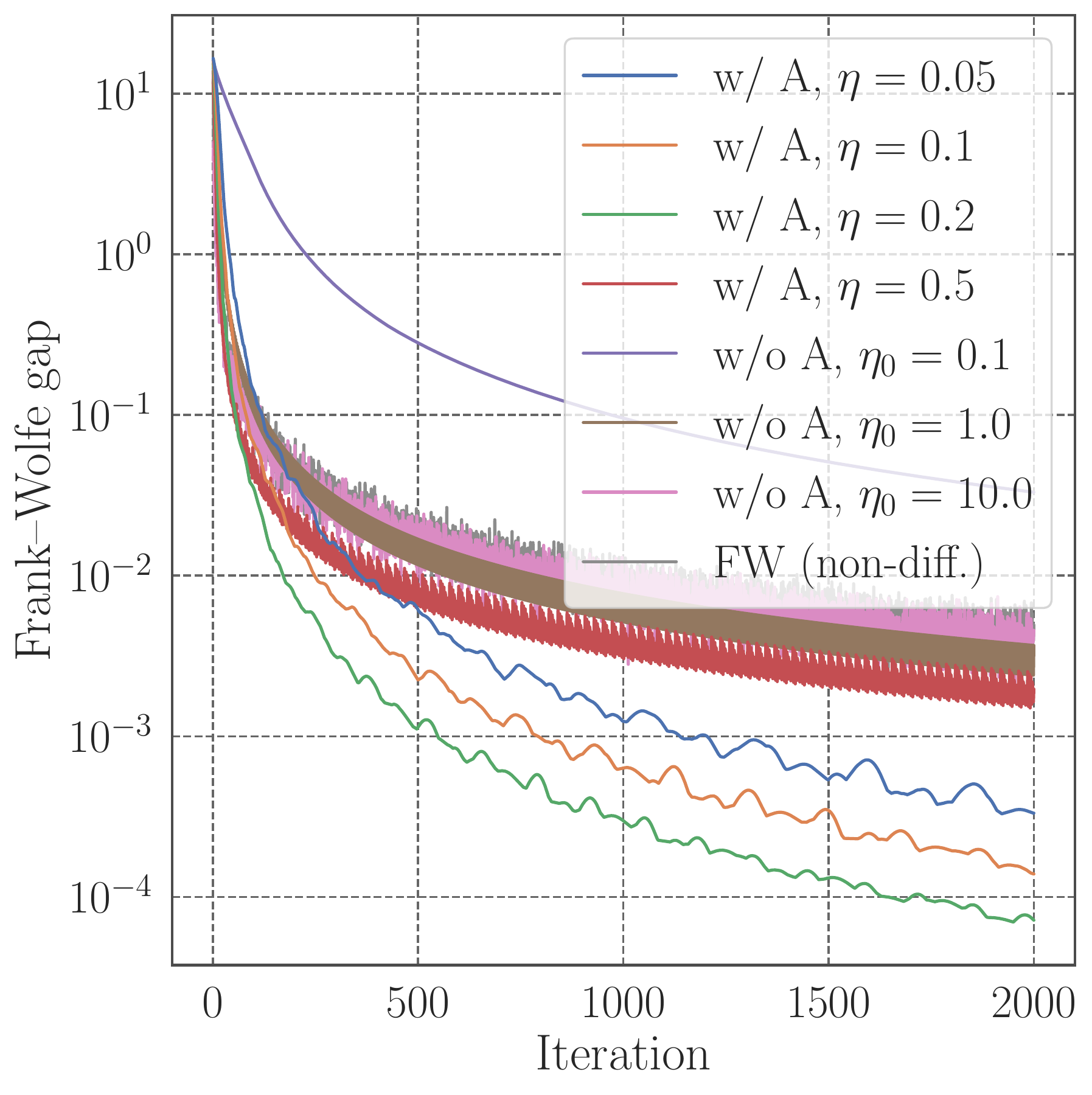}
		\subcaption{\tw, fractional cost}
		\label{fig:tw_inv_iter}
	\end{minipage}
	\centering
	\begin{minipage}[t]{.24\textwidth}
		\includegraphics[width=1.0\textwidth]{./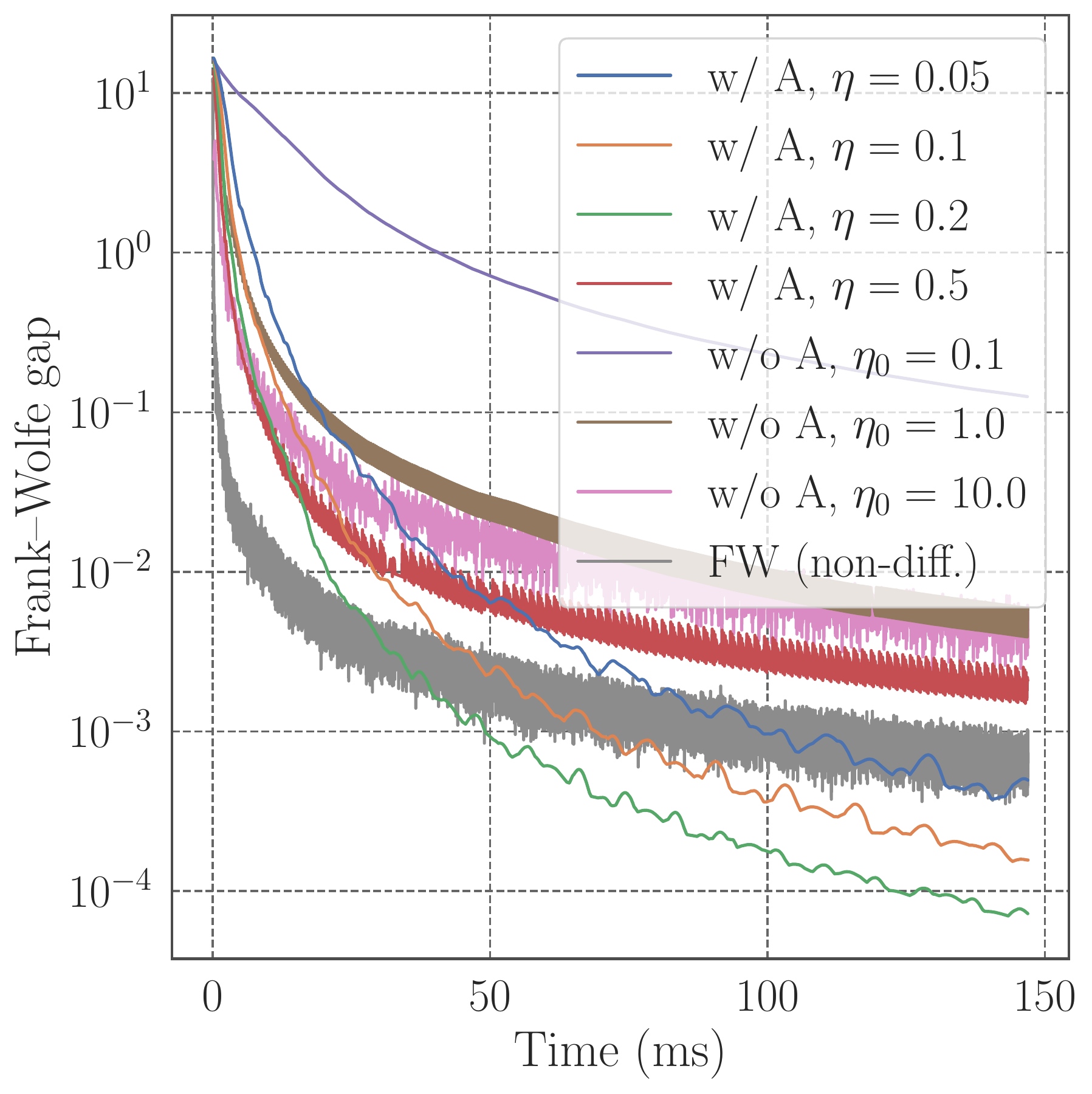}
		\subcaption{\tw, fractional cost}
		\label{fig:tw_inv_time}
	\end{minipage}
	\centering
	\begin{minipage}[t]{.24\textwidth}
		\includegraphics[width=1.0\textwidth]{./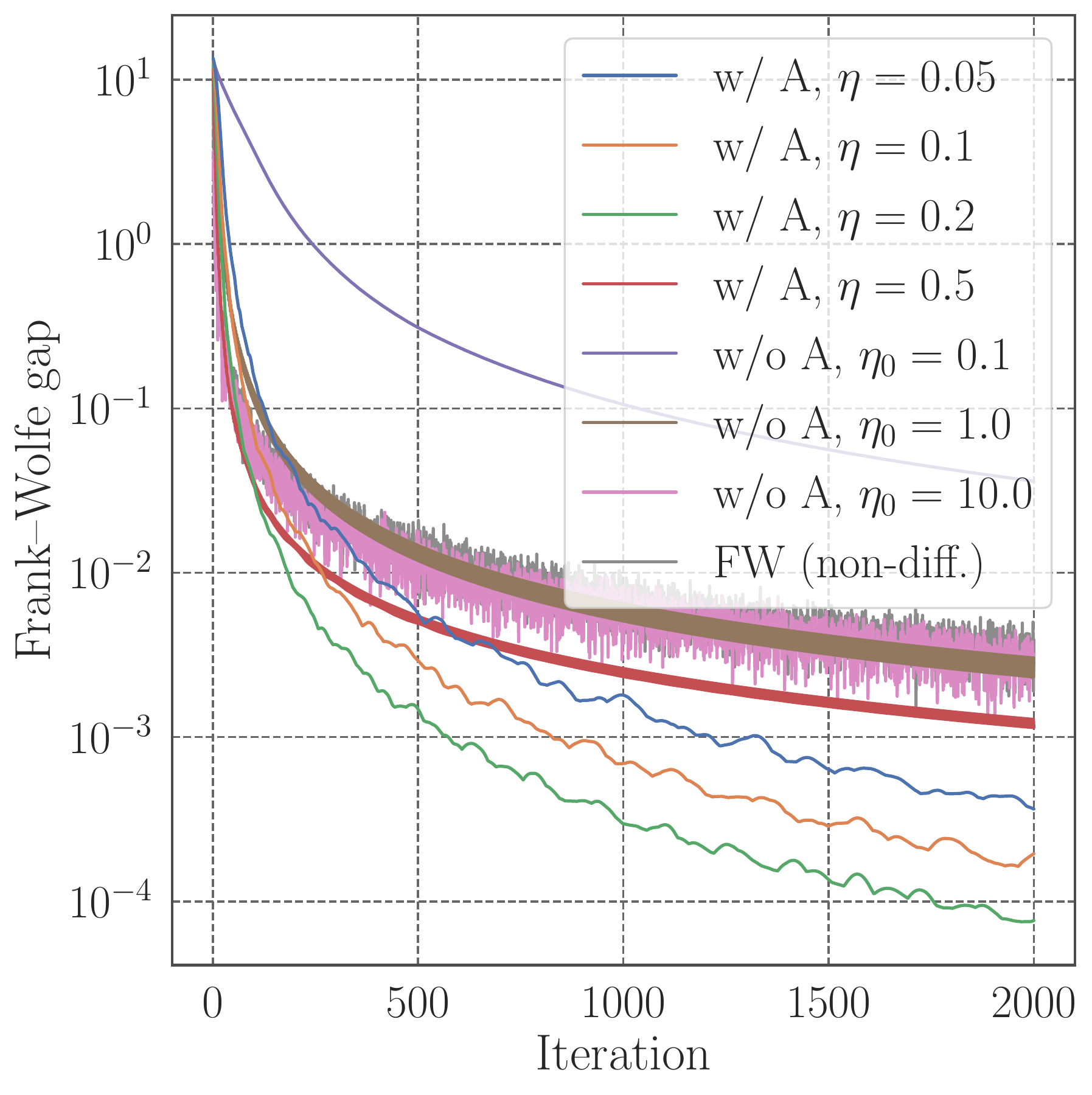}
		\subcaption{\tw, exponential cost}
		\label{fig:tw_exp_iter}
	\end{minipage}
	\centering
	\begin{minipage}[t]{.24\textwidth}
		\includegraphics[width=1.0\textwidth]{./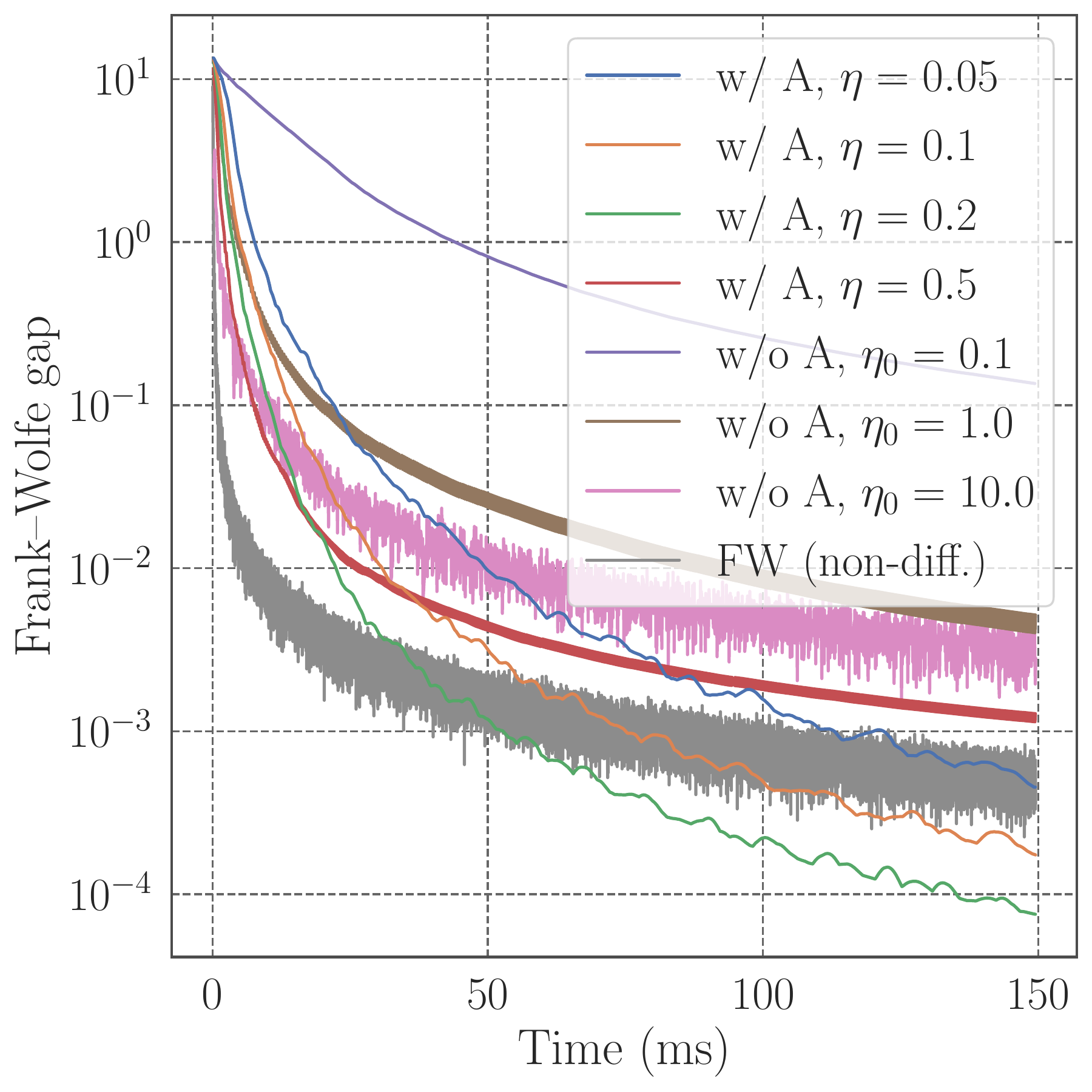}
		\subcaption{\tw, exponential cost}
		\label{fig:tw_exp_time}
	\end{minipage}
  \caption{
		Convergence results of equilibrium computation methods, where w/ A, w/o A, and FW represent \Cref{alg:fwx}, naive differentiable Frank--Wolfe without acceleration, and standard non-differentiable Frank--Wolfe, respectively. 
		We present the results on the other settings in \Cref{a_sec:experiment}. 
	}
	\label{fig:convergence}
\end{figure*}

\Cref{fig:convergence} shows how quickly the Frank--Wolfe gap \citep{jaggi2013revisiting}, which is an upper bound of the objective error, decreased as the number of iterations and the computation time increased. 
w/ A and w/o A tend to be faster and slower than FW, respectively.  
That is, \Cref{alg:fwx} (w/ A) becomes both differentiable and faster than the original FW, while the naive modified one (w/o A) becomes differentiable but slower. 
As in the {\tw} results, however, w/ A with a too large $\eta$ ($\eta = 0.5$) sometimes failed to be accelerated; 
this is reasonable since \Cref{thm:convergence} requires $\eta$ to be a moderate value. 
Thus, if we can locate appropriate $\eta$, \Cref{alg:fwx} achieves faster convergence in practice. 
As discussed in \Cref{subsec:implementation_consideration}, we can search for $\eta$ by examining the empirical convergence for various $\eta$ values, as we did above.

\subsection{Stackelberg models for designing communication networks}\label{subsec:experiment_stackelberg}

We consider minimizing social cost $\F(\thetab, \yb(\thetab))$. 
To this end, roughly speaking, we should assign large $\theta_i$ values to edges with large $y_i$ values.  
We applied the projected gradient method with a step size of $5.0$ to problem \eqref{problem:stackelberg}. 
We approximated $\nabla \F(\thetab, \yb(\thetab))$ by applying automatic differentiation to $\F(\thetab, \ybt{T}(\thetab))$, where $\ybt{T}(\thetab)$ was computed by \Cref{alg:fwx} with $T=100$, $200$, and $300$. 

To the best of our knowledge, no existing methods can efficiently deal with the problems considered here due to the complicated structures of $\Scal$. 
Therefore, as a baseline method, we used the following iterative heuristic.  
Given current $\yb(\thetab)$, we replace $\thetab$ with $\thetab + \delta (\yb(\thetab) - \bar y(\thetab) \ones)$, where $\delta>0$ and $\bar y(\thetab) = \frac{1}{n}\sum_{i\in[n]}y_i(\thetab)$.  
That is, we increase/decrease $\theta_i$ if edge $i$ is used more/less than average. 
We then project $\thetab$ onto $\Theta$ and compute $\yb(\thetab)$ with \Cref{alg:fwx} ($T=300$). 
If $\F(\thetab, \yb(\thetab))$ value does not decrease after the above update, we restart from a random point in $\Theta$.

\begin{figure*}[tb]
	\centering
	\begin{minipage}[t]{.24\textwidth}
		\includegraphics[width=1.0\textwidth]{./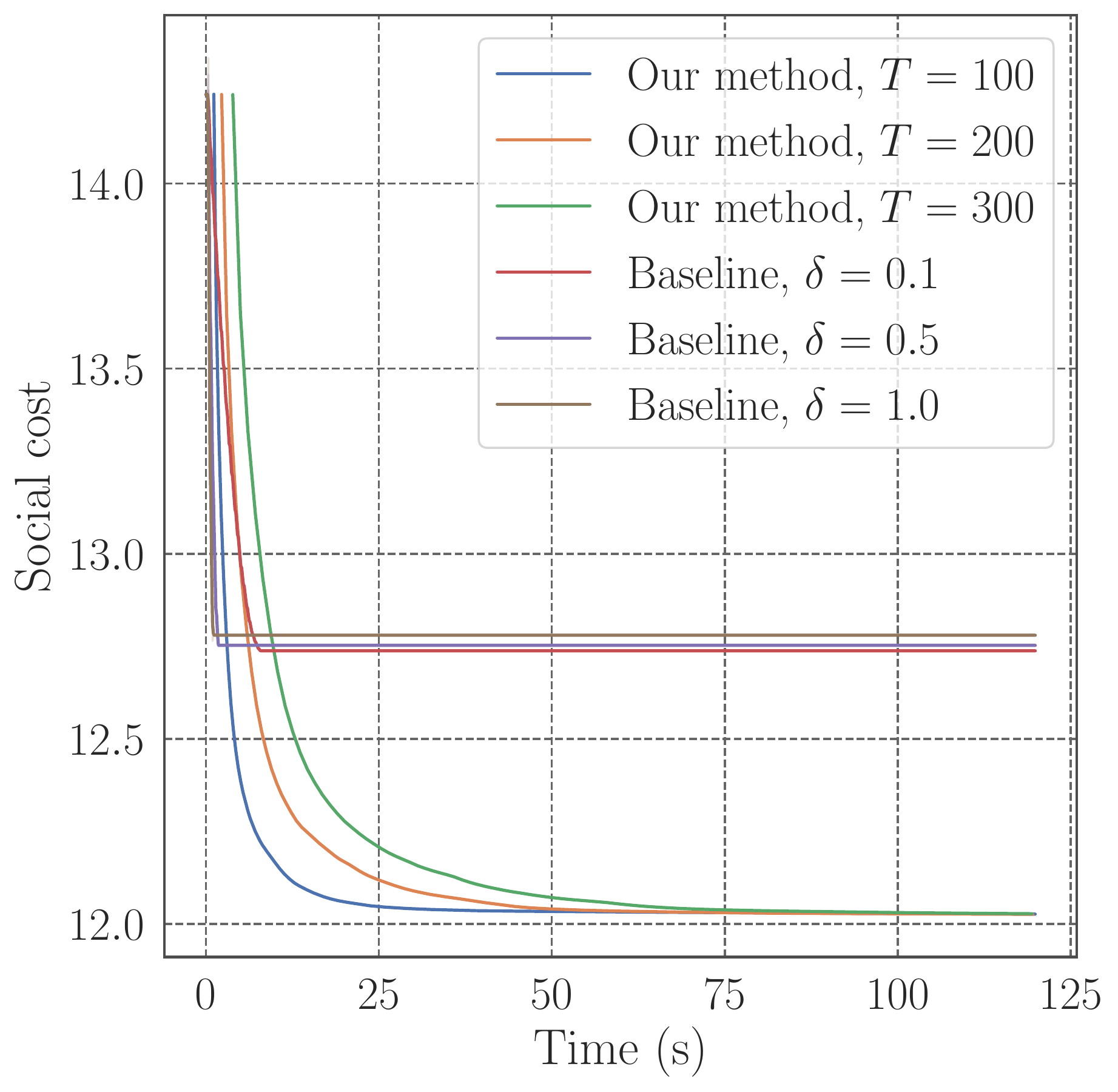}
		\subcaption{\att, fractional cost}
		\label{fig:att_inv_social_cost}
	\end{minipage}
	\centering
	\begin{minipage}[t]{.24\textwidth}
		\includegraphics[width=1.0\textwidth]{./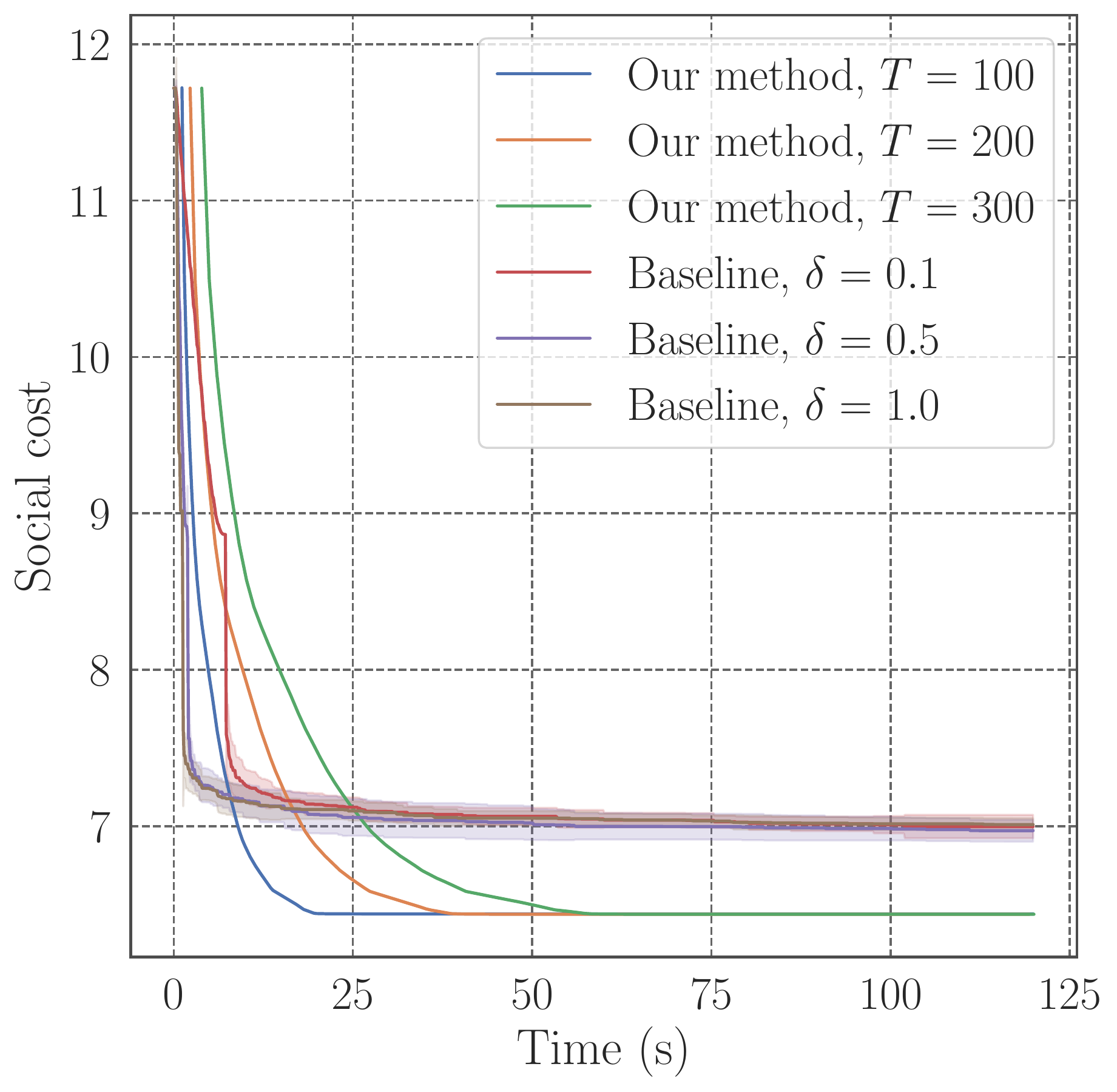}
		\subcaption{\att, exponential cost}
		\label{fig:att_exp_social_cost}
	\end{minipage}
	\centering
	\begin{minipage}[t]{.24\textwidth}
		\includegraphics[width=1.0\textwidth]{./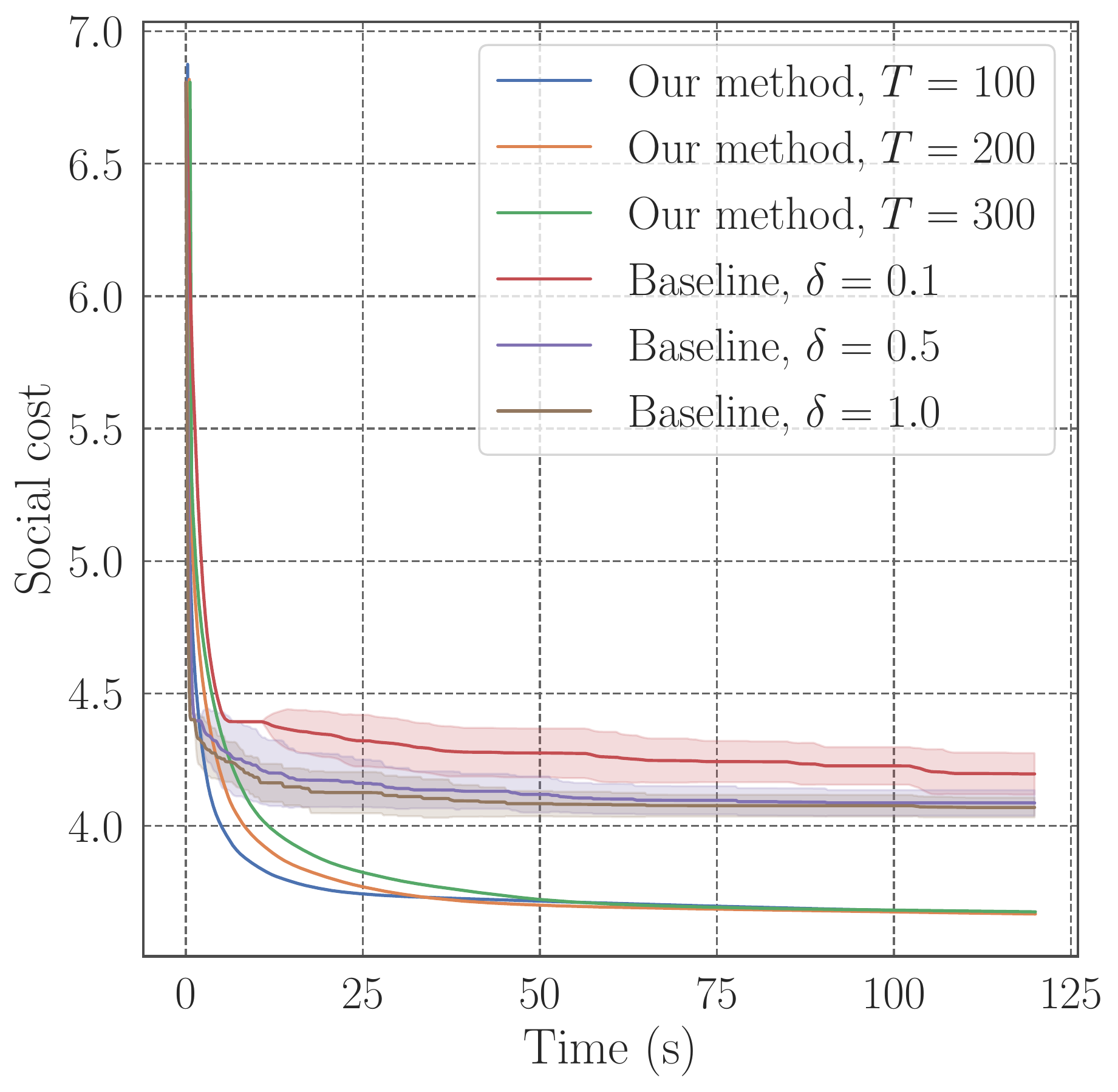}
		\subcaption{\tw, fractional cost}
		\label{fig:tw_inv_social_cost}
	\end{minipage}
	\centering
	\begin{minipage}[t]{.24\textwidth}
		\includegraphics[width=1.0\textwidth]{./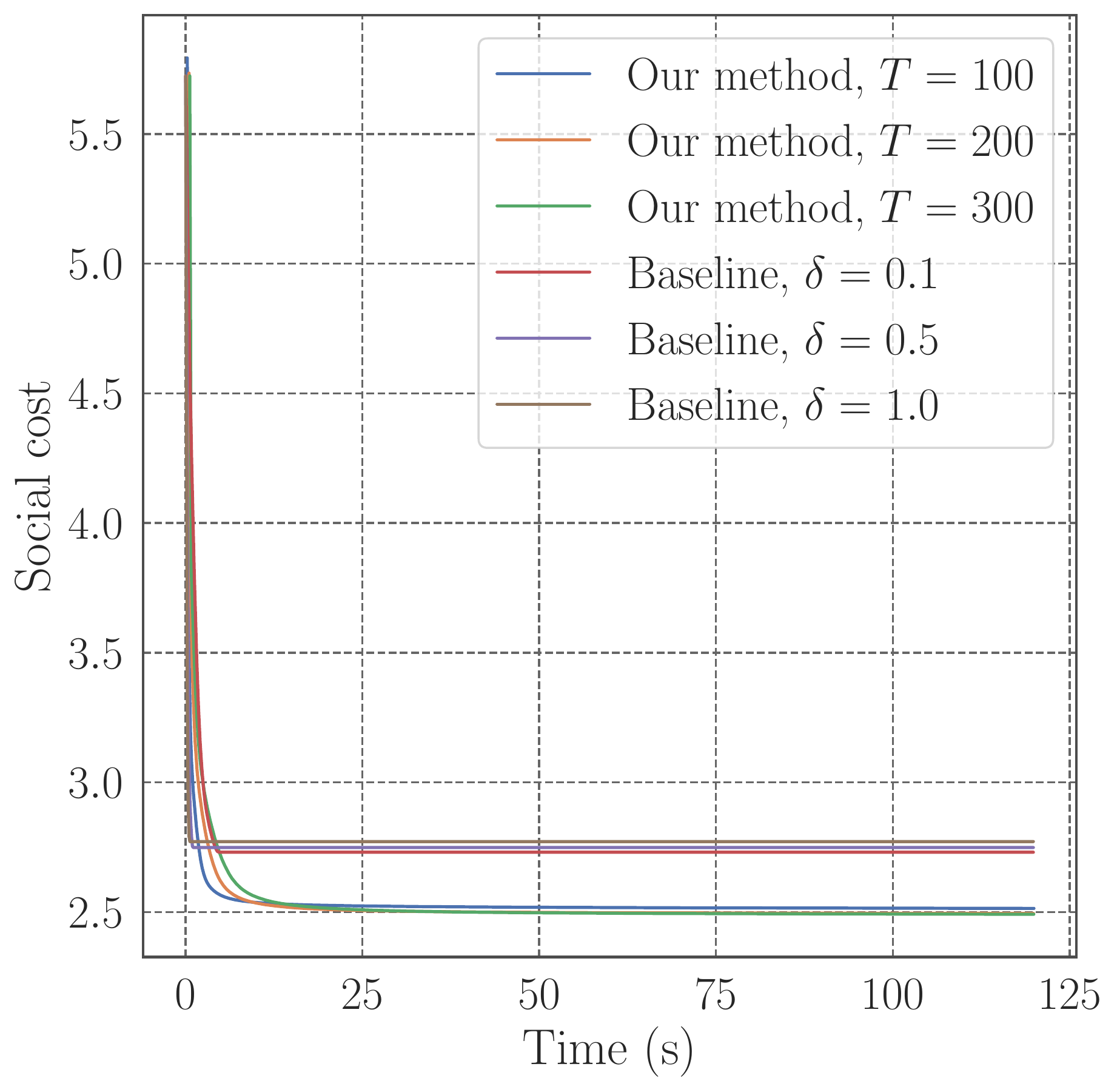}
		\subcaption{\tw, exponential cost}
		\label{fig:tw_exp_social_cost}
	\end{minipage}
  \caption{
		Plots of social costs achieved on network design instances. 
		Error bands of baseline methods show standard deviations over $20$ trials. 
		We present results on other settings in \Cref{a_sec:experiment}. 
	}
	\label{fig:social_cost}
\end{figure*}

\begin{figure*}[tb]
	\centering
	\begin{minipage}[t]{.22\textwidth}
		\includegraphics[width=1.0\textwidth]{./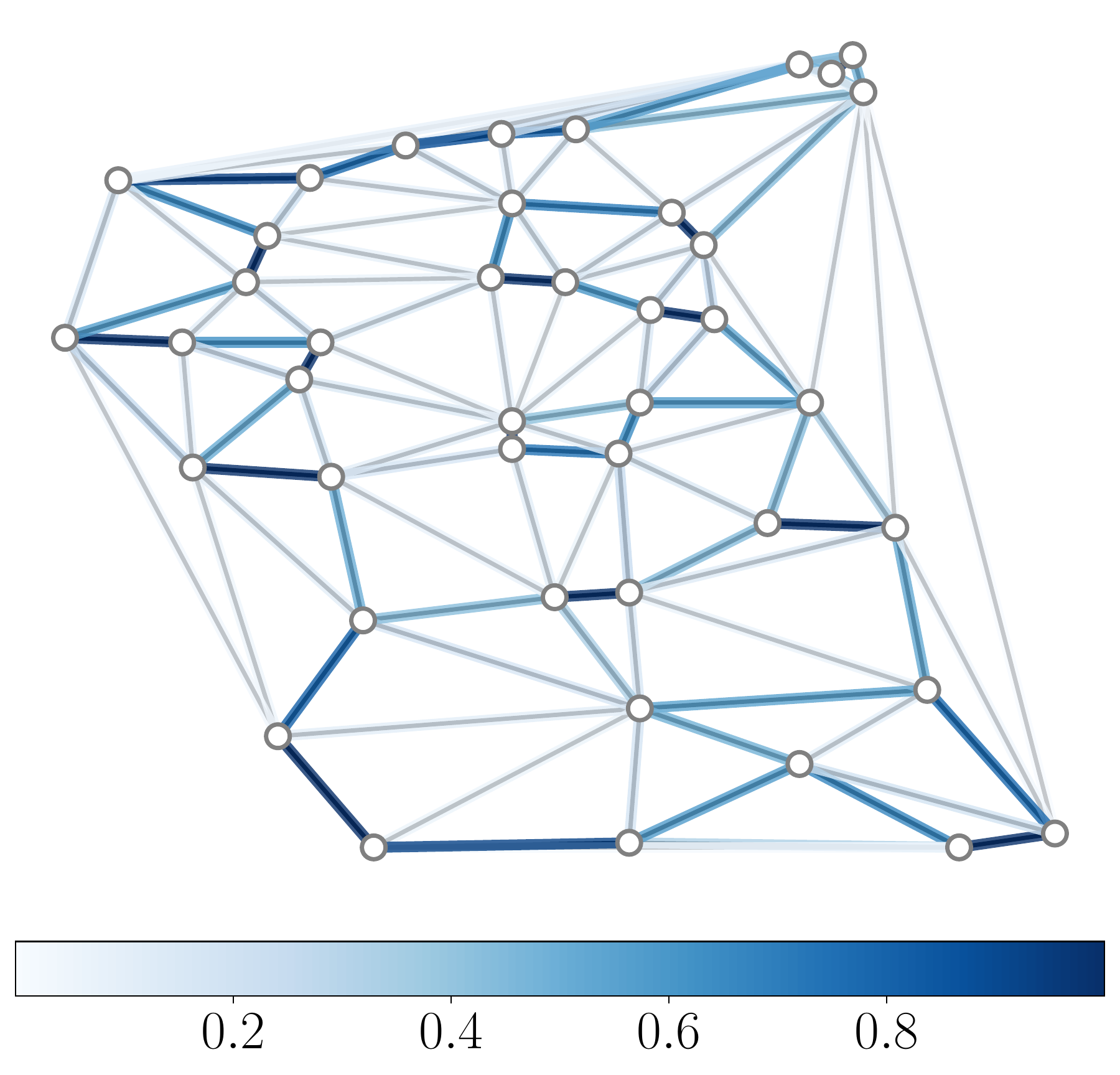}
		\subcaption{\dantzig, $\ybt{T}(\thetab)$}
		\label{fig:dantzig_inv_y}
	\end{minipage}
	\hskip 0.15in
	\centering
	\begin{minipage}[t]{.22\textwidth}
		\includegraphics[width=1.0\textwidth]{./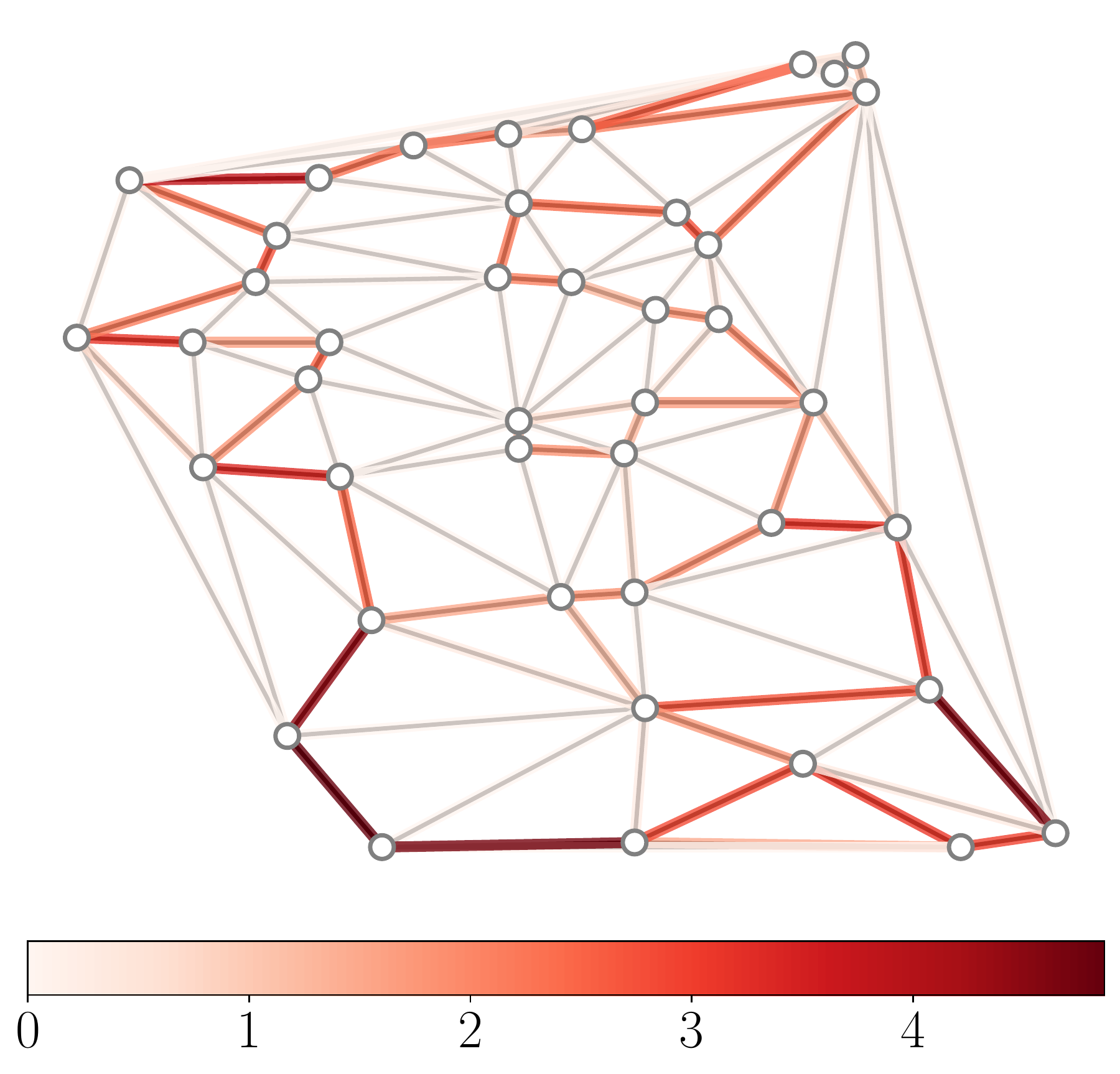}
		\subcaption{\dantzig, $\thetab$}
		\label{fig:dantzig_inv_theta}
	\end{minipage}
	\hskip 0.24in
	\centering
	\begin{minipage}[t]{.22\textwidth}
		\includegraphics[width=1.0\textwidth]{./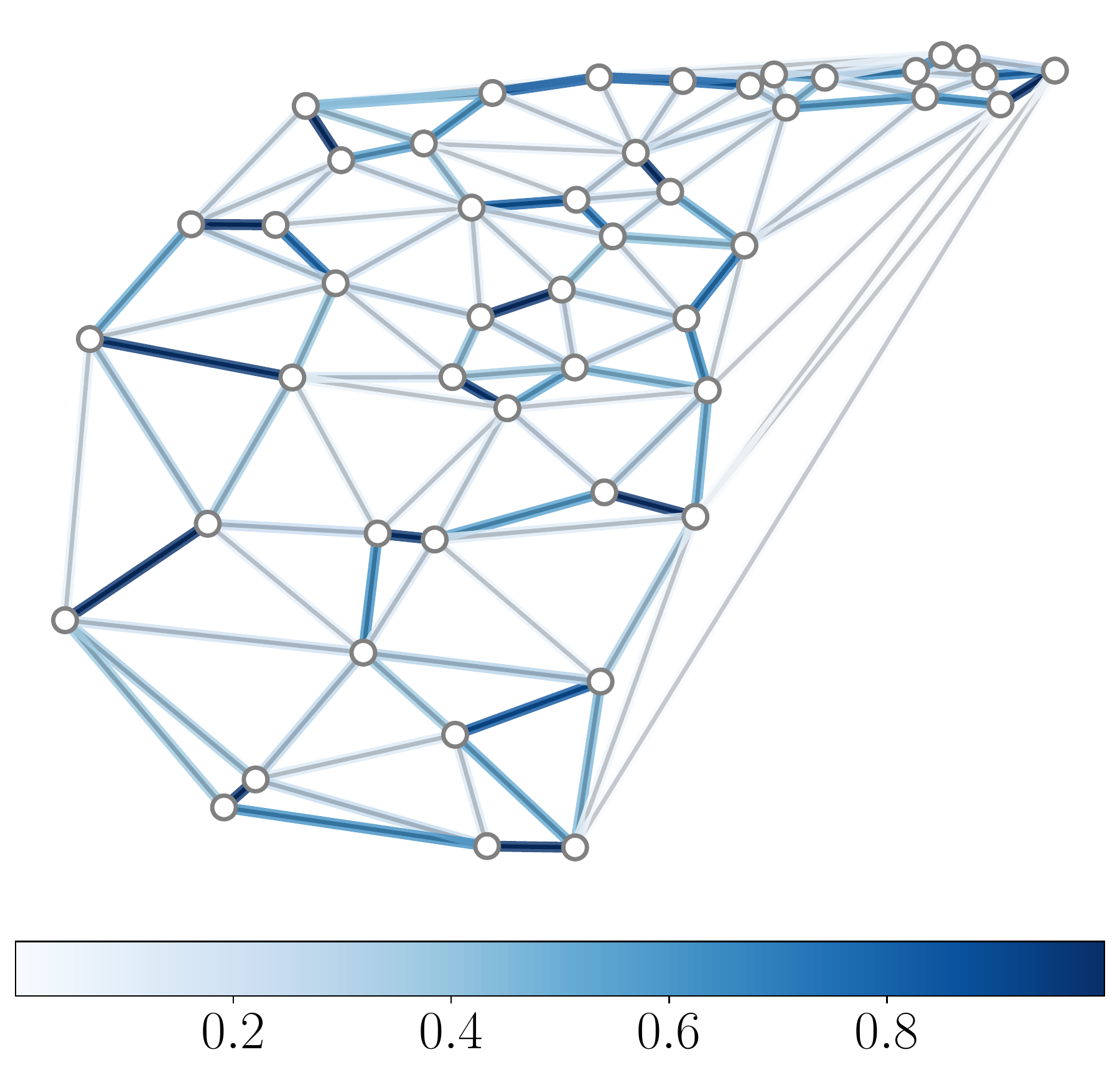}
		\subcaption{\att, $\ybt{T}(\thetab)$}
		\label{fig:att_inv_y}
	\end{minipage}
	\hskip 0.15in
	\centering
	\begin{minipage}[t]{.22\textwidth}
		\includegraphics[width=1.0\textwidth]{./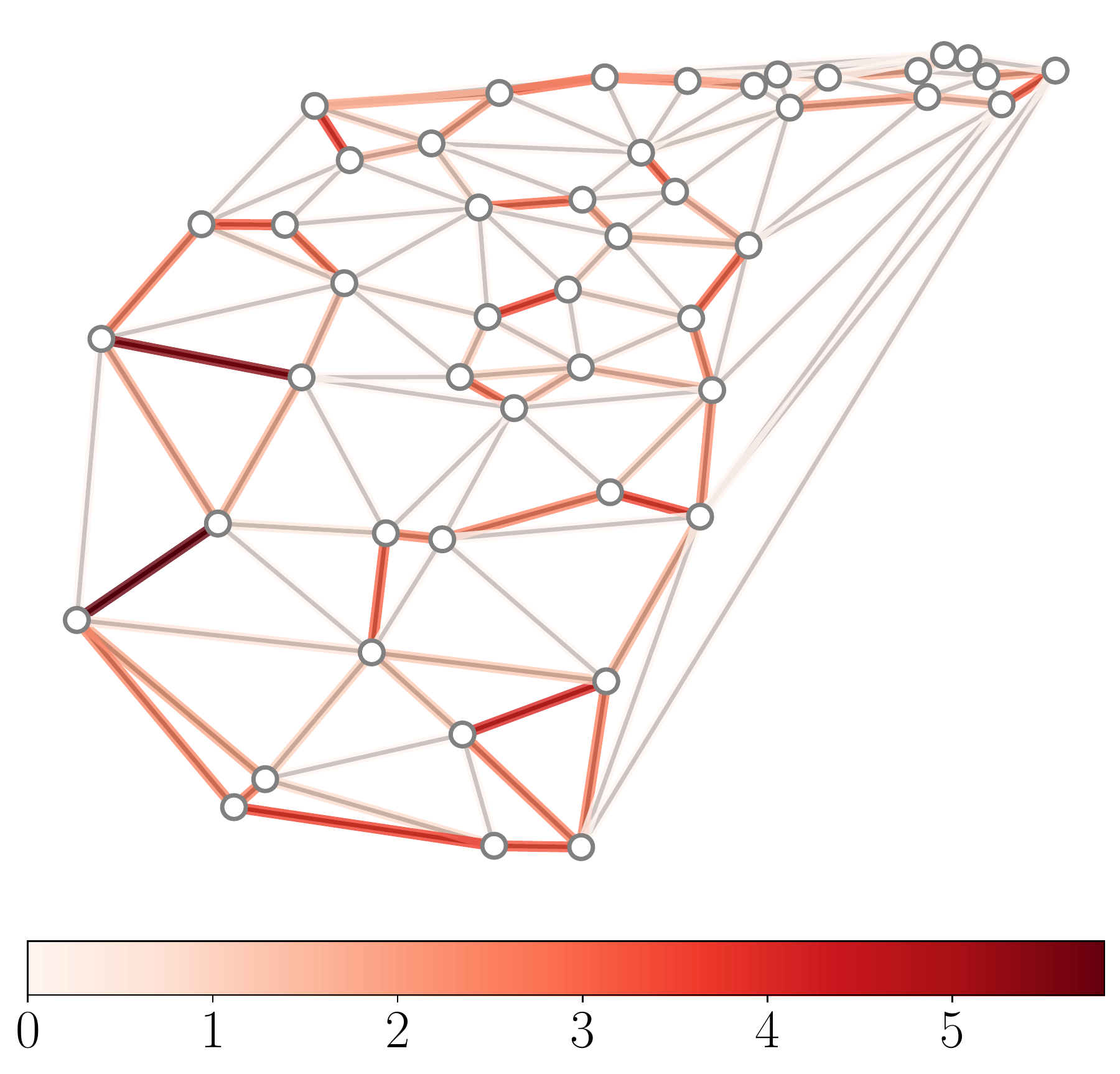}
		\subcaption{\att, $\thetab$}
		\label{fig:att_inv_theta}
	\end{minipage}
	\centering
	\begin{minipage}[t]{.22\textwidth}
		\includegraphics[width=1.0\textwidth]{./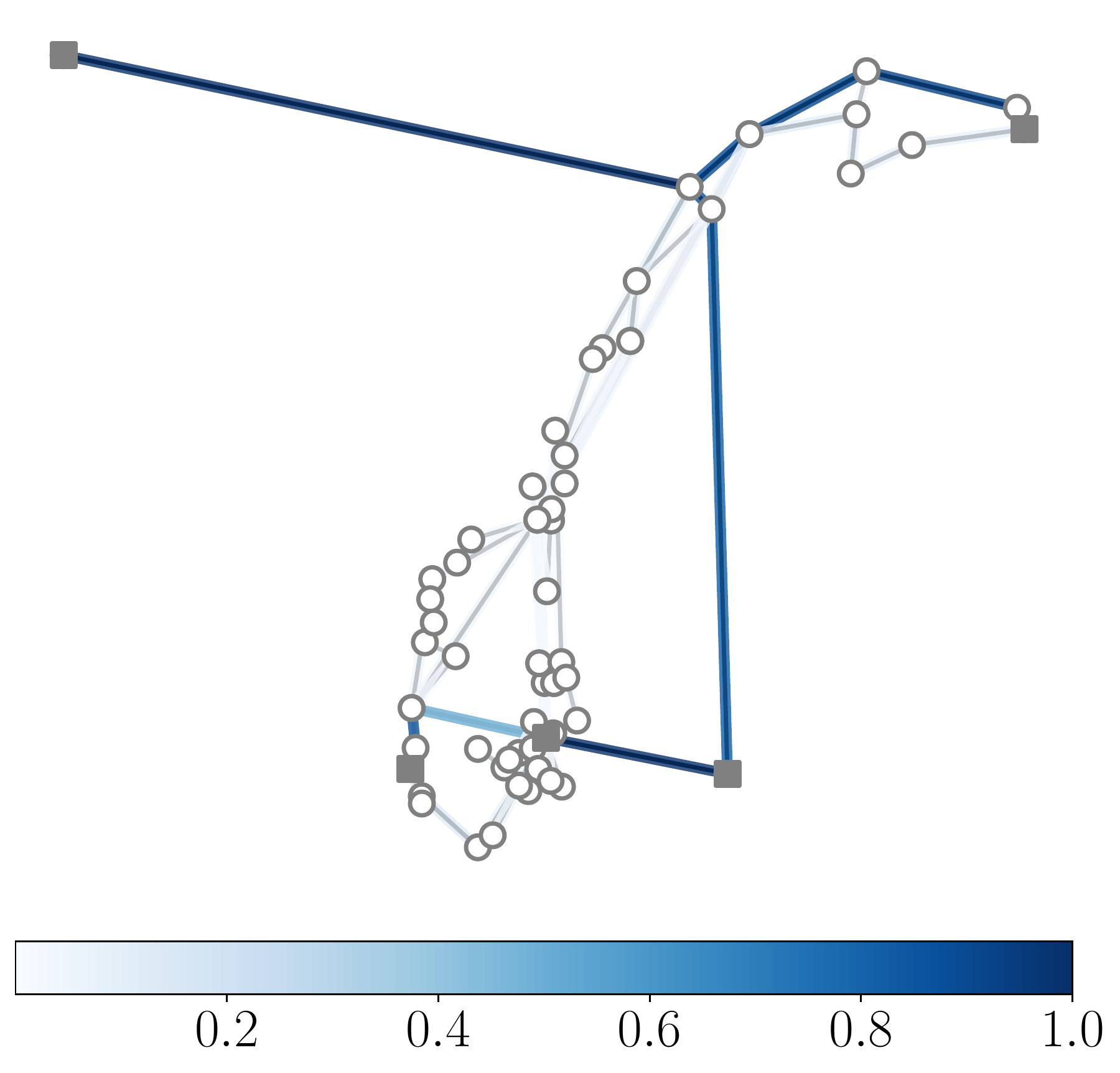}
		\subcaption{\uninett, $\ybt{T}(\thetab)$}
		\label{fig:uninett_inv_y}
	\end{minipage}
	\hskip 0.15in
	\centering
	\begin{minipage}[t]{.22\textwidth}
		\includegraphics[width=1.0\textwidth]{./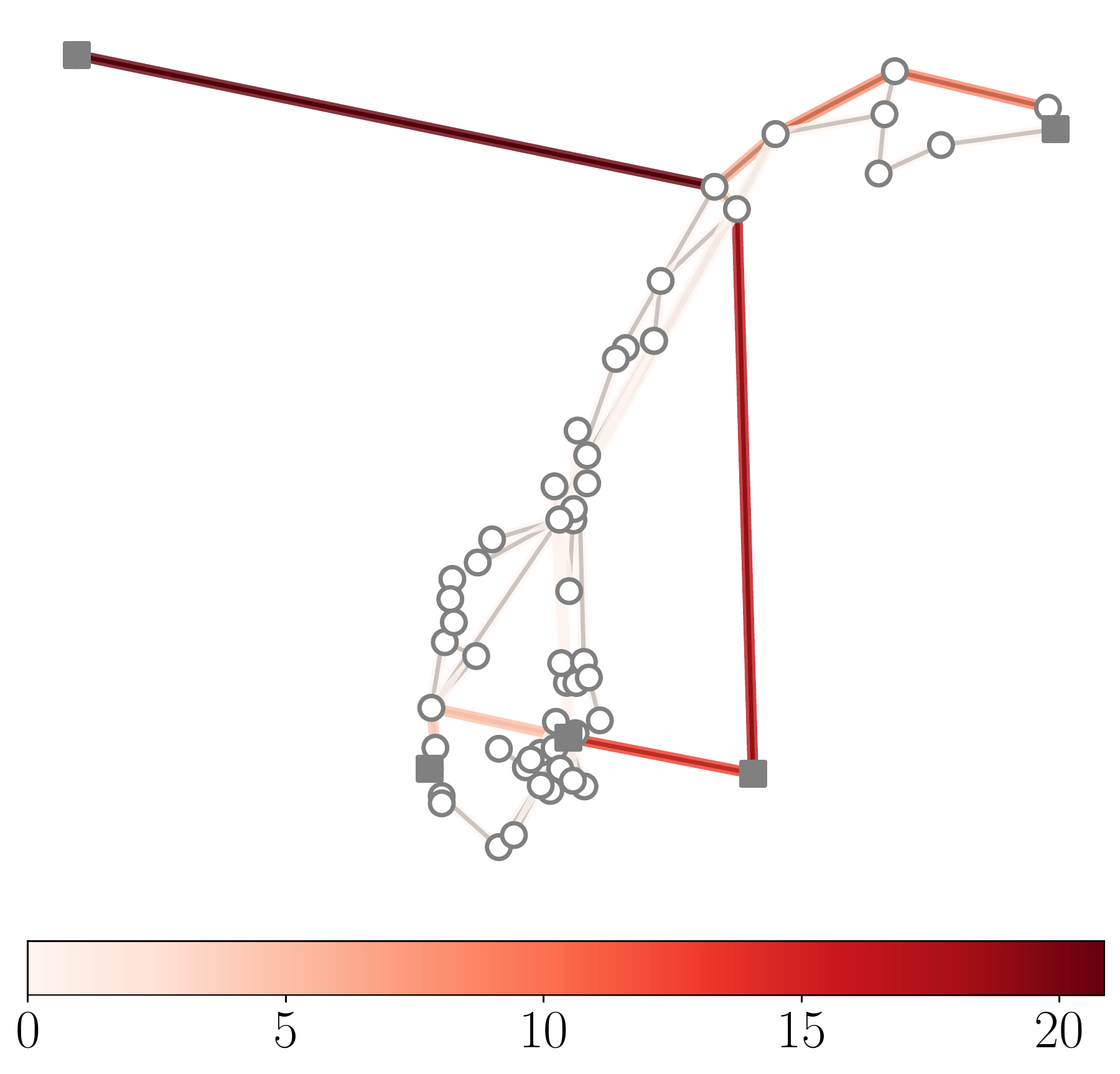}
		\subcaption{\uninett, $\thetab$}
		\label{fig:uninett_inv_theta}
	\end{minipage}
	\hskip 0.24in
	\centering
	\begin{minipage}[t]{.22\textwidth}
		\includegraphics[width=1.0\textwidth]{./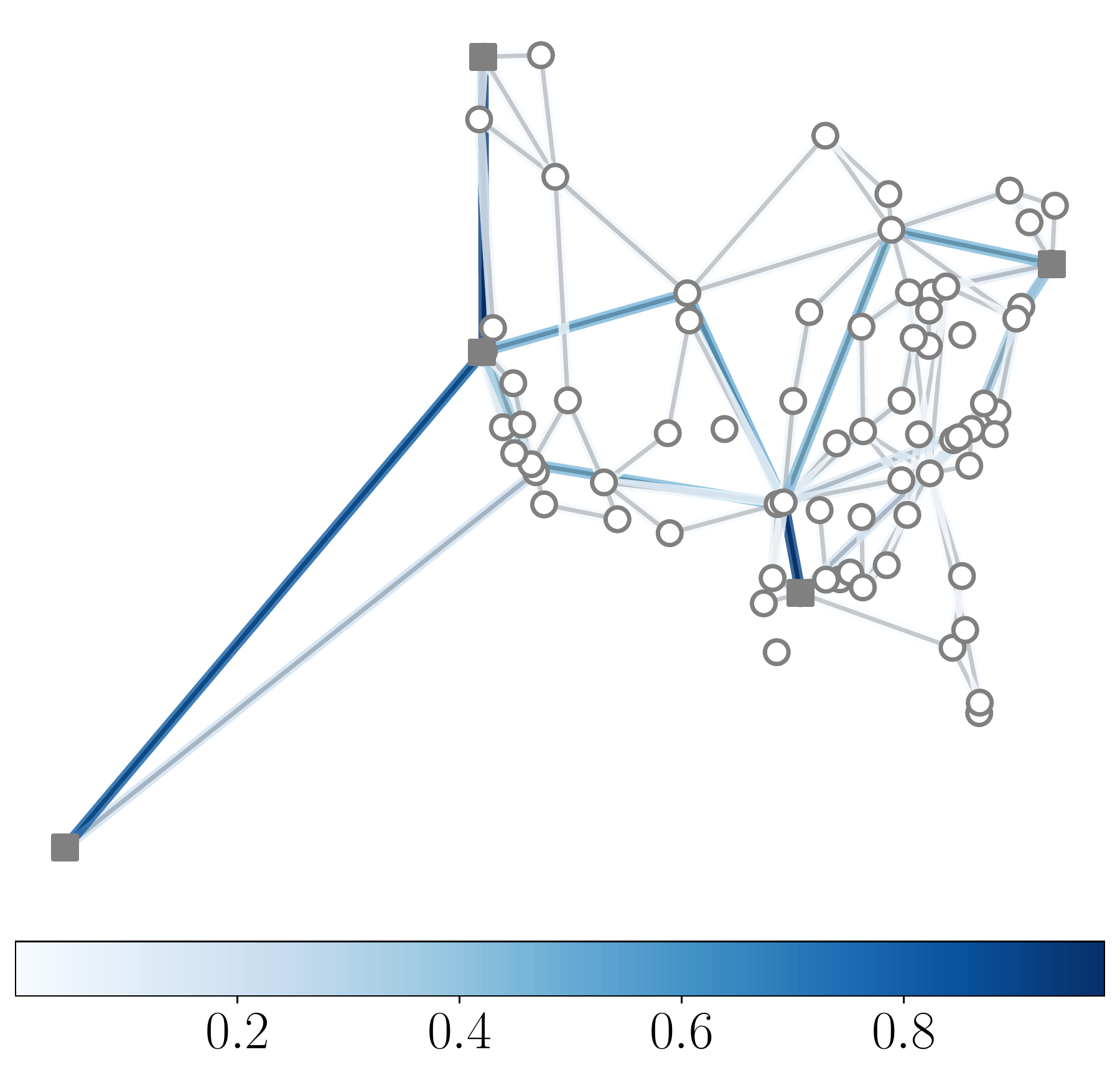}
		\subcaption{\tw, $\ybt{T}(\thetab)$}
		\label{fig:tw_inv_y}
	\end{minipage}
	\hskip 0.15in
	\centering
	\begin{minipage}[t]{.22\textwidth}
		\includegraphics[width=1.0\textwidth]{./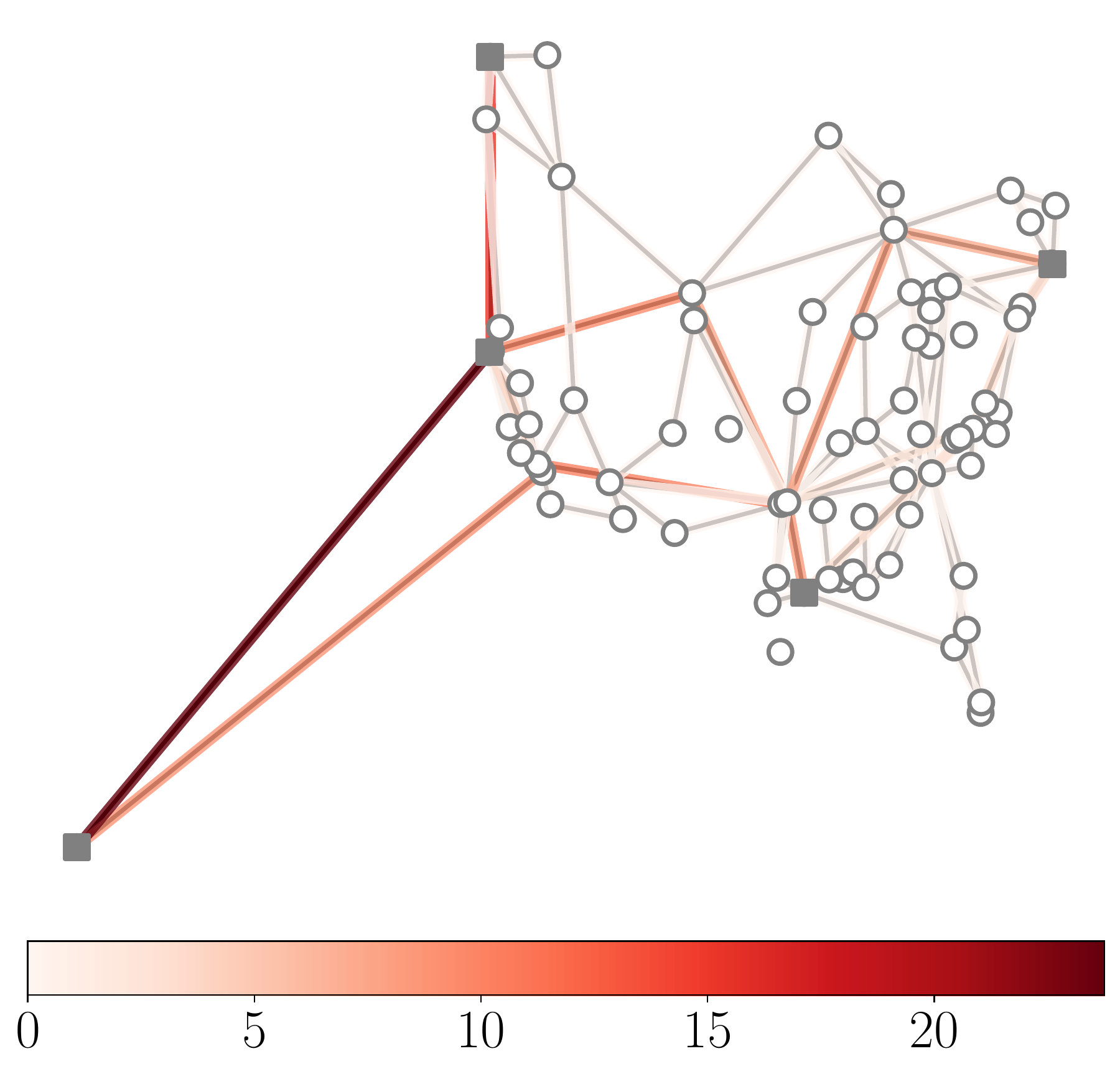}
		\subcaption{\tw, $\thetab$}
		\label{fig:tw_inv_theta}
	\end{minipage}
	\caption{
	Network illustrations: five square vertices in {\uninett} and {\tw} indicate terminals. 
	Blue (a, c, e, g) and red (b, d, f, h) edges represent 
	$\ybt{T}(\thetab)$ and $\thetab$ values, respectively, computed by our method 
	with $\eta = 0.1$ and $T=300$ for fractional-cost instances. 
	}
	\label{fig:stackelberg}
\end{figure*}

\Cref{fig:social_cost} compares our method ($\eta = 0.1$) and the baseline on {\att} and {\tw} instances, where both started from $\thetab = \ones$ and continued to update $\thetab$ for two minutes.  
Our method found better $\thetab$ values than the baseline. 
The results only imply the empirical tendency, and our method is not guaranteed to find globally optimal $\thetab$. 
Nevertheless, experiments in \Cref{subsec:experiment_global} show that it tends to find a global optimum at least for small instances.
\Cref{fig:stackelberg} shows the $\ybt{T}(\thetab)$ and $\thetab$ values obtained by our method with $\eta=0.1$ and $T = 300$ for fractional-cost instance. 
We confirmed that large $\theta_i$ values were successfully assigned to edges with large $y_i$ values. 
Those results demonstrate that our method is useful for addressing Stackelberg models of CCGs with complicated combinatorial strategy sets $\Scal$. 

\subsection{Experiments on empirical convergence to global optimum}\label{subsec:experiment_global}

We performed additional experiments on small instances to see whether 
the projected gradient method used in \Cref{subsec:experiment_stackelberg} 
can empirically find $\thetab$ that is close to being optimal. 
We used small selfish-routing instances, where a graph is given by the left one in \Cref{fig:zdd} 
and the edges are numbered from $1$ to $5$ in that order. 
We let strategy set $\Scal$ be the set of all simple $s$-$t$ paths. 
The cost functions and feasible region $\Theta$ were set as with those in the above sections. 
Our goal is to minimize social cost $\F(\thetab, \yb(\thetab))$. 

As in \Cref{subsec:experiment_stackelberg}, we computed $\yb(\thetab)$ using \Cref{alg:fwx} with $\eta = 0.1$ and $T = 300$, and performed the projected gradient descent to minimize $\F(\thetab, \yb(\thetab))$, where gradient $\nabla \F(\thetab, \yb(\thetab))$ was computed with backpropagation. 
On the other hand, to obtain (approximations of) globally optimal $\thetab$, we performed an exhaustive search over the feasible region, where the step size was set to $0.05$. 
Regarding computation times, 
the projected gradient method converged in less than $30$ iterations, which took less than $20$ ms, while the exhaustive search took about $1000$ seconds. 

\paragraph{Results on fractional costs.} 
A global optimum found by the exhaustive search was $\thetab = (0, 2.5, 0, 0, 2.5)$, whose social cost $\F(\thetab, \yb(\thetab))$ was $6.444$. Our method started from $\thetab = \ones$, whose social cost was $7.000$, and returned $\thetab = (1.25, 1.25, 0, 1.25, 1.25)$ with social cost $6.444$. Although the solution is different from that of the exhaustive search, both attain the identical social cost. Thus, the solution returned by the projected gradient method is also globally optimal. 

\paragraph{Results on exponential costs.} 
A global optimum found by the exhaustive search was $\thetab = (0, 2.5, 0, 0, 2.5)$ with social cost $3.517$. 
Our method started from $\thetab = \boldsymbol{1}$ with social cost $5.678$ and reached $\thetab = (0, 2.5, 0, 0, 2.5)$ with social cost $3.517$. Along the way, the projected gradient method was about to be trapped in $\thetab = (1.25, 1.25, 0, 1.25, 1.25)$ with social cost $4.865$, which seems to be a saddle point. However, it successfully got out of there and reached the global optimum.

\section{Conclusion and discussion}\label{sec:conclusion}
We proposed a differentiable pipeline that connects CCG parameters to their equilibria, enabling us to apply gradient-based methods to the Stackelberg models of CCGs. 
Our \Cref{alg:fwx} leverages softmin to make the Frank--Wolfe algorithm both differentiable and faster.  
ZDD-based softmin computation (\Cref{alg:softmin})
enables us to efficiently deal with complicated CCGs. It also naturally works with automatic differentiation, offering an easy way to compute desired derivatives. 
Experiments confirmed the accelerated empirical convergence and practicality of our method. 

An important future direction is further studying theoretical aspects. 
From our experimental results, $\nabla \ybt{T}(\thetab)$ is expected to converge to $\nabla \yb(\thetab)$, although its theoretical analysis is very difficult. 
Recently, some relevant results have been obtained for simple cases where iterative optimization methods are written by a contraction map defined on an unconstrained domain \citep{ablin2020super-efficiency,grazzi2020iteration}. 
In our CCG cases, however, we need to study iterative algorithms that numerically solve the constrained potential minimization, which requires a more profound understanding of iterative differentiation approaches. 
Another interesting future work is to make linearly convergent Frank--Wolfe variants \citep{lacoste2015global} differentiable. 

Finally, we discuss limitations and possible negative impacts. 
Our work does not cover cases where minimizer $\yb(\thetab)$ of potential functions is not unique. 
Since the complexity of our method mainly depends on the ZDD sizes, it does not work if ZDDs are prohibitively large, which can happen when strategy sets consist of the substructures of dense networks. 
Nevertheless, many real-world networks are sparse, and thus our ZDD-based method is often effective, as demonstrated in experiments. 
At a meta-level, optimizing social infrastructures in terms of a single objective function (e.g., the social cost) may lead to an extreme choice that is detrimental to some individuals. 
We hope our method can provide a basis for designing social infrastructures that are beneficial for all.

\section*{Acknowledgements}
The authors thank the anonymous reviewers for their valuable feedback, corrections, and suggestions.
This work was partially supported by JST ERATO Grant Number JPMJER1903 and JSPS KAKENHI Grant Number JP20H05963.

\bibliographystyle{abbrvnat}
\bibliography{mybib}

\begin{thebibliography}{}

\end{thebibliography}


\begin{thebibliography}{60}
\providecommand{\natexlab}[1]{#1}
\providecommand{\url}[1]{\texttt{#1}}
\expandafter\ifx\csname urlstyle\endcsname\relax
  \providecommand{\doi}[1]{doi: #1}\else
  \providecommand{\doi}{doi: \begingroup \urlstyle{rm}\Url}\fi

\bibitem[Ablin et~al.(2020)Ablin, Peyr{\'e}, and
  Moreau]{ablin2020super-efficiency}
P.~Ablin, G.~Peyr{\'e}, and T.~Moreau.
\newblock Super-efficiency of automatic differentiation for functions defined
  as a minimum.
\newblock In \emph{Proceedings of the 37th International Conference on Machine
  Learning}, volume 119, pages 32--41. PMLR, 2020.

\bibitem[Agrawal et~al.(2019)Agrawal, Amos, Barratt, Boyd, Diamond, and
  Kolter]{agrawal2019differentiable}
A.~Agrawal, B.~Amos, S.~Barratt, S.~Boyd, S.~Diamond, and J.~Z. Kolter.
\newblock Differentiable convex optimization layers.
\newblock In \emph{Advances in Neural Information Processing Systems},
  volume~32, pages 9562--9574. Curran Associates, Inc., 2019.

\bibitem[Allen-Zhu and Orecchia(2017)]{allenzhu2017linear}
Z.~Allen-Zhu and L.~Orecchia.
\newblock Linear coupling: An ultimate unification of gradient and mirror
  descent.
\newblock In \emph{Proceedings of the 8th Innovations in Theoretical Computer
  Science Conference}, volume~67, pages 3:1--3:22. Schloss
  Dagstuhl--Leibniz-Zentrum fuer Informatik, 2017.

\bibitem[Amos and Kolter(2017)]{amos2017optnet}
B.~Amos and J.~Z. Kolter.
\newblock {O}pt{N}et: Differentiable optimization as a layer in neural
  networks.
\newblock In \emph{Proceedings of the 34th International Conference on Machine
  Learning}, volume~70, pages 136--145. PMLR, 2017.

\bibitem[Bar-Gera(2002)]{bar-gera2002origin}
H.~Bar-Gera.
\newblock Origin-based algorithm for the traffic assignment problem.
\newblock \emph{Transp. Sci.}, 36\penalty0 (4):\penalty0 398--417, 2002.

\bibitem[Bar-{G}era et~al.(2013)Bar-{G}era, Hellman, and
  Patriksson]{bargera2013computational}
H.~Bar-{G}era, F.~Hellman, and M.~Patriksson.
\newblock Computational precision of traffic equilibria sensitivities in
  automatic network design and road pricing.
\newblock \emph{Procedia Soc. Behav. Sci.}, 80:\penalty0 41--60, 2013.

\bibitem[Belanger et~al.(2017)Belanger, Yang, and McCallum]{belanger2017end}
D.~Belanger, B.~Yang, and A.~McCallum.
\newblock End-to-end learning for structured prediction energy networks.
\newblock In \emph{Proceedings of the 34th International Conference on Machine
  Learning}, volume~70, pages 429--439. PMLR, 2017.

\bibitem[Berthet et~al.(2020)Berthet, Blondel, Teboul, Cuturi, Vert, and
  Bach]{berthet2020learning}
Q.~Berthet, M.~Blondel, O.~Teboul, M.~Cuturi, J.-P. Vert, and F.~Bach.
\newblock Learning with differentiable perturbed optimizers.
\newblock In \emph{Advances in Neural Information Processing Systems},
  volume~33, pages 9508--9519. Curran Associates, Inc., 2020.

\bibitem[Bhaskar et~al.(2019)Bhaskar, Ligett, Schulman, and
  Swamy]{bhaskar2019achieving}
U.~Bhaskar, K.~Ligett, L.~J. Schulman, and C.~Swamy.
\newblock Achieving target equilibria in network routing games without knowing
  the latency functions.
\newblock \emph{Games Econom. Behav.}, 118:\penalty0 533--569, 2019.

\bibitem[Bracken and McGill(1973)]{bracken1973mathematical}
J.~Bracken and J.~T. McGill.
\newblock Mathematical programs with optimization problems in the constraints.
\newblock \emph{Oper. Res.}, 21\penalty0 (1):\penalty0 37--44, 1973.

\bibitem[Cook and Seymour(2003)]{cook2003tour}
W.~Cook and P.~Seymour.
\newblock Tour merging via branch-decomposition.
\newblock \emph{INFORMS J. Comput.}, 15\penalty0 (3):\penalty0 233--248, 2003.

\bibitem[Correa and Stier-Moses(2011)]{correa2011wardrop}
J.~R. Correa and N.~E. Stier-Moses.
\newblock {Wardrop} equilibria.
\newblock In \emph{Wiley Encyclopedia of Operations Research and Management
  Science}. Wiley Online Library, 2011.

\bibitem[Dempe et~al.(2015)Dempe, Kalashnikov, P{\'e}rez-Vald{\'e}s, and
  Kalashnykova]{dempe2015bilevel}
S.~Dempe, V.~Kalashnikov, G.~A. P{\'e}rez-Vald{\'e}s, and N.~Kalashnykova.
\newblock \emph{Bilevel Programming Problems}.
\newblock Springer, 1st edition, 2015.

\bibitem[Domke(2012)]{domke2012generic}
J.~Domke.
\newblock Generic methods for optimization-based modeling.
\newblock In \emph{Proceedings of the 15th International Conference on
  Artificial Intelligence and Statistics}, volume~22, pages 318--326. PMLR,
  2012.

\bibitem[Fiez et~al.(2020)Fiez, Chasnov, and Ratliff]{fiez2020implicit}
T.~Fiez, B.~Chasnov, and L.~Ratliff.
\newblock Implicit learning dynamics in {S}tackelberg games: Equilibria
  characterization, convergence analysis, and empirical study.
\newblock In \emph{Proceedings of the 37th International Conference on Machine
  Learning}, volume 119, pages 3133--3144. PMLR, 2020.

\bibitem[Fiorini et~al.(2015)Fiorini, Massar, Pokutta, Tiwary, and
  de~Wolf]{fiorini2015exponential}
S.~Fiorini, S.~Massar, S.~Pokutta, H.~R. Tiwary, and R.~de~Wolf.
\newblock Exponential lower bounds for polytopes in combinatorial optimization.
\newblock \emph{J. ACM}, 62\penalty0 (2):\penalty0 1--23, 2015.

\bibitem[Franceschi et~al.(2018)Franceschi, Frasconi, Salzo, Grazzi, and
  Pontil]{franceschi2018bilevel}
L.~Franceschi, P.~Frasconi, S.~Salzo, R.~Grazzi, and M.~Pontil.
\newblock Bilevel programming for hyperparameter optimization and
  meta-learning.
\newblock In \emph{Proceedings of the 35th International Conference on Machine
  Learning}, volume~80, pages 1568--1577. PMLR, 2018.

\bibitem[Frank and Wolfe(1956)]{frank1956algorithm}
M.~Frank and P.~Wolfe.
\newblock An algorithm for quadratic programming.
\newblock \emph{Naval Res. Logis. Quart.}, 3\penalty0 (1‐2):\penalty0
  95--110, 1956.

\bibitem[Ghadimi and Wang(2018)]{ghadimi2018approximation}
S.~Ghadimi and M.~Wang.
\newblock Approximation methods for bilevel programming.
\newblock \emph{arXiv preprint arXiv:1802.02246}, 2018.

\bibitem[Grazzi et~al.(2020)Grazzi, Franceschi, Pontil, and
  Salzo]{grazzi2020iteration}
R.~Grazzi, L.~Franceschi, M.~Pontil, and S.~Salzo.
\newblock On the iteration complexity of hypergradient computation.
\newblock In \emph{Proceedings of the 37th International Conference on Machine
  Learning}, volume 119, pages 3748--3758. PMLR, 2020.

\bibitem[Griewank and Walther(2008)]{griewank2008evaluating}
A.~Griewank and A.~Walther.
\newblock \emph{Evaluating Derivatives}.
\newblock SIAM, 2nd edition, 2008.

\bibitem[Hansen et~al.(1992)Hansen, Jaumard, and Savard]{hansen1992new}
P.~Hansen, B.~Jaumard, and G.~Savard.
\newblock New branch-and-bound rules for linear bilevel programming.
\newblock \emph{SIAM J. Sci. Statist. Comput.}, 13\penalty0 (5):\penalty0
  1194--1217, 1992.

\bibitem[Hogan(2017)]{hogan2017adept}
R.~J. Hogan.
\newblock Adept 2.0: a combined automatic differentiation and array library for
  {C}++, 2017.

\bibitem[Imase and Waxman(1991)]{imase1991dynamic}
M.~Imase and B.~M. Waxman.
\newblock Dynamic {S}teiner tree problem.
\newblock \emph{SIAM J. Discrete. Math.}, 4\penalty0 (3):\penalty0 369--384,
  1991.

\bibitem[Inoue et~al.(2016)Inoue, Iwashita, Kawahara, and
  Minato]{inoue2016graphillion}
T.~Inoue, H.~Iwashita, J.~Kawahara, and S.~Minato.
\newblock Graphillion: software library for very large sets of labeled graphs.
\newblock \emph{Int. J. Software Tool. Tech. Tran.}, 18\penalty0 (1):\penalty0
  57--66, 2016.

\bibitem[Inoue and Minato(2016)]{inoue2016acceleration}
Y.~Inoue and S.~Minato.
\newblock Acceleration of {ZDD} construction for subgraph enumeration via
  path-width optimization.
\newblock Technical report, TCS-TR-A-16-80, Hokkaido University, 2016.

\bibitem[Jaggi(2013)]{jaggi2013revisiting}
M.~Jaggi.
\newblock Revisiting {Frank--Wolfe}: Projection-free sparse convex
  optimization.
\newblock In \emph{Proceedings of the 30th International Conference on Machine
  Learning}, volume~28, pages 427--435. PMLR, 2013.

\bibitem[Jahn et~al.(2005)Jahn, M{\"o}hring, Schulz, and
  Stier-Moses]{jahn2005system}
O.~Jahn, R.~H. M{\"o}hring, A.~S. Schulz, and N.~E. Stier-Moses.
\newblock System-optimal routing of traffic flows with user constraints in
  networks with congestion.
\newblock \emph{Oper. Res.}, 53\penalty0 (4):\penalty0 600--616, 2005.

\bibitem[Jang et~al.(2017)Jang, Gu, and Poole]{jang2017categorical}
E.~Jang, S.~Gu, and B.~Poole.
\newblock Categorical reparameterization with {G}umbel-{S}oftmax.
\newblock In \emph{Proceedings of the 5th International Conference on Learning
  Representations}, 2017.

\bibitem[Kawahara et~al.(2017)Kawahara, Inoue, Iwashita, and
  Minato]{kawahara2014frontier}
J.~Kawahara, T.~Inoue, H.~Iwashita, and S.~Minato.
\newblock Frontier-based search for enumerating all constrained subgraphs with
  compressed representation.
\newblock \emph{IEICE Transactions on Fundamentals of Electronics,
  Communications and Computer Sciences}, E100.A\penalty0 (9):\penalty0
  1773--1784, 2017.

\bibitem[Knight et~al.(2011)Knight, Nguyen, Falkner, Bowden, and
  Roughan]{knight2011internet}
S.~Knight, H.~X. Nguyen, N.~Falkner, R.~Bowden, and M.~Roughan.
\newblock The {I}nternet {T}opology {Z}oo.
\newblock \emph{IEEE J. Sel. Areas Commum.}, 29\penalty0 (9):\penalty0
  1765--1775, 2011.
\newblock \url{http://www.topology-zoo.org/dataset.html}.

\bibitem[Knuth(2011)]{knuth2011art1}
D.~E. Knuth.
\newblock \emph{The Art of Computer Programming: Combinatorial Algorithms, Part
  1}, volume~4A.
\newblock Addison-Wesley Professional, 1st edition, 2011.

\bibitem[Lacoste-Julien and Jaggi(2015)]{lacoste2015global}
S.~Lacoste-Julien and M.~Jaggi.
\newblock On the global linear convergence of {F}rank--{W}olfe optimization
  variants.
\newblock In \emph{Advances in Neural Information Processing Systems},
  volume~28, pages 496--504. Curran Associates, Inc., 2015.

\bibitem[{LeCun} et~al.(1989){LeCun}, {Boser}, {Denker}, {Henderson}, {Howard},
  {Hubbard}, and {Jackel}]{lecun1989backpropagation}
Y.~{LeCun}, B.~{Boser}, J.~S. {Denker}, D.~{Henderson}, R.~E. {Howard},
  W.~{Hubbard}, and L.~D. {Jackel}.
\newblock Backpropagation applied to handwritten zip code recognition.
\newblock \emph{Neural Comput.}, 1\penalty0 (4):\penalty0 541--551, 1989.

\bibitem[Li et~al.(2012)Li, Yang, Zhu, and Meng]{li2012global}
C.~Li, H.~Yang, D.~Zhu, and Q.~Meng.
\newblock A global optimization method for continuous network design problems.
\newblock \emph{Transport. Res. B-Meth.}, 46\penalty0 (9):\penalty0 1144--1158,
  2012.

\bibitem[Luo et~al.(1996)Luo, Pang, and Ralph]{luo1996mathematical}
Z.-Q. Luo, J.-S. Pang, and D.~Ralph.
\newblock \emph{Mathematical Programs with Equilibrium Constraints}.
\newblock Cambridge University Press, 1996.

\bibitem[Maclaurin et~al.(2015)Maclaurin, Duvenaud, and
  Adams]{maclaurin2015gradient}
D.~Maclaurin, D.~Duvenaud, and R.~Adams.
\newblock Gradient-based hyperparameter optimization through reversible
  learning.
\newblock In \emph{Proceedings of the 32nd International Conference on Machine
  Learning}, volume~37, pages 2113--2122. PMLR, 2015.

\bibitem[Mensch and Blondel(2018)]{mensch2018differentiable}
A.~Mensch and M.~Blondel.
\newblock Differentiable dynamic programming for structured prediction and
  attention.
\newblock In \emph{Proceedings of the 35th International Conference on Machine
  Learning}, volume~80, pages 3462--3471. PMLR, 2018.

\bibitem[Minato(1993)]{minato1993zero}
S.~Minato.
\newblock Zero-suppressed {BDD}s for set manipulation in combinatorial
  problems.
\newblock In \emph{Proceedings of the 30th International Design Automation
  Conference}, pages 272--277. IEEE, 1993.

\bibitem[Mohri(2009)]{mohri2009weighted}
M.~Mohri.
\newblock \emph{Weighted Automata Algorithms}, pages 213--254.
\newblock Springer, 2009.

\bibitem[Monderer and Shapley(1996)]{monderer1996potential}
D.~Monderer and L.~S. Shapley.
\newblock Potential games.
\newblock \emph{Games Econ. Behav.}, 14\penalty0 (1):\penalty0 124--143, 1996.

\bibitem[Nakamura et~al.(2020)Nakamura, Sakaue, and
  Yasuda]{nakamura2020practical}
K.~Nakamura, S.~Sakaue, and N.~Yasuda.
\newblock Practical {Frank--Wolfe} method with decision diagrams for computing
  {Wardrop} equilibrium of combinatorial congestion games.
\newblock In \emph{Processings of the 34th AAAI Conference on Artificial
  Intelligence}, volume~34, pages 2200--2209, 2020.

\bibitem[Nesterov(1983)]{nesterov1983method}
Y.~Nesterov.
\newblock A method of solving a convex programming problem with convergence
  rate ${O}(1/k^2)$.
\newblock \emph{Dokl. Akad. Nauk SSSR}, 269\penalty0 (3):\penalty0 543--547,
  1983.

\bibitem[Nishino et~al.(2017)Nishino, Yasuda, Minato, and
  Nagata]{nishino2017compiling}
M.~Nishino, N.~Yasuda, S.~Minato, and M.~Nagata.
\newblock Compiling graph substructures into sentential decision diagrams.
\newblock In \emph{Proceedings of the 31st AAAI Conference on Artificial
  Intelligence}, pages 1213--1221, 2017.

\bibitem[Ochs et~al.(2016)Ochs, Ranftl, Brox, and Pock]{ochs2016techniques}
P.~Ochs, R.~Ranftl, T.~Brox, and T.~Pock.
\newblock Techniques for gradient-based bilevel optimization with non-smooth
  lower level problems.
\newblock \emph{J. Math. Imaging Vis.}, 56\penalty0 (2):\penalty0 175--194,
  2016.

\bibitem[Patriksson and Rockafellar(2002)]{patriksson2002mathematical}
M.~Patriksson and R.~T. Rockafellar.
\newblock A mathematical model and descent algorithm for bilevel traffic
  management.
\newblock \emph{Transport. Sci.}, 36\penalty0 (3):\penalty0 271--291, 2002.

\bibitem[Pogan{\v c}i{\'c} et~al.(2020)Pogan{\v c}i{\'c}, Paulus, Musil,
  Martius, and Rolinek]{pogancic2020differentiation}
M.~V. Pogan{\v c}i{\'c}, A.~Paulus, V.~Musil, G.~Martius, and M.~Rolinek.
\newblock Differentiation of blackbox combinatorial solvers.
\newblock In \emph{Proceedings of the 8th International Conference on Learning
  Representations}, 2020.

\bibitem[Reinelt(1991)]{tsplib}
G.~Reinelt.
\newblock {TSPLIB}—a traveling salesman problem library.
\newblock \emph{INFORMS Journal on Computing}, 3\penalty0 (4):\penalty0
  376--384, 1991.
\newblock \url{http://comopt.ifi.uni-heidelberg.de/software/TSPLIB95/}.

\bibitem[Rosenthal(1973)]{rosenthal1973class}
R.~W. Rosenthal.
\newblock A class of games possessing pure-strategy nash equilibria.
\newblock \emph{Int. J. Game Theory}, 2\penalty0 (1):\penalty0 65--67, 1973.

\bibitem[Roughgarden(2005)]{roughgarden2005selfish}
T.~Roughgarden.
\newblock \emph{Selfish Routing and the Price of Anarchy}.
\newblock The MIT Press, 2005.

\bibitem[Sakaue(2021)]{sakaue2021differentiablegreedy}
S.~Sakaue.
\newblock Differentiable greedy algorithm for monotone submodular maximization:
  Guarantees, gradient estimators, and applications.
\newblock In \emph{Proceedings of the 24th International Conference on
  Artificial Intelligence and Statistics}, volume 130, pages 28--36. PMLR,
  2021.

\bibitem[Sakaue et~al.(2018)Sakaue, Ishihata, and Minato]{sakaue2018bandit}
S.~Sakaue, M.~Ishihata, and S.~Minato.
\newblock Efficient bandit combinatorial optimization algorithm with
  zero-suppressed binary decision diagrams.
\newblock In \emph{Processings of the 21st International Conference on
  Artificial Intelligence and Statistics}, volume~84, pages 585--594. PMLR,
  2018.

\bibitem[Sandholm(2001)]{sandholm2001potential}
W.~H. Sandholm.
\newblock Potential games with continuous player sets.
\newblock \emph{J. Econ. Theory}, 97\penalty0 (1):\penalty0 81--108, 2001.

\bibitem[Shewchuk(1996)]{delaunay}
J.~R. Shewchuk.
\newblock Triangle: Engineering a {2D} quality mesh generator and delaunay
  triangulator.
\newblock In \emph{Applied Computational Geometry Towards Geometric
  Engineering}, pages 203--222. Springer, 1996.
\newblock \url{https://www.cs.cmu.edu/~quake/triangle.html}.

\bibitem[Stackelberg(1952)]{stackelberg1952theory}
H.~Stackelberg.
\newblock \emph{The Theory of the Market Economy}.
\newblock Oxford University Press, 1952.

\bibitem[Takimoto and Warmuth(2003)]{takimoto2003path}
E.~Takimoto and M.~K. Warmuth.
\newblock Path kernels and multiplicative updates.
\newblock \emph{J. Mach. Learn. Res.}, 4\penalty0 (Oct):\penalty0 773--818,
  2003.

\bibitem[Thai(2017)]{thai2017learning}
J.~Thai.
\newblock \emph{On learning Game-Theoretical models with Application to Urban
  Mobility}.
\newblock PhD thesis, UC Berkeley, 2017.
\newblock ProQuest ID: Thai\_berkeley\_0028E\_17598. Merritt ID:
  ark:/13030/m59s6nbq. Retrieved from
  \url{https://escholarship.org/uc/item/3b61v84v}.

\bibitem[Wainwright and Jordan(2008)]{wainwright2008graphical}
M.~J. Wainwright and M.~I. Jordan.
\newblock Graphical models, exponential families, and variational inference.
\newblock \emph{Found. Trends Mach. Learn.}, 1\penalty0 (1--2):\penalty0
  1--305, 2008.

\bibitem[Wang and Abernethy(2018)]{wang2018acceleration}
J.-K. Wang and J.~Abernethy.
\newblock Acceleration through optimistic no-regret dynamics.
\newblock In \emph{Advances in Neural Information Processing Systems},
  volume~31, pages 3824--3834. Curran Associates, Inc., 2018.

\bibitem[Wilder et~al.(2019)Wilder, Dilkina, and Tambe]{wilder2019melding}
B.~Wilder, B.~Dilkina, and M.~Tambe.
\newblock Melding the data-decisions pipeline: Decision-focused learning for
  combinatorial optimization.
\newblock In \emph{Proceedings of the 33rd AAAI Conference on Artificial
  Intelligence}, volume~33, pages 1658--1665, 2019.

\end{thebibliography}

\appendix
\setcounter{equation}{0}
\setcounter{thm}{0}
\setcounter{lem}{0}
\setcounter{algorithm}{0}
\renewcommand{\theequation}{A\arabic{equation}}
\renewcommand{\thethm}{A\arabic{thm}}
\renewcommand{\thelem}{A\arabic{lem}}
\renewcommand{\thealgorithm}{A\arabic{algorithm}}

\clearpage

\begin{center}
	{\fontsize{18pt}{0pt}\selectfont \bf Appendix}
\end{center}

\section{Extension to asymmetric CCGs}\label{asec:extension}
First, we introduce additional notation and definitions for extending our problem setting to the case with asymmetric CCGs, where there are multiple types of strategy sets. 
We can recover the simpler symmetric setting, which we studied in the main paper, by modifying the notation presented below: $r=1$, $ \Scal^1 = \Scal$, and $m^1 = 1$. 

\subsection{Problem setting}
There are $r$ populations of players, whose strategy sets are $\Scal^1,\dots,\Scal^r \subseteq 2^{[n]}$. 
Let $\zdim^p\coloneqq |\Scal^p|$ for $p\in[r]$ and define $\zdim\coloneqq \zdim^1+\dots+\zdim^r$. 
Let $\zb^p\in \triangle^{\zdim^p}$ for each $p\in[r]$ and $\zb = (\zb^1,\dots,\zb^r) \in \Dcal \coloneqq \triangle^{\zdim^1} \times\dots\times \triangle^{\zdim^r}$. 
Let $z_S^p\in[0,1]$ be the entry of $\zb^p$ corresponding to $S\in\Scal^p$, which indicates the proportion of players in the $p$-th group who choose strategy $S\in\Scal^p$. 

For each $p\in[r]$, let $\L^p \in \{0, 1\}^{n\times \zdim^p}$ be a matrix whose $(i, S)$ entry is $1$ iff $i\in[n]$ is included in $S\in\Scal^p$; 
the columns of $\L^p$ consist of $\{\ones_S\}_{S\in\Scal^p}$. 
Let $\xb^p = \sum_{S\in\Scal^p} z^p_S \ones_S = \L^p \zb^p\in[0,1]^n$, whose $i$-th entry is the proportion of players in $p$ who choose $S\in\Scal^p$ such that $i\in S$. 
Note that $\xb^p$ is in the convex hull, $\conv(\Scal^p) \coloneqq \Set*{ \sum_{S\in\Scal} z^p_S \ones_S }{\zb^p \in \triangle^{\zdim^p} }$. 

For each $p\in[r]$, let $m^p>0$ be the total mass of players in the $p$-th group.  
We define $\L \coloneqq [m^1 \L^1, \dots, m^r \L^r] \in\R^{n\times \zdim}$ and let $\yb$ be a vector whose $i$-th entry indicates the total mass of players using $i\in[n]$, i.e., $\yb = \sum_{p\in[r]} m^p \xb^p = \sum_{p\in[r]} m^p \L^p \zb^p = \L \zb$. 
Note that for any $\zb\in\Dcal$, $\yb = \L\zb$ is always included in $\Ccal \coloneqq \Set*{\sum_{p\in[r]} m^p \xb^p }{ \xb^p\in\conv(\Scal^p) \text{\ for\ } p \in[r]}$. 

Analogous to the symmetric case, each $i\in[n]$ has cost function $c_i(\cdot; \thetab)$, where we assume $c_i(y_i; \thetab)$ to be strictly increasing in $y_i$ for any $\thetab\in\Theta$ and differentiable in $\thetab$ for any $\yb\in\Ccal$. 
A player choosing strategy $S$ incurs cost $c_S(\yb; \thetab) \coloneqq \sum_{i\in S}c_i(y_i;\thetab)$. 
In the asymmetric setting, once $\thetab$ is fixed, an equilibrium is defined as follows: $\zb\in\Dcal$ attains an equilibrium if for every $p\in[r]$, every $S\in\Scal^p$ such that $z^p_S>0$ satisfies $c_S(\yb; \thetab) \le \min_{S^\prime\in \Scal^p} c_{S^\prime}(\yb; \thetab)$ for $\yb=\L\zb$. 
That is, for every $p\in[r]$, no player in the $p$-th group is motivated to change his/her strategy in $\Scal^p$.

As with the symmetric case, $\zb\in\Dcal$ attains an equilibrium iff $\yb = \L\zb$ is a (unique) minimizer of the following potential function minimization: 
\begin{align}
  \minimize_{\yb}
  \ 
  \f\left( \yb; \thetab \right)
  \quad
  \subto
  \ 
  \yb \in \Ccal, 
  \label{a_problem:minf}
\end{align}
where $\f(\yb;\thetab) = \sum_{i\in[n]} \int_{0}^{y_i} c_i(y; \thetab) \drm y$ 
is a parameterized potential function.  
In what follows, we also consider the following formulation of the above problem: 
\begin{align}
  \minimize_{\zb}\ \Phi(\zb; \thetab) \coloneqq \f(\L\zb; \thetab)
  \quad
  \subto
  \ 
  \zb\in \Dcal. 
  \label{problem:minphiz}
\end{align}
Since $\zdim$ is exponential in $n$ in general, we cannot directly deal with problem \eqref{problem:minphiz} in practice. 
We only use the formulation for the theoretical analysis; our method does not explicitly deal with \eqref{problem:minphiz}.

\subsection{Extension of our algorithm}

To address the asymmetric setting, we need to slightly modify our algorithm.
\Cref{alg:fwx_multi} presents details of the modified algorithm. 
In Step \ref{step:softmin_multi}, we use softmin oracle $\softmin_{\Scal^p}$ for each $p\in[r]$ to obtain $\xb^p$, and in Step \ref{step:y_multi} we aggregate them to obtain $\xb_t = \sum_{p\in[r]} m^p \xb^p_t$. 
In the symmetric setting, where $r=1$, $ \Scal^1 = \Scal$, and $m^1 = 1$, \Cref{alg:fwx_multi} is equivalent to \Cref{alg:fwx}. 
For each $p\in[r]$, $\Softmin$ oracle $\softmin_{\Scal^p}$ returns 
the following $n$-dimensional vector: 
\begin{align}
  \softmin_{\Scal^p}(\cb) = 
  \sum_{S\in\Scal^p} \ones_S \frac{\exp \left(- \cb^\top \ones_S \right)}{\sum_{S^\prime\in\Scal^p} \exp \left(- \cb^\top \ones_{S^\prime} \right)},  
\end{align}
where $\cb\in\R^n$. 
To compute $\Softmin$ for each $p\in[r]$ with ZDDs, we construct $\zdd_{\Scal^1},\dots,\zdd_{\Scal^r}$ in a preprocessing step and use \Cref{alg:softmin}. 
As with the symmetric setting, once $\zdd_{\Scal^p}$ is constructed for each $p\in[r]$, we can repeatedly use it every time $\mub_{\Scal^p}$ is called.

\begin{algorithm}[tb]
	\caption{Differentiable Frank--Wolfe-based equilibrium computation for asymmetric settings}
	\label{alg:fwx_multi}
	\begin{algorithmic}[1]
		\State $\cb_0 = 0$, $\svec_0 = 0$, $\xb_{-1} = \xb_{0} = \sum_{p\in[r]} m^p  \softmin_{\Scal^p}(m^p \cb_0)$, and $\alpha_t = t$ ($t = 0,\dots,T$) 
		\For{$t=1,\dots,T$}
		\State $\svec_t = \svec_{t-1} - \alpha_{t-1} \xb_{t-2} + (\alpha_{t-1} + \alpha_t) \xb_{t-1}$
		\State $\cb_t = \cb_{t-1} + \eta \alpha_t \nabla \f\left( \frac{2}{t(t+1)} \svec_t; \thetab \right) $
		\State $\xb^p_t = \softmin_{\Scal^p}(  m^p \cb_t)$ for each $p\in[r]$ \label{step:softmin_multi}
		\State $\xb_t = \sum_{p\in[r]} m^p \xb^p_t$ \label{step:y_multi}
		\EndFor
		\Return $\ybt{T}(\thetab) = \frac{2}{T(T+1)}\sum_{t=1}^{T} \alpha_t \xb_t$
	\end{algorithmic}
\end{algorithm}

\section{Proof of \texorpdfstring{\Cref{thm:convergence}}{Theorem~\ref{thm:convergence}}}\label{sec:proof_convergence}

We prove the following convergence guarantee of \Cref{alg:fwx_multi}.  

\begin{thm}\label{a_thm:convergence}
	If $\eta \in [\frac{1}{CL}, \frac{1}{4L}]$
	holds for some constant $C>4$, 
	we have
	$$\f(\ybt{T}(\thetab); \thetab)
	-
	\min_{\yb\in\Ccal} \f(\yb; \thetab)
	\le  
	\mathrm{O}\left(\frac{C\smp \sum_{p\in[r]}\ln\zdim^p}{T^2}\right).$$
\end{thm}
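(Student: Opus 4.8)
I would recognize \Cref{alg:fwx_multi} as Nesterov-accelerated mirror descent (in the ``method of similar triangles'' form) for minimizing $\Phi$ over the product of simplices $\Dcal$ with the \emph{entropic} mirror map, and then run the standard potential-function analysis. The first move is to lift the scheme from $\yb$-space $\R^n$ to $\zb$-space $\R^\zdim$: since $\Phi(\zb)=\f(\L\zb;\thetab)$ depends on $\zb$ only through $\L\zb$, and $\softmin_{\Scal^p}(m^p\cb)$ is exactly the distribution $z^p_S\propto\exp(-m^p\cb^\top\ones_S)$, I let $\zb_t\in\Dcal$ be the product distribution produced in \Cref{step:softmin_multi} (so $\L\zb_t=\xb_t$). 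The dual-averaging update $\cb_t=\cb_{t-1}+\eta\alpha_t\nabla\f(\cdot)$ then gives $z^p_{t,S}\propto z^p_{t-1,S}\exp(-\eta\alpha_t(\nabla\Phi(\cdot))^p_S)$, i.e.\ the entropic prox step $\zb_t=\argmin_{\zb\in\Dcal}\{\eta\alpha_t\langle\nabla\Phi(\gb_t),\zb\rangle+D(\zb,\zb_{t-1})\}$, where $D$ is the separable Kullback--Leibler divergence and $\gb_t\in\Dcal$ is the query point. Setting $\alpha_t=t$, $A_t=\sum_{s\le t}\alpha_s=\tfrac{t(t+1)}2$, and $\theta_t=\alpha_t/A_t=\tfrac2{t+1}$, I would check by induction that the $\svec_t$-recursion yields $\svec_t=\sum_{s<t}\alpha_s\zb_s+\alpha_t\zb_{t-1}$, hence $\gb_t=\tfrac{2}{t(t+1)}\svec_t=(1-\theta_t)\bar\zb_{t-1}+\theta_t\zb_{t-1}$, that $\bar\zb_t\coloneqq\tfrac1{A_t}\sum_{s\le t}\alpha_s\zb_s$ satisfies $\bar\zb_t=(1-\theta_t)\bar\zb_{t-1}+\theta_t\zb_t$, and that the output is $\ybt{T}=\L\bar\zb_T$. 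Every iterate is a convex combination of softmin outputs, so $\gb_t\in\Ccal$ and the gradients are well defined; and since $\f(\ybt{T};\thetab)=\Phi(\bar\zb_T)$ and $\min_{\yb\in\Ccal}\f=\min_{\zb\in\Dcal}\Phi\eqqcolon\Phi^\star$, it suffices to bound $\Phi(\bar\zb_T)-\Phi^\star$.

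Next I would prove a one-step decrease of the potential $\Psi_t\coloneqq A_t(\Phi(\bar\zb_t)-\Phi^\star)+\tfrac1\eta D(\zb^\star,\zb_t)$. Applying $\smp$-smoothness along $\bar\zb_t-\gb_t=\theta_t(\zb_t-\zb_{t-1})$ and multiplying by $A_t$ (so $A_t\theta_t=\alpha_t$, $A_t\theta_t^2=\alpha_t^2/A_t$) gives $A_t\Phi(\bar\zb_t)\le A_t\Phi(\gb_t)+\alpha_t\langle\nabla\Phi(\gb_t),\zb_t-\zb_{t-1}\rangle+\tfrac{\smp\alpha_t^2}{2A_t}\|\zb_t-\zb_{t-1}\|_2^2$. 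Writing $A_t\Phi(\gb_t)=A_{t-1}\Phi(\gb_t)+\alpha_t\Phi(\gb_t)$ and using convexity $A_{t-1}\Phi(\gb_t)\le A_{t-1}\Phi(\bar\zb_{t-1})-A_{t-1}\langle\nabla\Phi(\gb_t),\bar\zb_{t-1}-\gb_t\rangle$ together with the identities $\bar\zb_{t-1}-\gb_t=\theta_t(\bar\zb_{t-1}-\zb_{t-1})$, $\gb_t-\zb_{t-1}=(1-\theta_t)(\bar\zb_{t-1}-\zb_{t-1})$, and $A_{t-1}\theta_t=\alpha_t(1-\theta_t)$, the mixed gradient terms collapse to $\alpha_t\langle\nabla\Phi(\gb_t),\zb_t-\gb_t\rangle$. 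Splitting $\zb_t-\gb_t=(\zb_t-\zb^\star)+(\zb^\star-\gb_t)$, convexity gives $\alpha_t\langle\nabla\Phi(\gb_t),\zb^\star-\gb_t\rangle\le\alpha_t(\Phi^\star-\Phi(\gb_t))$ (cancelling $\alpha_t\Phi(\gb_t)$), while the three-point inequality of the prox step yields $\alpha_t\langle\nabla\Phi(\gb_t),\zb_t-\zb^\star\rangle\le\tfrac1\eta[D(\zb^\star,\zb_{t-1})-D(\zb^\star,\zb_t)-D(\zb_t,\zb_{t-1})]$. Collecting everything and using $D(\zb_t,\zb_{t-1})\ge\tfrac12\|\zb_t-\zb_{t-1}\|_1^2$ (entropy is $1$-strongly convex in $\ell_1$) leaves the residual $\tfrac12\big(\tfrac{\smp\alpha_t^2}{A_t}-\tfrac1\eta\big)\|\zb_t-\zb_{t-1}\|_1^2$ on top of $\Psi_{t-1}$.

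Then I would close the argument. Since $\alpha_t^2/A_t=2t/(t+1)\le2$, the residual coefficient is nonpositive whenever $\eta\le\tfrac1{2\smp}$, so the hypothesis $\eta\le\tfrac1{4\smp}$ ensures $\Psi_t\le\Psi_{t-1}$. Telescoping over $t=1,\dots,T$ with $A_0=0$ gives $A_T(\Phi(\bar\zb_T)-\Phi^\star)\le\tfrac1\eta D(\zb^\star,\zb_0)$. The initialization $\cb_0=\zeros$ makes $\zb_0$ the \emph{uniform} distribution on each $\Scal^p$, so $D(\zb^\star,\zb_0)=\sum_{p\in[r]}\KL{\zb^{\star,p}}{\text{unif}}\le\sum_{p\in[r]}\ln\zdim^p$. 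Finally, $\eta\ge\tfrac1{C\smp}$ and $A_T=\tfrac{T(T+1)}2$ yield $\Phi(\bar\zb_T)-\Phi^\star\le\tfrac{D(\zb^\star,\zb_0)}{\eta A_T}\le\tfrac{2C\smp\sum_{p}\ln\zdim^p}{T(T+1)}=\mathrm{O}\big(\tfrac{C\smp\sum_{p}\ln\zdim^p}{T^2}\big)$, which is the claim.

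The step needing the most care is the \emph{norm mismatch}: smoothness is assumed in $\ell_2$ on $\R^\zdim$, whereas the entropic regularizer is only $1$-strongly convex in $\ell_1$ and $\Dcal$ is a product of $r$ simplices. I would handle this block-wise, using $\|\zb_t-\zb_{t-1}\|_2^2=\sum_p\|\zb^p_t-\zb^p_{t-1}\|_2^2\le\sum_p\|\zb^p_t-\zb^p_{t-1}\|_1^2$ for the smoothness term and $D(\zb_t,\zb_{t-1})=\sum_p\KL{\zb^p_t}{\zb^p_{t-1}}\ge\tfrac12\sum_p\|\zb^p_t-\zb^p_{t-1}\|_1^2$ for the divergence, so that no spurious factor of $r$ appears and the Bregman diameter is exactly $\sum_p\ln\zdim^p$. (That $\smp$ may itself be exponential in $n$, as the statement notes, is orthogonal to this convergence bound.)
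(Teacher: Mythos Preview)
Your proof is correct, and the lift from $\yb$-space to $\zb$-space (identifying \Cref{alg:fwx_multi} with entropic mirror descent on $\Dcal$) is exactly what the paper does in \Cref{lem:yLz}. Where you diverge is in how you then prove convergence in $\zb$-space. The paper does not carry out a direct analysis: it recognizes the $\zb$-space scheme as the accelerated Frank--Wolfe algorithm of \citet{wang2018acceleration}, framed as a two-player zero-sum game between an \emph{optimistic} $\gb$-player (best response to the extrapolated point $\bar\zb_{\apb_{1:t-1}}$) and a $\zb$-player running mirror descent with the KL divergence; it then simply invokes their regret-based result (\Cref{lem:fz}) to get $\Phi(\bar\zb_{\alphab_{1:T}})-\Phi^\star\le \tfrac{2C\smp B}{T(T+1)}$ with $B=\KL{\zb^*}{\zb_0}\le\sum_p\ln\zdim^p$. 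You instead give a self-contained ``method of similar triangles'' Lyapunov analysis, proving $\Psi_t\le\Psi_{t-1}$ directly from smoothness, convexity, and the three-point inequality of the prox step. Both routes land on the same constant $\tfrac{2C\smp\sum_p\ln\zdim^p}{T(T+1)}$; yours is more elementary and explicit (in particular, your block-wise treatment of the $\ell_2$-smoothness/$\ell_1$-Pinsker mismatch is clean and shows precisely why no factor of $r$ appears and why $\eta\le\tfrac1{4\smp}$ already suffices), while the paper's route is shorter on the page by delegating to the external reference and highlights the online-learning interpretation.
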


Here $\smp$ is the smoothness parameter of $\Phi$ defined in \eqref{problem:minphiz}. 
I.e., for  
$\Phi(\zb; \thetab) \coloneqq \f(\L\zb; \thetab)$, we assume that 
$\Phi(\zb^\prime; \thetab) \le \Phi(\zb; \thetab) + \iprod{\nabla \Phi(\zb; \thetab), \zb^\prime - \zb} + \frac{\smp}{2} \|\zb^\prime - \zb \|^2$ holds for all $\zb, \zb^\prime \in \R^\zdim$. 
Note that by setting $r=1$ as explained in \Cref{asec:extension}, 
we can recover the converge guarantee of the symmetric setting (\Cref{thm:convergence} in \Cref{subsec:convergence_guarantee}). 
Below we fix $\thetab\in\Theta$ and omit $\thetab$ for simplicity. 

The following analysis is based on \citep{wang2018acceleration}. 
Our technical contribution is to reveal that their accelerated algorithm (\Cref{alg:gamez}) can be used as a differentiable Frank--Wolfe algorithm (\Cref{alg:fwx_multi}), which explicitly accepts backpropagation and enjoys efficient ZDD-based implementation. 
Note that since \citet{wang2018acceleration} did not mention the differentiability of \Cref{alg:gamez}, our work is the first to show that the Frank--Wolfe algorithm can simultaneously be made differentiable and faster. 

We introduce some notation and definitions. 
We define the Kullback--Leibler (KL) divergence as 
$\KL{\zb}{\zb^\prime} \coloneqq \iprod{\zb, \ln (\zb/\zb^\prime)}$ 
($\forall \zb, \zb^\prime \in \Dcal$), where $\zb/\zb^\prime$ and $\ln$ are an element-wise division and a logarithm, respectively. 
Let $\alpha_t = t$ and denote the sequence of $\alpha_t$s by $\alphab_{1:t} = \alpha_1,\dots, \alpha_{t-1}, \alpha_t$. 
We also define modified sequence $\apb_{1:t-1} = \ap_1,\dots, \ap_{t-1}$ 
so that $\ap_s = \alpha_s$ ($s\le t-2$) and $\ap_{t-1} = \alpha_{t-1} + \alpha_t$ hold, where $\apb_{1:0} = \alpha_0=0$. 
Let $A_t = \sum_{s=1}^t \alpha_s$. 
For any sequence of vectors $\ub_0,\dots, \ub_t$, we define $\ub_{\alphab_{1:t}} = \sum_{s=1}^{t} \alpha_s \ub_s$, $\bar\ub_{\alphab_{1:t}} = \frac{1}{A_t} \sum_{s=1}^{t} \alpha_s \ub_s$, $\ub_{\apb_{1:t-1}} = \sum_{s=1}^{t-1} \ap_s \ub_s$, and $\bar\ub_{\apb_{1:t-1}} = \frac{1}{A_t} \sum_{s=1}^{t-1} \ap_s \ub_s$, where we let $\ub_{\apb_{1:0}} = \bar\ub_{\apb_{1:0}} = \ub_0$. 

To prove \Cref{a_thm:convergence}, we use the following relationship between \Cref{alg:fwx_multi,alg:gamez}. 

\begin{lem}\label{lem:yLz}
	For $\xb_0,\dots,\xb_T$ 
	and 
	$\zb_0,\dots,\zb_T$
	obtained in 
	Step \ref{step:y_multi} of \Cref{alg:fwx_multi} 
	and 
	Step \ref{step:minKL} of \Cref{alg:gamez}, respectively,  
	we have $\xb_t = \L\zb_t$ ($t = 0,\dots, T$).  
\end{lem}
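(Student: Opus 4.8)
The plan is to prove $\xb_t=\L\zb_t$ by induction on $t$, showing that the per-population softmin update $\xb^p_t=\softmin_{\Scal^p}(m^p\cb_t)$ in Step~\ref{step:softmin_multi} of \Cref{alg:fwx_multi} is exactly the image under $\L^p$ of the Kullback--Leibler-regularized minimizer $\zb^p_t$ computed in Step~\ref{step:minKL} of \Cref{alg:gamez}. The conceptual starting point is the duality already noted in \Cref{subsec:convergence_guarantee}: for any $\cb\in\R^n$, the Gibbs distribution $\zb^p\in\triangle^{\zdim^p}$ with $z^p_S\propto\exp(-(m^p\cb)^\top\ones_S)$ is the unique minimizer over $\triangle^{\zdim^p}$ of the linear objective $\sum_S z^p_S\,(m^p\cb)^\top\ones_S$ plus the entropic (KL) regularizer, and it satisfies $\L^p\zb^p=\sum_S z^p_S\ones_S=\softmin_{\Scal^p}(m^p\cb)$. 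Aggregating over populations with masses $m^p$ gives $\L\zb=\sum_p m^p\L^p\zb^p=\sum_p m^p\softmin_{\Scal^p}(m^p\cb)$, which is precisely $\xb$ in Step~\ref{step:y_multi}. Hence the task reduces to verifying that the cost vector $\cb_t$ accumulated in \Cref{alg:fwx_multi} is identical to the accumulated linear term inside the KL argmin of \Cref{alg:gamez}.

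To match the two accumulations I would first unroll the $\svec_t$ recursion. A short telescoping computation using $\svec_0=\zeros$, $\xb_{-1}=\xb_0$, and $\alpha_0=0$ shows that $\svec_t=\xb_{\apb_{1:t-1}}=\sum_{s=1}^{t-1}\ap_s\xb_s$ with the modified weights $\ap_s=\alpha_s$ ($s\le t-2$) and $\ap_{t-1}=\alpha_{t-1}+\alpha_t$ introduced in the proof setup. Since $\alpha_s=s$ yields $A_t=\sum_{s=1}^t s=\tfrac{t(t+1)}{2}$, the scaling factor obeys $\tfrac{2}{t(t+1)}=\tfrac{1}{A_t}$, so the point at which the gradient is evaluated in the cost update $\cb_t=\cb_{t-1}+\eta\alpha_t\nabla\f(\tfrac{2}{t(t+1)}\svec_t)$ is exactly the weighted average $\bar{\xb}_{\apb_{1:t-1}}$. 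Under the inductive hypothesis $\xb_s=\L\zb_s$ for all $s<t$, linearity of $\L$ gives $\bar{\xb}_{\apb_{1:t-1}}=\L\bar{\zb}_{\apb_{1:t-1}}$, so that $\nabla\f(\bar{\xb}_{\apb_{1:t-1}})=\nabla\f(\L\bar{\zb}_{\apb_{1:t-1}})$; pulling this back through $\L^\top$ recovers the $\zb$-space gradient $\nabla\Phi$ that \Cref{alg:gamez} accumulates. This lets me read off $\cb_t=\eta\sum_{s=1}^t\alpha_s\nabla\f(\bar{\xb}_{\apb_{1:s-1}})$ and identify, component-wise, $(\ones_S)^\top(m^p\cb_t)$ with the accumulated dual coordinate for strategy $S$ in population $p$ (the block of $\L^\top\nabla\f$ for population $p$ carries exactly the factor $m^p$).

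Given these alignments the inductive step closes immediately: the explicit Gibbs form of the regularized argmin gives $z^p_{t,S}\propto\exp(-(\ones_S)^\top(m^p\cb_t))$, hence $\L^p\zb^p_t=\softmin_{\Scal^p}(m^p\cb_t)=\xb^p_t$ and $\L\zb_t=\sum_p m^p\xb^p_t=\xb_t$. For the base case $t=0$, $\cb_0=\zeros$ makes every Gibbs distribution uniform, matching the uniform initialization of $\zb_0$ in \Cref{alg:gamez}, so $\L\zb_0=\sum_p m^p\softmin_{\Scal^p}(\zeros)=\xb_0$; the shifted initialization $\xb_{-1}=\xb_0$ with $\alpha_0=0$ must be checked separately so that the telescoping identity for $\svec_t$ is valid already at $t=1$.

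The hard part is bookkeeping rather than any deep difficulty. Because \Cref{alg:gamez} is the accelerated method of \citep{wang2018acceleration} rather than a construction of ours, I must pin down precisely how its Step~\ref{step:minKL} accumulates gradients and scales its KL regularizer, and then confirm that every constant lines up: the step size $\eta$, the weights $\alpha_s$, the averaging factor $1/A_t$, and especially the mass factors $m^p$ that appear both in $\L=[m^1\L^1,\dots,m^r\L^r]$ and in the softmin argument $m^p\cb_t$. Getting these $m^p$ factors to land exactly in the exponent $-(\ones_S)^\top(m^p\cb_t)$ — so that the two algorithms produce the same iterates through $\L$ — is where I would spend the most care.
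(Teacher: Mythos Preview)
Your proposal is correct and matches the paper's approach: both argue by induction on $t$, unroll the $\svec_t$ recursion to $\xb_{\apb_{1:t-1}}$, use the induction hypothesis to align the $\xb$-space and $\zb$-space gradients through $\L$, and identify the KL-regularized argmin in Step~\ref{step:minKL} with a Gibbs distribution whose exponent is $m^p{\L^p}^\top\cb_t$. The paper derives the closed form $\zb^p_t\propto\exp\!\big(-\sum_{s=1}^t\eta\alpha_s\gb^p_s\big)$ from the KKT conditions up front and then carries out exactly the $m^p$-factor bookkeeping you flag.
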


\begin{proof}[Proof of \Cref{lem:yLz}]
	We first show that for each $p\in[r]$, 
	$\zb^p_t\in\triangle^{\zdim^p}$ obtained by \Cref{alg:gamez} satisfies 
	\[
		\zb^p_t \propto \exp\left(-\sum_{s  = 1}^t \eta \alpha_s \gb^p_s\right) 
		\qquad  
		(t = 1,\dots, T), 
	\] 
	where 
	$\exp$ is taken in an element-wise manner.  
	From the KKT condition of $\argmin$ in Step \ref{step:minKL} of \Cref{alg:gamez}, 
	for each $p\in[r]$, 
	we have 
	\[
	\alpha_t\gb_{t}^p   + \frac{1}{\eta}(\ln \zb^p + \ones - \ln \zb^p_{t-1}) - \nu^p \ones = 0, 
	\]
	where $\nu^p\in\R$ is a multiplier corresponding to the equality constraint, $\ones^\top \zb^p_t=1$. 
	Note that we need not take inequality constraint $\zb^p\ge0$ into account since entropic regularization forces $z^p_S$ to be positive. 
	The above equality implies that 
	entries in $\zb^p_t$ are proportional to those of $\exp(-\eta \alpha_t \gb^p_t + \ln \zb^p_{t-1})$, 
	and thus we obtain 
	$\zb^p_t \propto \zb^p_0 \odot \exp(-\sum_{s  = 1}^t \eta \alpha_s \gb^p_s)$ 
	by induction, where $\odot$ is the element-wise product.  
	Since $\zb_0^p = (1/\zdim^p,\dots,1/\zdim^p)$, 	
	we get 
	$\zb^p_t \propto \exp(-\sum_{s  = 1}^t \eta \alpha_s \gb^p_s)$. 

	We then show by induction that 
	\Cref{alg:fwx_multi} computes $\xb_t$ that satisfies $\xb_t = \L \zb_t$, where $\zb^p_t \propto \exp(-\sum_{s  = 1}^t \eta \alpha_s \gb^p_s)$ holds for $t\ge1$ as shown above. 
	The base case of $t=0$ can be confirmed as follows. 
	Since $\cb_0 = 0$, we have 
	\[
		\softmin_{\Scal^p}(m^p \cb_0) 
		= 
		\sum_{S\in\Scal^p} \ones_S \frac{\exp \left(- m^p \cb_0^\top \ones_S \right)}{\sum_{S^\prime\in\Scal^p} \exp \left(- m^p \cb_0^\top \ones_{S^\prime} \right)}
		= 
		\sum_{S\in\Scal^p} \ones_S \frac{1}{\zdim^p}
		=
		\L^p\zb^p_0, 
	\]
	which implies 
	\[
		\xb_{0} 
		= 
		\sum_{p\in[r]} m^p  \softmin_{\Scal^p}(m^p \cb_0)
		=
		\sum_{p\in[r]} m^p \L^p\zb^p_0
		= 
		\L\zb_0.  
	\]

	We then assume that $\xb_s = \L\zb_s$ holds for $s = 0,\dots,t-1$. 
	In the $t$-th step, \Cref{alg:fwx_multi} computes
	\[
		\svec_t = \xb_{\apb_{1:t-1}} 
		\qquad
		\text{and}
		\qquad
		\cb_t = \cb_0 + \sum_{s = 1}^t \eta \alpha_s \nabla \f(\bar\xb_{\apb_{1:s-1}})
		= \sum_{s = 1}^t \eta \alpha_s \nabla \f(\bar\xb_{\apb_{1:s-1}}). 
	\]
	From the induction hypothesis, it holds that 
	$\frac{2}{t(t+1)}\svec_t = \bar\xb_{\apb_{1:t-1}} = \L \bar\zb_{\apb_{1:t-1}}$, 
	which implies 
	\[
		\nabla \Phi(\bar\zb_{\apb_{1:s-1}}) =  \L^\top \nabla \f(\bar\xb_{\apb_{1:s-1}}). 
	\] 
	With these equations, we obtain 
	\[
		\L^\top \cb_t 
	= \sum_{s = 1}^t \eta \alpha_s \L^\top \nabla \f(\bar\xb_{\apb_{1:s-1}}) 
	= \sum_{s = 1}^t \eta \alpha_s \nabla \Phi(\bar\zb_{\apb_{1:s-1}}) 
	= \sum_{s = 1}^t \eta \alpha_s \gb_s, 
	\]
	and thus
	$m^p {\L^p}^\top \cb_t 
	= \sum_{s = 1}^t \eta \alpha_s \gb^p_s$ holds for each $p\in[r]$. 
	Therefore, for each $p\in[r]$, 
	$\xb^p_t$ computed in \Cref{alg:fwx_multi} satisfies 	
	\begin{align}
		\xb^p_t 
		=\softmin(m^p {\L^p}^\top \cb_t) 
		= \L^p \frac{\exp\left(-m^p {\L^p}^\top \cb_t \right)}{Z_t^p}
		= \L^p \frac{\exp\left( -\sum_{s = 1}^t \eta \alpha_s \gb_s^p \right)}{Z_t^p}
		= \L^p \zb^p_t, 
	\end{align}
	where $Z_t^p$ is a normalizing constant
	and 
	the last equality uses $\zb^p_t \propto \exp(-\sum_{s  = 1}^t \eta \alpha_s \gb^p_s)$. 	
	Hence we obtain $\xb_t = \sum_{p\in[r]} m^p \xb^p_t = \sum_{p\in[r]} m^p \L^p \zb^p_t= \L \zb_t$. 
	Consequently, the lemma holds by induction. 
\end{proof}

\begin{algorithm}[tb]
	\caption{Accelerated Frank--Wolfe algorithm as a two-player game \citep{wang2018acceleration}}
	\label{alg:gamez}
	\begin{algorithmic}[1]
		\State $\zb_0 = (\zb_0^1, \dots, \zb_0^r)$ where $\zb_0^p = (1/\zdim^p,\dots,1/\zdim^p) \in \triangle^{\zdim^p}$ for each $p\in[r]$
		\For{$t=1,\dots,T$}
		\State 
		$\gb$-player's action: 
		$\gb_t =\nabla \Phi\left( \bar\zb_{\apb_{1:t-1}} \right)$
		\Comment{I.e., $\gb_t = \argmin_{\gb\in \R^\zdim} \Phi^*(\gb) - \iprod{\bar\zb_{\apb_{1:t-1}}, \gb}$}
		\State 
		$\zb$-player's action: 
		$\zb_t  
		= 
		\argmin_{\zb\in \Dcal} \iprod{ \alpha_{t} \gb_{t}, \zb} + \frac{1}{\eta} \KL{\zb}{\zb_{t-1}}$
		\label{step:minKL}
		\EndFor
		\Return $\bar\zb_{\alphab_{1:T}}$
	\end{algorithmic}
\end{algorithm}

Owing to \Cref{lem:yLz}, we can analyze the convergence of \Cref{alg:fwx_multi} through \Cref{alg:gamez}. 
We regard \Cref{alg:gamez} as the dynamics of a two-player zero-sum game, where $\gb$-player computes $\gb_t$ and $\zb$-player computes $\zb_t$. 
The payoff function of the game is a convex-linear function defined as $u(\gb, \zb) \coloneqq \Phi^*(\gb) - \iprod{\gb, \zb}$, where $\Phi^*(\gb) \coloneqq \sup_{\zb\in\R^\zdim} \{ \iprod{\gb, \zb} - \Phi(\zb)\}$ is the Fenchel conjugate of $\Phi$. 
As detailed in \citep{wang2018acceleration}, the players' actions are given by online optimization algorithms (in particular, $\gb$-player uses a so-called {\it optimistic} online algorithm), and the {\it regret} analysis of the online algorithms yields an accelerated convergence guarantee. 
Formally, the following lemma holds.

\begin{lem}[\citep{wang2018acceleration}]\label{lem:fz}
	\Cref{alg:gamez} returns $\bar \zb_{\alphab_{1:T}}$ satisfying
	$\Phi(\bar \zb_{\alphab_{1:T}})
	-
	\min_{\zb\in\Dcal} \Phi(\zb)
	\le  
	\frac{2C\smp\BKL}{T(T+1)}$, 
	where $\BKL = \KL{\zb^*}{\zb_0}$ and 
	$\zb^*\in \argmin_{\zb\in\Dcal}\Phi(\zb)$.
\end{lem}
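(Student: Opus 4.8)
The plan is to recognize \Cref{alg:gamez} as the weighted no-regret dynamics of the convex--concave zero-sum game with payoff $u(\gb, \zb) = \Phi^*(\gb) - \iprod{\gb, \zb}$, which is convex in $\gb$ and affine (hence concave) in $\zb$, and to follow the optimistic-dynamics argument of \citep{wang2018acceleration}. Because $\Phi$ is convex, biconjugacy gives $\min_{\gb} u(\gb, \zb) = -\Phi(\zb)$, so the game value is $\max_{\zb}\min_{\gb} u = -\min_{\zb\in\Dcal}\Phi(\zb)$. Writing the weighted regrets of the minimizing $\gb$-player and maximizing $\zb$-player as $\mathrm{Reg}_{\gb} = \sum_t \alpha_t u(\gb_t, \zb_t) - \min_{\gb}\sum_t \alpha_t u(\gb, \zb_t)$ and $\mathrm{Reg}_{\zb} = \max_{\zb}\sum_t \alpha_t u(\gb_t, \zb) - \sum_t \alpha_t u(\gb_t, \zb_t)$, I would first establish the weighted duality-gap decomposition: by convexity of $u(\cdot,\zb)$ in $\gb$ (Jensen) and exact linearity of $u(\gb,\cdot)$ in $\zb$, with $A_T = \sum_{s=1}^T \alpha_s = T(T+1)/2$,
\[
\Phi(\bar\zb_{\alphab_{1:T}}) - \min_{\zb\in\Dcal}\Phi(\zb) \le \frac{\mathrm{Reg}_{\gb} + \mathrm{Reg}_{\zb}}{A_T}.
\]
Since $u(\gb_t, \zb) - u(\gb_t, \zb_t) = \iprod{\gb_t, \zb_t - \zb}$, the quantity $\mathrm{Reg}_{\zb}$ is precisely the weighted linear-loss regret of the KL-mirror-descent update (Step~\ref{step:minKL}).

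Next I would bound the two regrets. For the $\zb$-player I would invoke the standard FTRL/mirror-descent analysis with the KL regularizer to obtain a bound of the form
\[
\mathrm{Reg}_{\zb} \le \frac{1}{\eta}\KL{\zb^*}{\zb_0} + \frac{\eta}{2}\sum_{t=1}^T \alpha_t^2 \|\gb_t - \gb_{t-1}\|_\infty^2 - \frac{1}{2\eta}\sum_{t=1}^T \|\zb_t - \zb_{t-1}\|_1^2 ,
\]
where the gradient-difference stability term arises because the $\gb$-player's optimistic prediction $\bar\zb_{\apb_{1:t-1}}$—built from the modified weights $\ap_{t-1} = \alpha_{t-1}+\alpha_t$—guesses the contribution $\alpha_t\zb_t$ by $\alpha_t\zb_{t-1}$, so the prediction error is governed by $\gb_t - \gb_{t-1}$. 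For the $\gb$-player, $\smp$-smoothness of $\Phi$ makes $\Phi^*$ be $\tfrac{1}{\smp}$-strongly convex, so the best-response (follow-the-leader) play $\gb_t = \nabla\Phi(\bar\zb_{\apb_{1:t-1}})$ yields a nonpositive regret together with a negative quadratic cushion of order $-\tfrac{1}{2\smp}\sum_t \alpha_t^2\|\gb_t - \gb_{t-1}\|^2$, obtained by translating strong convexity of $\Phi^*$ back through the gradient map.

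The crux, which I expect to be the main obstacle, is the cancellation of these competing quadratic ``stability'' terms: the positive $\tfrac{\eta}{2}\sum_t \alpha_t^2\|\gb_t-\gb_{t-1}\|^2$ contributed by $\mathrm{Reg}_{\zb}$ must be absorbed by the negative cushion from $\mathrm{Reg}_{\gb}$ together with the $-\tfrac{1}{2\eta}\sum_t\|\zb_t-\zb_{t-1}\|_1^2$ term. Controlling this requires relating $\|\gb_t - \gb_{t-1}\|$ to the movement of the predicted averages through smoothness and using the step-size restriction $\eta \le \tfrac{1}{4\smp}$, where the factor $4$ is what absorbs the $\alpha_{t-1}+\alpha_t$ look-ahead in the optimistic prediction. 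Once the net quadratic contribution is shown to be nonpositive, only the term $\tfrac{1}{\eta}\KL{\zb^*}{\zb_0} = \tfrac{1}{\eta}\BKL$ survives.

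Finally I would assemble the pieces: combining the surviving bound with the decomposition and $A_T = T(T+1)/2$ gives $\Phi(\bar\zb_{\alphab_{1:T}}) - \min_{\zb\in\Dcal}\Phi(\zb) \le \tfrac{2\BKL}{\eta\,T(T+1)}$, and substituting the lower bound $\eta \ge \tfrac{1}{C\smp}$, i.e. $1/\eta \le C\smp$, yields the claimed $\tfrac{2C\smp\BKL}{T(T+1)}$ rate.
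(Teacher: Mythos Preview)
The paper does not actually prove this lemma; it defers entirely to \citep[Theorem~2, Corollary~1, and Lemma~4]{wang2018acceleration}, only remarking that the KL divergence is the $1$-strongly-convex Bregman divergence required there and that the constant $\eta$ specializes their non-increasing step-size sequence. Your outline reconstructs precisely that cited argument---the Fenchel game $u(\gb,\zb)=\Phi^*(\gb)-\iprod{\gb,\zb}$, the weighted-regret decomposition of the duality gap over $A_T=T(T+1)/2$, optimistic best-response for the $\gb$-player against KL mirror descent for the $\zb$-player, and cancellation of the quadratic stability terms under $\eta\le 1/(4\smp)$ leaving only $\BKL/\eta\le C\smp\BKL$---so the approach is essentially the same.
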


The proof of \Cref{lem:fz} is presented in \citep[Theorem 2 and Corollary 1]{wang2018acceleration}, where $\zb$-player's action is described using Bregman divergence instead of KL divergence.  
Since KL divergence is a special case of Bregman divergence defined with convex function $\psi(\ub) = \iprod{\ub, \ln \ub}$ ($\ub\in\Dcal$), which is $1$-strongly convex over $\Dcal$, we can directly apply their result to our setting.  
Moreover, in their analysis, the step size is given by a non-increasing sequence, $\eta_1,\dots,\eta_T$, to obtain a more general result. 
We here use a simplified version such that $\eta_1=\dots=\eta_T=\eta$. 
In this case, $\BKL = \KL{\zb^*}{\zb_0}$ appears as a leading factor as in \Cref{lem:fz} (see \citep[Lemma 4]{wang2018acceleration} for details). 

By using Lemmas \ref{lem:yLz} and \ref{lem:fz}, 
we can obtain \Cref{a_thm:convergence} as follows. 
\begin{proof}[Proof of \Cref{a_thm:convergence}]
	Note that we have $\ybt{T}(\thetab) = \frac{2}{T(T+1)}\sum_{t=1}^{T} \alpha_t \xb_t = \bar{\yb}_{\alphab_{1:T}}$.  
	From \Cref{lem:yLz}, we have $\xb_t = \L\zb_t$, 
	where $\xb_t$ and $\zb_t$ are those computed in 
	Algorithms \ref{alg:fwx_multi} and \ref{alg:gamez}, respectively. 
	Thus, we have $\ybt{T}(\thetab) = \bar{\yb}_{\alphab_{1:T}} = \L\bar\zb_{\alphab_{1:T}}$. 
	The convergence of \Cref{alg:gamez} is guaranteed by \Cref{lem:fz}. 
	Moreover, we have 
	$\BKL = \KL{\zb^*}{\zb_0} \le \sum_{p\in[r]}\ln\zdim^p$ since 
	$\zb_0^p = (1/\zdim^p,\dots,1/\zdim^p) \in \triangle^{\zdim^p}$ holds for each $p\in[r]$. 
	Therefore, from $\f(\yb) = \Phi(\zb)$ for $\yb = \L \zb$, we obtain \Cref{a_thm:convergence}. 
\end{proof}

\section{Full version of experiments}\label{a_sec:experiment}

We present full versions of the experimental results. 



\subsection{Empirical convergence of equilibrium computation}\label{a_subsec:experiment_convergence}

\begin{figure*}[tb]
	\centering
	\begin{minipage}[t]{.24\textwidth}
		\includegraphics[width=1.0\textwidth]{./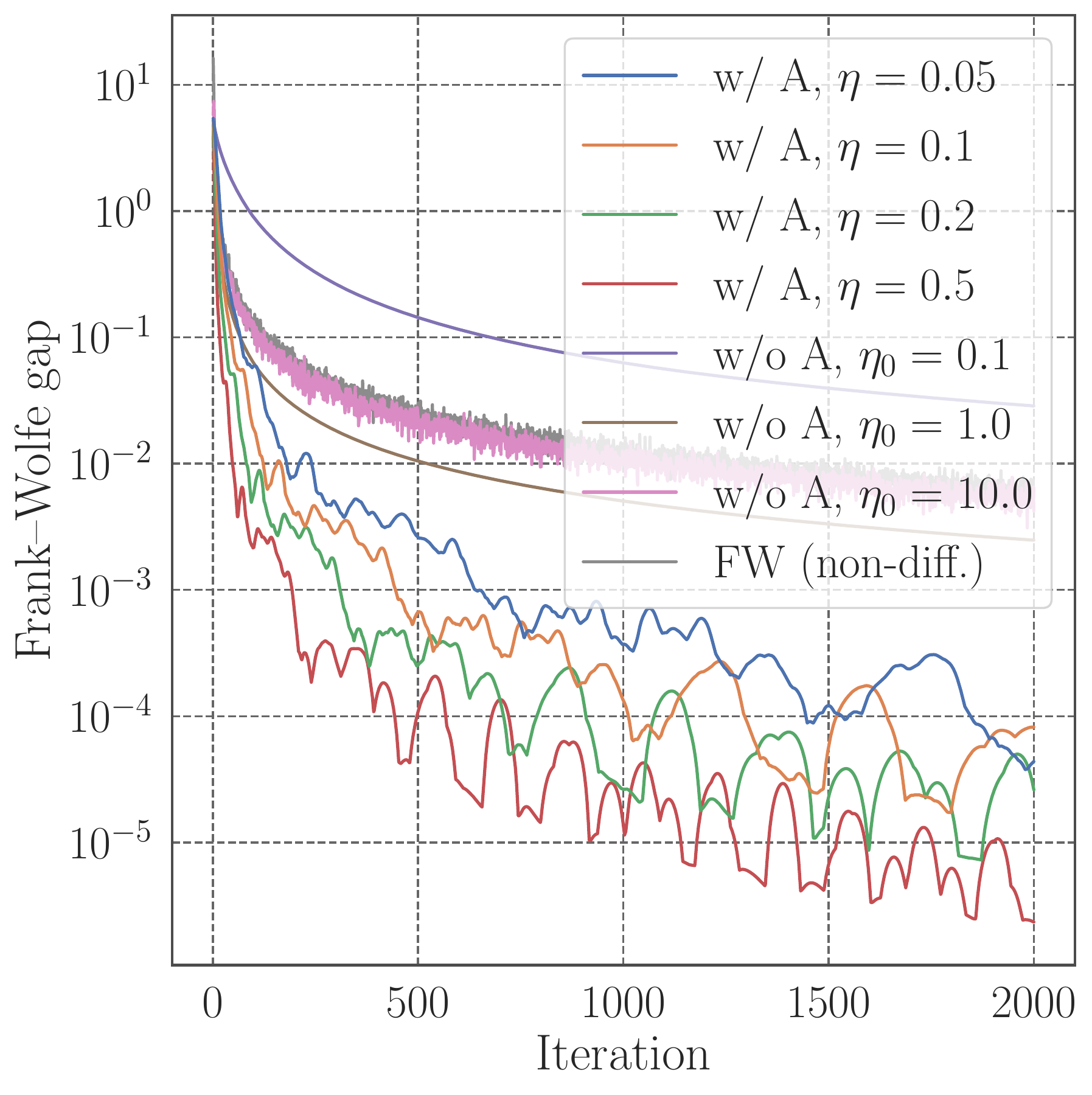}
		\subcaption{\dantzig, fractional}
	\end{minipage}
	\centering
	\begin{minipage}[t]{.24\textwidth}
		\includegraphics[width=1.0\textwidth]{./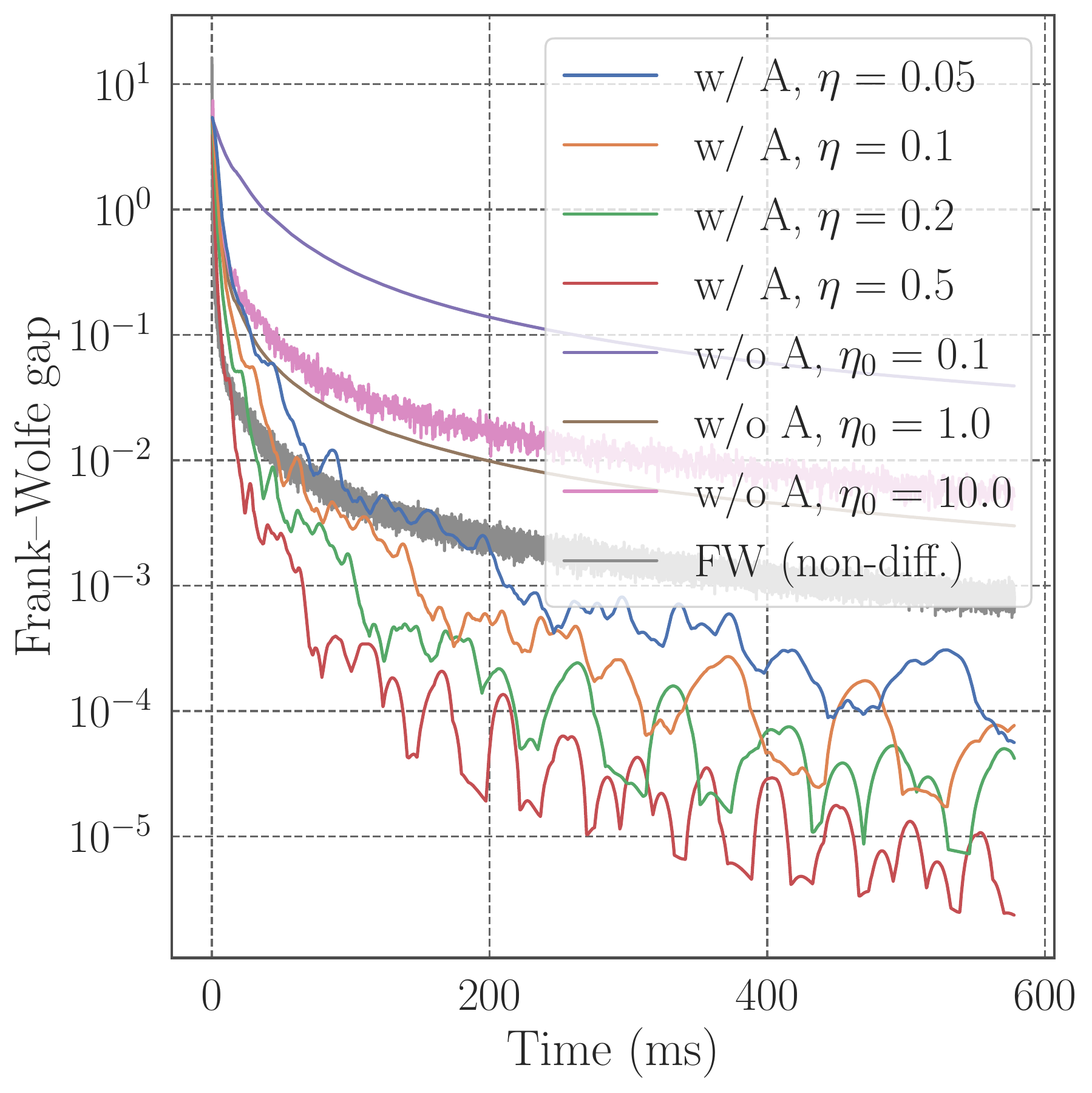}
		\subcaption{\dantzig, fractional}
	\end{minipage}
	\centering
	\begin{minipage}[t]{.24\textwidth}
		\includegraphics[width=1.0\textwidth]{./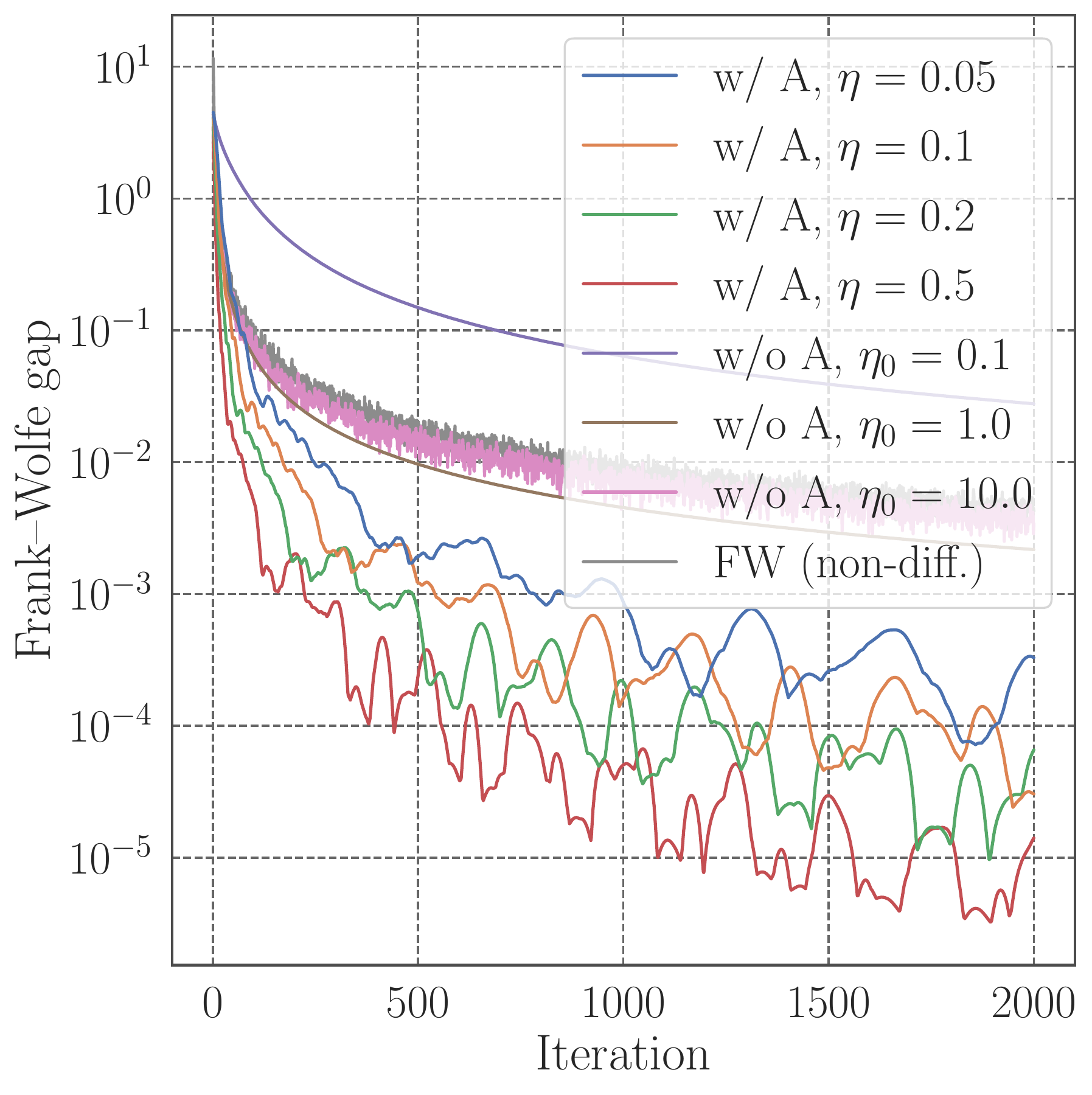}
		\subcaption{\dantzig, exponential}
	\end{minipage}
	\centering
	\begin{minipage}[t]{.24\textwidth}
		\includegraphics[width=1.0\textwidth]{./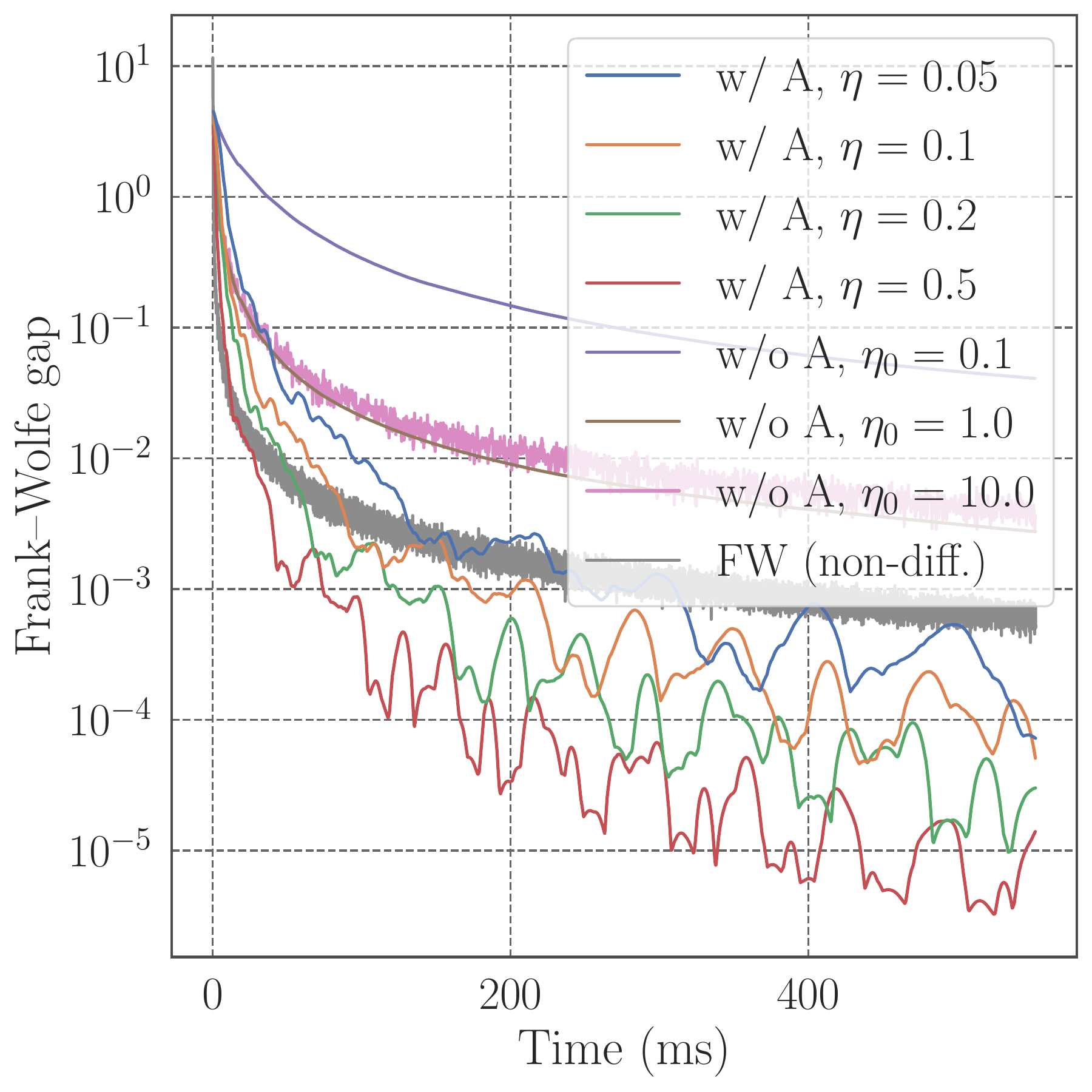}
		\subcaption{\dantzig, exponential}
	\end{minipage}
	\centering
	\begin{minipage}[t]{.24\textwidth}
		\includegraphics[width=1.0\textwidth]{./fig/inner_att48_inv_initial_iter.pdf}
		\subcaption{\att, fractional}
	\end{minipage}
	\centering
	\begin{minipage}[t]{.24\textwidth}
		\includegraphics[width=1.0\textwidth]{./fig/inner_att48_inv_initial_time.pdf}
		\subcaption{\att, fractional}
	\end{minipage}
	\centering
	\begin{minipage}[t]{.24\textwidth}
		\includegraphics[width=1.0\textwidth]{./fig/inner_att48_exp_initial_iter.pdf}
		\subcaption{\att, exponential}
	\end{minipage}
	\centering
	\begin{minipage}[t]{.24\textwidth}
		\includegraphics[width=1.0\textwidth]{./fig/inner_att48_exp_initial_time.pdf}
		\subcaption{\att, exponential}
	\end{minipage}
	\centering
	\begin{minipage}[t]{.24\textwidth}
		\includegraphics[width=1.0\textwidth]{./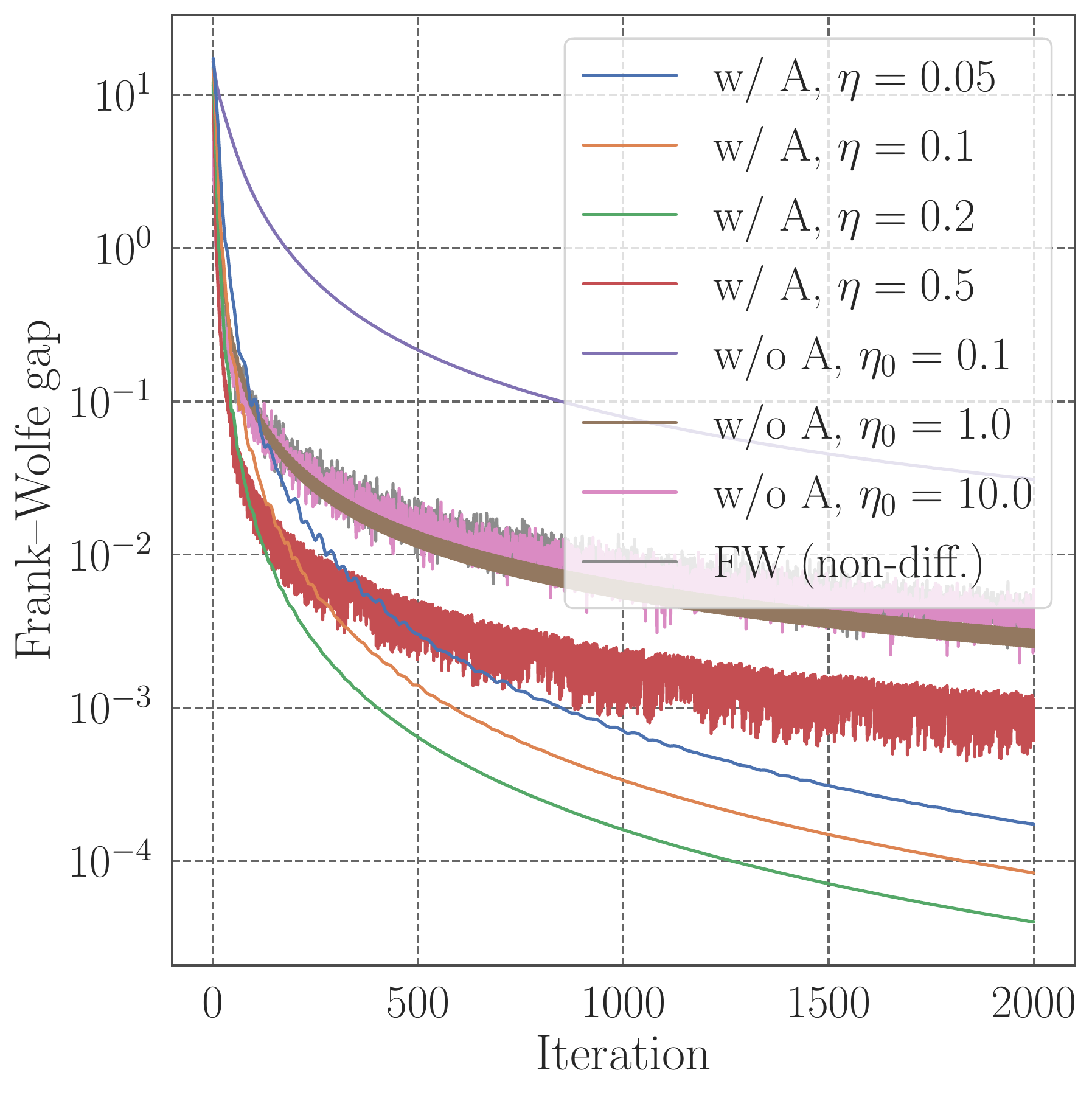}
		\subcaption{\uninett, fractional}
	\end{minipage}
	\centering
	\begin{minipage}[t]{.24\textwidth}
		\includegraphics[width=1.0\textwidth]{./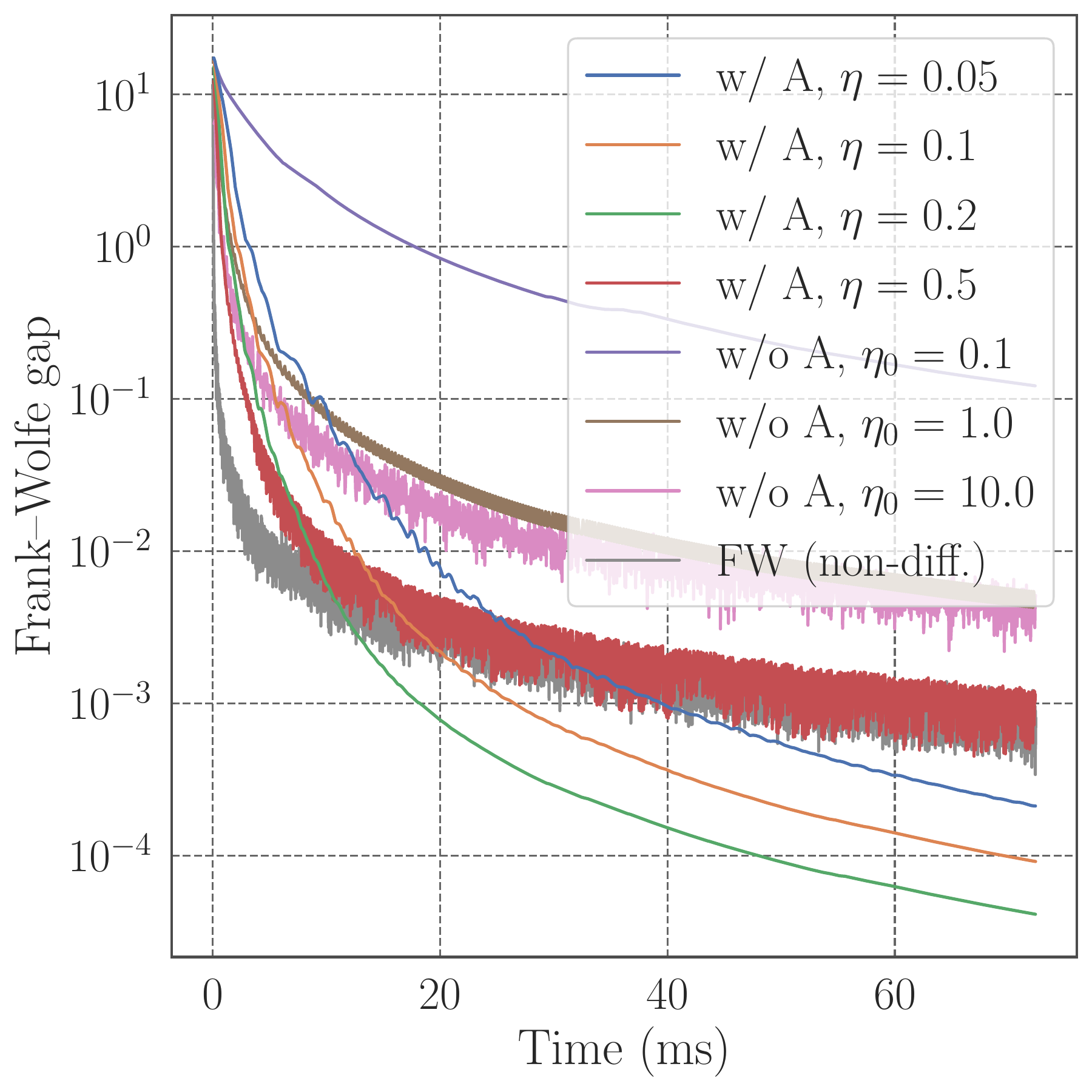}
		\subcaption{\uninett, fractional}
	\end{minipage}
	\centering
	\begin{minipage}[t]{.24\textwidth}
		\includegraphics[width=1.0\textwidth]{./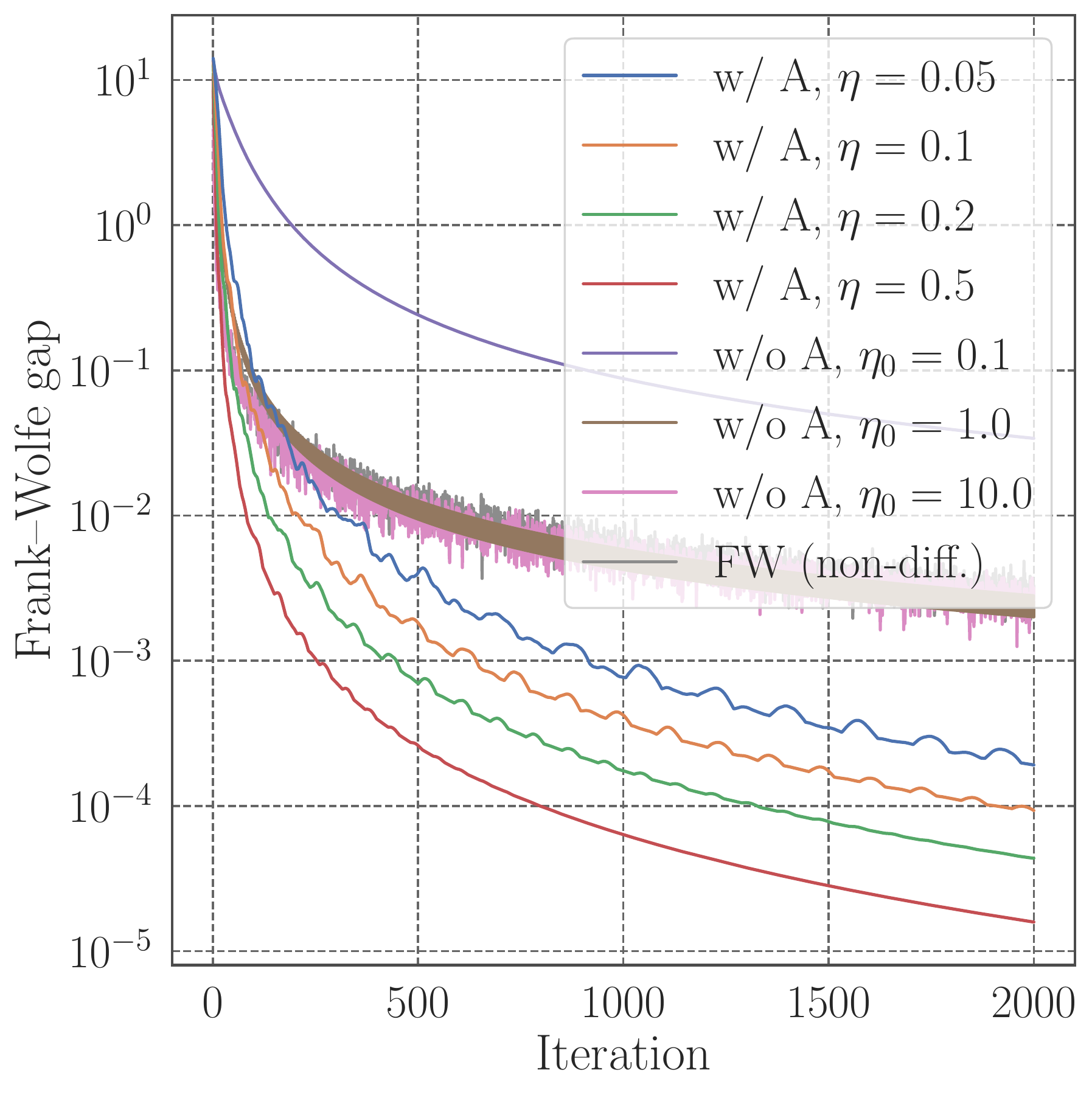}
		\subcaption{\uninett, exponential}
	\end{minipage}
	\centering
	\begin{minipage}[t]{.24\textwidth}
		\includegraphics[width=1.0\textwidth]{./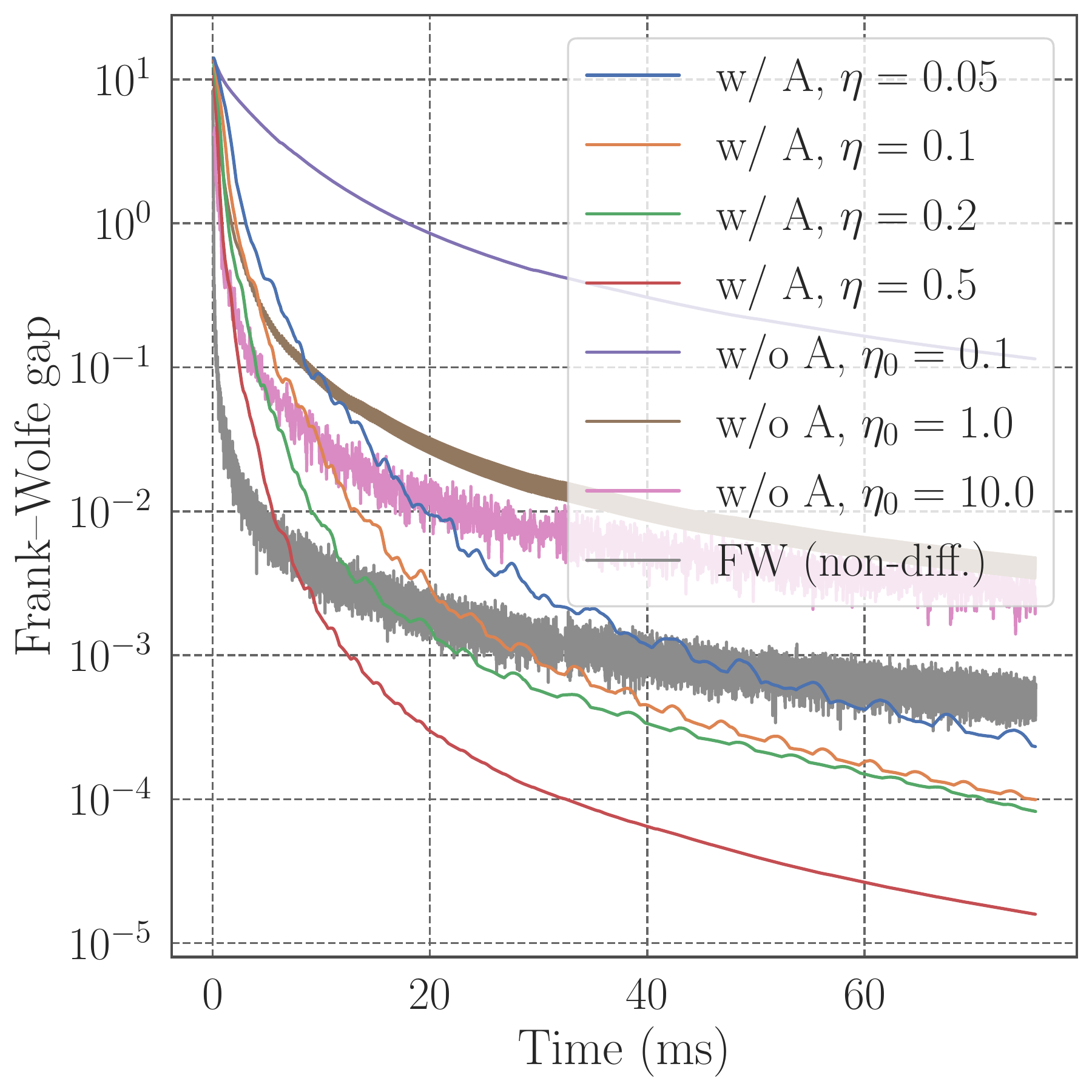}
		\subcaption{\uninett, exponential}
	\end{minipage}
	\centering
	\begin{minipage}[t]{.24\textwidth}
		\includegraphics[width=1.0\textwidth]{./fig/inner_Tw_inv_initial_iter.pdf}
		\subcaption{\tw, fractional}
	\end{minipage}
	\centering
	\begin{minipage}[t]{.24\textwidth}
		\includegraphics[width=1.0\textwidth]{./fig/inner_Tw_inv_initial_time.pdf}
		\subcaption{\tw, fractional}
	\end{minipage}
	\centering
	\begin{minipage}[t]{.24\textwidth}
		\includegraphics[width=1.0\textwidth]{./fig/inner_Tw_exp_initial_iter.pdf}
		\subcaption{\tw, exponential}
	\end{minipage}
	\centering
	\begin{minipage}[t]{.24\textwidth}
		\includegraphics[width=1.0\textwidth]{./fig/inner_Tw_exp_initial_time.pdf}
		\subcaption{\tw, exponential}
	\end{minipage}
	\caption{
		Convergence results on \dantzig, \att, \uninett, and {\tw} instances with fractional and exponential costs. 
	}
	\label{a_fig:convergence}
\end{figure*}

We compared the empirical convergence of three algorithms: our algorithm with acceleration (w/ A), the differentiable Frank--Wolfe algorithm without acceleration (w/o A), and the standard Frank--Wolfe algorithm (FW). 
We used potential minimization problems, $\min_{\yb\in \Ccal}\ \f(\yb; \thetab)$, detailed in \Cref{subsec:experiment_convergence}. 

\Cref{a_fig:convergence} shows the results. 
Similar to those in \Cref{subsec:experiment_convergence}, the performance of the differentiable Frank--Wolfe algorithm with acceleration (w/ A) tends to exceed the others (w/o A and FW), although w/ A with $\eta = 0.5$ sometimes fails to be accelerated due to the too large $\eta$ value.

\subsection{Stackelberg model for designing communication networks}\label{a_subsec:experiment_stackelberg}

We present full versions of the results on the Stackelberg model experiments described in \Cref{subsec:experiment_stackelberg}. 

In \Cref{a_fig:social_cost_hamilton,a_fig:social_cost_steiner}, we present the social-cost results.  
Our method outperformed the baseline in every setting. 
\Cref{a_fig:stackelberg_y_theta_hamilton,a_fig:stackelberg_y_theta_steiner} present full versions of the $(\ybt{T}(\thetab), \thetab)$ illustrations. 
We can see that our method successfully assigned large $\theta_i$ values to the edges with large $y_i(\thetab)$ values.

\begin{figure*}[htb]
	\centering
	\begin{minipage}[t]{.25\textwidth}
		\includegraphics[width=1.0\textwidth]{./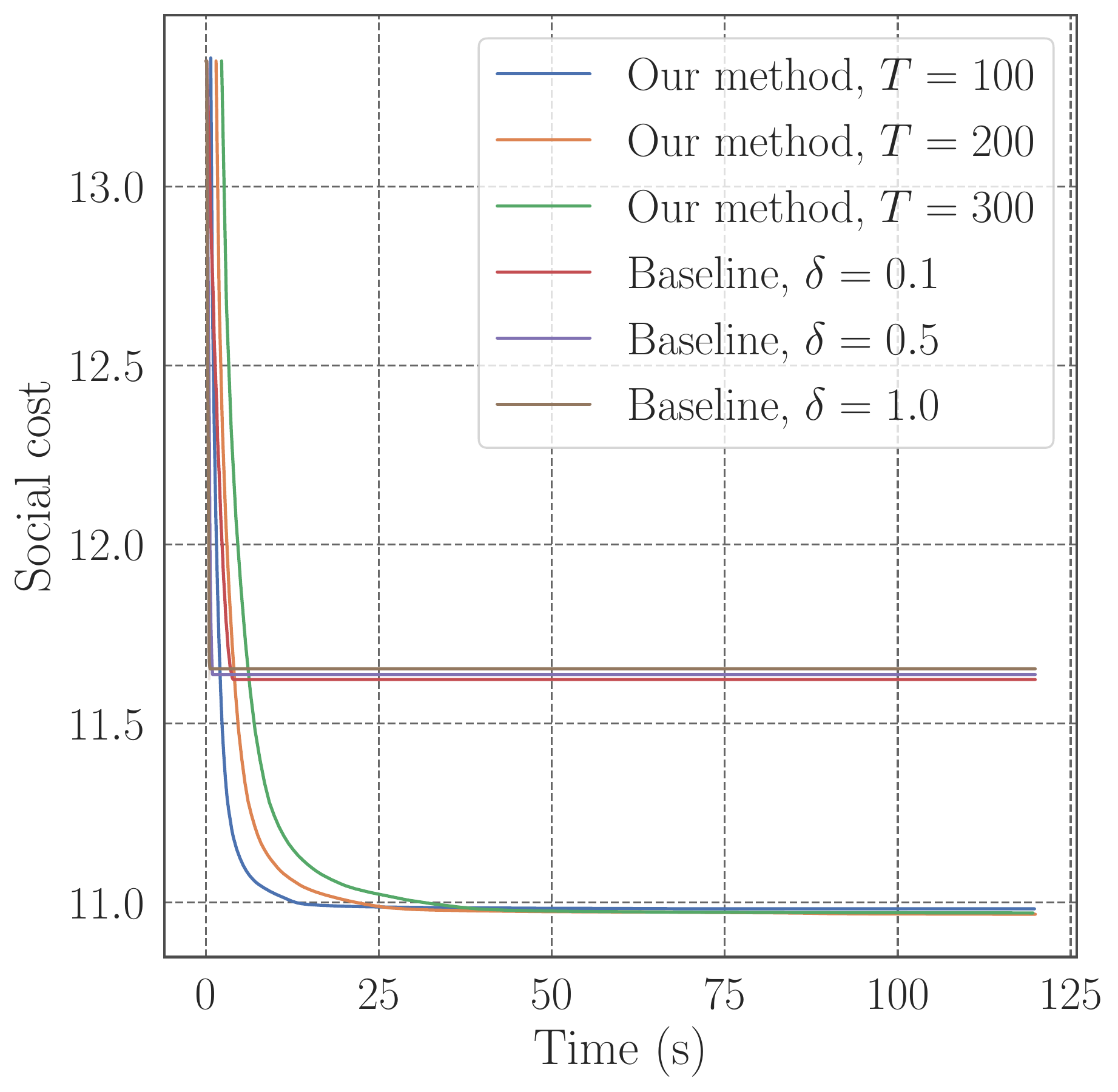}
		\subcaption{\dantzig, F, $\eta=0.05$}
	\end{minipage}
	\centering
	\begin{minipage}[t]{.25\textwidth}
		\includegraphics[width=1.0\textwidth]{./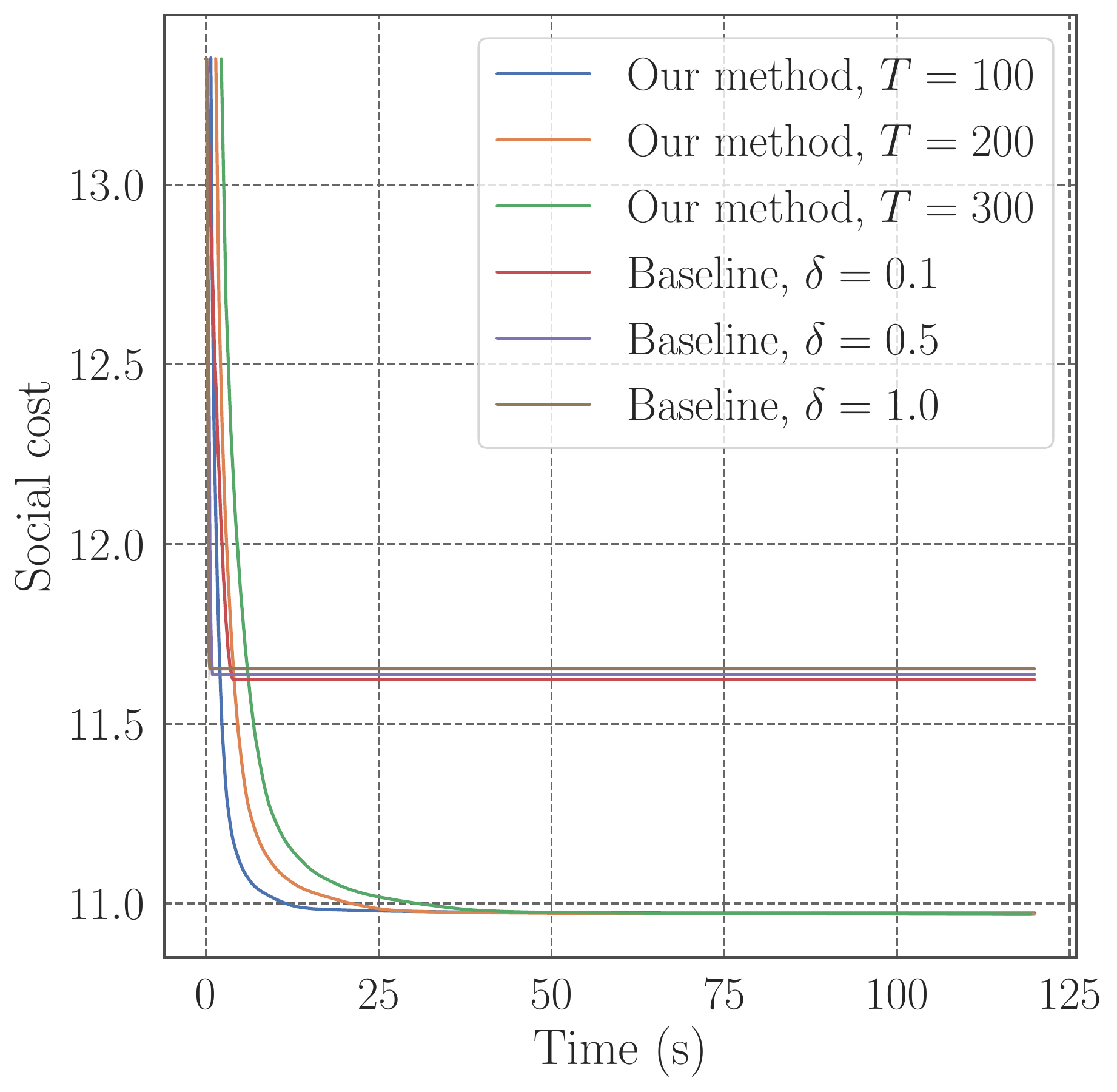}
		\subcaption{\dantzig, F, $\eta=0.1$}
	\end{minipage}
	\centering
	\begin{minipage}[t]{.25\textwidth}
		\includegraphics[width=1.0\textwidth]{./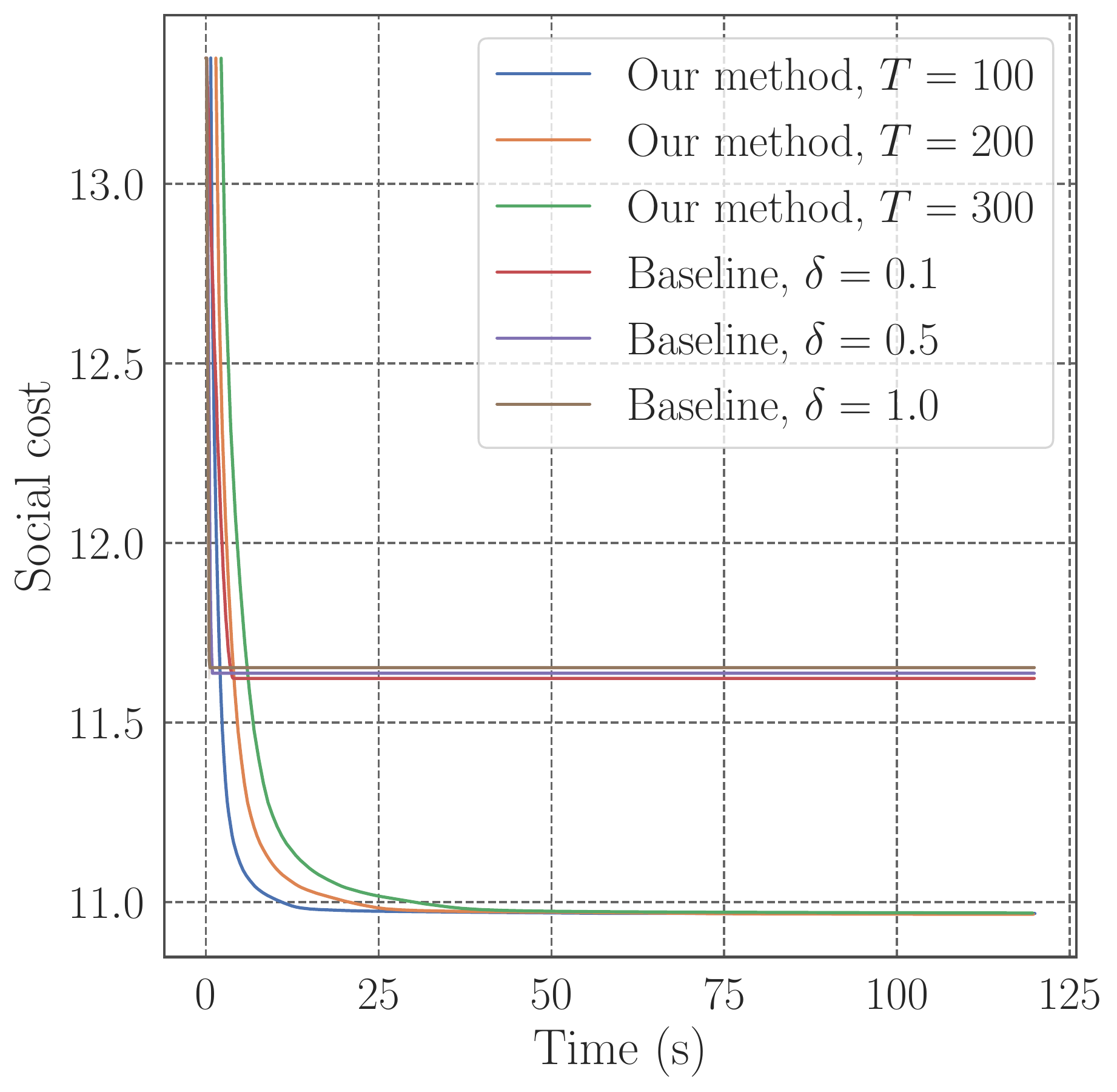}
		\subcaption{\dantzig, F, $\eta=0.2$}
	\end{minipage}
	\centering
	\begin{minipage}[t]{.25\textwidth}
		\includegraphics[width=1.0\textwidth]{./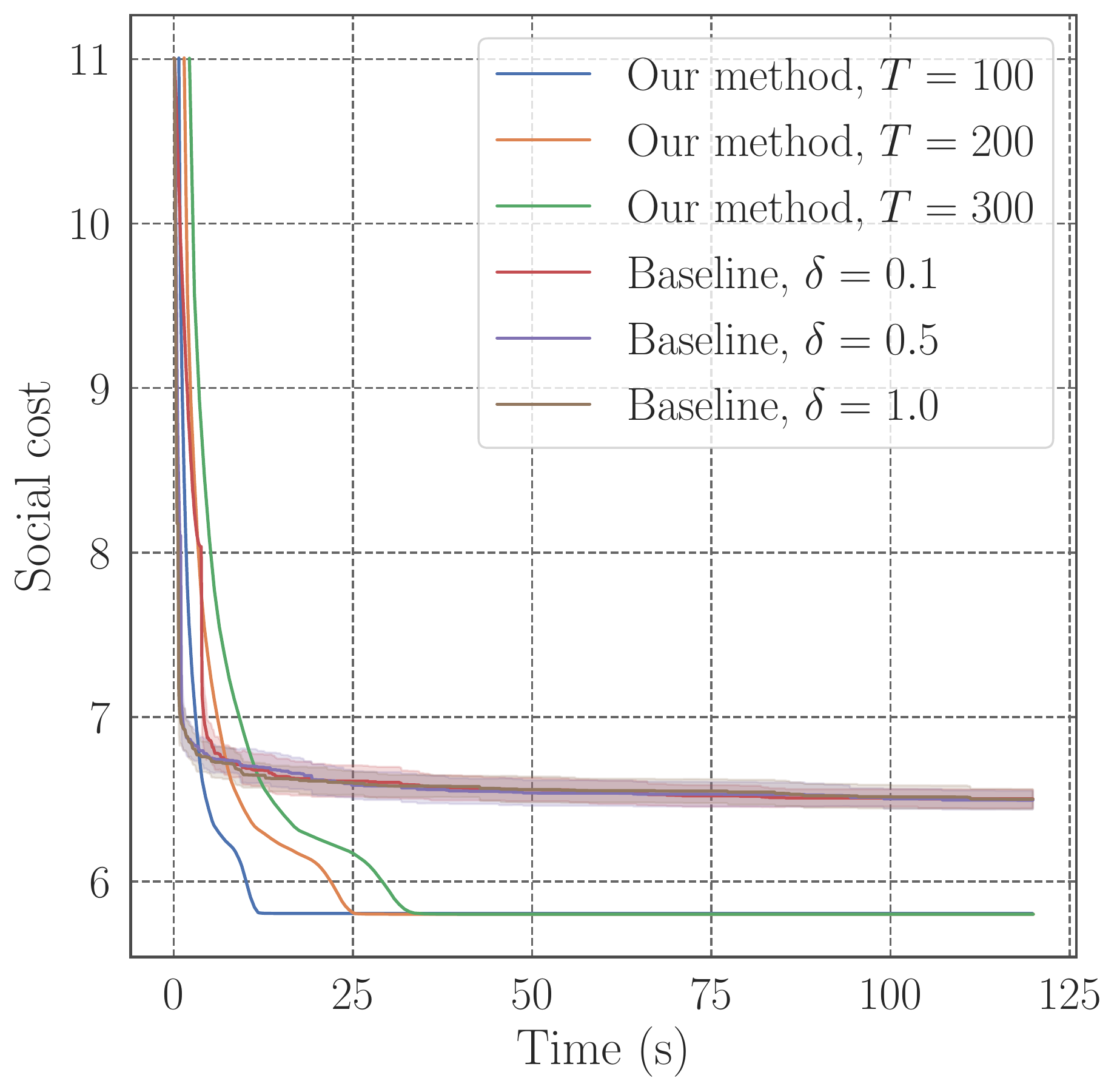}
		\subcaption{\dantzig, E, $\eta=0.05$}
	\end{minipage}
	\centering
	\begin{minipage}[t]{.25\textwidth}
		\includegraphics[width=1.0\textwidth]{./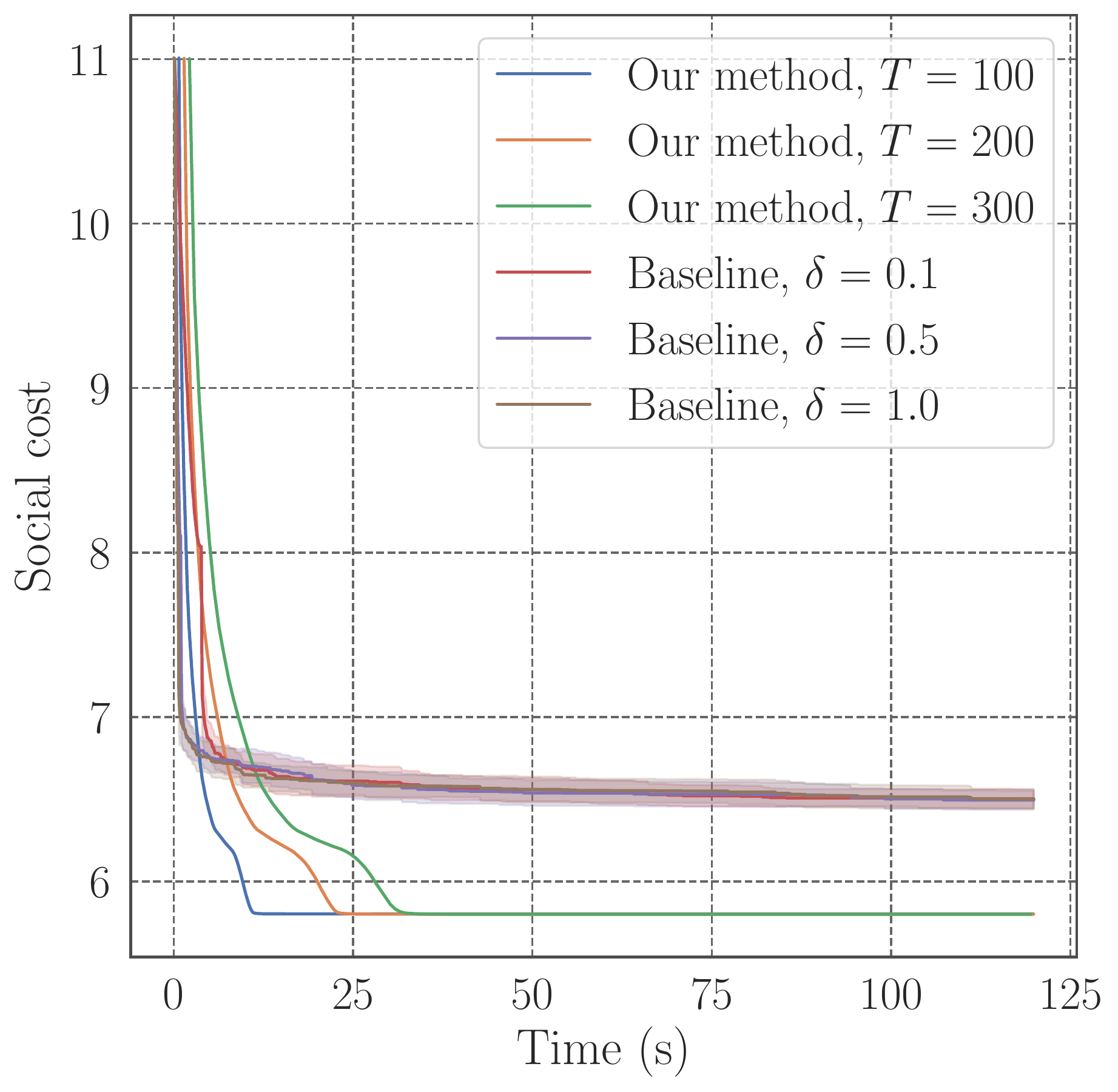}
		\subcaption{\dantzig, E, $\eta=0.1$}
	\end{minipage}
	\centering
	\begin{minipage}[t]{.25\textwidth}
		\includegraphics[width=1.0\textwidth]{./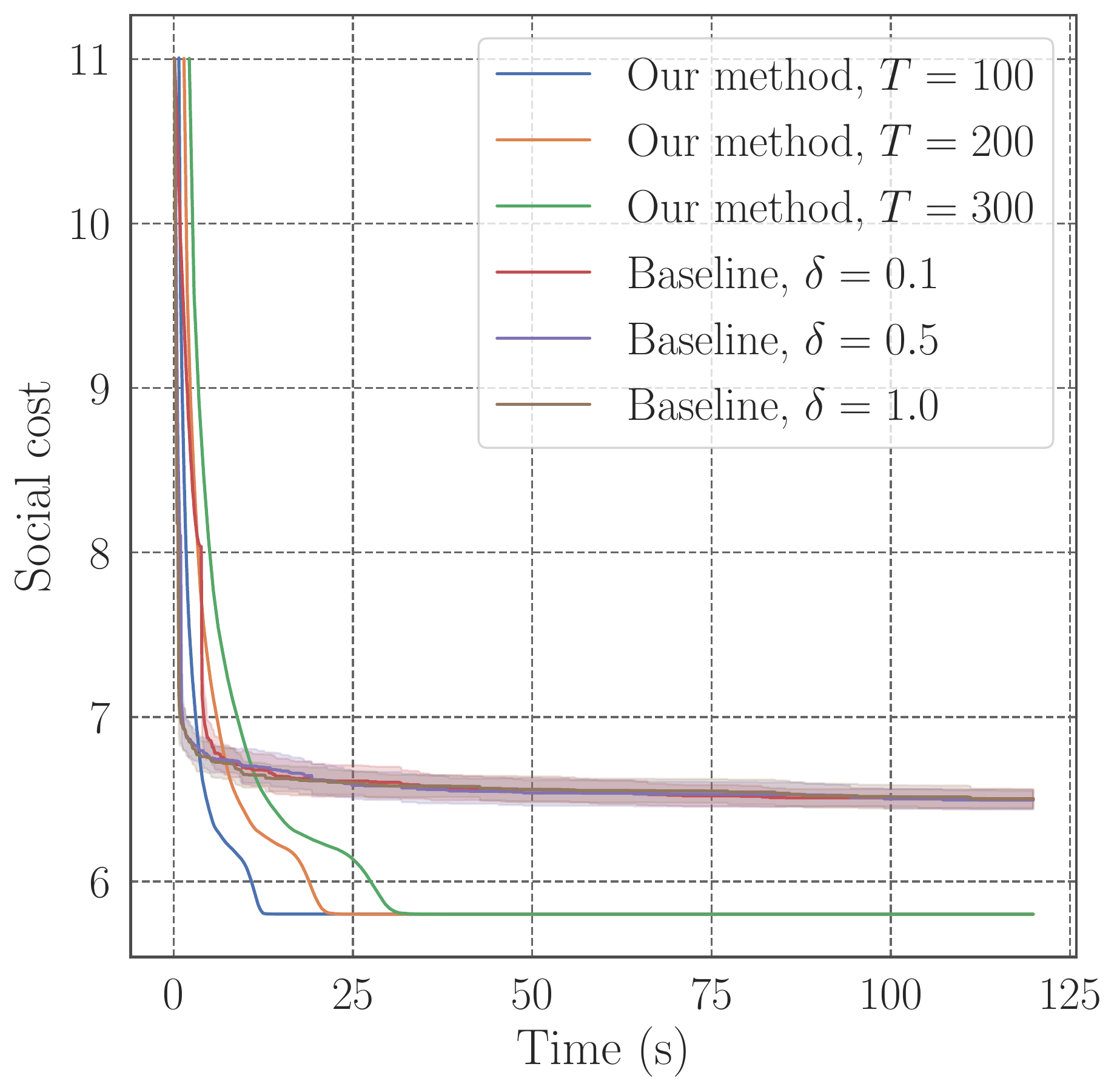}
		\subcaption{\dantzig, E, $\eta=0.2$}
	\end{minipage}
	\centering
	\begin{minipage}[t]{.25\textwidth}
		\includegraphics[width=1.0\textwidth]{./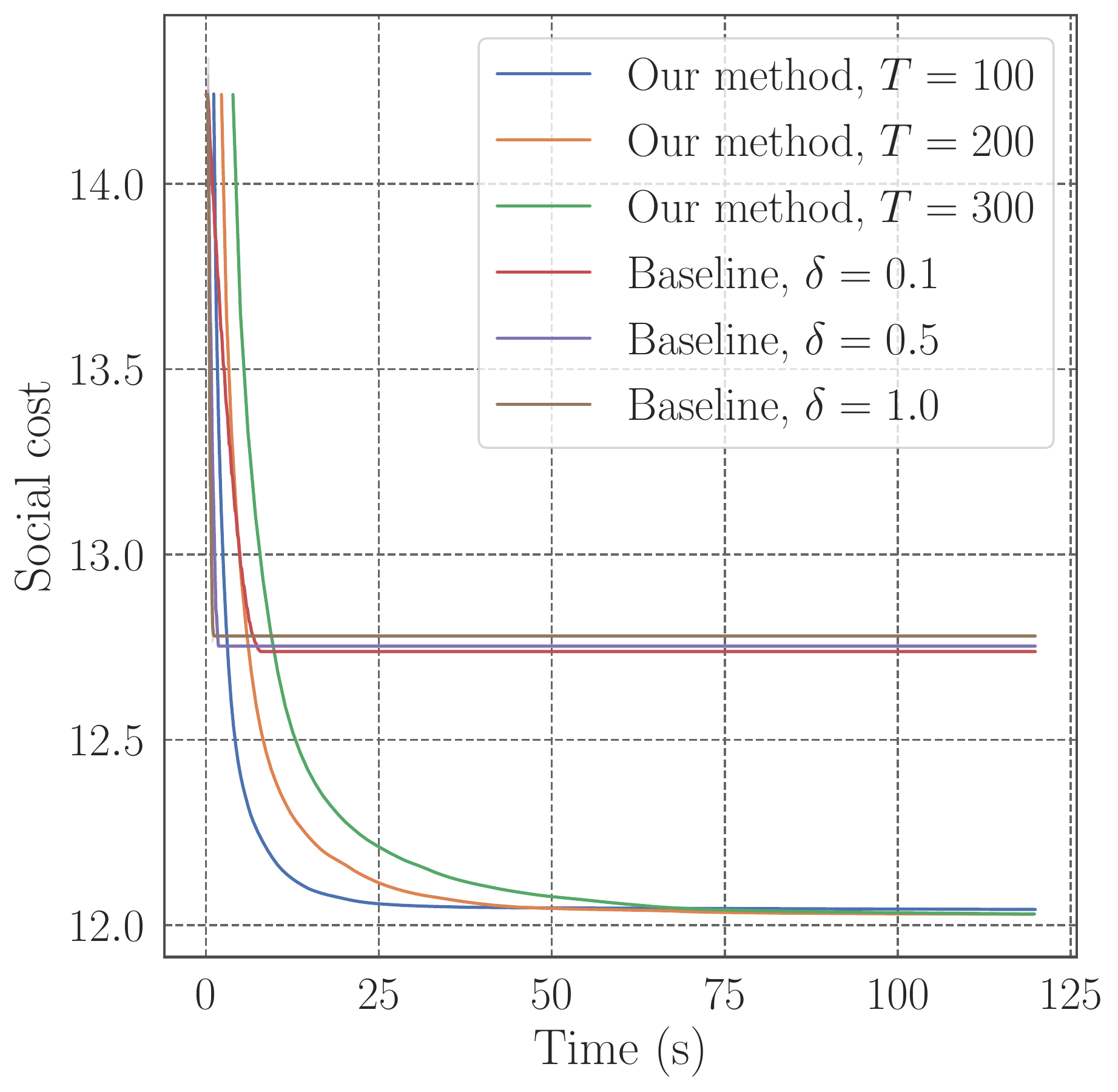}
		\subcaption{\att, F, $\eta=0.05$}
	\end{minipage}
	\centering
	\begin{minipage}[t]{.25\textwidth}
		\includegraphics[width=1.0\textwidth]{./fig/outer_att48_inv_time_e0.1.pdf}
		\subcaption{\att, F, $\eta=0.1$}
	\end{minipage}
	\centering
	\begin{minipage}[t]{.25\textwidth}
		\includegraphics[width=1.0\textwidth]{./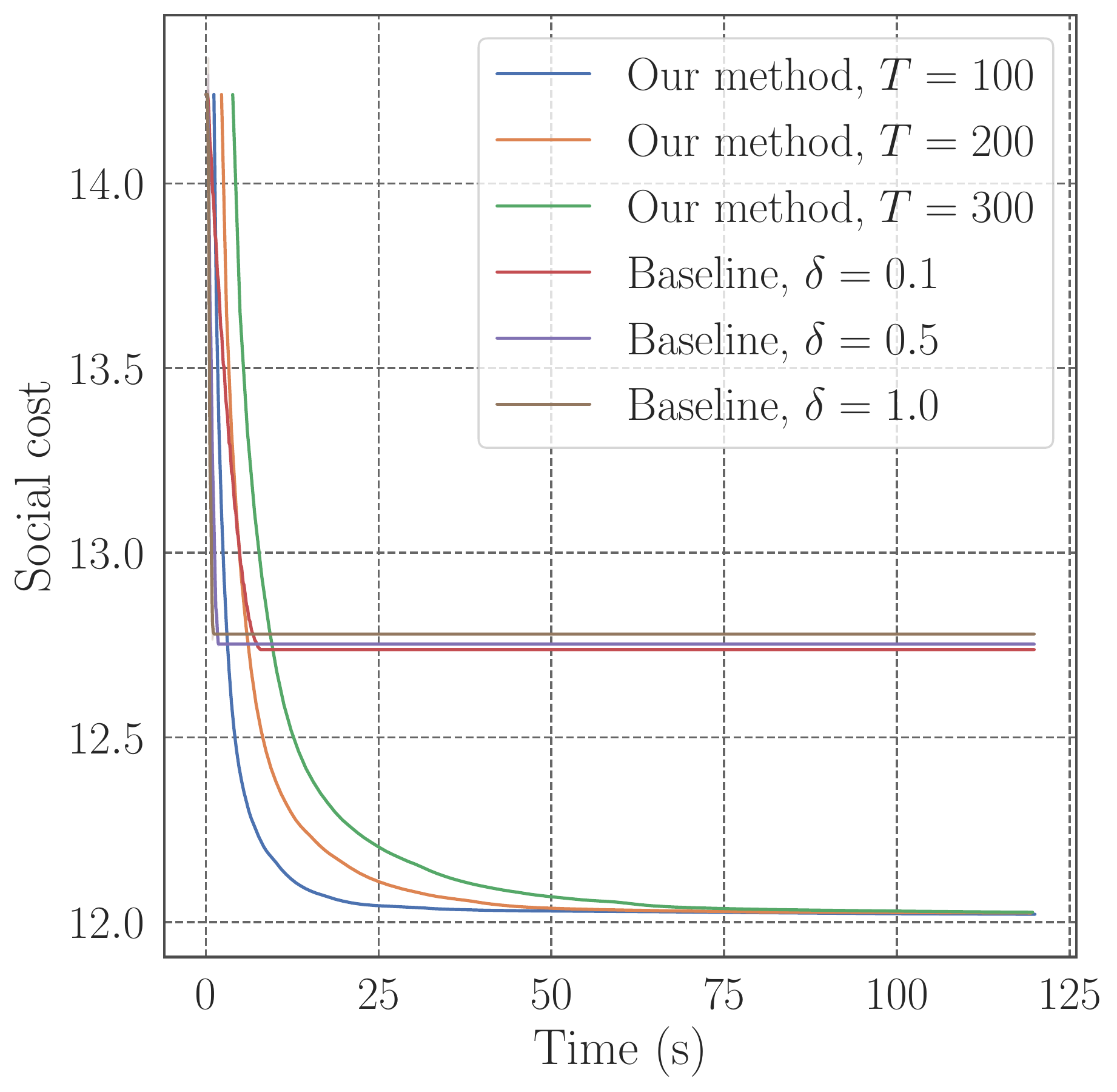}
		\subcaption{\att, F, $\eta=0.2$}
	\end{minipage}
	\centering
	\begin{minipage}[t]{.25\textwidth}
		\includegraphics[width=1.0\textwidth]{./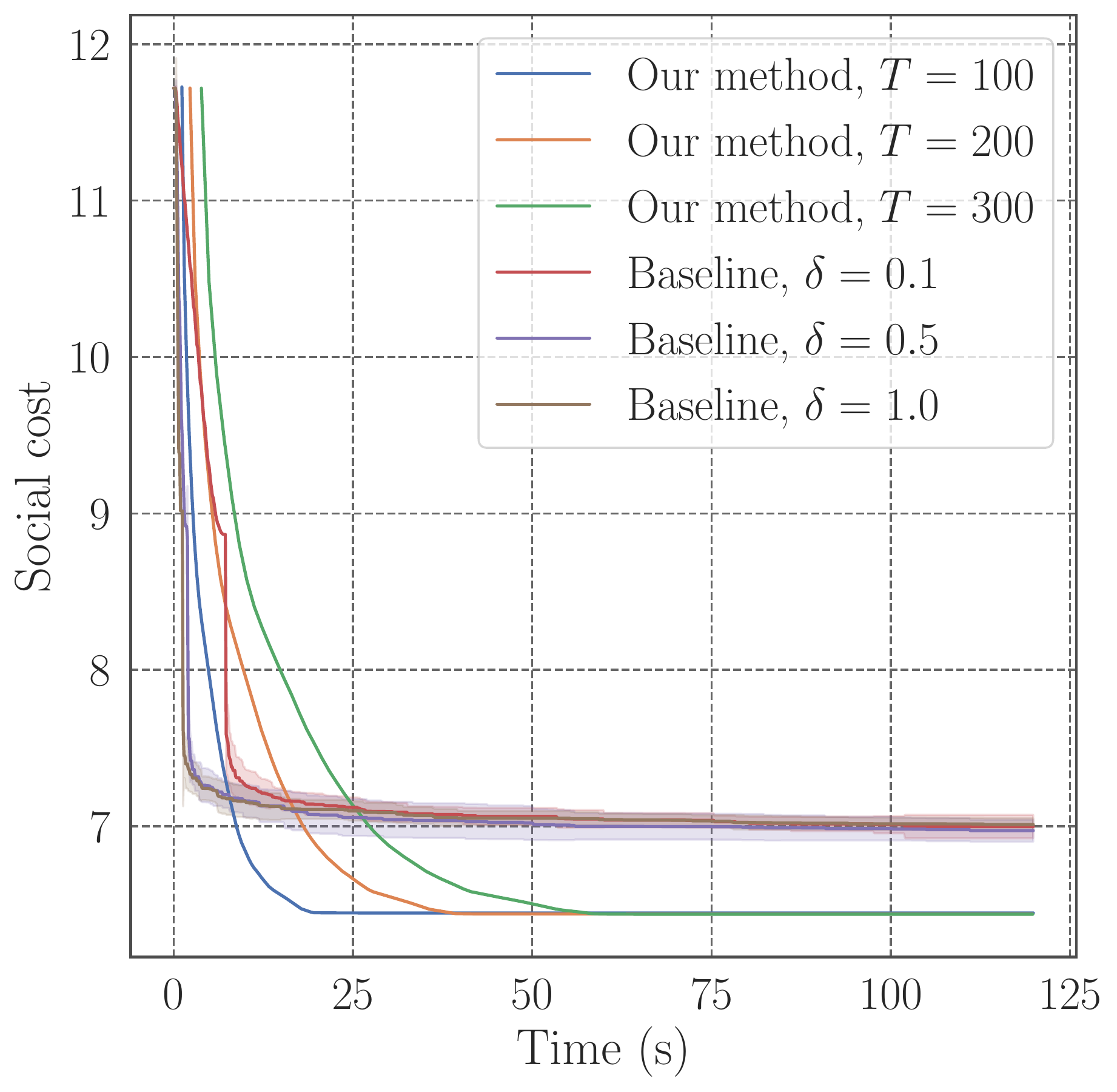}
		\subcaption{\att, E, $\eta=0.05$}
	\end{minipage}
	\centering
	\begin{minipage}[t]{.25\textwidth}
		\includegraphics[width=1.0\textwidth]{./fig/outer_att48_exp_time_e0.1.pdf}
		\subcaption{\att, E, $\eta=0.1$}
	\end{minipage}
	\centering
	\begin{minipage}[t]{.25\textwidth}
		\includegraphics[width=1.0\textwidth]{./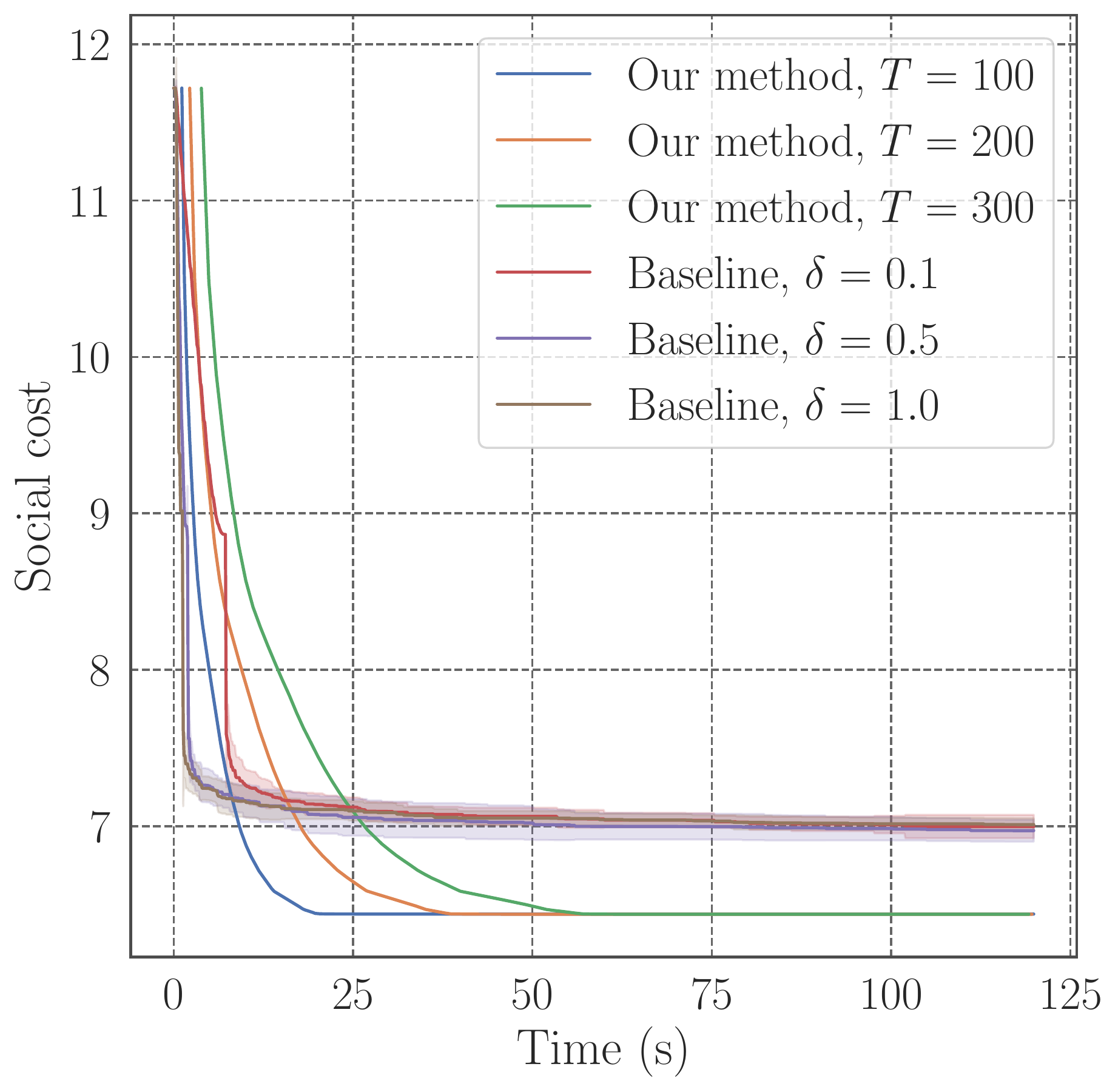}
		\subcaption{\att, E, $\eta=0.2$}
	\end{minipage}
	\caption{
		Plots of social costs for {\dantzig} and {\att} instances 
		with fractional (F) and exponential (E) costs. 
		We set $\eta$ of \Cref{alg:fwx} to $0.05$, $0.1$, and $0.2$. 
		Baseline method results are shown with 
		means and standard deviations over $20$ random trials.
		}
	\label{a_fig:social_cost_hamilton}
\end{figure*}

\begin{figure*}[htb]
	\centering
	\begin{minipage}[t]{.25\textwidth}
		\includegraphics[width=1.0\textwidth]{./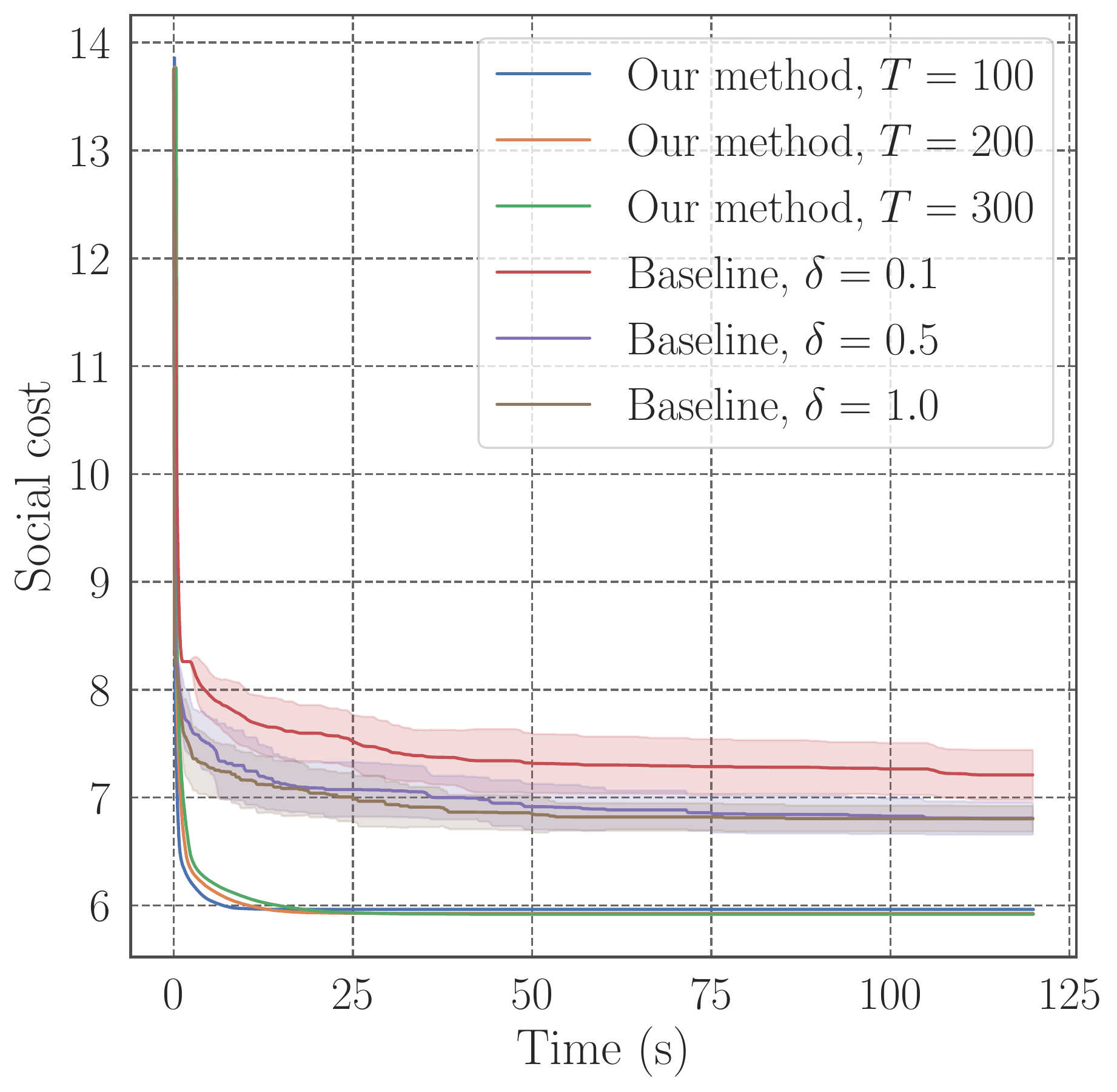}
		\subcaption{\uninett, F, $\eta=0.05$}
	\end{minipage}
	\centering
	\begin{minipage}[t]{.25\textwidth}
		\includegraphics[width=1.0\textwidth]{./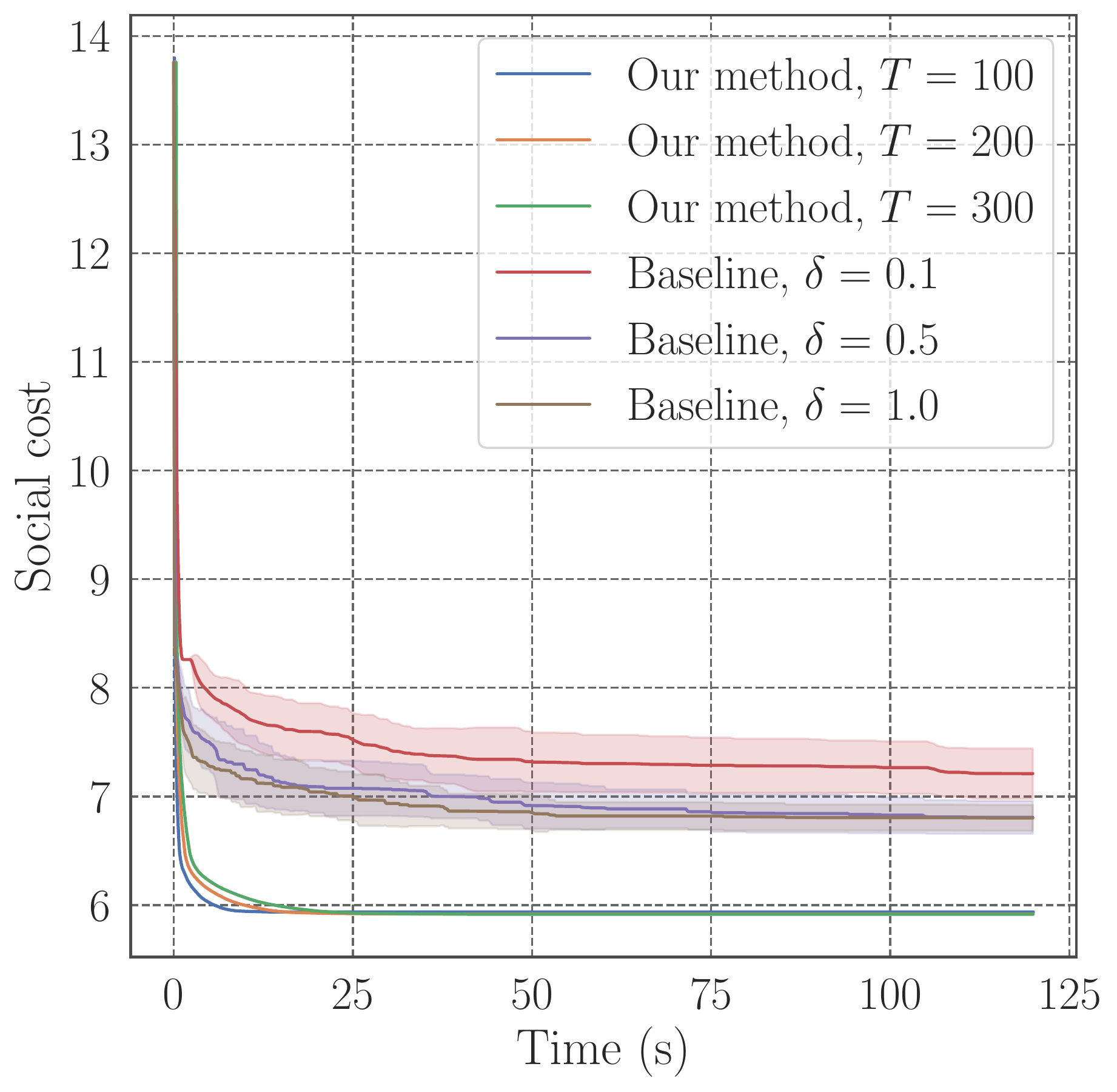}
		\subcaption{\uninett, F, $\eta=0.1$}
	\end{minipage}
	\centering
	\begin{minipage}[t]{.25\textwidth}
		\includegraphics[width=1.0\textwidth]{./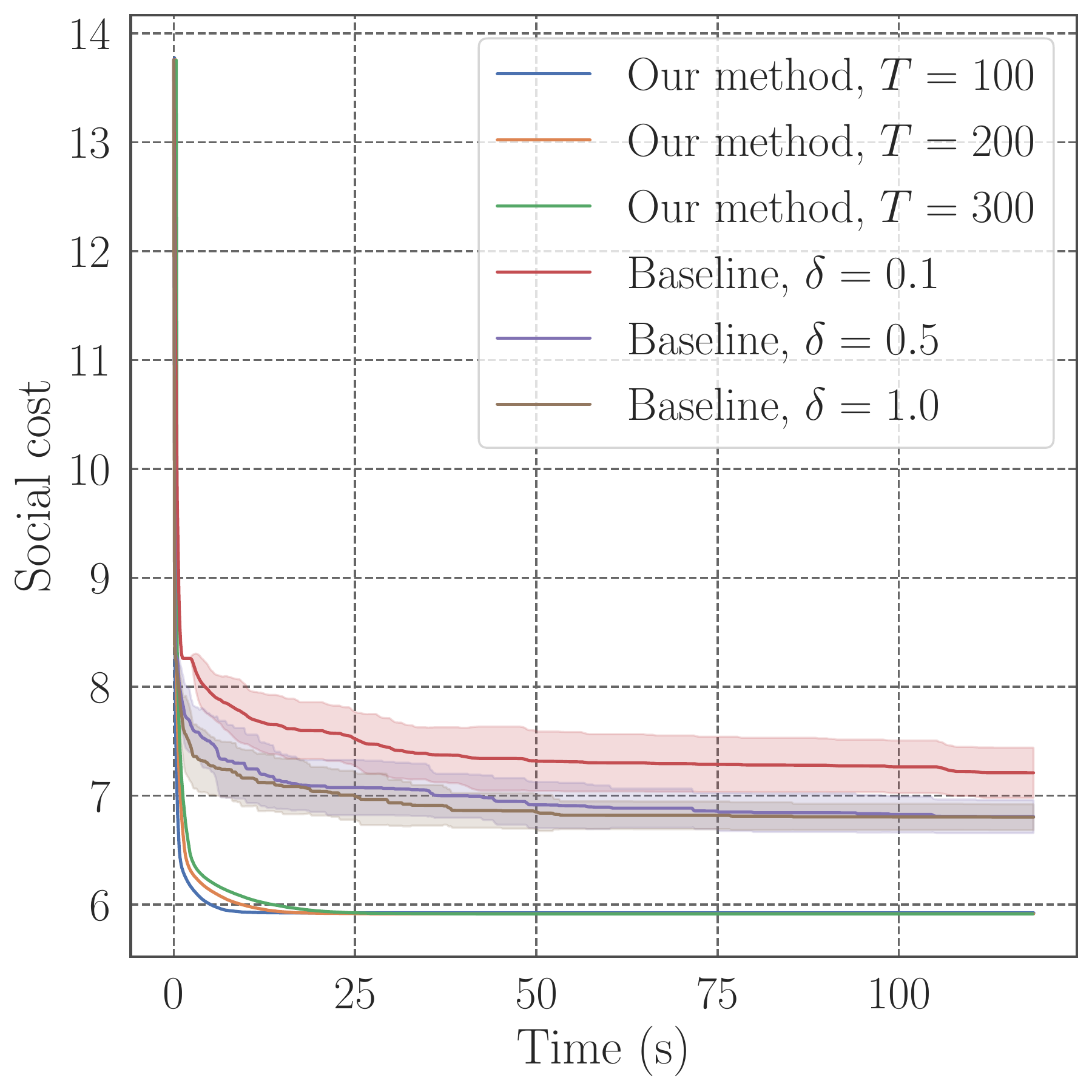}
		\subcaption{\uninett, F, $\eta=0.2$}
	\end{minipage}
	\centering
	\begin{minipage}[t]{.25\textwidth}
		\includegraphics[width=1.0\textwidth]{./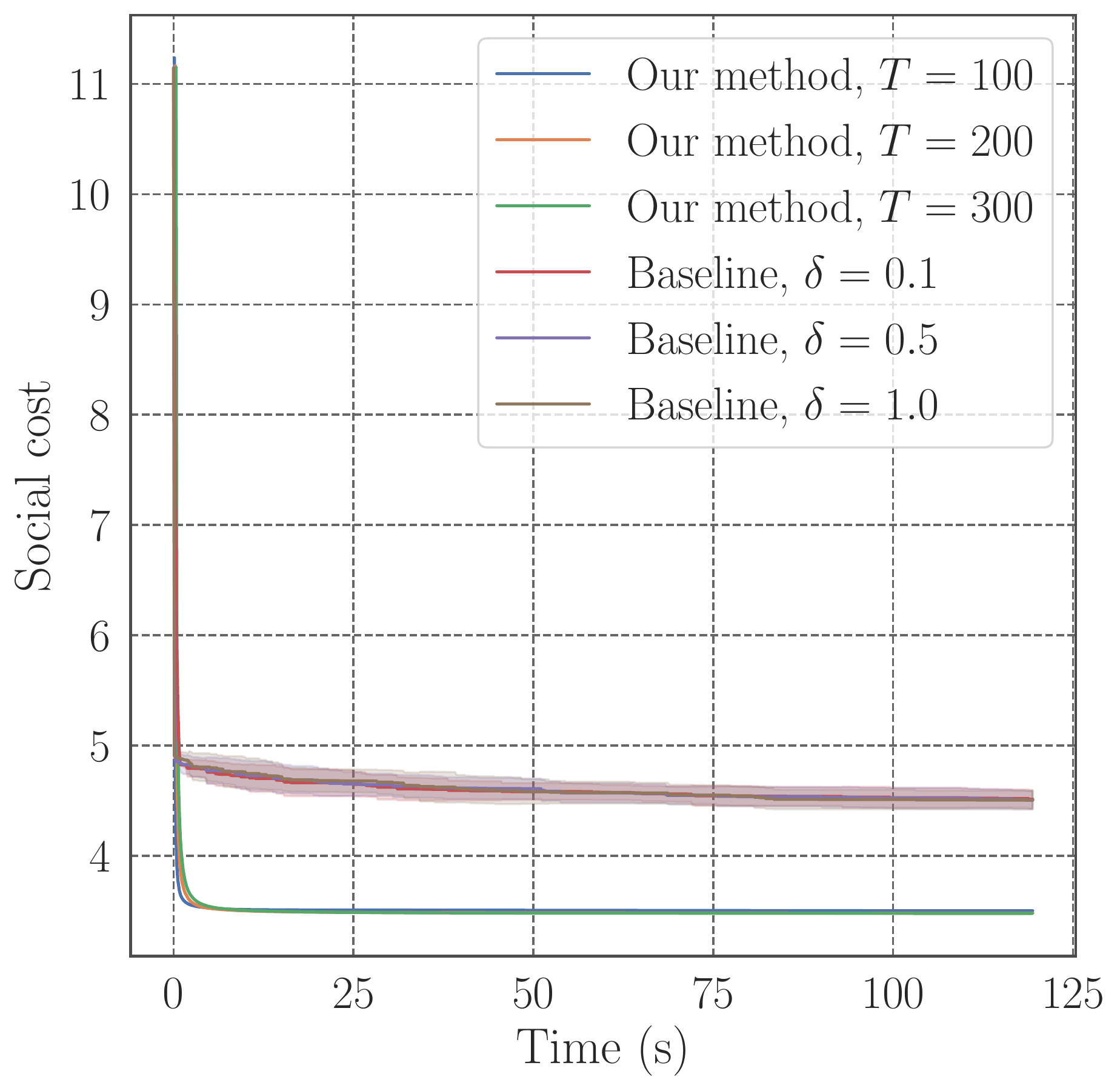}
		\subcaption{\uninett, E, $\eta=0.05$}
	\end{minipage}
	\centering
	\begin{minipage}[t]{.25\textwidth}
		\includegraphics[width=1.0\textwidth]{./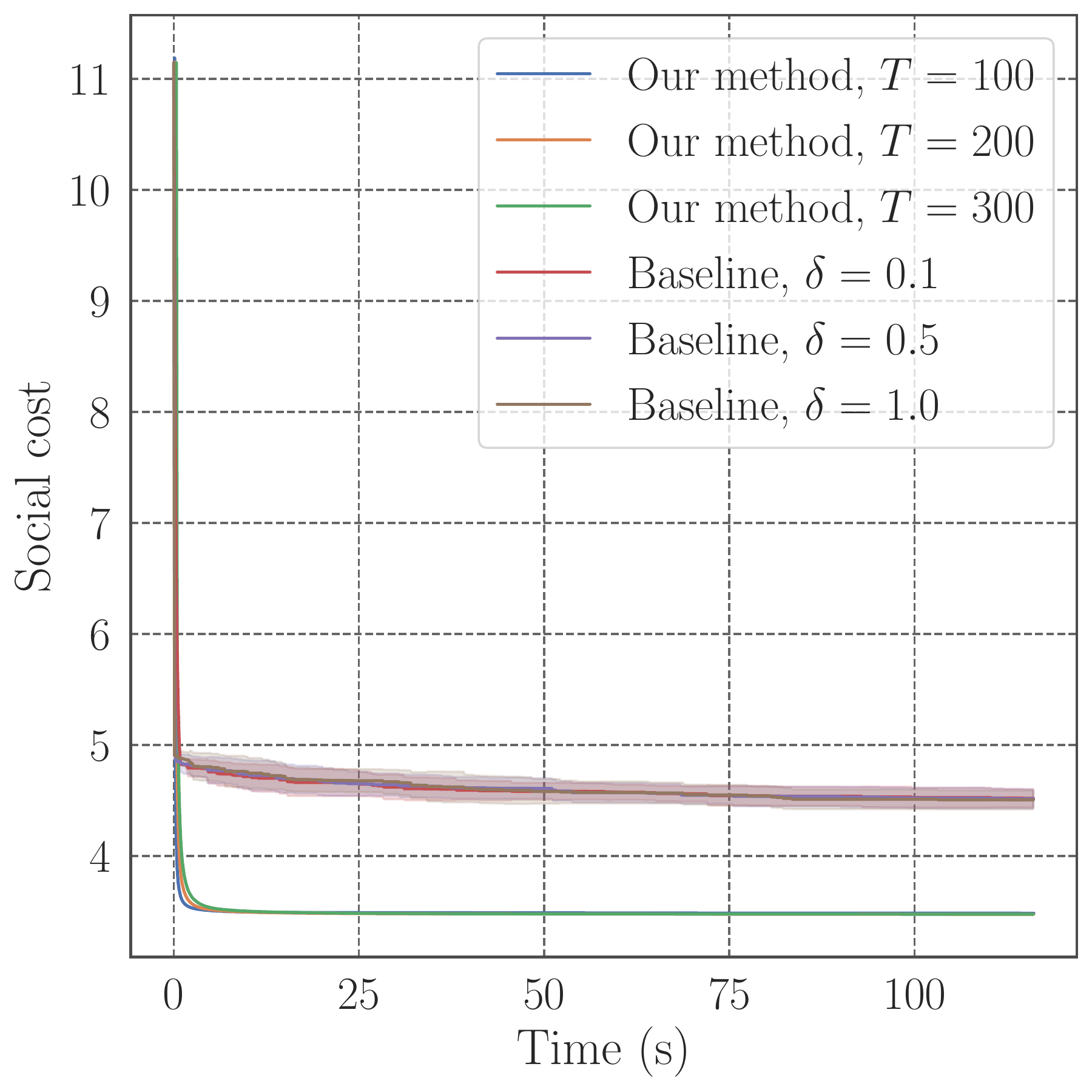}
		\subcaption{\uninett, E, $\eta=0.1$}
	\end{minipage}
	\centering
	\begin{minipage}[t]{.25\textwidth}
		\includegraphics[width=1.0\textwidth]{./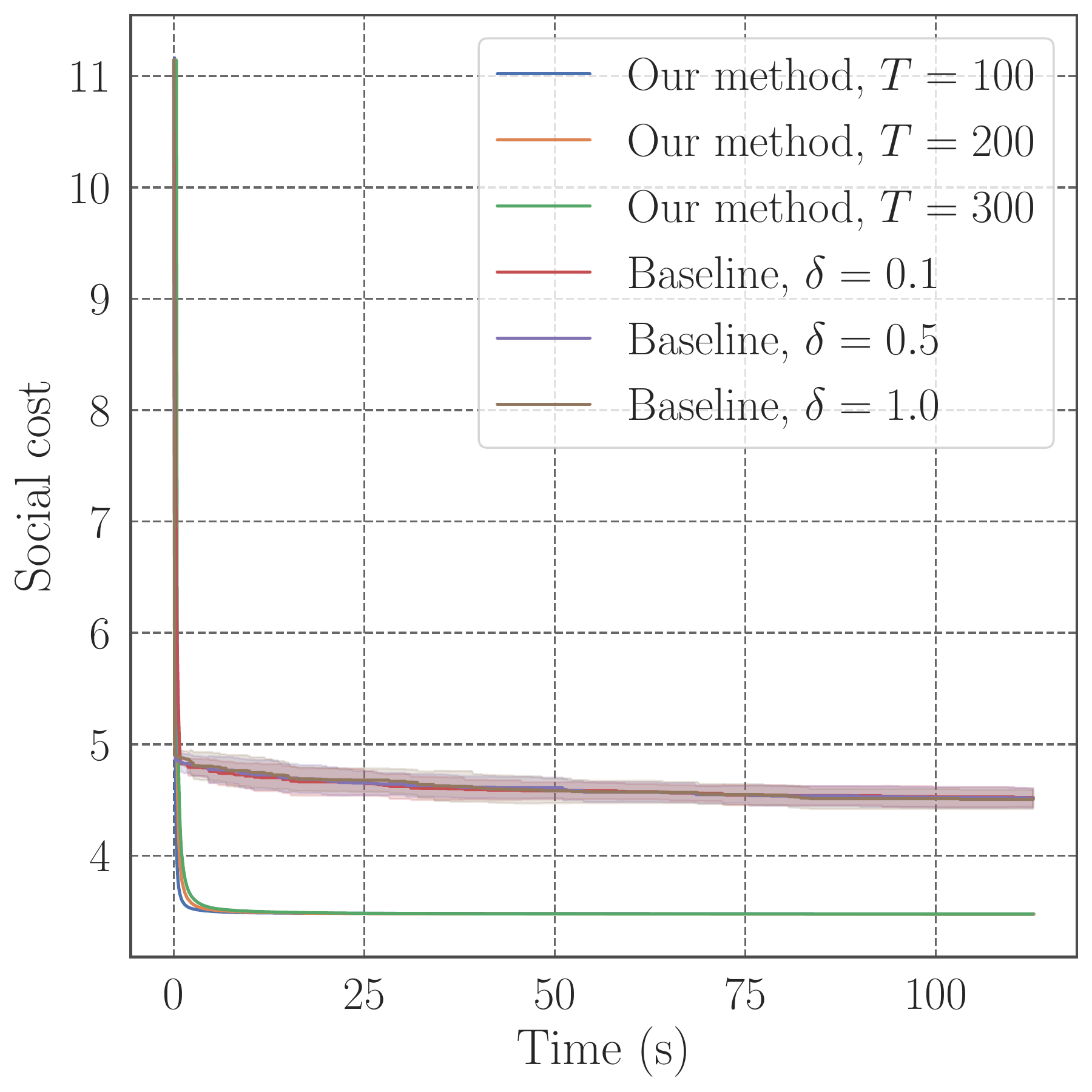}
		\subcaption{\uninett, E, $\eta=0.2$}
	\end{minipage}
	\centering
	\begin{minipage}[t]{.25\textwidth}
		\includegraphics[width=1.0\textwidth]{./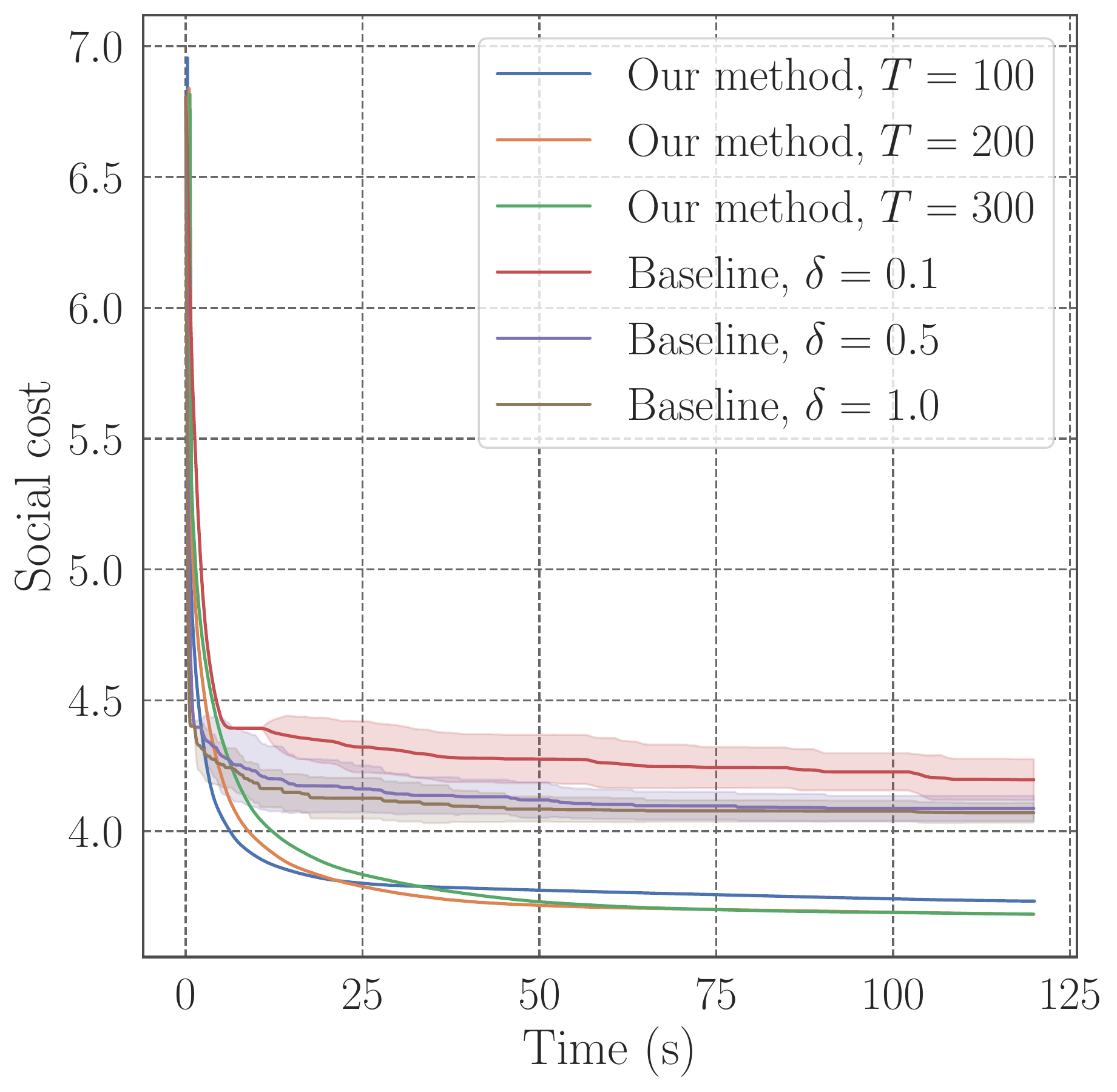}
		\subcaption{\tw, F, $\eta=0.05$}
	\end{minipage}
	\centering
	\begin{minipage}[t]{.25\textwidth}
		\includegraphics[width=1.0\textwidth]{./fig/outer_Tw_inv_time_e0.1.pdf}
		\subcaption{\tw, F, $\eta=0.1$}
	\end{minipage}
	\centering
	\begin{minipage}[t]{.25\textwidth}
		\includegraphics[width=1.0\textwidth]{./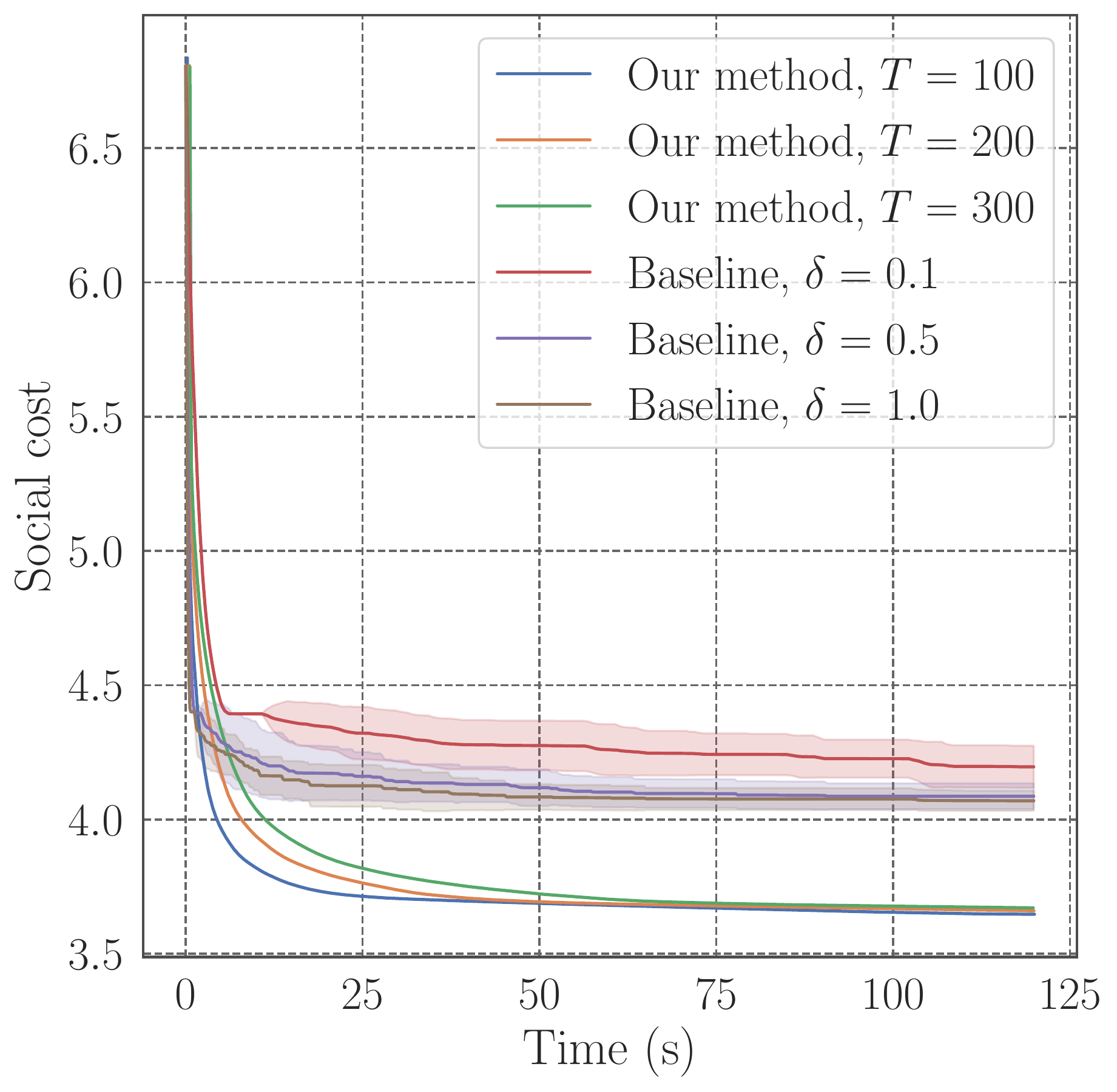}
		\subcaption{\tw, F, $\eta=0.2$}
	\end{minipage}
	\centering
	\begin{minipage}[t]{.25\textwidth}
		\includegraphics[width=1.0\textwidth]{./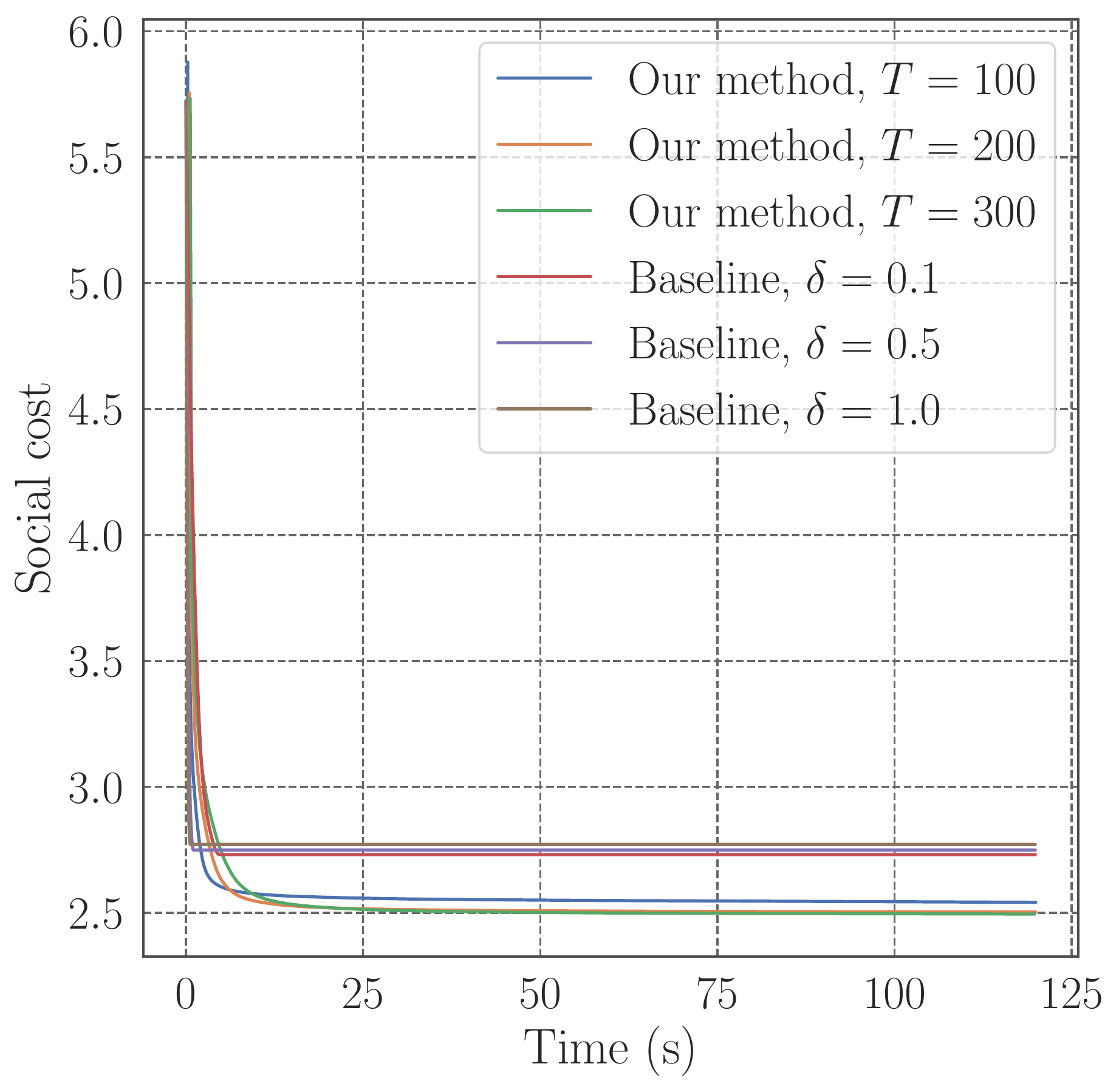}
		\subcaption{\tw, E, $\eta=0.05$}
	\end{minipage}
	\centering
	\begin{minipage}[t]{.25\textwidth}
		\includegraphics[width=1.0\textwidth]{./fig/outer_Tw_exp_time_e0.1.pdf}
		\subcaption{\tw, E, $\eta=0.1$}
	\end{minipage}
	\centering
	\begin{minipage}[t]{.25\textwidth}
		\includegraphics[width=1.0\textwidth]{./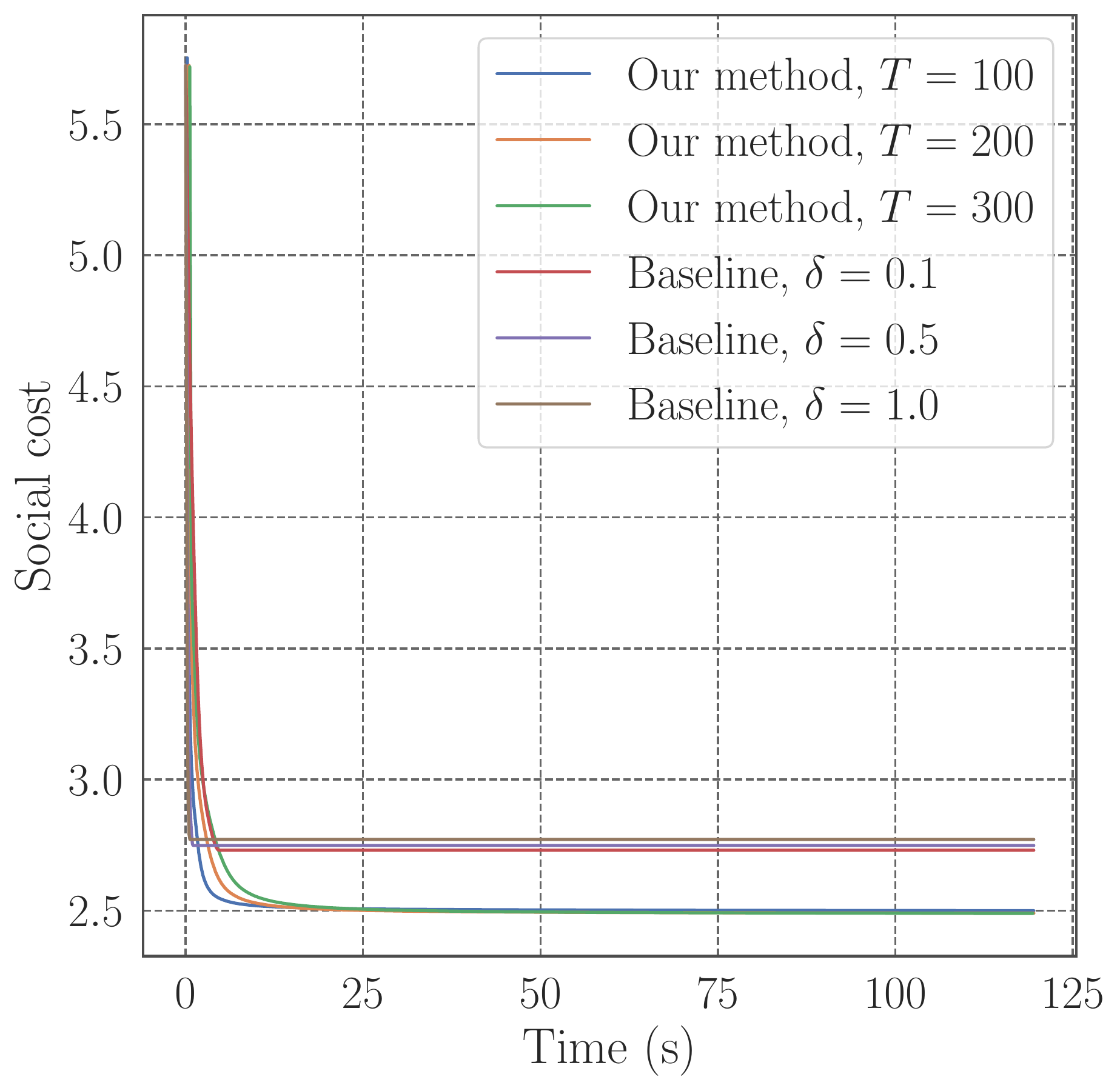}
		\subcaption{\tw, E, $\eta=0.2$}
	\end{minipage}
	\caption{
		Plots of social costs for {\uninett} and {\tw} instances 
		with fractional (F) and exponential (E) costs. 
		We set $\eta$ of \Cref{alg:fwx} to $0.05$, $0.1$, and $0.2$. 
		Baseline method results are shown with 
		means and standard deviations over $20$ random trials.
		}
	\label{a_fig:social_cost_steiner}
\end{figure*}

\begin{figure*}[htb]
	\centering
	\begin{minipage}[t]{.24\textwidth}
		\includegraphics[width=.8\textwidth]{./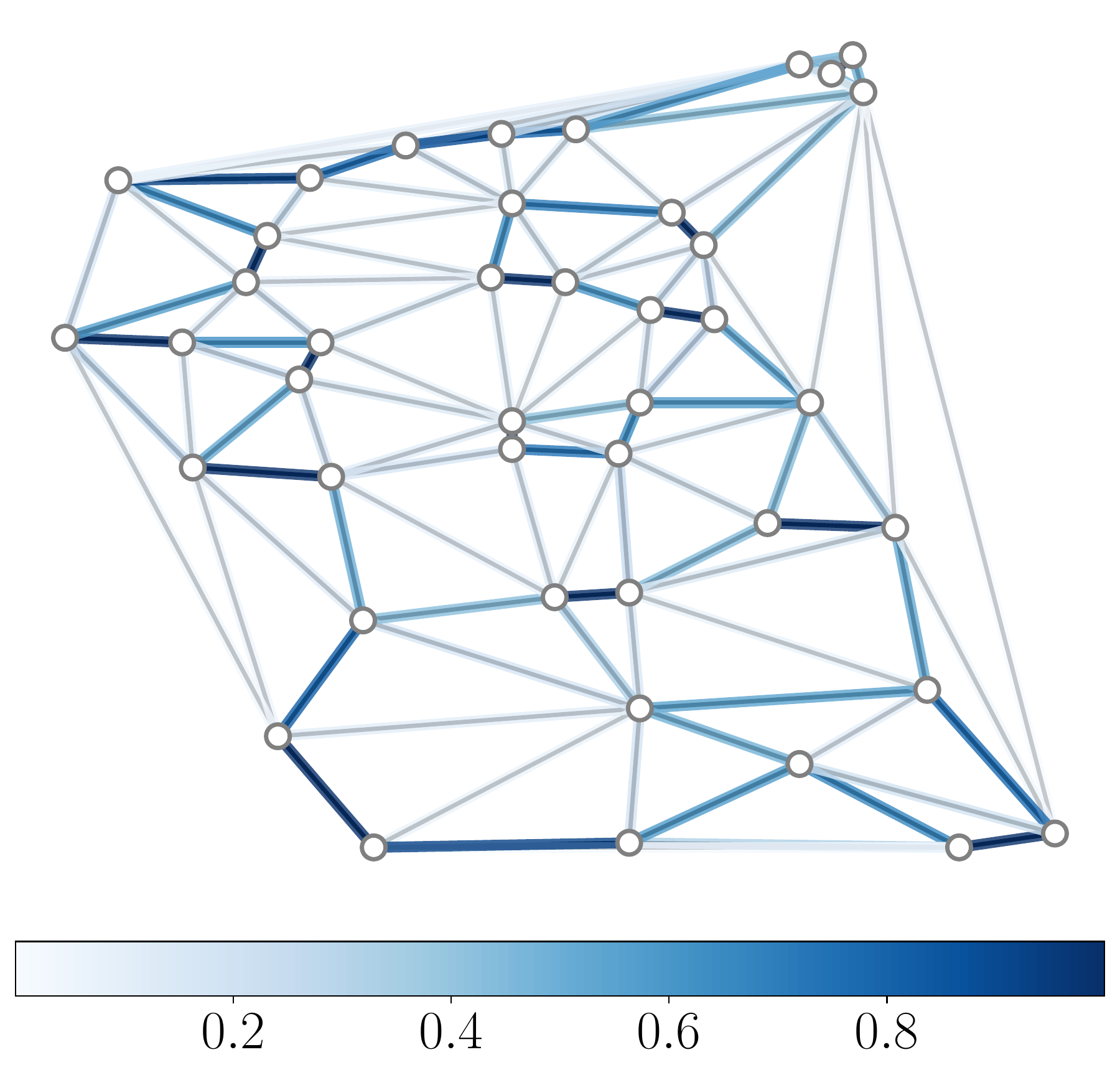}
		\subcaption{F, $\eta=0.05$, $\ybt{T}(\thetab)$}
	\end{minipage}
	\centering
	\begin{minipage}[t]{.24\textwidth}
		\includegraphics[width=.8\textwidth]{./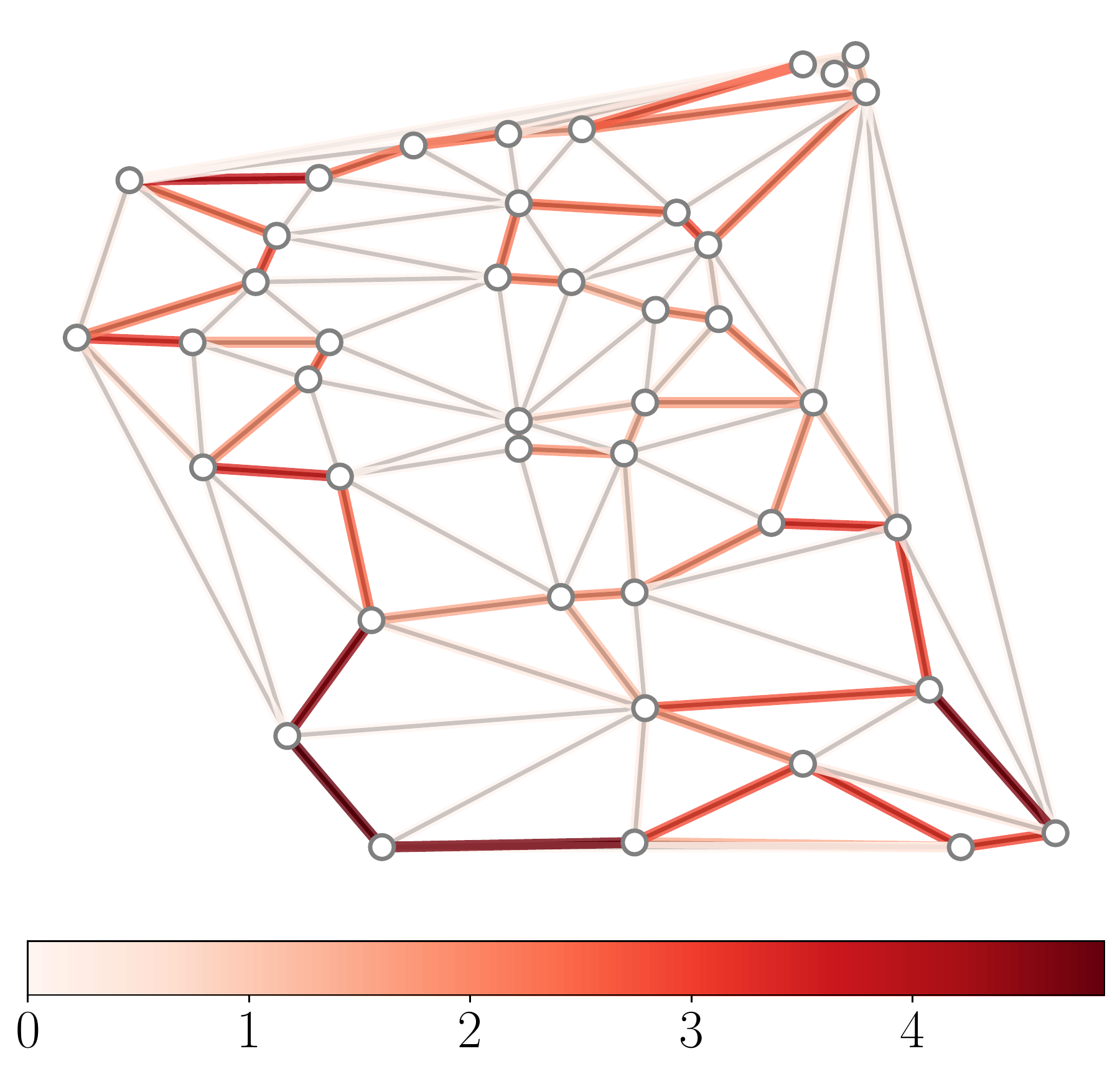}
		\subcaption{F, $\eta=0.05$, $\thetab$}
	\end{minipage}
	\centering
	\begin{minipage}[t]{.24\textwidth}
		\includegraphics[width=.8\textwidth]{./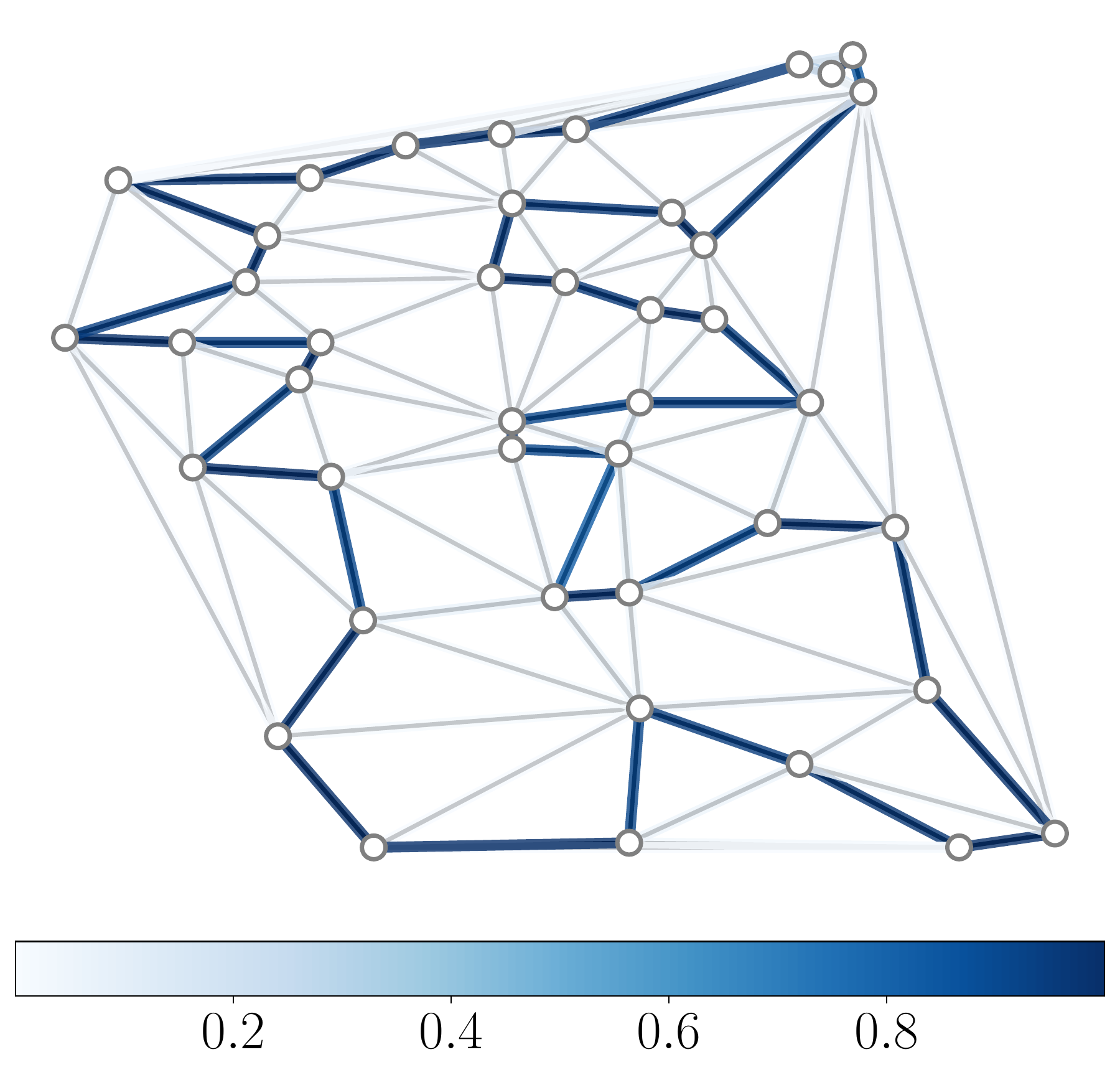}
		\subcaption{E, $\eta=0.05$, $\ybt{T}(\thetab)$}
	\end{minipage}
	\centering
	\begin{minipage}[t]{.24\textwidth}
		\includegraphics[width=.8\textwidth]{./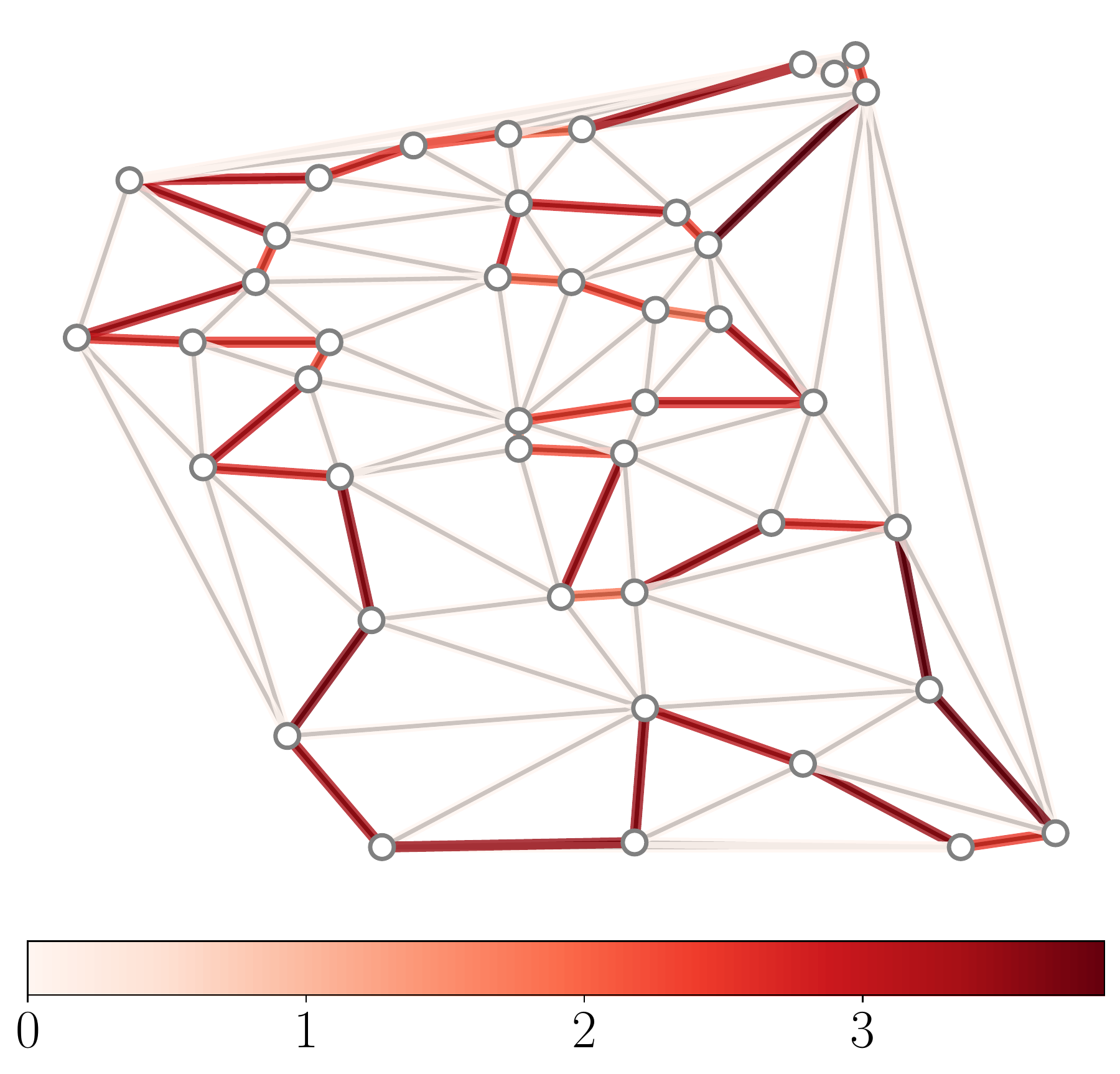}
		\subcaption{E, $\eta=0.05$, $\thetab$}
	\end{minipage}
	\centering
	\begin{minipage}[t]{.24\textwidth}
		\includegraphics[width=.8\textwidth]{./fig/dantzig42_inv_e0.1_final_y.pdf}
		\subcaption{F, $\eta=0.1$, $\ybt{T}(\thetab)$}
	\end{minipage}
	\centering
	\begin{minipage}[t]{.24\textwidth}
		\includegraphics[width=.8\textwidth]{./fig/dantzig42_inv_e0.1_final_theta.pdf}
		\subcaption{F, $\eta=0.1$, $\thetab$}
	\end{minipage}
	\centering
	\begin{minipage}[t]{.24\textwidth}
		\includegraphics[width=.8\textwidth]{./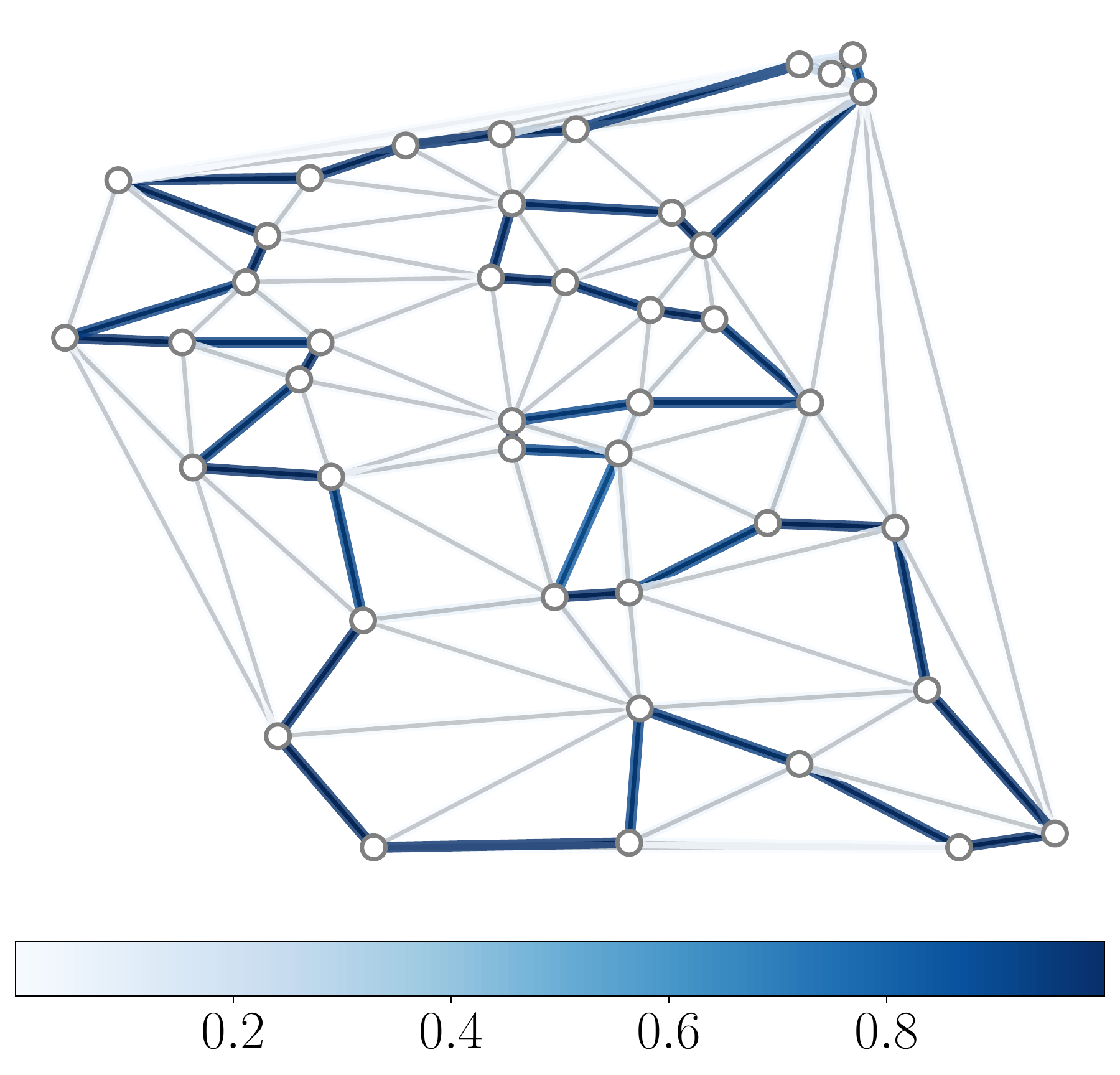}
		\subcaption{E, $\eta=0.1$, $\ybt{T}(\thetab)$}
	\end{minipage}
	\centering
	\begin{minipage}[t]{.24\textwidth}
		\includegraphics[width=.8\textwidth]{./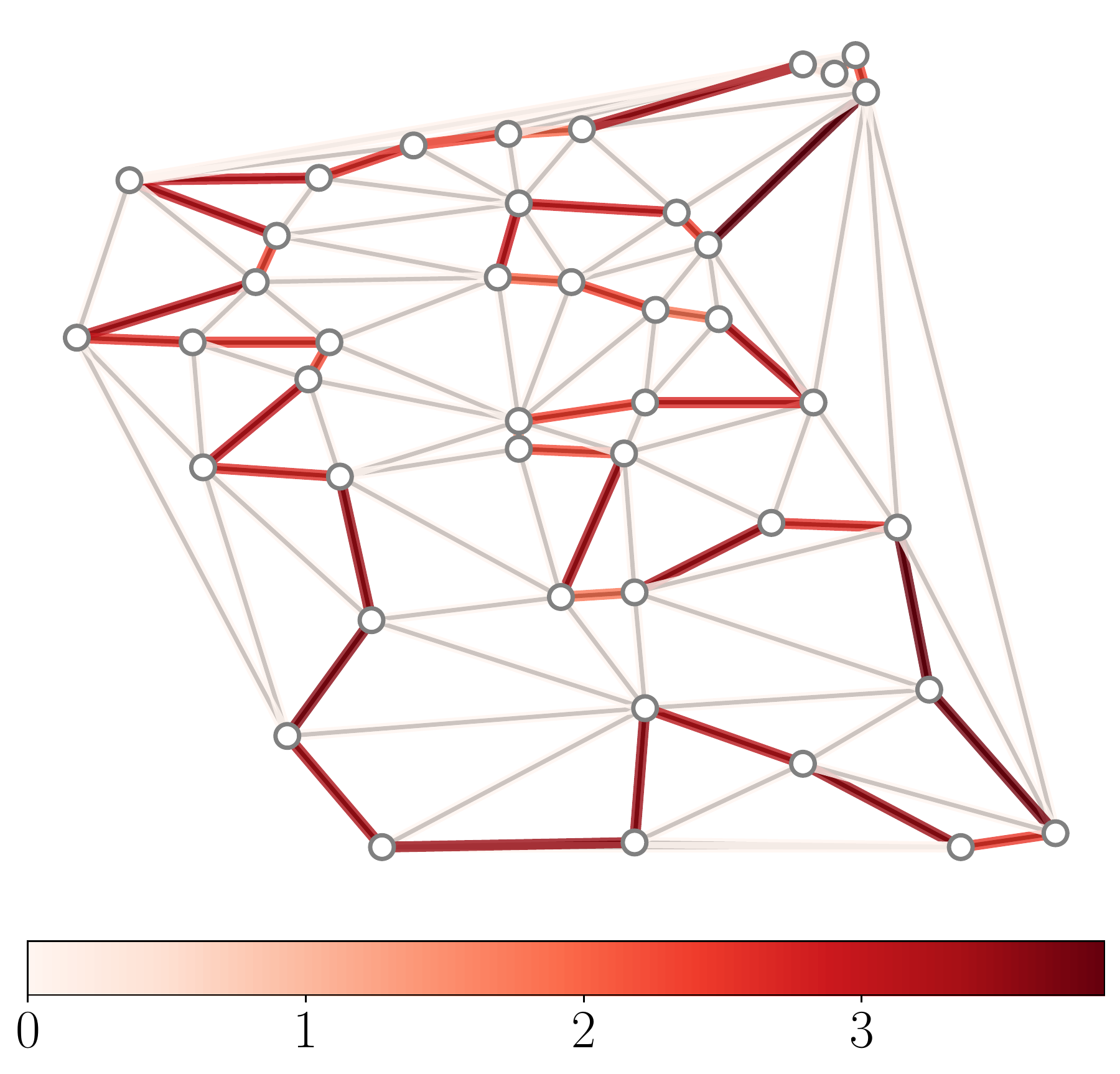}
		\subcaption{E, $\eta=0.1$, $\thetab$}
	\end{minipage}
	\centering
	\begin{minipage}[t]{.24\textwidth}
		\includegraphics[width=.8\textwidth]{./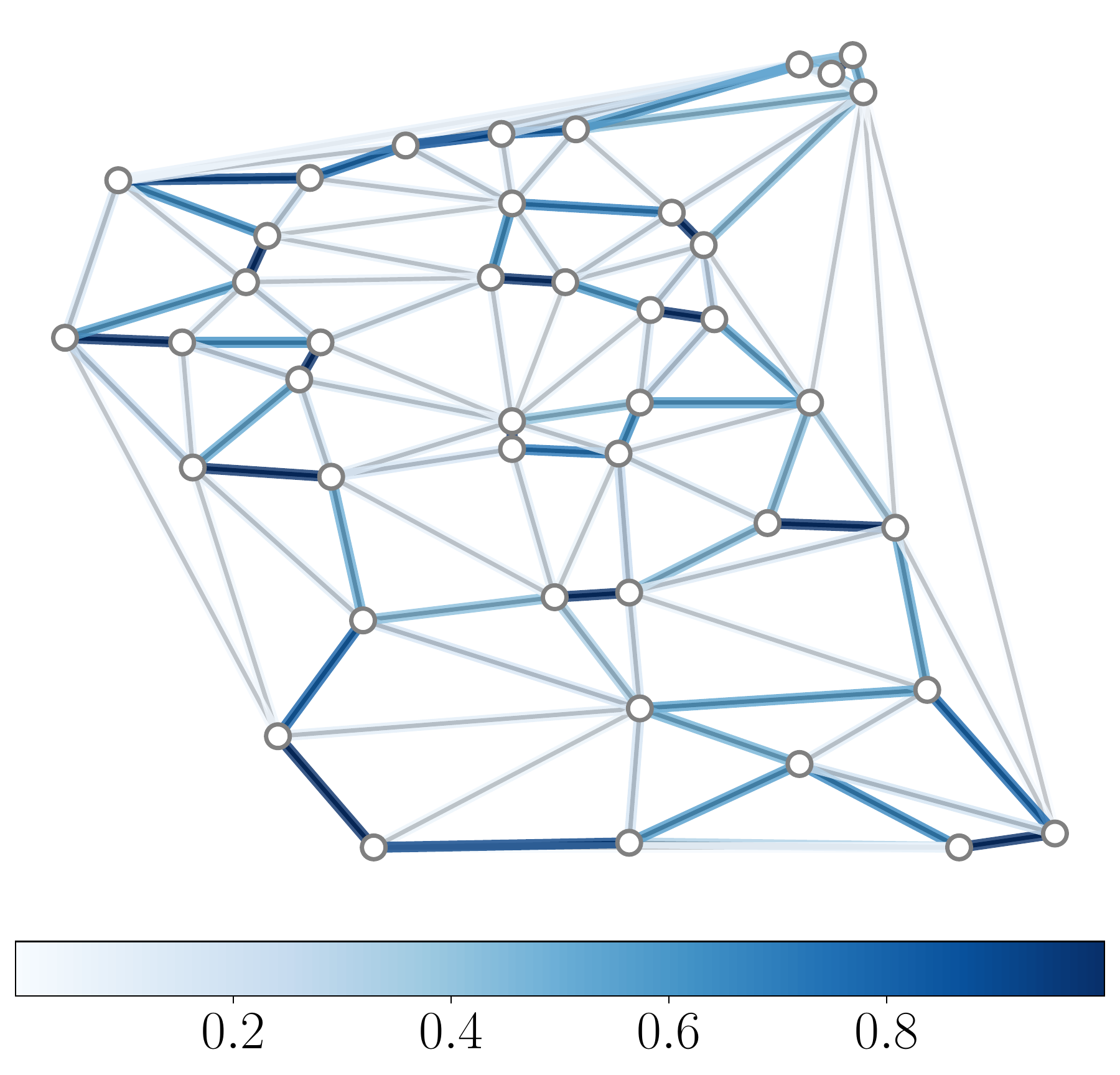}
		\subcaption{F, $\eta=0.2$, $\ybt{T}(\thetab)$}
	\end{minipage}
	\centering
	\begin{minipage}[t]{.24\textwidth}
		\includegraphics[width=.8\textwidth]{./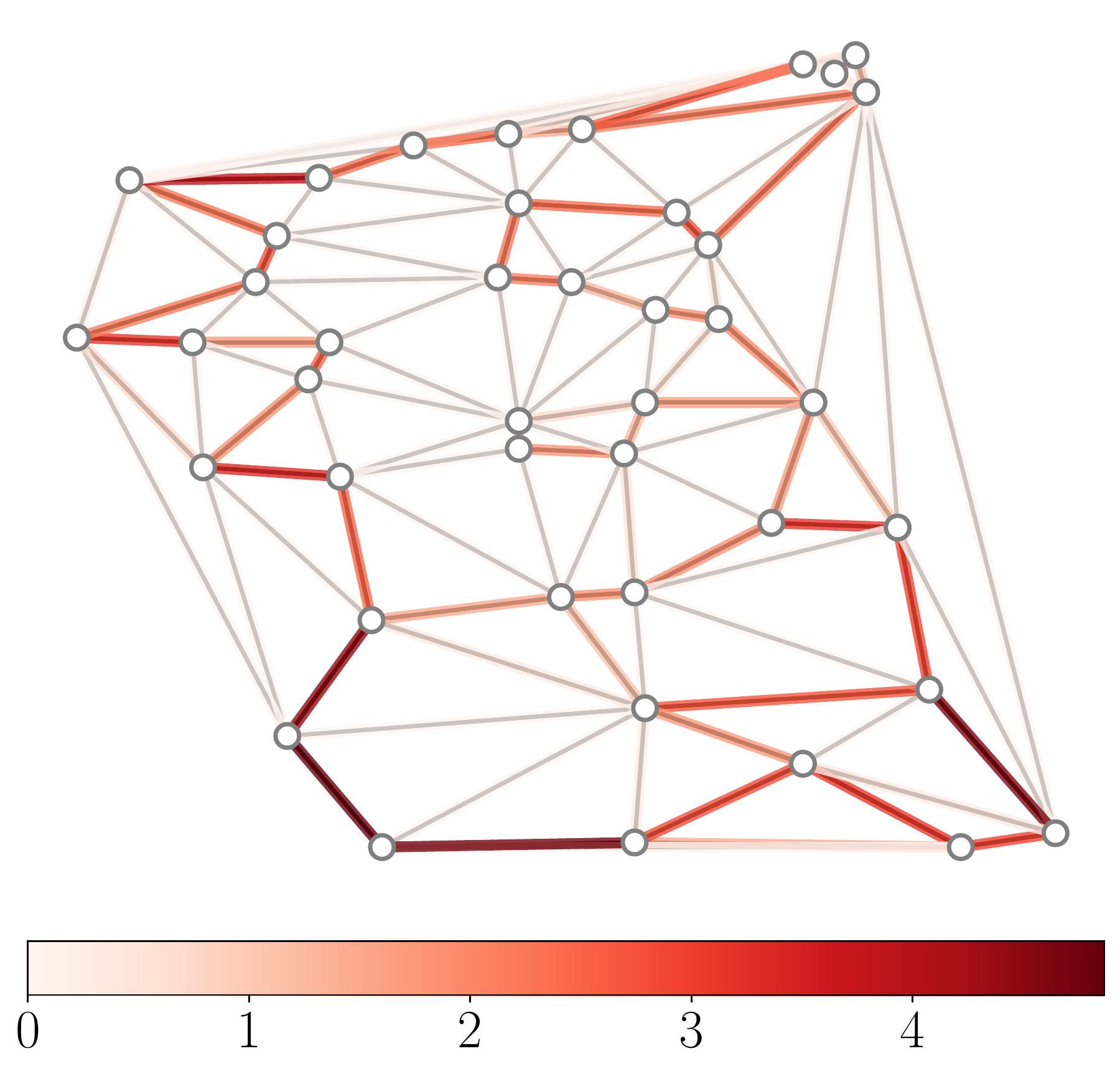}
		\subcaption{F, $\eta=0.2$, $\thetab$}
	\end{minipage}
	\centering
	\begin{minipage}[t]{.24\textwidth}
		\includegraphics[width=.8\textwidth]{./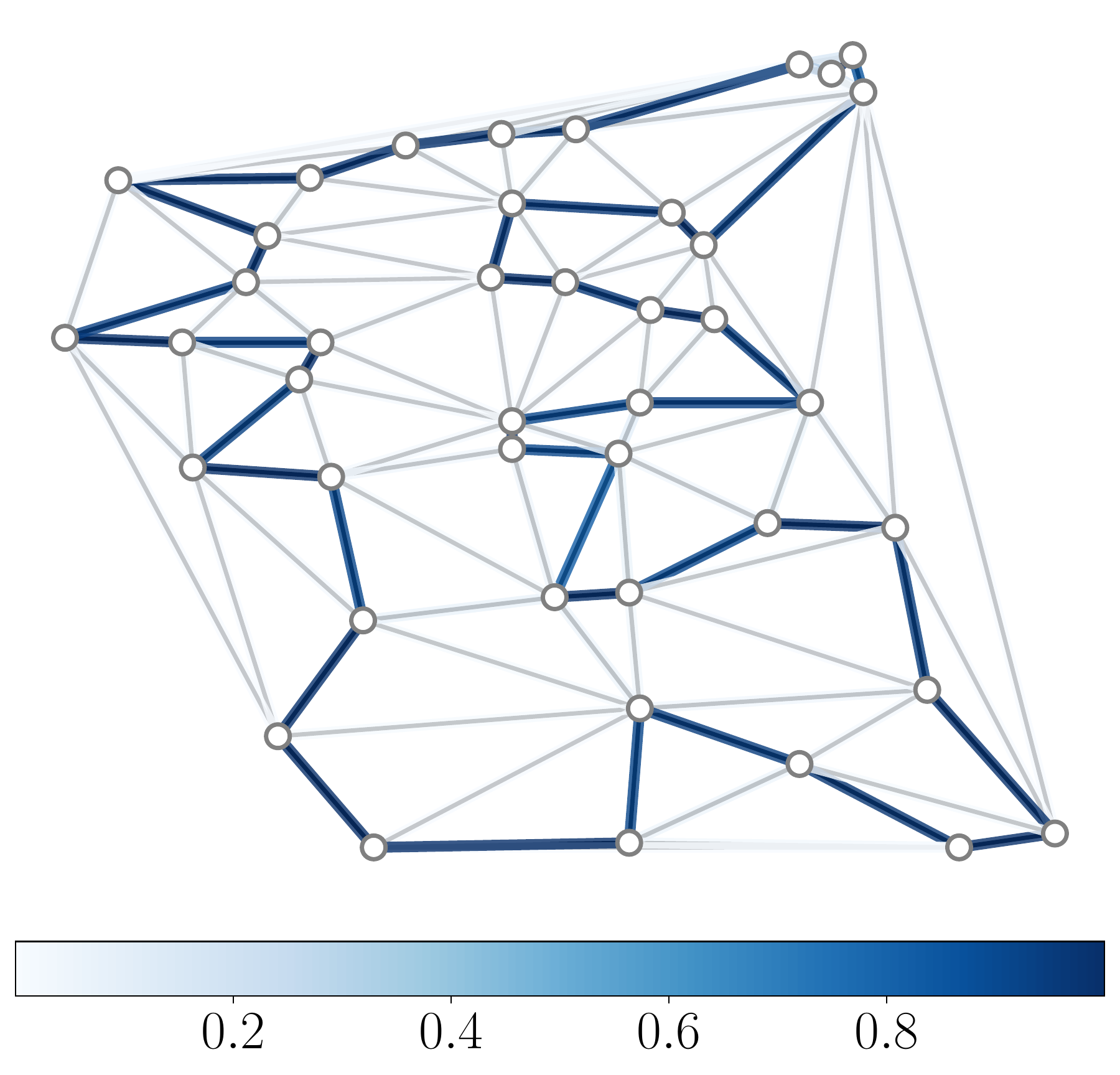}
		\subcaption{E, $\eta=0.2$, $\ybt{T}(\thetab)$}
	\end{minipage}
	\centering
	\begin{minipage}[t]{.24\textwidth}
		\includegraphics[width=.8\textwidth]{./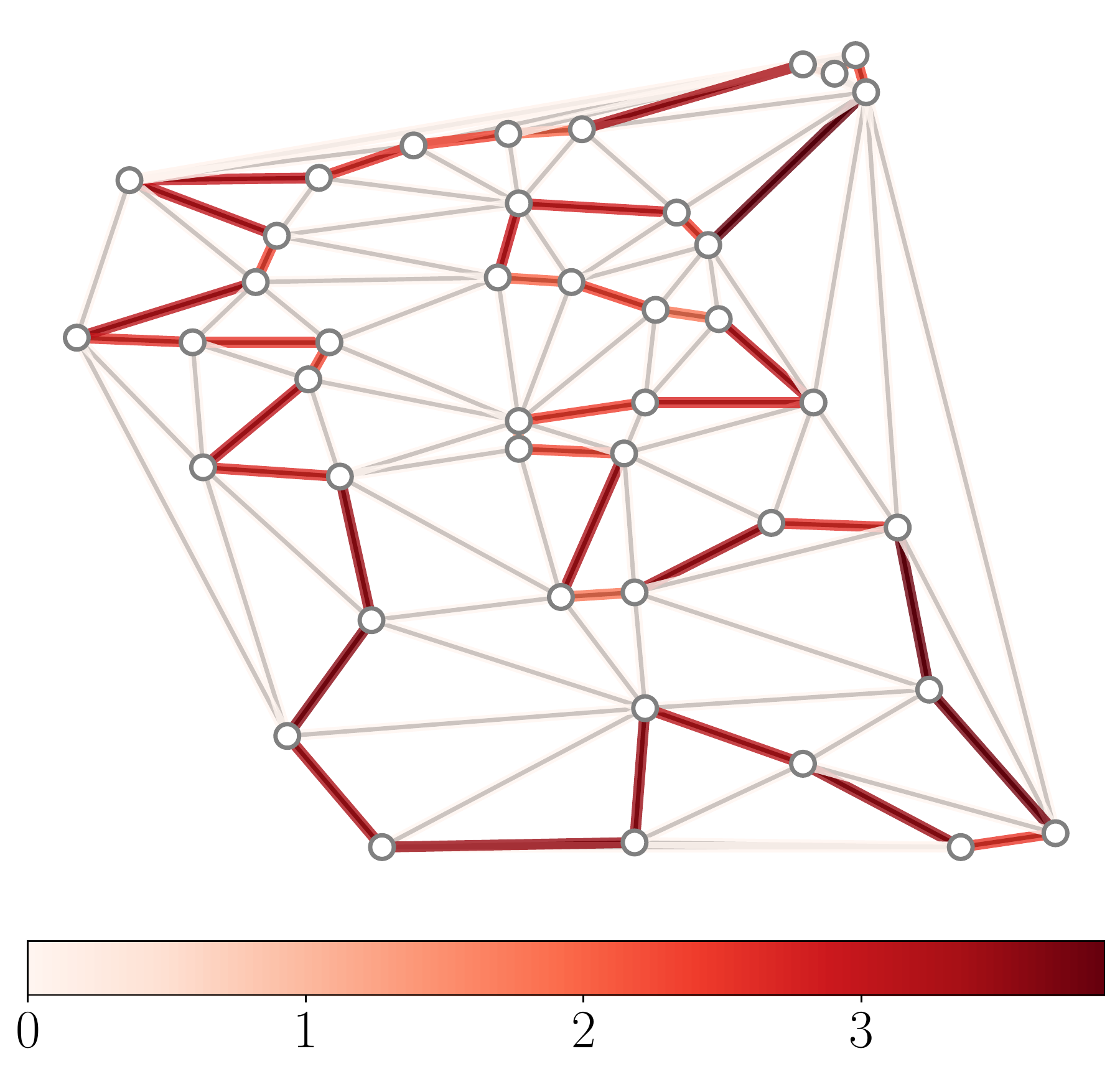}
		\subcaption{E, $\eta=0.2$, $\thetab$}
	\end{minipage}
	\centering
	\begin{minipage}[t]{.24\textwidth}
		\includegraphics[width=.8\textwidth]{./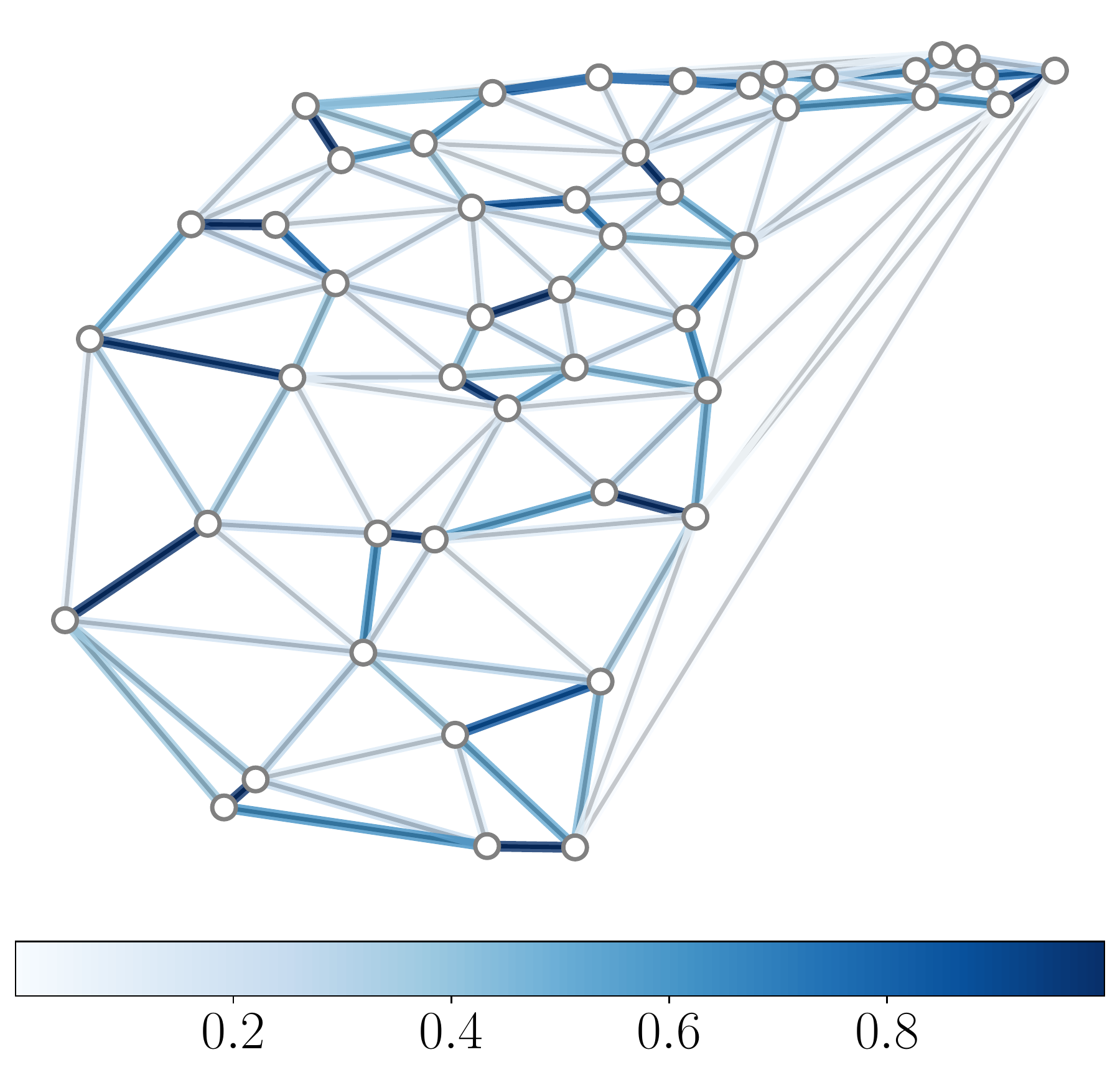}
		\subcaption{F, $\eta=0.05$, $\ybt{T}(\thetab)$}
	\end{minipage}
	\centering
	\begin{minipage}[t]{.24\textwidth}
		\includegraphics[width=.8\textwidth]{./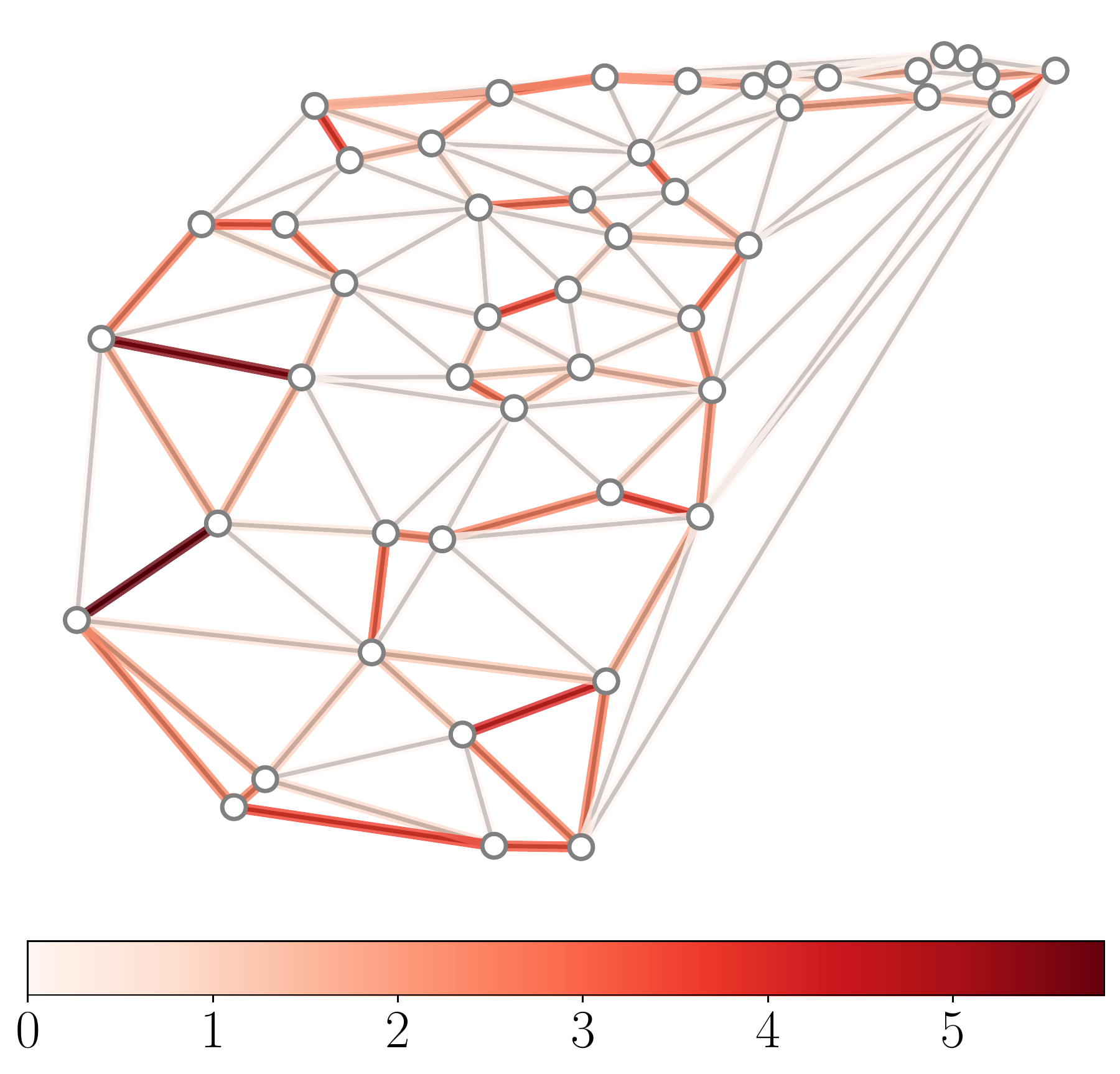}
		\subcaption{F, $\eta=0.05$, $\thetab$}
	\end{minipage}
	\centering
	\begin{minipage}[t]{.24\textwidth}
		\includegraphics[width=.8\textwidth]{./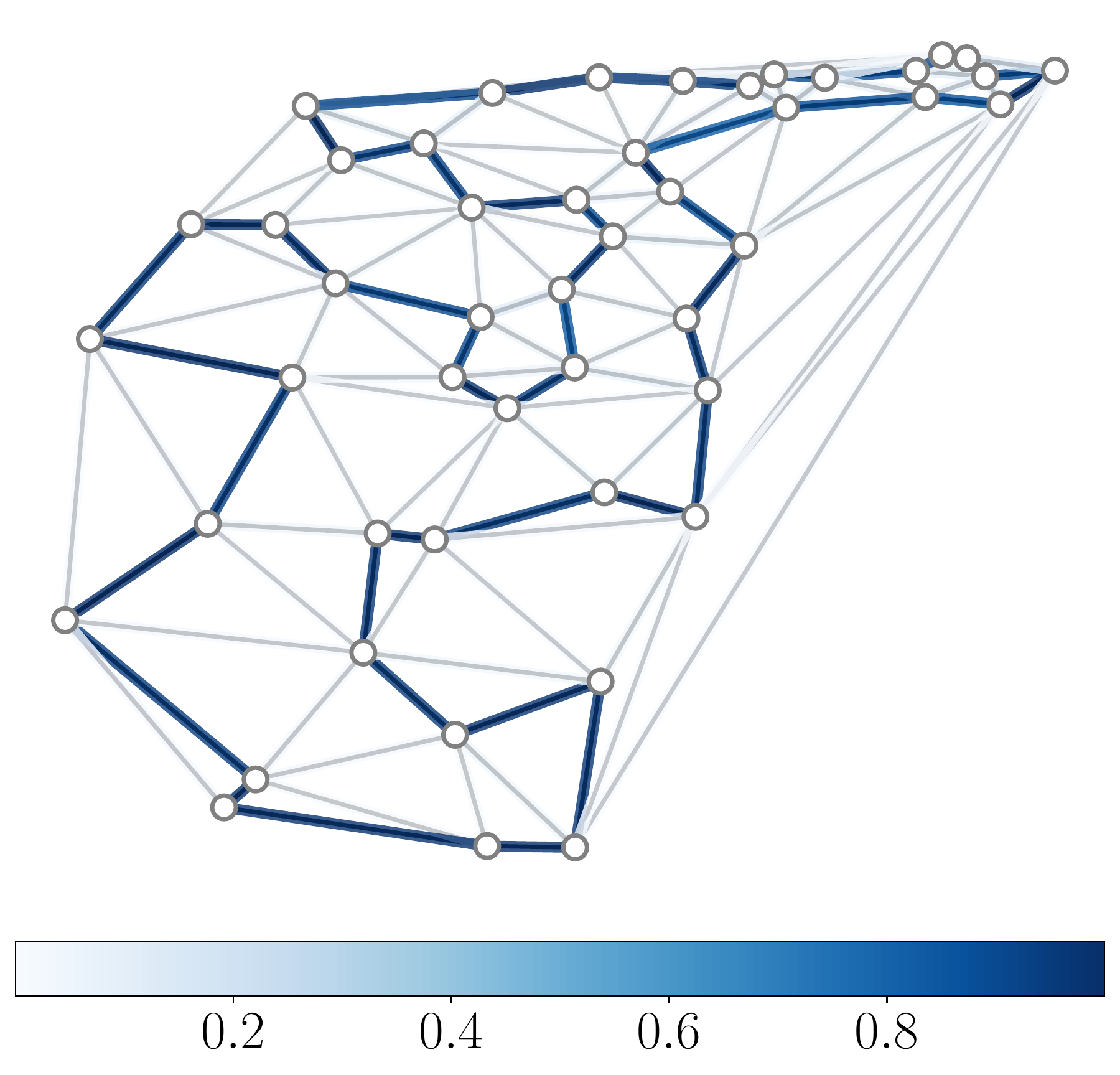}
		\subcaption{E, $\eta=0.05$, $\ybt{T}(\thetab)$}
	\end{minipage}
	\centering
	\begin{minipage}[t]{.24\textwidth}
		\includegraphics[width=.8\textwidth]{./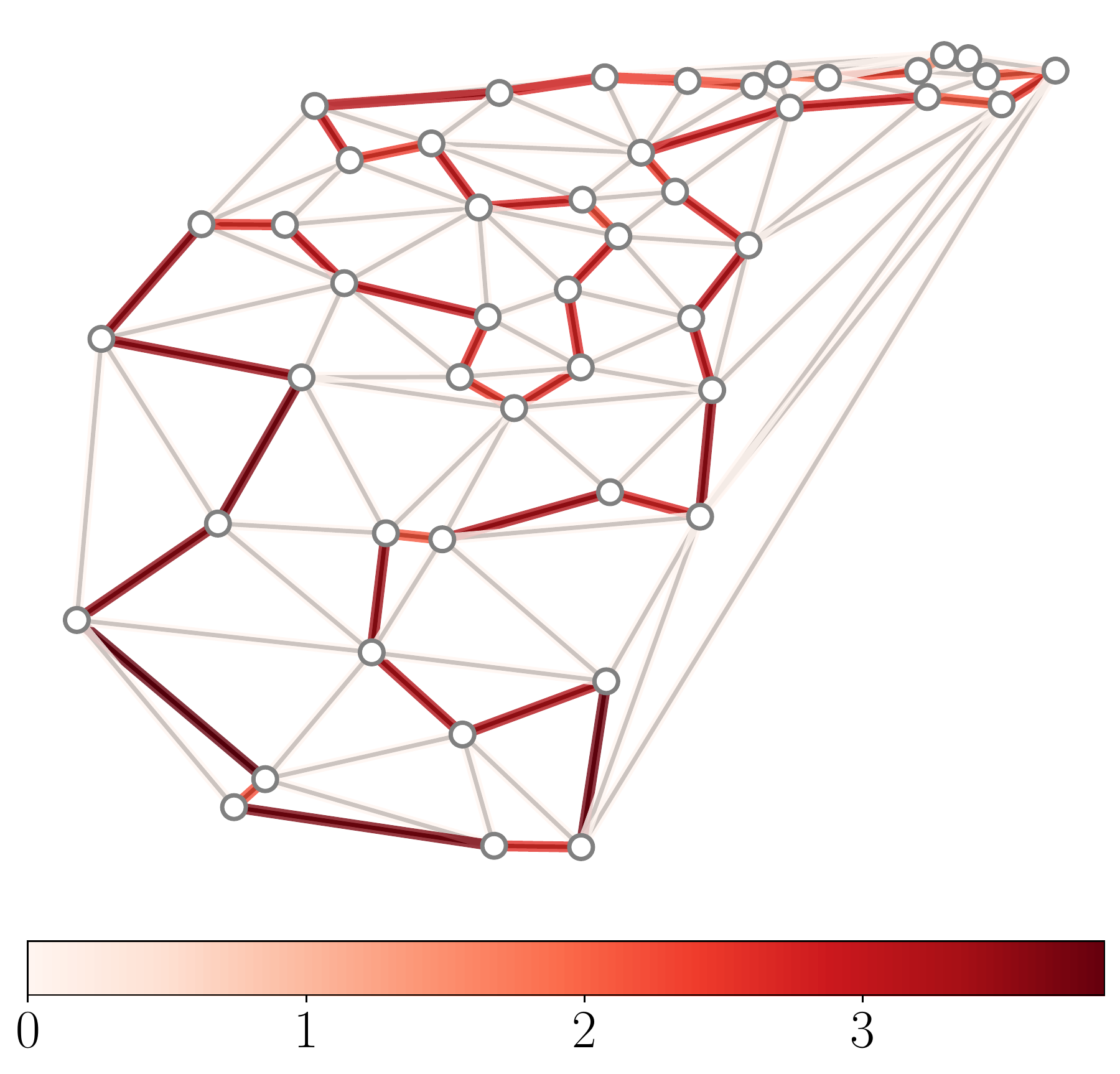}
		\subcaption{E, $\eta=0.05$, $\thetab$}
	\end{minipage}
	\centering
	\begin{minipage}[t]{.24\textwidth}
		\includegraphics[width=.8\textwidth]{./fig/att48_inv_e0.1_final_y.pdf}
		\subcaption{F, $\eta=0.1$, $\ybt{T}(\thetab)$}
	\end{minipage}
	\centering
	\begin{minipage}[t]{.24\textwidth}
		\includegraphics[width=.8\textwidth]{./fig/att48_inv_e0.1_final_theta.pdf}
		\subcaption{F, $\eta=0.1$, $\thetab$}
	\end{minipage}
	\centering
	\begin{minipage}[t]{.24\textwidth}
		\includegraphics[width=.8\textwidth]{./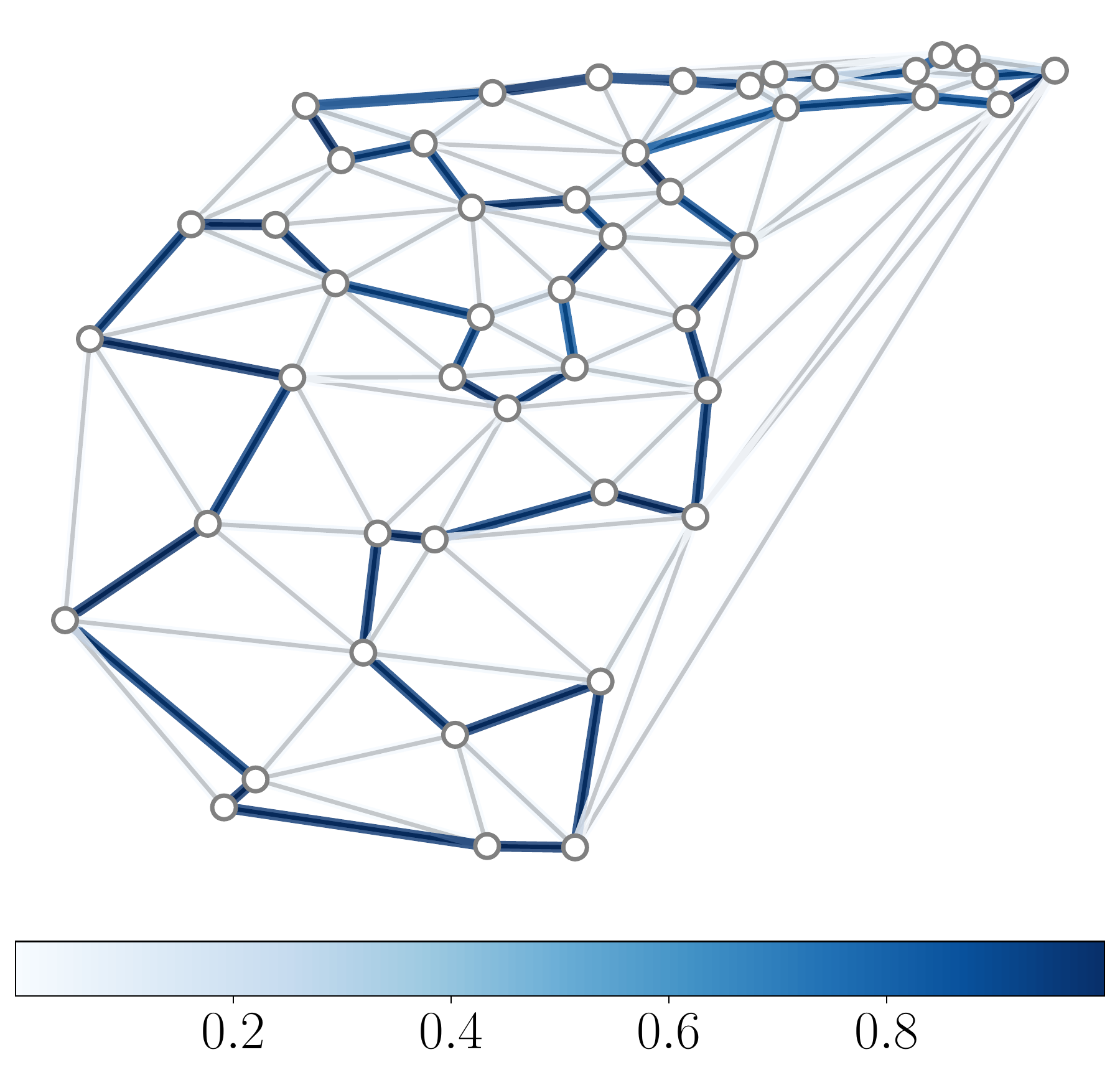}
		\subcaption{E, $\eta=0.1$, $\ybt{T}(\thetab)$}
	\end{minipage}
	\centering
	\begin{minipage}[t]{.24\textwidth}
		\includegraphics[width=.8\textwidth]{./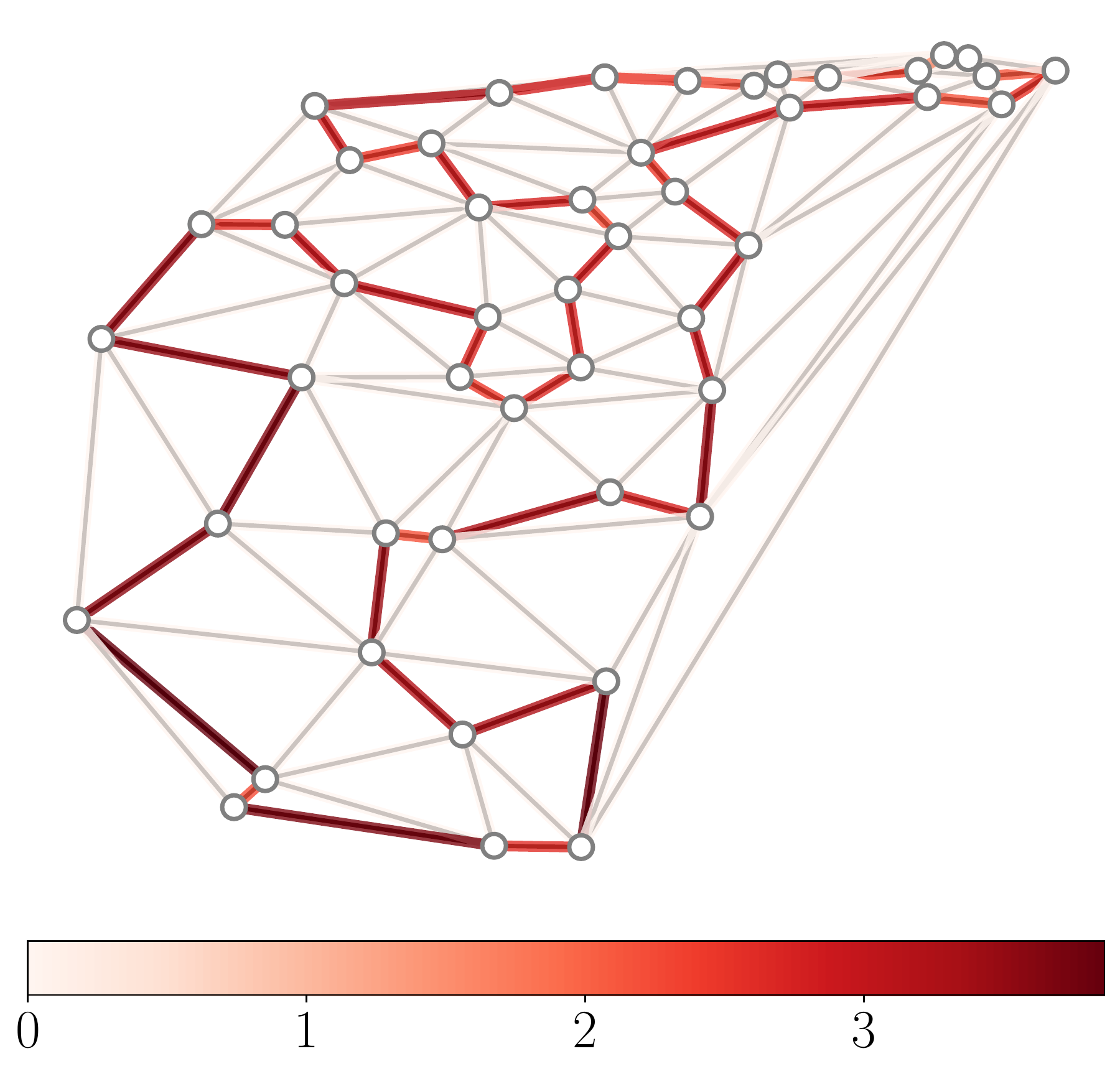}
		\subcaption{E, $\eta=0.1$, $\thetab$}
	\end{minipage}
	\centering
	\begin{minipage}[t]{.24\textwidth}
		\includegraphics[width=.8\textwidth]{./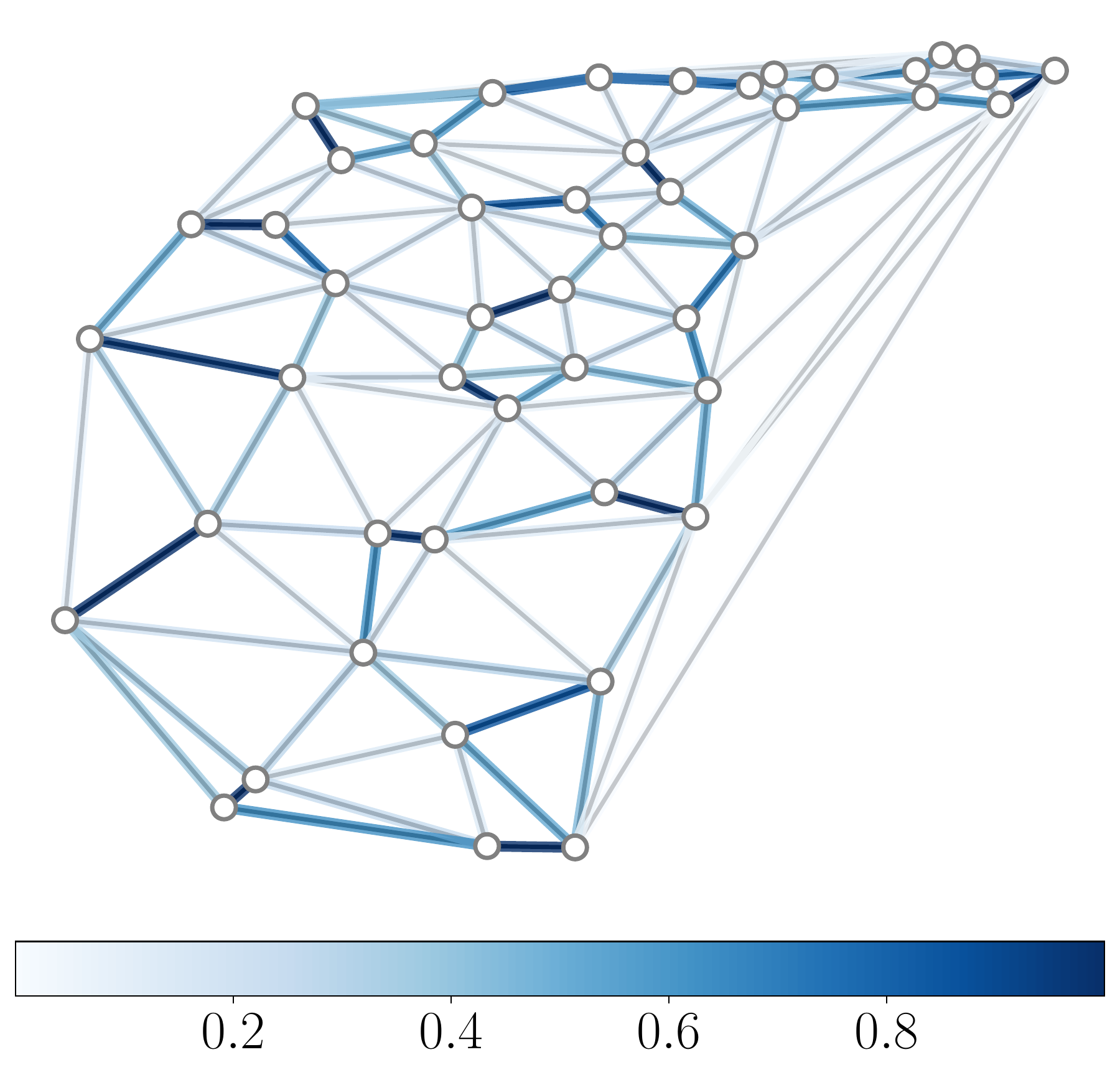}
		\subcaption{F, $\eta=0.2$, $\ybt{T}(\thetab)$}
	\end{minipage}
	\centering
	\begin{minipage}[t]{.24\textwidth}
		\includegraphics[width=.8\textwidth]{./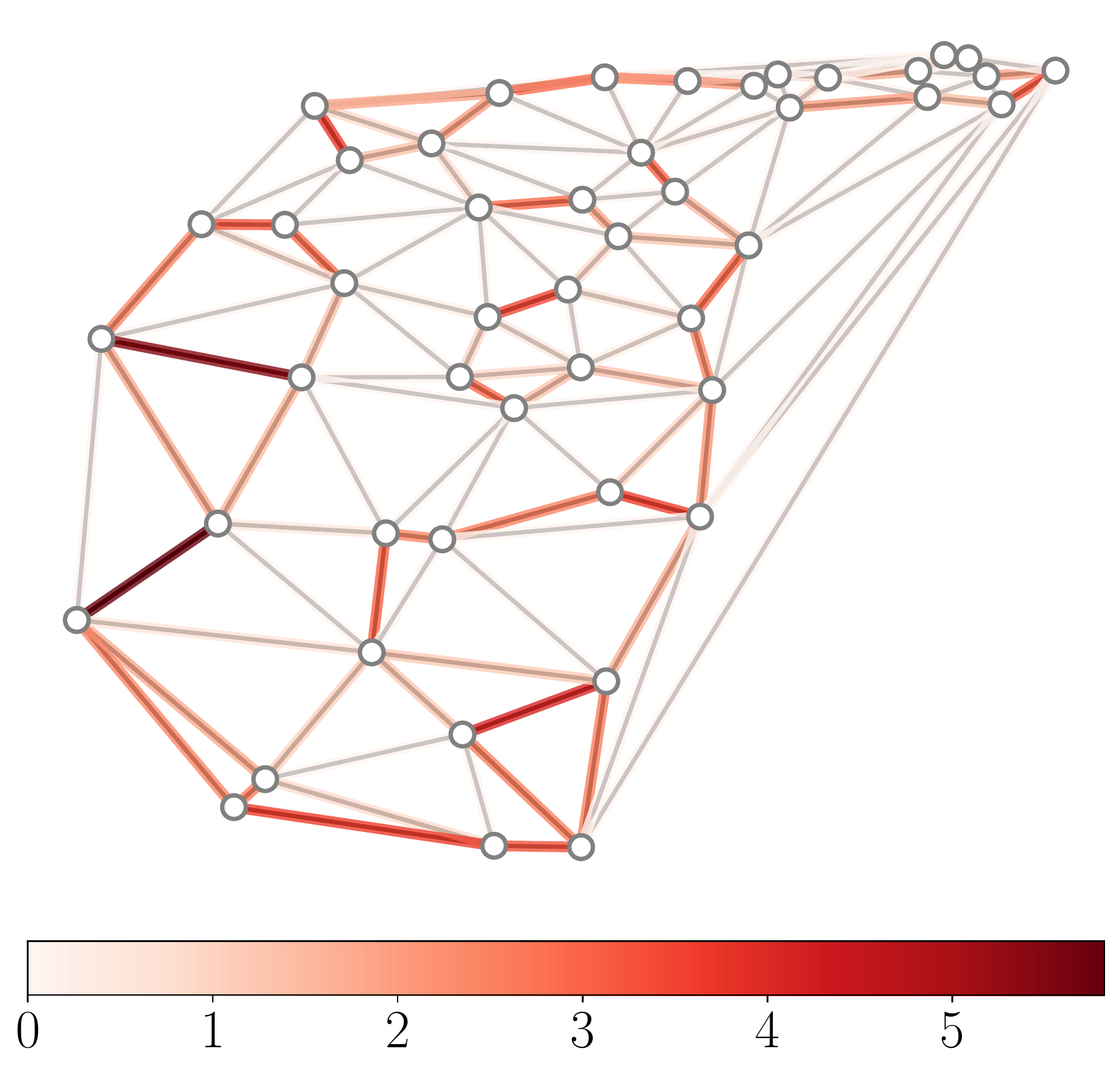}
		\subcaption{F, $\eta=0.2$, $\thetab$}
	\end{minipage}
	\centering
	\begin{minipage}[t]{.24\textwidth}
		\includegraphics[width=.8\textwidth]{./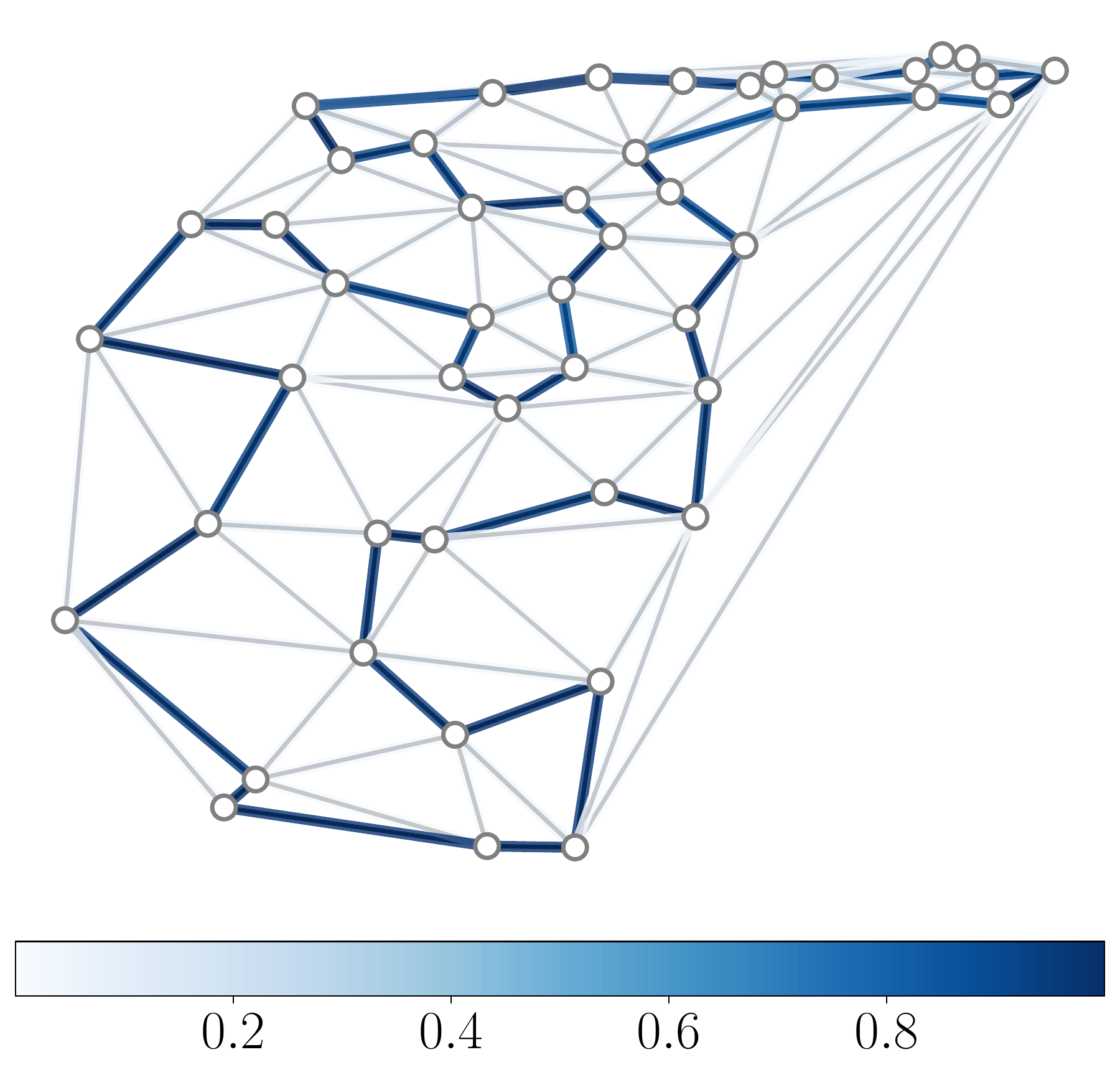}
		\subcaption{E, $\eta=0.2$, $\ybt{T}(\thetab)$}
	\end{minipage}
	\centering
	\begin{minipage}[t]{.24\textwidth}
		\includegraphics[width=.8\textwidth]{./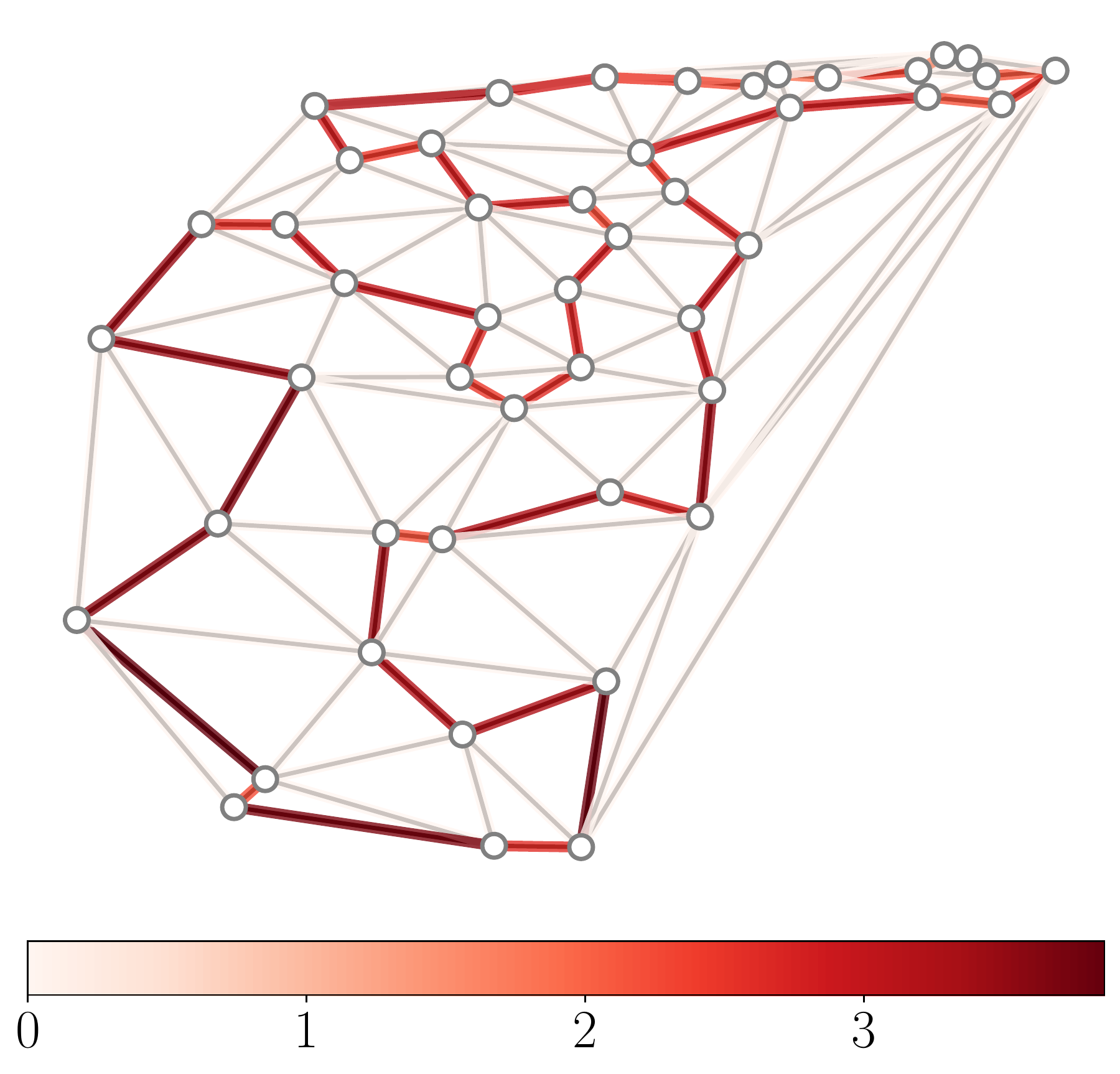}
		\subcaption{E, $\eta=0.2$, $\thetab$}
	\end{minipage}
	\caption{
	Illustration of {\dantzig} (a--l) and 
	{\att} (m--x) networks. 
	Blue and red edges represent $\ybt{T}(\thetab)$ and $\thetab$ values computed by our method with $T=300$ and $\eta=0.05$, $0.1$, $0.2$ 
	for fractional (F) and exponential (E) cost instances. 
	}
	\label{a_fig:stackelberg_y_theta_hamilton}
\end{figure*}

\begin{figure*}[htb]
	\centering
	\begin{minipage}[t]{.24\textwidth}
		\includegraphics[width=.8\textwidth]{./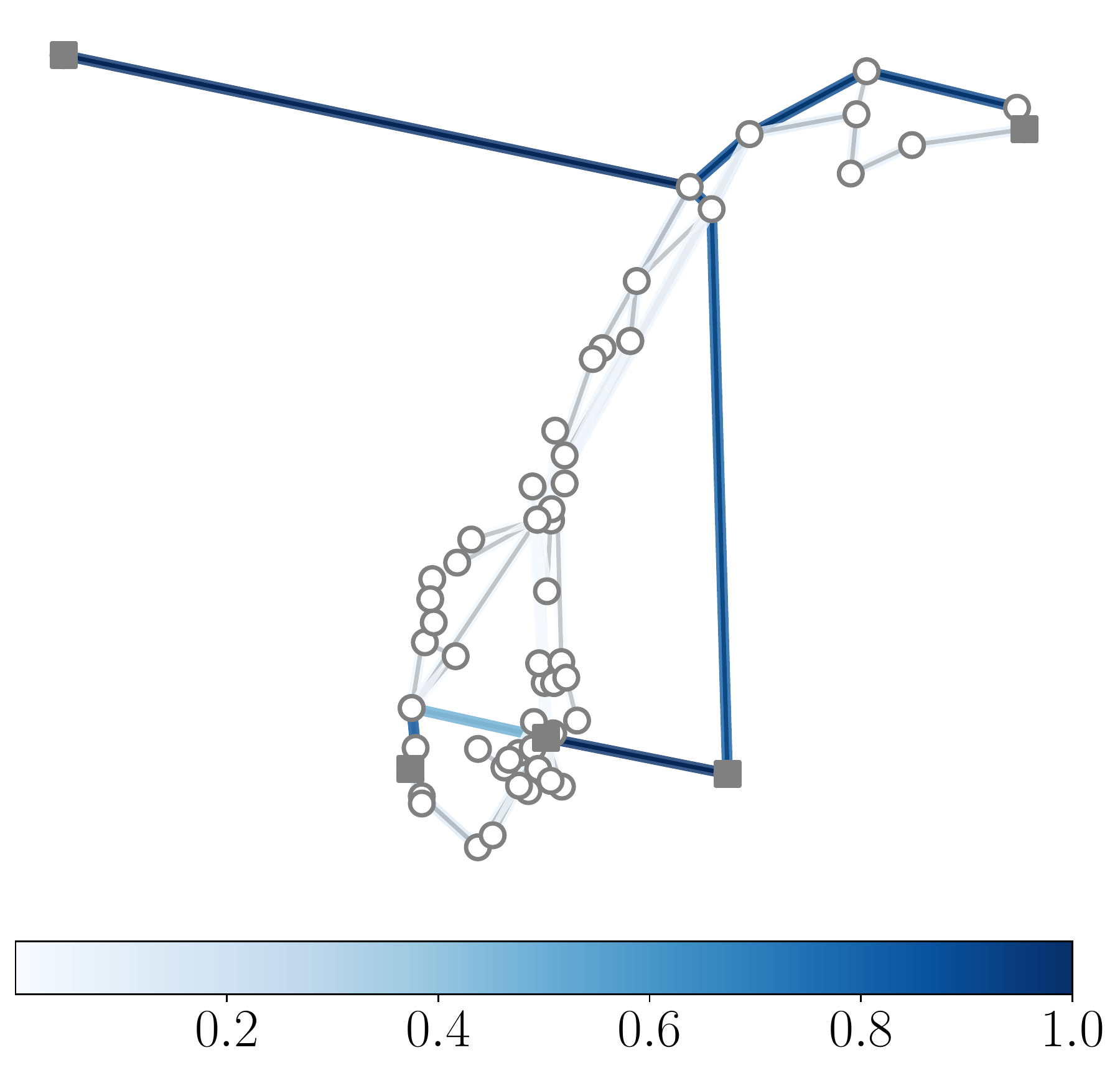}
		\subcaption{F, $\eta=0.05$, $\ybt{T}(\thetab)$}
	\end{minipage}
	\centering
	\begin{minipage}[t]{.24\textwidth}
		\includegraphics[width=.8\textwidth]{./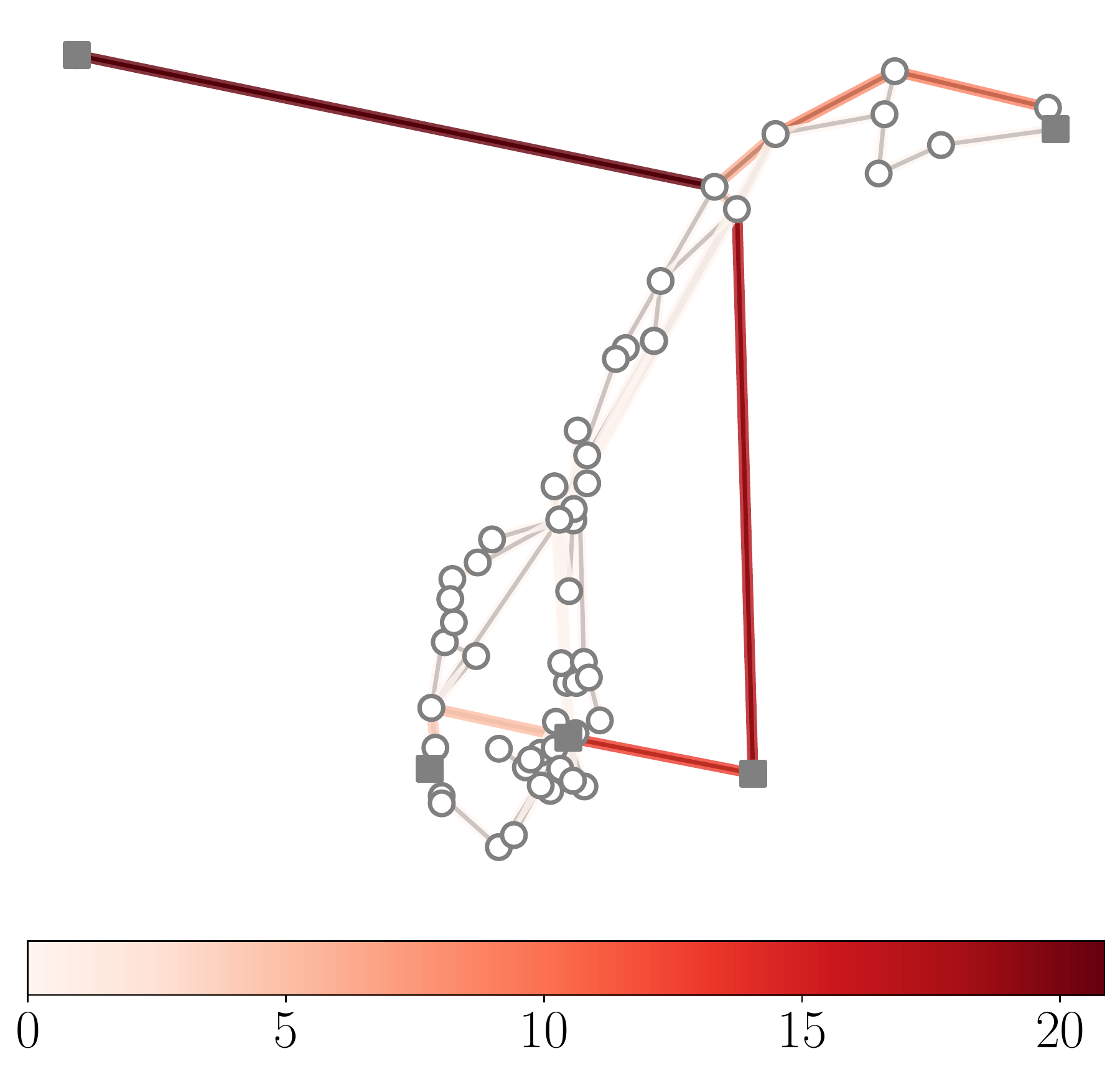}
		\subcaption{F, $\eta=0.05$, $\thetab$}
	\end{minipage}
	\centering
	\begin{minipage}[t]{.24\textwidth}
		\includegraphics[width=.8\textwidth]{./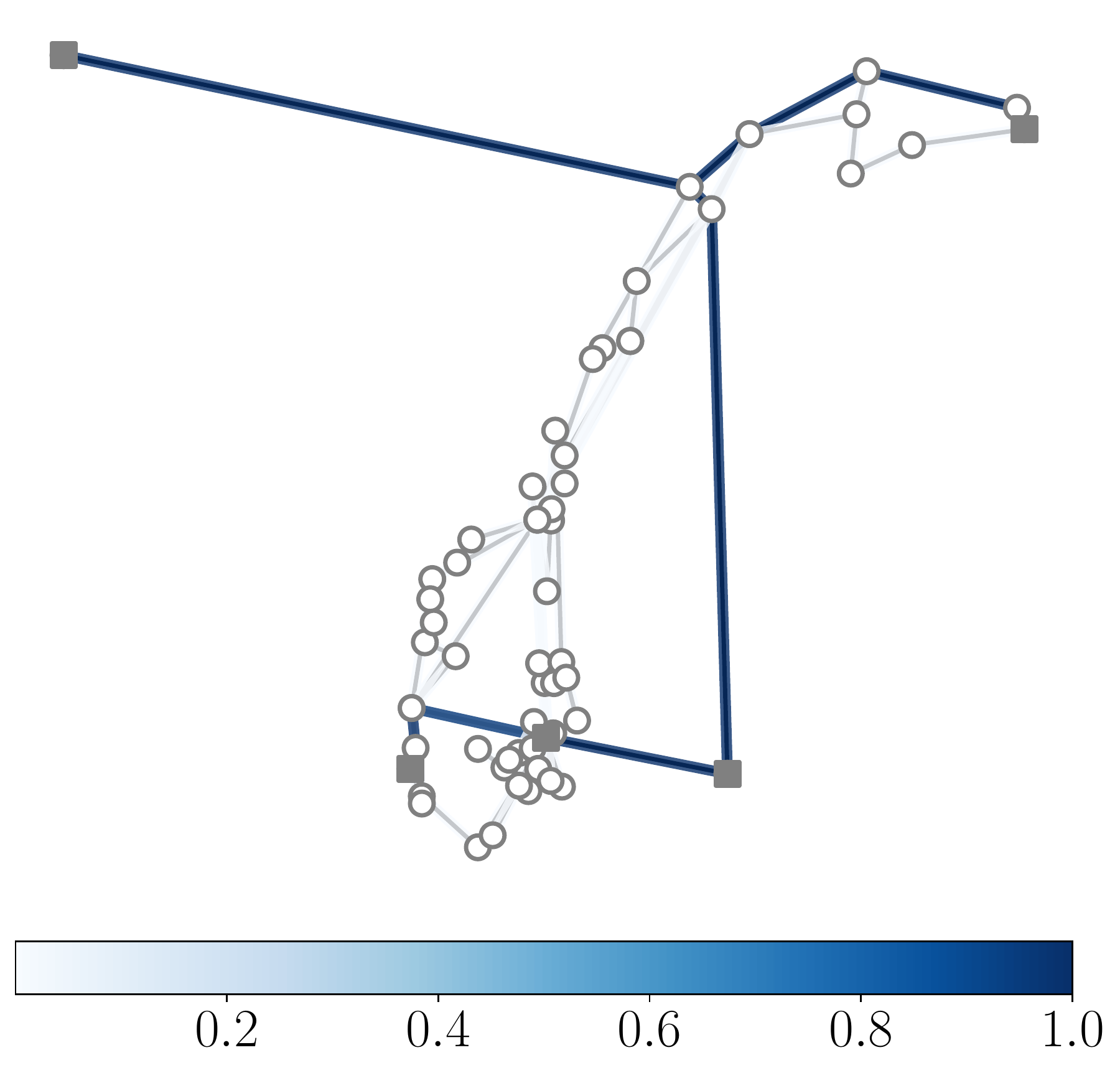}
		\subcaption{E, $\eta=0.05$, $\ybt{T}(\thetab)$}
	\end{minipage}
	\centering
	\begin{minipage}[t]{.24\textwidth}
		\includegraphics[width=.8\textwidth]{./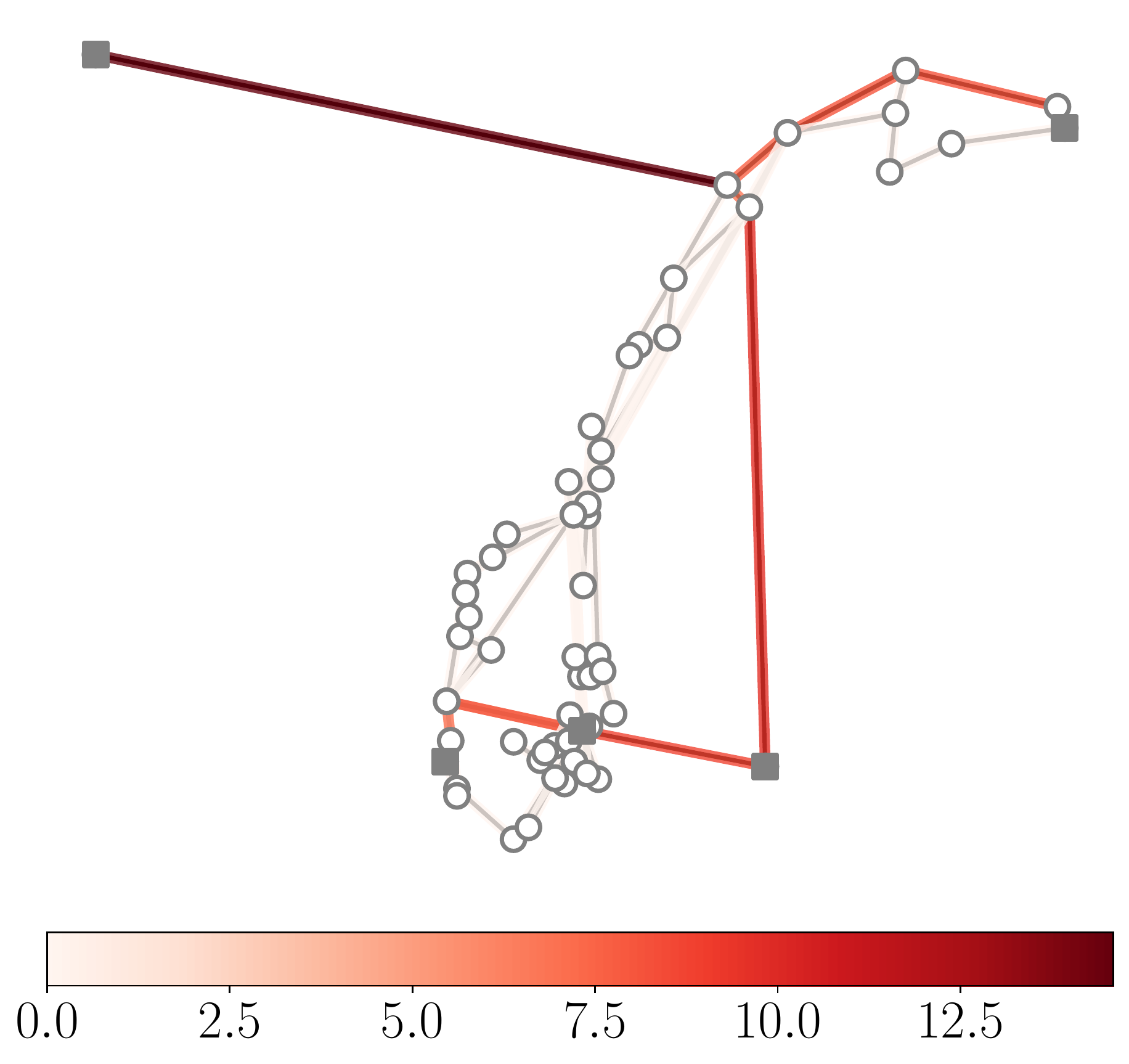}
		\subcaption{E, $\eta=0.05$, $\thetab$}
	\end{minipage}
	\centering
	\begin{minipage}[t]{.24\textwidth}
		\includegraphics[width=.8\textwidth]{./fig/Uninett2011_inv_e0.1_final_y.pdf}
		\subcaption{F, $\eta=0.1$, $\ybt{T}(\thetab)$}
	\end{minipage}
	\centering
	\begin{minipage}[t]{.24\textwidth}
		\includegraphics[width=.8\textwidth]{./fig/Uninett2011_inv_e0.1_final_theta.pdf}
		\subcaption{F, $\eta=0.1$, $\thetab$}
	\end{minipage}
	\centering
	\begin{minipage}[t]{.24\textwidth}
		\includegraphics[width=.8\textwidth]{./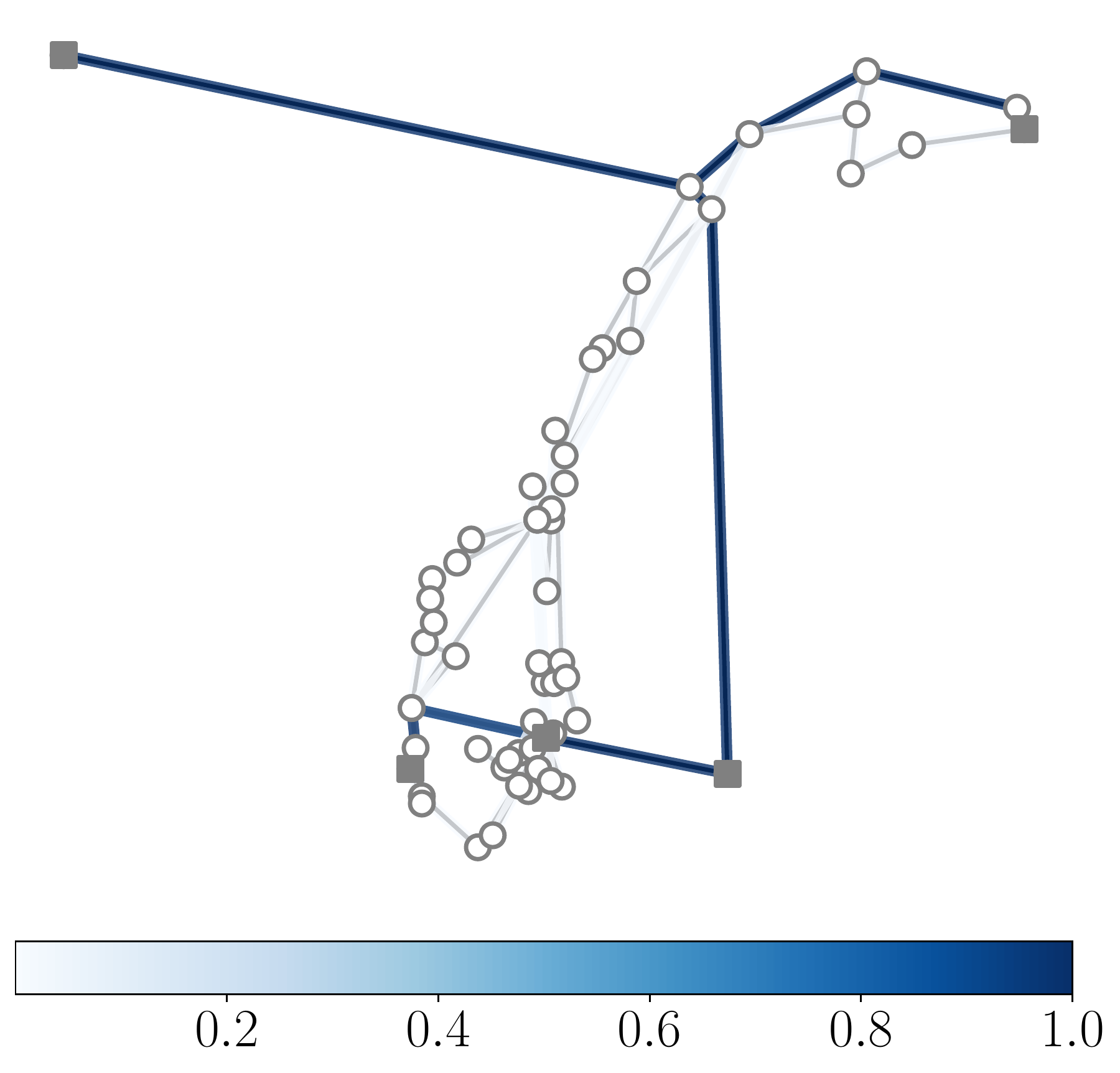}
		\subcaption{E, $\eta=0.1$, $\ybt{T}(\thetab)$}
	\end{minipage}
	\centering
	\begin{minipage}[t]{.24\textwidth}
		\includegraphics[width=.8\textwidth]{./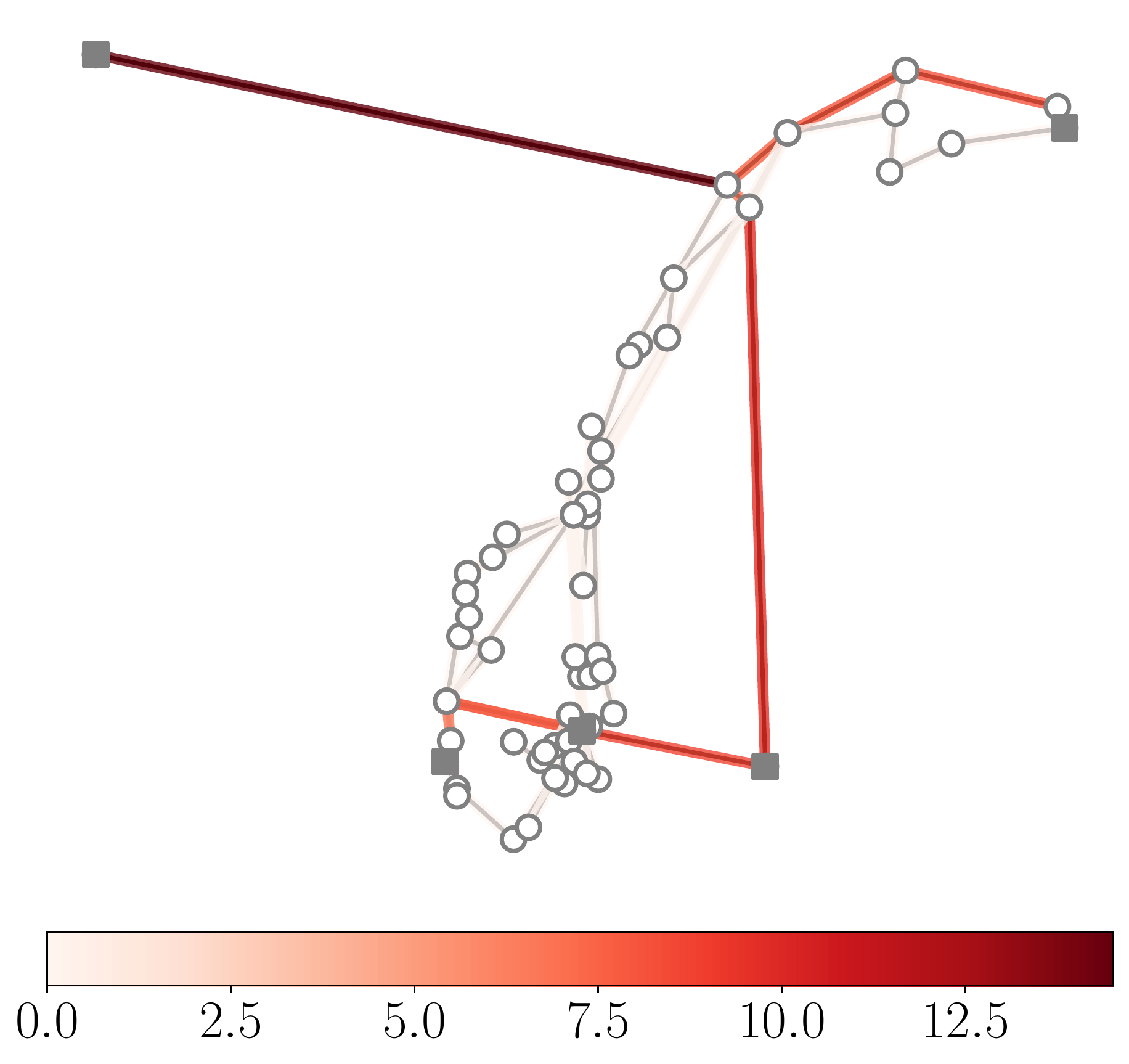}
		\subcaption{E, $\eta=0.1$, $\thetab$}
	\end{minipage}
	\centering
	\begin{minipage}[t]{.24\textwidth}
		\includegraphics[width=.8\textwidth]{./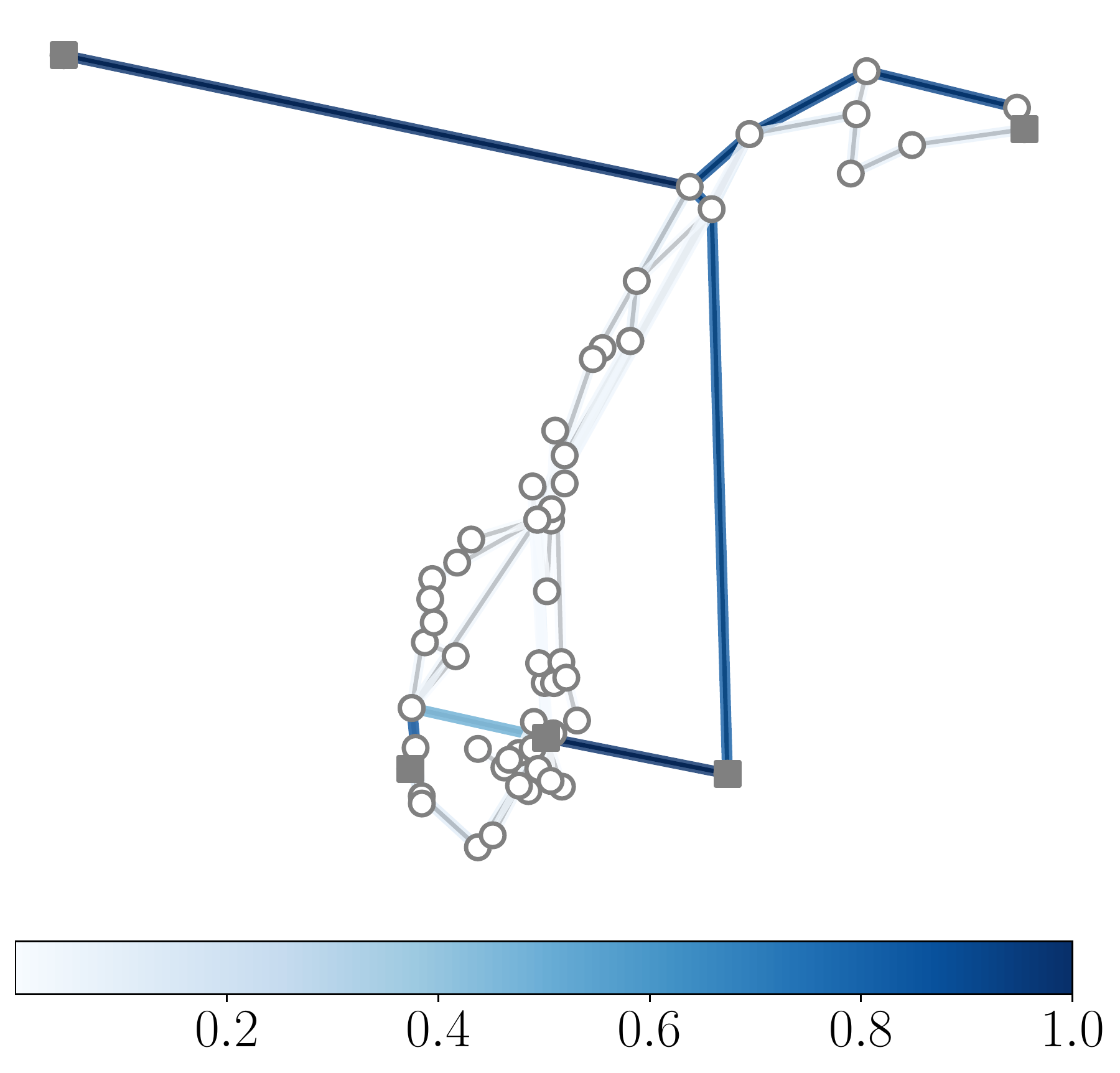}
		\subcaption{F, $\eta=0.2$, $\ybt{T}(\thetab)$}
	\end{minipage}
	\centering
	\begin{minipage}[t]{.24\textwidth}
		\includegraphics[width=.8\textwidth]{./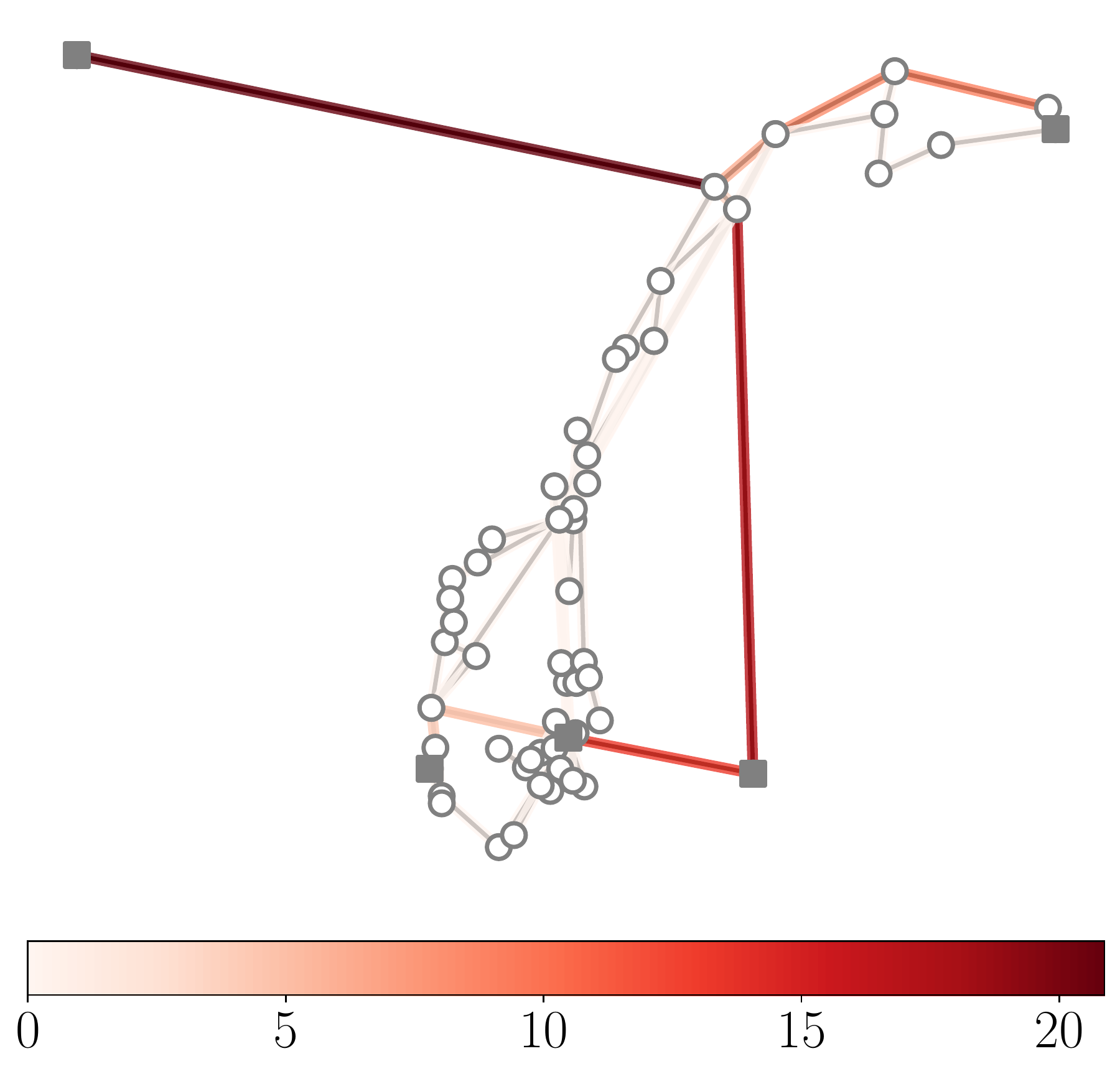}
		\subcaption{F, $\eta=0.2$, $\thetab$}
	\end{minipage}
	\centering
	\begin{minipage}[t]{.24\textwidth}
		\includegraphics[width=.8\textwidth]{./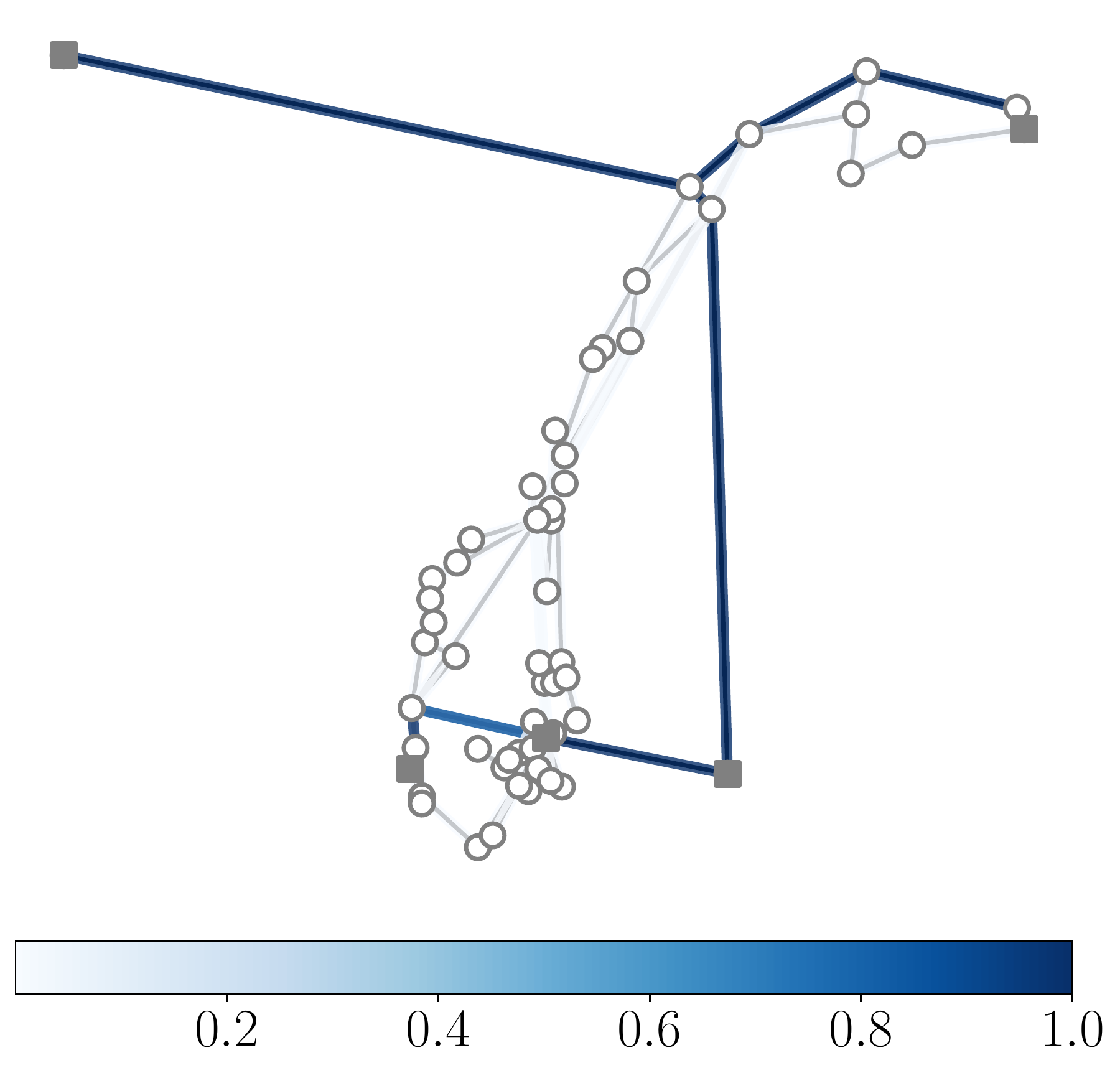}
		\subcaption{E, $\eta=0.2$, $\ybt{T}(\thetab)$}
	\end{minipage}
	\centering
	\begin{minipage}[t]{.24\textwidth}
		\includegraphics[width=.8\textwidth]{./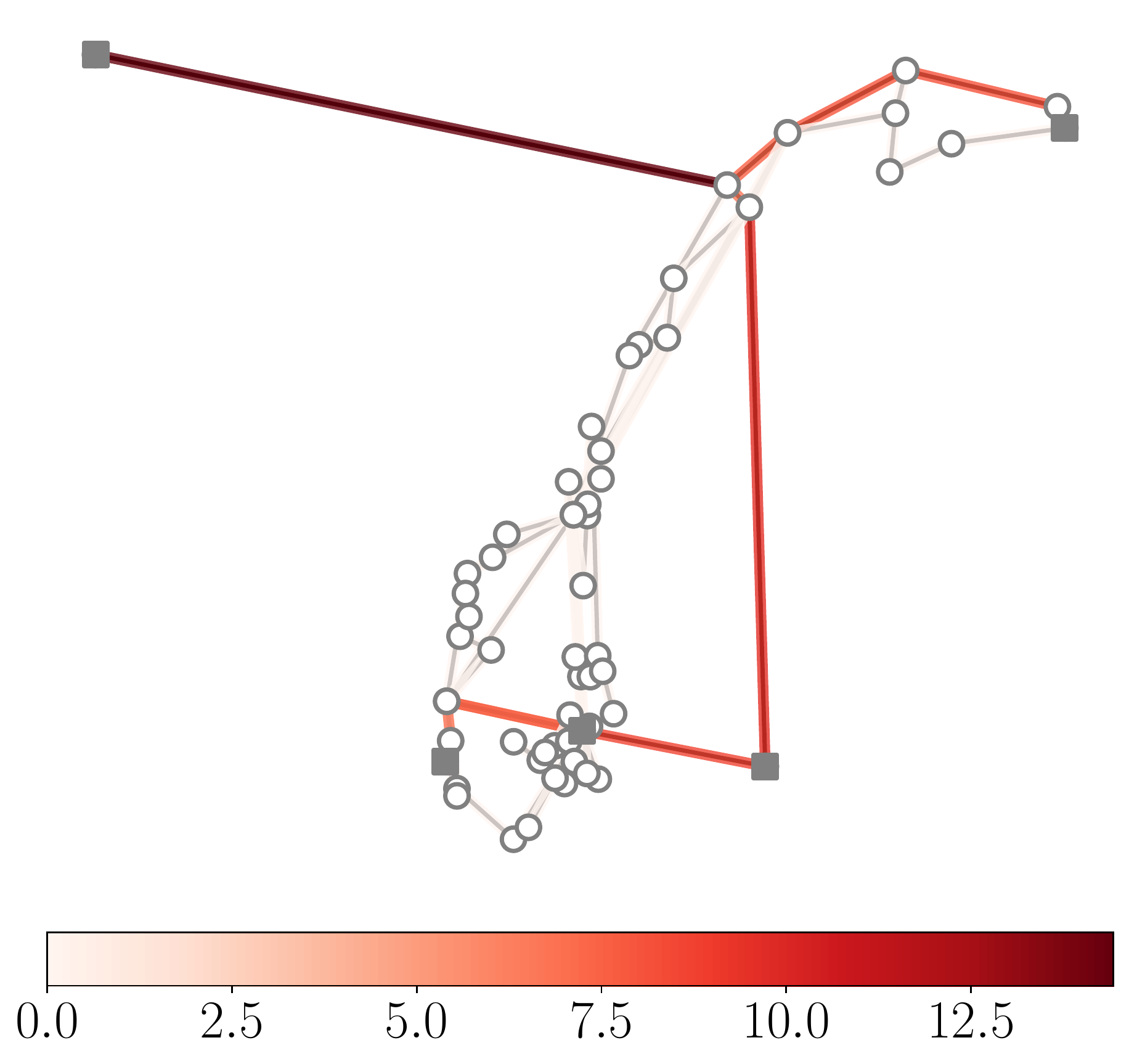}
		\subcaption{E, $\eta=0.2$, $\thetab$}
	\end{minipage}
	\centering
	\begin{minipage}[t]{.24\textwidth}
		\includegraphics[width=.8\textwidth]{./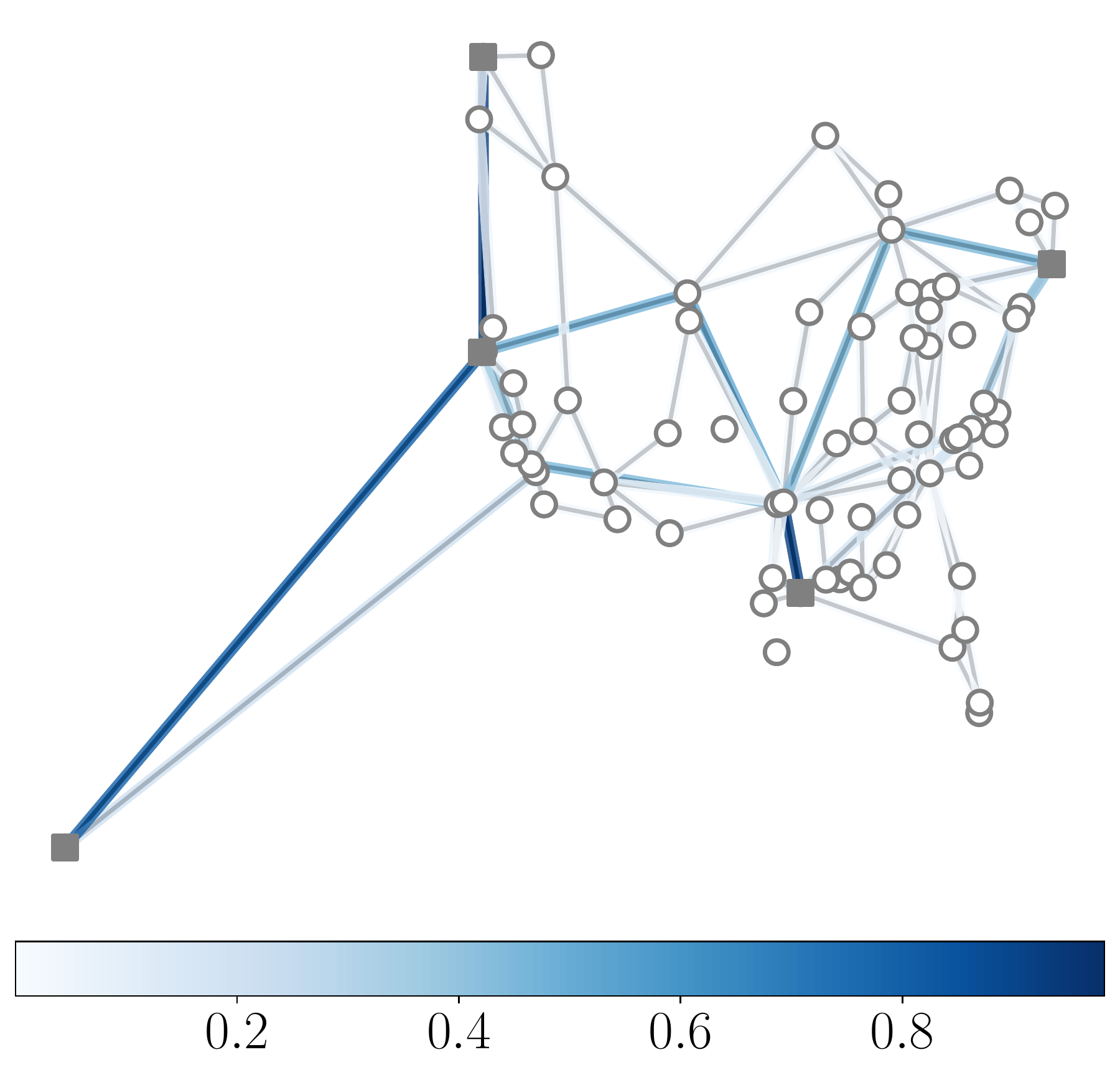}
		\subcaption{F, $\eta=0.05$, $\ybt{T}(\thetab)$}
	\end{minipage}
	\centering
	\begin{minipage}[t]{.24\textwidth}
		\includegraphics[width=.8\textwidth]{./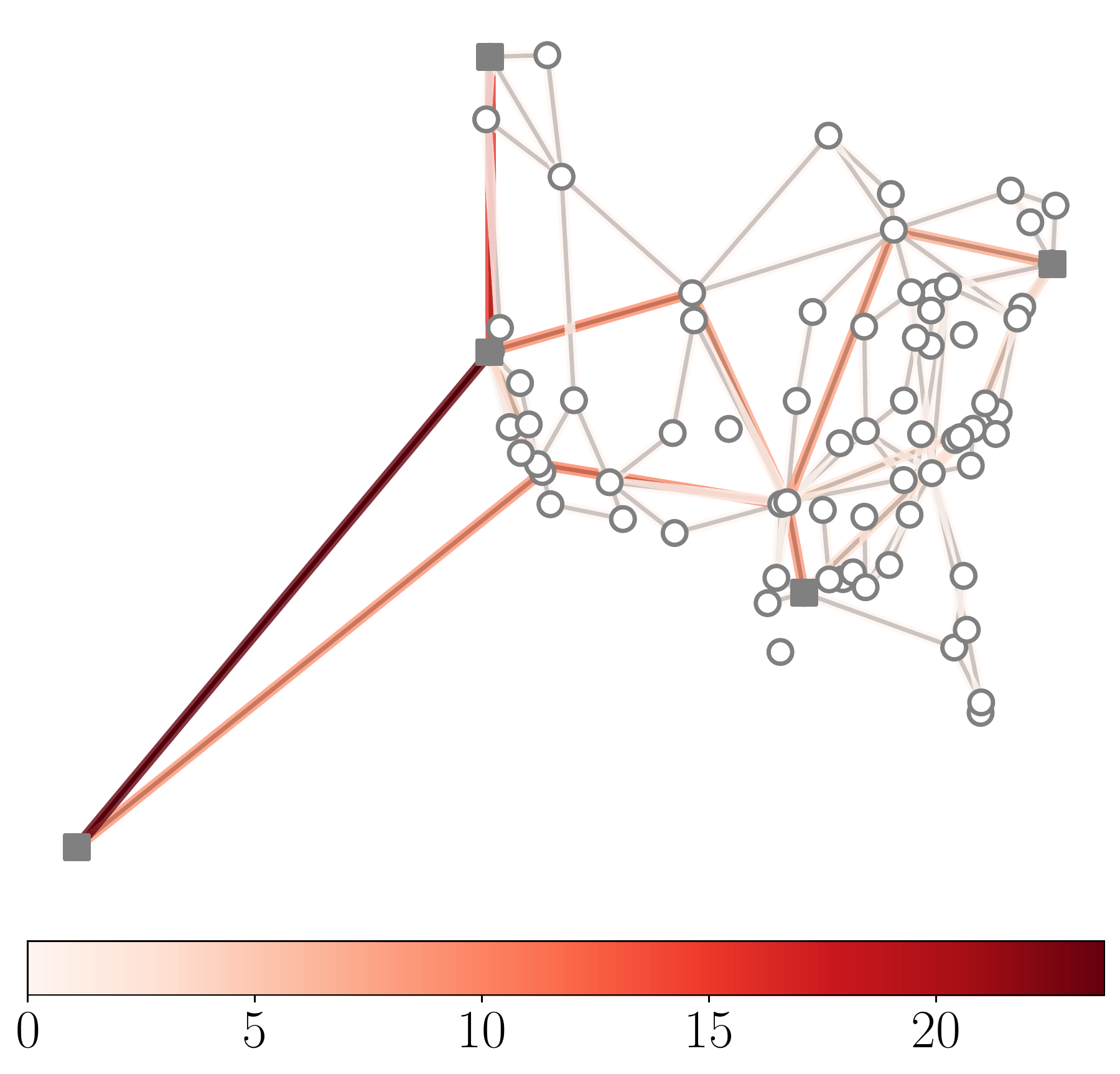}
		\subcaption{F, $\eta=0.05$, $\thetab$}
	\end{minipage}
	\centering
	\begin{minipage}[t]{.24\textwidth}
		\includegraphics[width=.8\textwidth]{./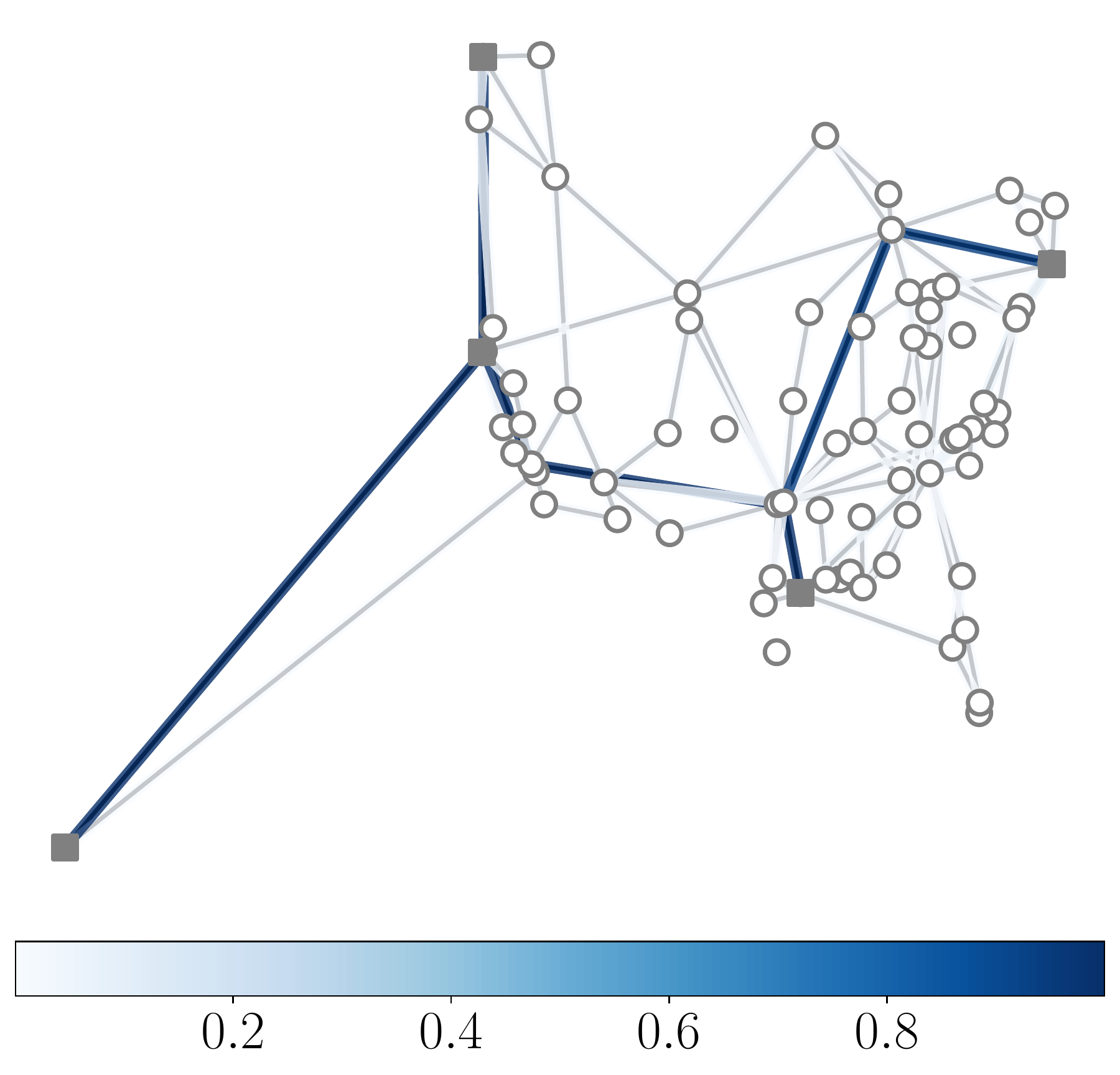}
		\subcaption{E, $\eta=0.05$, $\ybt{T}(\thetab)$}
	\end{minipage}
	\centering
	\begin{minipage}[t]{.24\textwidth}
		\includegraphics[width=.8\textwidth]{./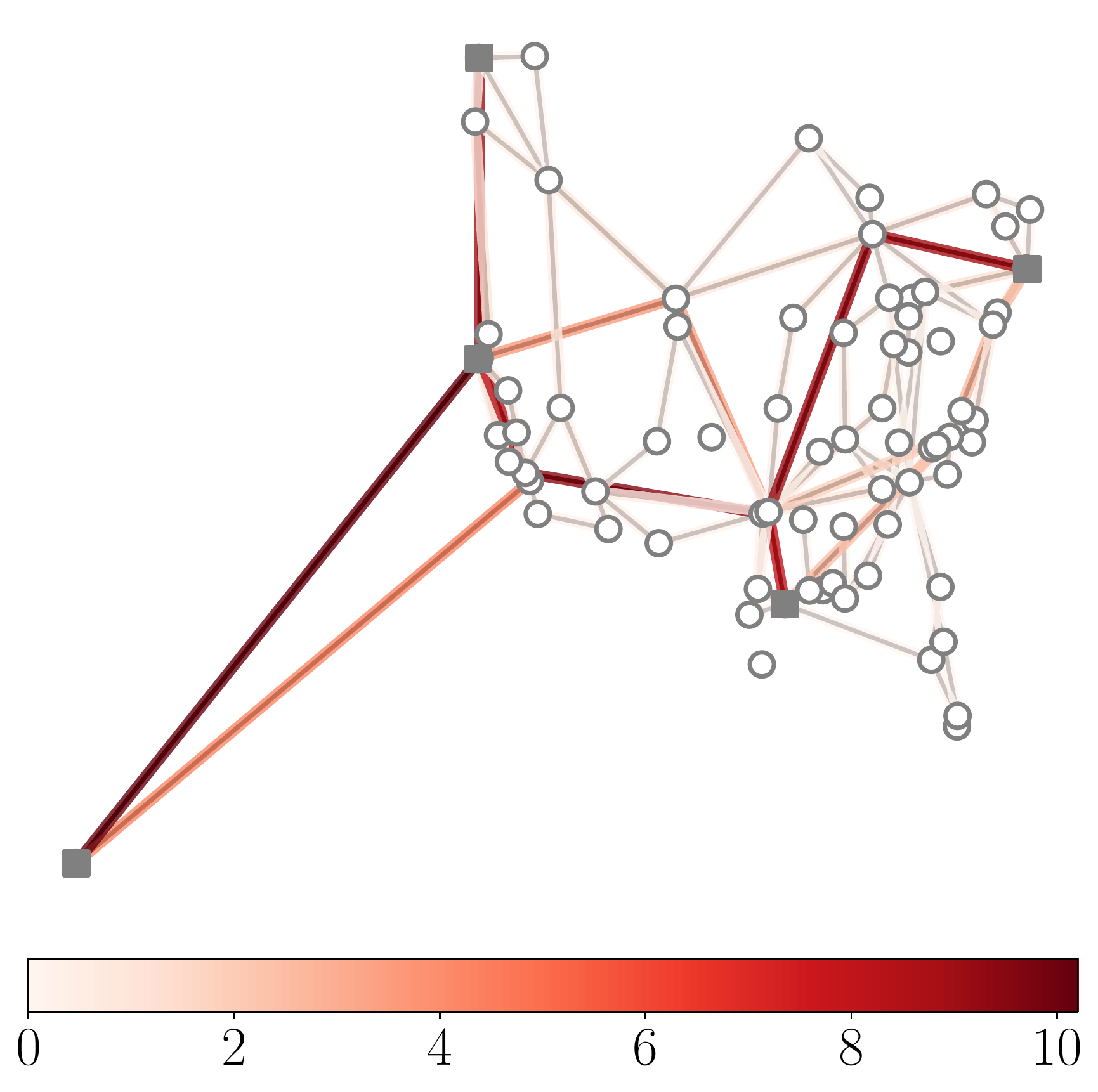}
		\subcaption{E, $\eta=0.05$, $\thetab$}
	\end{minipage}
	\centering
	\begin{minipage}[t]{.24\textwidth}
		\includegraphics[width=.8\textwidth]{./fig/Tw_inv_e0.1_final_y.pdf}
		\subcaption{F, $\eta=0.1$, $\ybt{T}(\thetab)$}
	\end{minipage}
	\centering
	\begin{minipage}[t]{.24\textwidth}
		\includegraphics[width=.8\textwidth]{./fig/Tw_inv_e0.1_final_theta.pdf}
		\subcaption{F, $\eta=0.1$, $\thetab$}
	\end{minipage}
	\centering
	\begin{minipage}[t]{.24\textwidth}
		\includegraphics[width=.8\textwidth]{./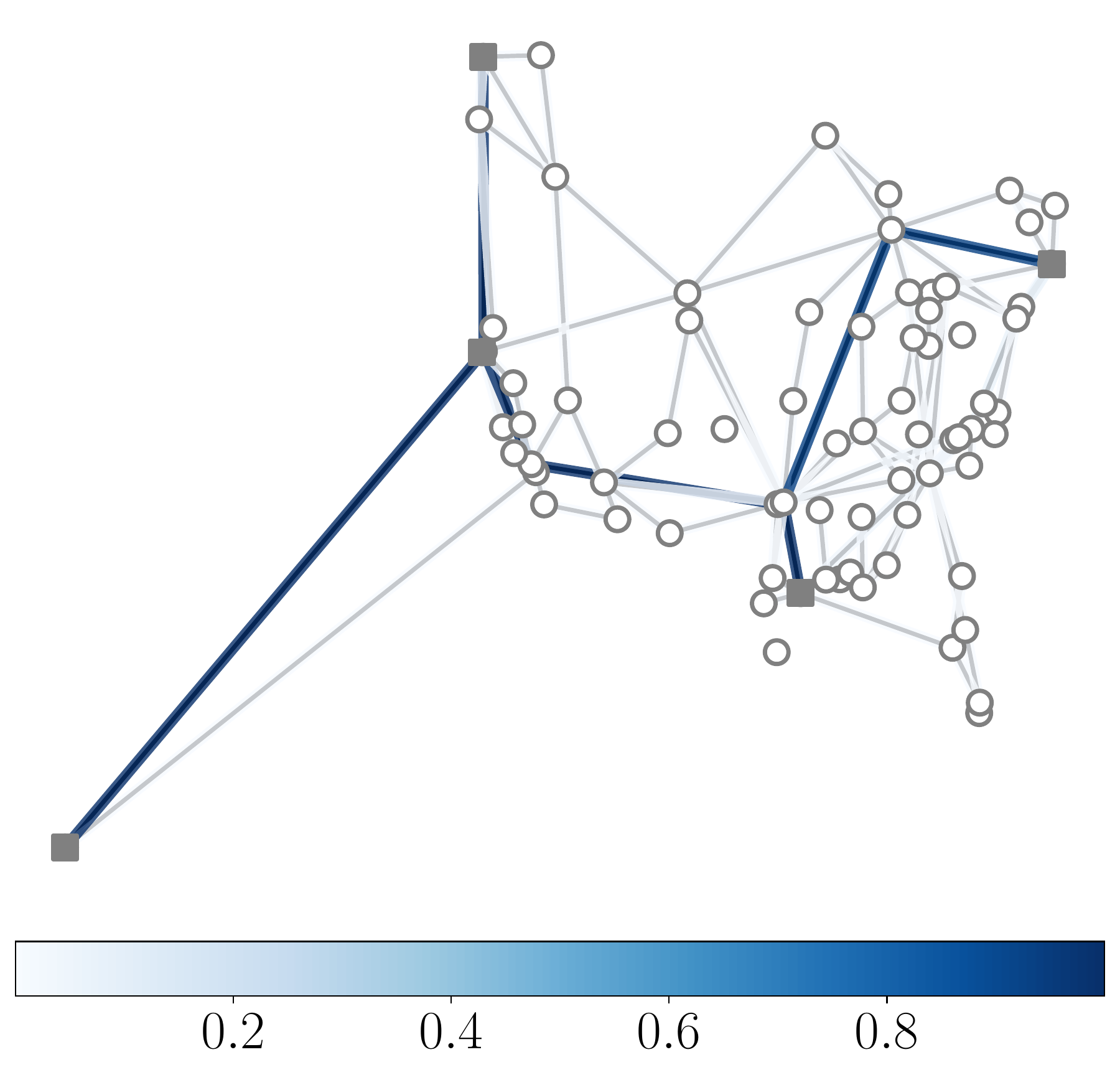}
		\subcaption{E, $\eta=0.1$, $\ybt{T}(\thetab)$}
	\end{minipage}
	\centering
	\begin{minipage}[t]{.24\textwidth}
		\includegraphics[width=.8\textwidth]{./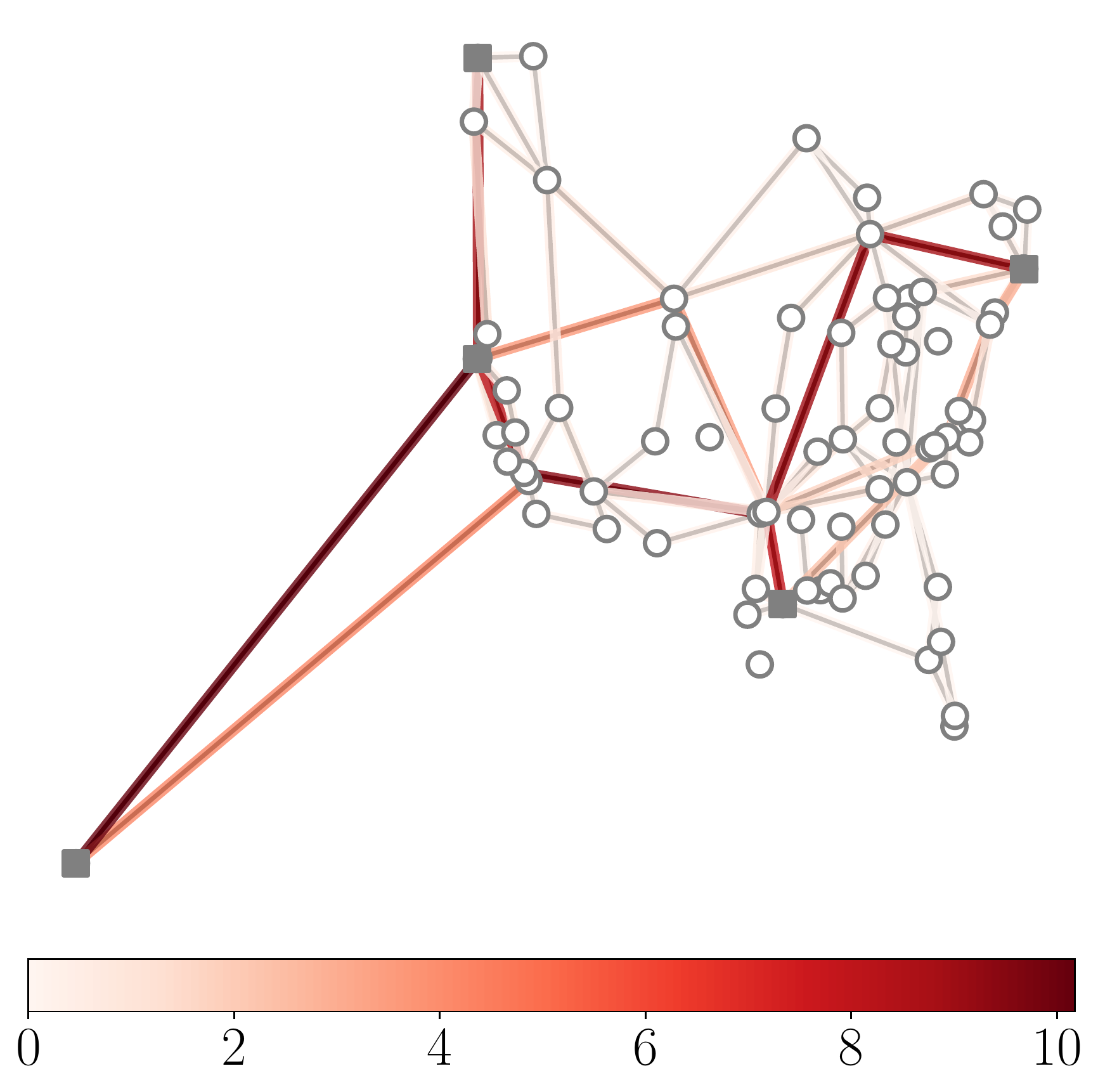}
		\subcaption{E, $\eta=0.1$, $\thetab$}
	\end{minipage}
	\centering
	\begin{minipage}[t]{.24\textwidth}
		\includegraphics[width=.8\textwidth]{./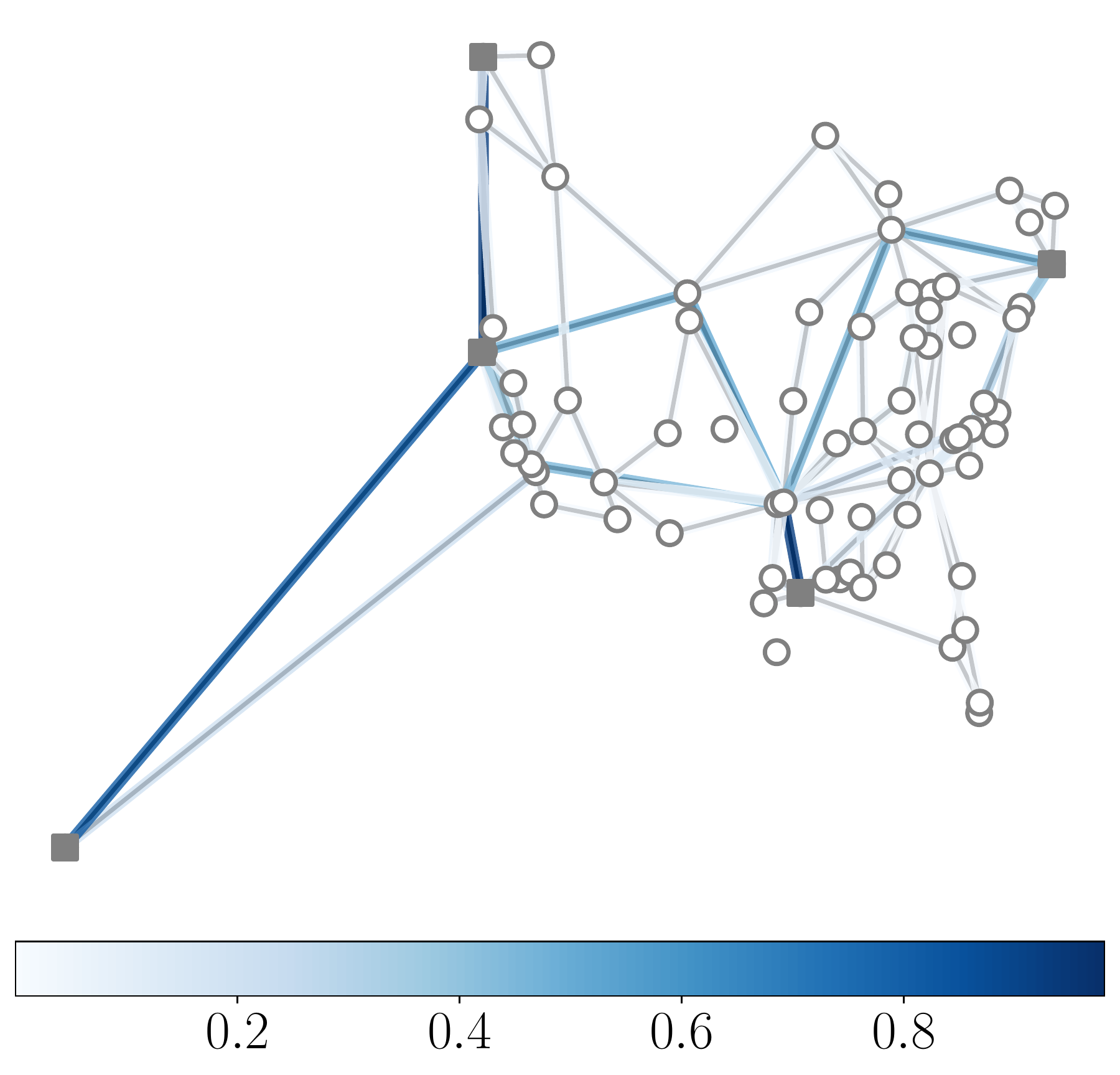}
		\subcaption{F, $\eta=0.2$, $\ybt{T}(\thetab)$}
	\end{minipage}
	\centering
	\begin{minipage}[t]{.24\textwidth}
		\includegraphics[width=.8\textwidth]{./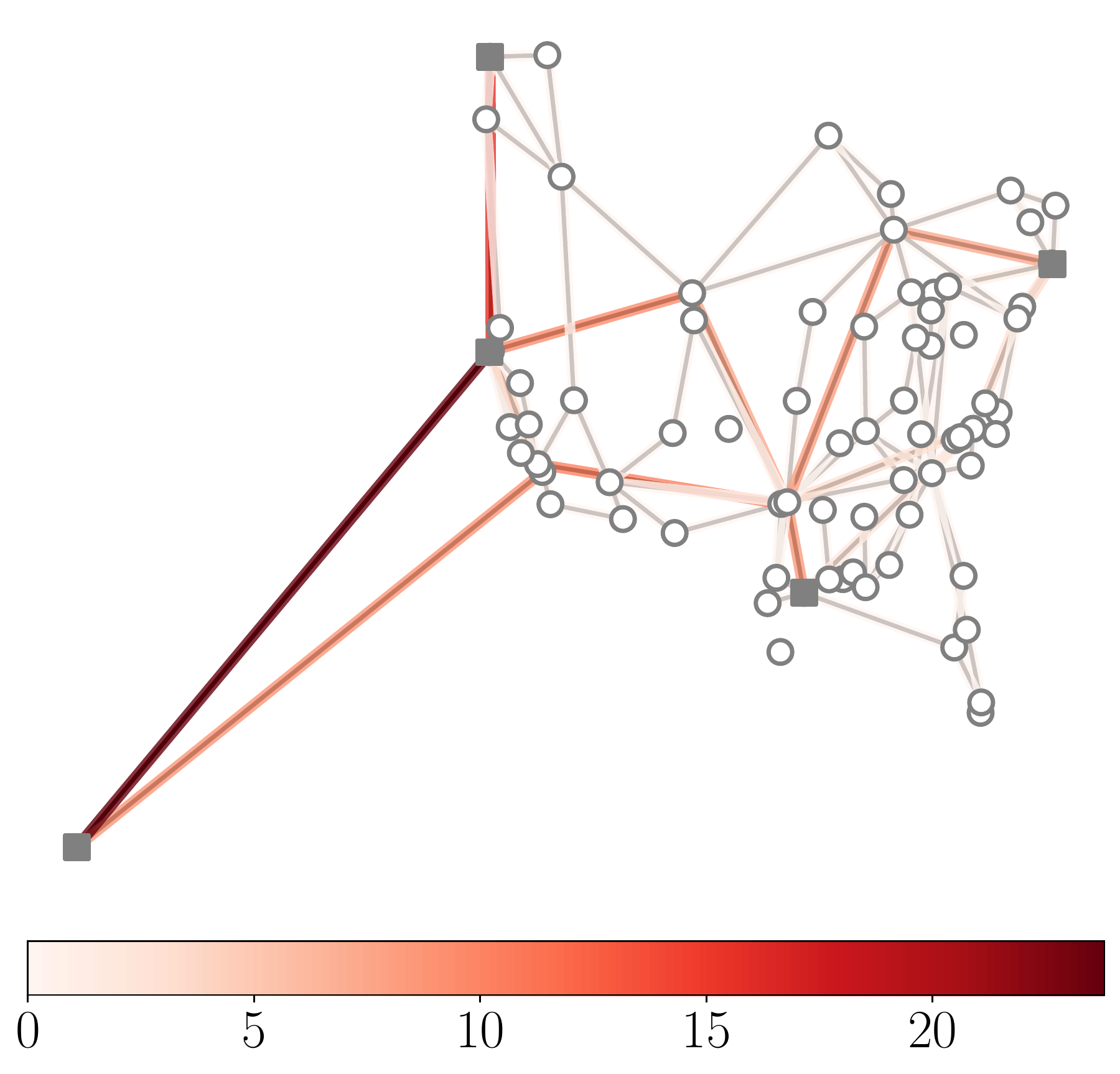}
		\subcaption{F, $\eta=0.2$, $\thetab$}
	\end{minipage}
	\centering
	\begin{minipage}[t]{.24\textwidth}
		\includegraphics[width=.8\textwidth]{./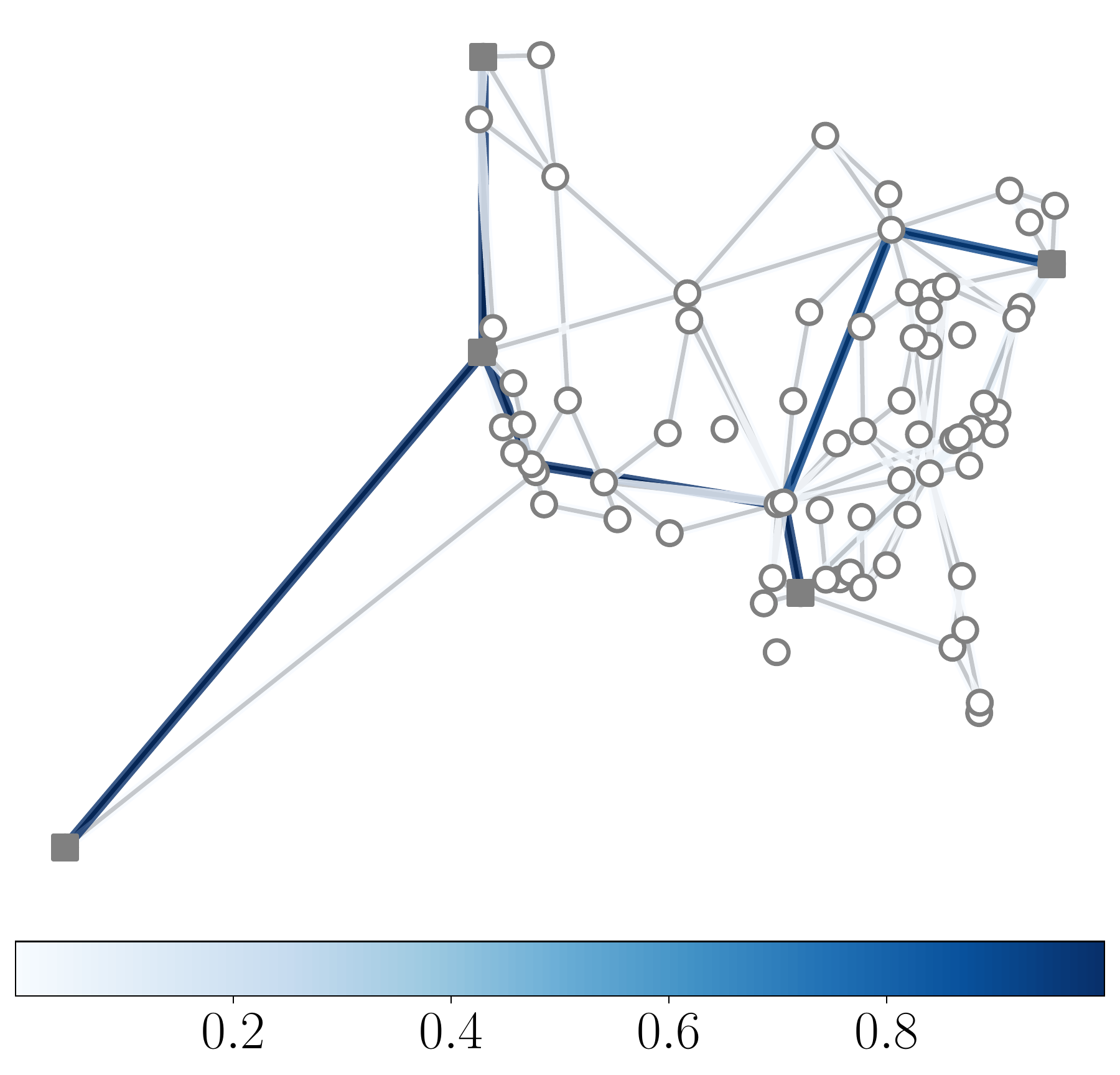}
		\subcaption{E, $\eta=0.2$, $\ybt{T}(\thetab)$}
	\end{minipage}
	\centering
	\begin{minipage}[t]{.24\textwidth}
		\includegraphics[width=.8\textwidth]{./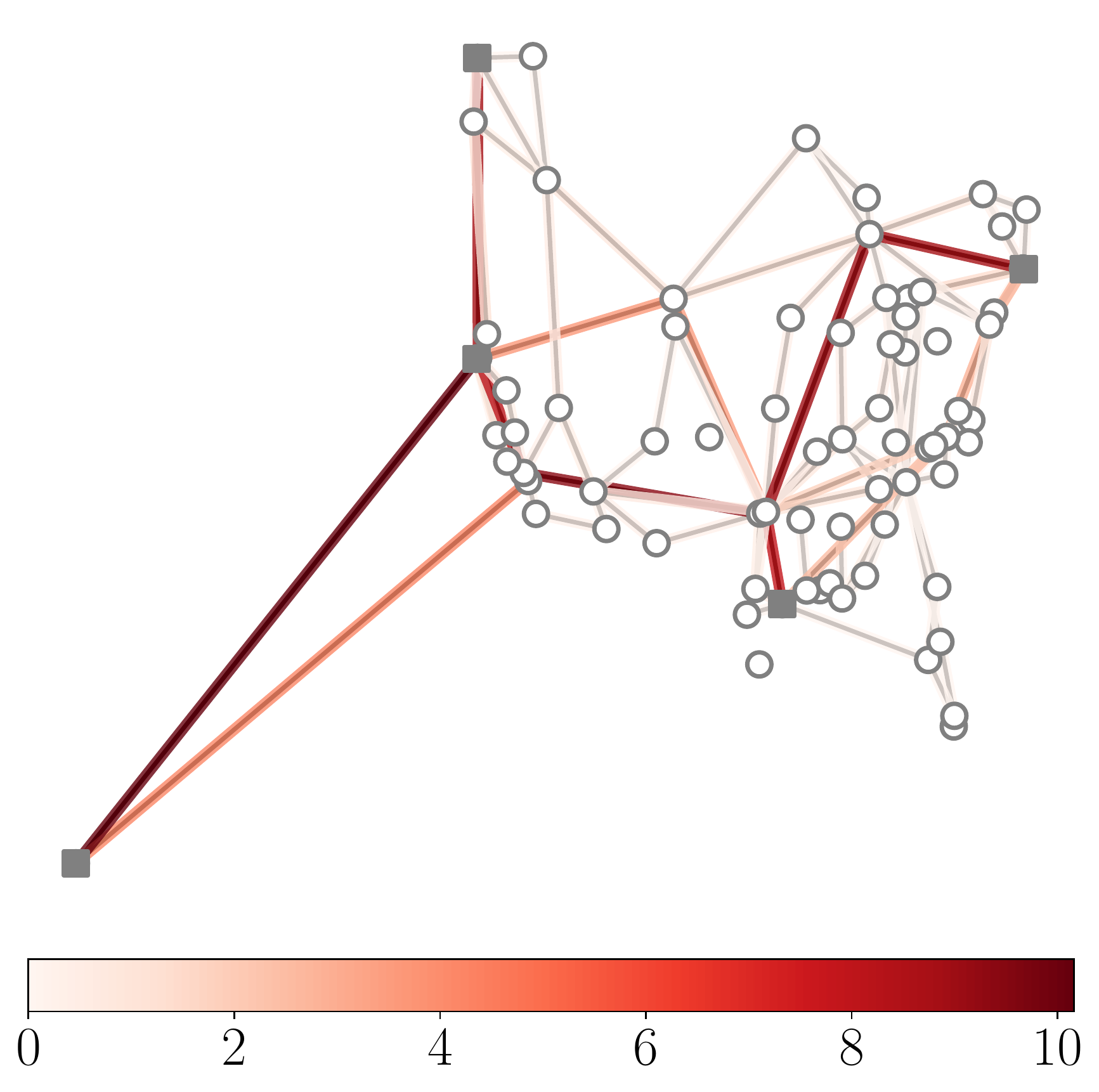}
		\subcaption{E, $\eta=0.2$, $\thetab$}
	\end{minipage}
	\caption{
	Illustration of {\uninett} (a--l) and 
	{\tw} (m--x) networks. 
	Five square vertices in each network are terminals. 
	Blue and red edges represent $\ybt{T}(\thetab)$ and $\thetab$ values 
	computed by our method with $T=300$ and $\eta=0.05$, $0.1$, $0.2$ 
	for fractional (F) and exponential (E) cost instances.
	}
	\label{a_fig:stackelberg_y_theta_steiner}
\end{figure*}

\end{document}